\documentclass[final,5p,times,twocolumn]{elsarticle}

\usepackage[T1]{fontenc}
\usepackage{courier}
\usepackage{microtype}
\usepackage{amsmath,amssymb,amsthm,mathtools,mathrsfs}
\usepackage{graphicx}
\usepackage{booktabs,tabularx,array}
\usepackage{enumitem}
\usepackage{xcolor}
\usepackage{url}
\usepackage[colorlinks=true,allcolors=blue!55!black]{hyperref}
\usepackage{placeins}

\journal{Nuclear Physics B}
\biboptions{numbers,sort&compress}

\numberwithin{equation}{section}

\graphicspath{{./}{figures/}}
\setlist{itemsep=2pt,topsep=4pt}
\allowdisplaybreaks[2]

\newtheorem{theorem}{Theorem}[section]
\newtheorem{proposition}[theorem]{Proposition}
\newtheorem{lemma}[theorem]{Lemma}
\newtheorem{corollary}[theorem]{Corollary}
\theoremstyle{definition}
\newtheorem{definition}[theorem]{Definition}
\theoremstyle{remark}

\newcommand{\dd}{\mathrm d}

\hypersetup{
 pdftitle={Resolved Maxwell--Boundary Normal Forms and Exact Reentrant Scaling in Accelerating AdS Black Holes},
 pdfauthor={Ruiliang Li},
 pdfsubject={Maxwell--boundary normal forms and exact reentrant phase selection in accelerating AdS black holes},
 pdfkeywords={accelerating AdS black holes, Maxwell coexistence, reentrant phase transition, thermodynamic volume, singularity theory, Newton normal form}
}

\begin{document}

\begin{frontmatter}

\title{Resolved Maxwell--Boundary Normal Forms and Exact Reentrant Scaling\\
in Accelerating AdS Black Holes}

\author[thu]{Ruiliang Li\corref{cor1}}
\ead{lirl23@mails.tsinghua.edu.cn}
\cortext[cor1]{Corresponding author}
\affiliation[thu]{organization={Tsinghua University},
                  city={Beijing},
                  postcode={100084},
                  country={China}}

\begin{abstract}
Reentrant phase transitions of accelerating anti-de Sitter black holes are
known numerically, but their apparent loss at small string tension has lacked
an analytic explanation.  In the single-string ensemble at fixed pressure,
charge, and tension, we solve the unrestricted two-phase Maxwell-turning
problem for the charged slowly accelerating C-metric.  Elimination forces the
two horizon coordinates to coincide and yields a closed one-parameter locus.
An exhaustive enumeration of the remaining equilibria establishes global
phase selection throughout the physical black-hole sector.  The locus exists
for \(0<\mu<0.202602\), with no positive lower threshold.  As
\(\mu\to0\), the pressure and temperature widths of the reentrant window
contract as the fourth and third powers of the tension, while the entropy gap
and latent heat diverge.  A two-chart blow-up gives a parameter-free limiting
profile and identifies thermodynamic-volume inversion as the turning mechanism
between two distinct noncritical phases.  At a Maxwell boundary with fixed
sheet incidence, projective Newton data of the thermodynamic jumps determine
the leading coexistence profile.  Their positive simple roots fix the turn
count, order, and curvature signs; the C-metric realizes the primitive
binomial class.
\end{abstract}

\begin{keyword}
accelerating AdS black holes \sep Maxwell coexistence \sep reentrant phase
transition \sep thermodynamic volume \sep singularity theory \sep Newton
normal form
\end{keyword}

\end{frontmatter}

\section{Introduction}
\label{sec:introduction}

AdS black holes exhibit both the Hawking--Page transition and canonical
small/large phase coexistence
\cite{HawkingPage1983,Chamblin1999Fluctuations}.  Acceleration changes this
thermodynamic setting in two coupled ways.  The
conical defect contributes a thermodynamic string tension
\cite{GregoryScoins2019Chemistry}, while the slow-acceleration conditions cut
the equilibrium family by physical boundaries
\cite{AnabalonEtAl2019Thermodynamics,AbbasvandiEtAl2019Snapping}.  Four-dimensional
Born--Infeld AdS thermodynamics exhibited a zeroth-order segment that was later
recognized as part of a reentrant sequence
\cite{GunasekaranKubiznakMann2012Extended,AltamiranoEtAl2014Review}.
Reentrant and multicritical transitions were subsequently analyzed in
rotating AdS and Lovelock black holes
\cite{AltamiranoKubiznakMann2013Reentrant,AltamiranoEtAl2014Triple,
FrassinoEtAl2014MultipleReentrant}.
In the charged
AdS C-metric, the physical boundaries truncate the canonical swallowtail.
The resulting snapping transition, zeroth-order segment, bicritical junction,
and pressure-driven reentrant sequence were identified in
Ref.~\cite{AbbasvandiEtAl2019Snapping}; rotation produces a still finer
splitting of the transition pressures
\cite{AbbasvandiEtAl2019FinelySplit}.  Analytic control of the reentrant turn
at small tension has so far been lacking.  Its apparent disappearance from
unscaled phase diagrams could mark either a finite physical threshold or a
singular collapse of the relevant pressure and temperature scales.

This distinction, however, cannot be read off from the number of black-hole
branches alone.  A canonical branch is locally stable when the constrained Hessian of
its thermodynamic potential is positive, but coexistence compares the values
carried by two distinct stable branches.  Physical phase selection then
requires a third test: both branches must remain inside the slowly
accelerating, positive-temperature state space and their common value must
lie below every other admissible equilibrium.  For an off-shell canonical
potential
\begin{equation}
 \mathcal F_T(x)=M(x)-TS(x),
 \qquad
 \partial_x\mathcal F_T=S_x\,[T_H(x)-T],
 \label{eq:intro-factorization}
\end{equation}
the stationary cover is determined by the zeros of \(T_H-T\), whereas the
value difference between two stationary points is
\begin{equation}
 \mathcal F_T(x_j)-\mathcal F_T(x_i)
 =\int_{x_i}^{x_j}S_x(x)\,[T_H(x)-T]\,\dd x.
 \label{eq:intro-weighted-area}
\end{equation}
The entropy measure in this weighted-area identity is invisible to the
unweighted branch topology.  This is the same distinction that underlies
Maxwell constructions and generalized free-energy landscapes in black-hole
thermodynamics
\cite{SpallucciSmailagic2013Maxwell,WeiLiu2015Clapeyron,
LiWang2022Landscape,XuEtAl2024ComplexFreeEnergy}.

Our main physical result is an analytic solution of the Maxwell-turning
problem for the charged slowly accelerating AdS C-metric.  We work in the
standard north-regular single-string ensemble at fixed \((P,Q,\mu)\), with
the holographic normalization of time used in
Refs.~\cite{AnabalonEtAl2018Holographic,AnabalonEtAl2019Thermodynamics}.
Related work using the same normalization examined charged accelerating
thermodynamics, phase behavior, and holographic heat engines in
Ref.~\cite{ZhangLiYu2019Accelerating}.
The pressure interpretation of the cosmological constant and the first law
with variable conical tensions follow
Refs.~\cite{KastorRayTraschen2009,Dolan2011PressureVolume,
KubiznakMann2012PV,CveticEtAl2011Volume,KubiznakMannTeo2017Chemistry,
AppelsGregoryKubiznak2016Thermodynamics,
AppelsGregoryKubiznak2017Conical}.  Throughout the C-metric analysis, the
physical sector consists of slowly accelerating, entropy-regular,
positive-temperature black-hole equilibria in this fixed-charge ensemble.
No independent radiation or reference-background phase is included.

Let \(q=Q/\ell\), \(u=Ar_+\), \(z=eA\), and
\(P_t=3\mu^2/(8\pi Q^2)\).  A regular turn of a two-phase Maxwell component
obeys equal temperature, equal Gibbs free energy, and equal thermodynamic
volume.  We impose these equations on two unrestricted positive-domain
states.  Their elimination first proves
\begin{equation}
 u_-=u_+=\chi.
 \label{eq:intro-common-horizon}
\end{equation}
The remaining real ideal gives
\begin{align}
 \mu&=\frac{\chi}{(1+\chi)^2},
 &q^2&=\frac{\chi^2(1+\chi^2)}{(1+\chi)^4},
 \nonumber\\
 \frac{P_{\rm turn}}{P_t}&=1+\chi^2,
 &QT_{\rm turn}&=\frac{\mu}{\pi}.
 \label{eq:intro-exact-turn}
\end{align}
The charge--acceleration coordinates \(w_\pm=z_\pm^2\) are the two roots of
a quadratic over \(\mathbb Q(\chi)\).  A saturated subresultant
reconstruction enumerates the remaining stationary states at the same
controls.  It proves that the two coexisting solutions are strict canonical
minima and have the least Gibbs value among all admissible
positive-temperature black-hole equilibria.  Outer-horizon, angular,
normalization, and slow-acceleration inequalities hold throughout the open
locus.  Its curvature is positive and its latent heat is
\begin{equation}
 \frac{\mathcal L_{\rm turn}}Q
 =\frac{\sqrt{1-6\chi^2-3\chi^4}}
 {\chi(1+\chi^2)}.
 \label{eq:intro-latent-heat}
\end{equation}

The discriminant fixes the complete tension range,
\begin{align}
 0&<\mu<\mu_+,
 \nonumber\\
 \mu_+&=
 \frac{\sqrt{2\sqrt3-3}}
 {2\bigl(1+\sqrt{2\sqrt3-3}\bigr)},
 \nonumber\\
 \mu_+&\simeq0.202602 .
 \label{eq:intro-mu-plus}
\end{align}
At \(\mu_+\) the two phases coalesce at an ordinary critical point.  At the
other end the turning locus reaches \(\mu=0\); there is no positive lower
tension.  The apparent loss of reentrance comes from the singular estimates
\begin{align}
 (P_{\rm turn}-P_t)Q^2
 &=\frac{3\mu^4}{8\pi}+O(\mu^5),
 \nonumber\\
 Q(T_t-T_{\rm turn})
 &=\frac{\mu^3}{2\pi}+O(\mu^4).
 \label{eq:intro-collapse}
\end{align}
Ordinary plots suppress the turn by four powers of tension in pressure and
three powers in temperature.

Although the collapse is geometrically narrow, its thermodynamic effect is
strong.  The two coexisting black holes occupy different small-tension charts.  One has
\(z=O(\mu^2)\) and entropy \(O(\mu^{-2})\); the boundary-linked state has
\(z=O(\mu)\) and a lapse of order \(\mu\).  With
\begin{equation}
 \widehat P=\frac{q^2-\mu^2}{\mu^4},
 \qquad
 \widehat T=\frac{\tau_t-\tau}{2\mu^3},
 \label{eq:intro-blowup-controls}
\end{equation}
the Maxwell component converges on compact positive blow-up intervals to
\begin{equation}
 \widehat T=2\sqrt{\widehat P}-\widehat P.
 \label{eq:intro-blowup-profile}
\end{equation}
The limiting turn occurs at \((\widehat P,\widehat T)=(1,1)\).  The entropy
ordering remains fixed while the thermodynamic-volume jump and the Clapeyron
slope reverse sign.  Meanwhile the entropy jump, latent heat, and reduced
stationary-saddle action difference diverge.  Thus the control window
collapses while the thermodynamic discontinuities grow without bound.

Motivated by this example, we develop a classification intrinsic to phase
selection.  For two nondegenerate minima set
\(\Delta S=S_L-S_S\) and \(\Delta V=V_L-V_S\), and let
\begin{equation}
 D(T,P)=G_L(T,P)-G_S(T,P),
 \qquad D_T=-\Delta S,
 \qquad D_P=\Delta V.
 \label{eq:intro-gap-germ}
\end{equation}
The parameterized Morse lemma separates the quadratic wells from the
critical-value gap \(D\).  For a prescribed corner blow-up, fixed labeled
transverse selected-sheet face incidence, and fixed reservoir polarization,
we classify the leading pressure, temperature, jump, and gap-conormal traces
on the resolved Maxwell half-line.  The quotient retains the restrictions of
the physical admissibility margin to the two selected critical sheets.  A
full boundary-compatible state-control equivalence induces this quotient.
The leading pressure contact and the Newton principal part of
\(\Delta V/\Delta S\) determine a projective polynomial class \([R]\).  If the
pressure has contact order \(m\) and the volume jump has pole order \(n\), the
normalized coexistence curve in the integrable case \(m>n\) is the weighted
primitive of \(R\).  Its positive simple roots give the number and order of
regular turns and their curvatures.  We prove invariance and completeness in
this selected-sheet leading-profile quotient, stability of every simple-root
turn portrait on compact resolved intervals under the stated higher-order
remainders, and local punctured two-well realizability.  Multiple positive
roots lie on a positive-codimension discriminant.  The cases \(m=n\) and
\(m<n\) give logarithmic and power-divergent corners and cannot be absorbed
into the integrable class.  For the primitive binomial \(R(r)=1-r^k\), the
result reduces to
\begin{equation}
 \widehat T=
 \frac{\iota+k}{k}\widehat P^{\iota/m}
 -\frac{\iota}{k}\widehat P^{(\iota+k)/m},
 \qquad \iota=m-n,
 \label{eq:intro-binomial-normal-form}
\end{equation}
and the C-metric realizes the class
\((a,b,c;m,n,k)=(4,2,3;2,1,1)\).  The strict transforms of the normalized-time
face on its two Maxwell sheets are proportional to \(r\) and to a positive
unit, respectively, fixing the labeled transverse incidence used here.

This value-based viewpoint also delimits what branch-topology criteria can
decide.  Two ordinary extrema of \(T_H(r_+)\) generate a three-sheet
stationary cover, but a first-order transition additionally requires a
reversal of endpoint critical-value order, admissibility of both minima, and
comparison with every competing phase included in the ensemble.  These
requirements sharpen the recently proposed two-extrema criterion
\cite{ZhangEtAl2026Geometric} and delimit the subsequent real-domain
dictionary among geometric, topological, and complex classifications
\cite{ZhangEtAl2026Unifying}.  Those dictionaries encode stationary
multiplicity and stability order, whereas Maxwell coexistence also depends
on critical-value order and admissibility.  An analytic
iso-stationary deformation moves the Maxwell temperature while preserving
the full branch cover, state order, Morse indices, and Brouwer degree.  A
clipped cusp independently shows how a physical boundary deletes an ambient
Maxwell crossing.  These observations fit the catastrophe-theory treatment
of charged AdS black holes \cite{Chamblin1999Catastrophic} and the classical
distinction between caustics and Maxwell sets
\cite{Cerf1970Stratification,GolubitskyGuillemin1973,
ArnoldGuseinZadeVarchenko2012,Vassiliev2025BifurcationSets}, with an
additional admissibility stratum imposed by the gravitational state space.

Thermodynamic topology remains useful at the stationary level
\cite{WeiLiu2022Topology,WeiLiuMann2022Defects,WeiLiu2026Review,
Wu2023AcceleratingTopology,WuYangWei2025Extended}.  Recent work has likewise
separated boundary membership and index flow from finite-radius response
singularities of the zero-point branch geometry
\cite{Wu2026BranchStructure}.  That local distinction concerns the
stationary zero-point geometry and does not reconstruct critical-value order
or a Maxwell coexistence set.  Ref.~\cite{Li2026LegendreCovariant} proves
the Brouwer--Morse interpretation and Legendre covariance of the degree for a
specified reservoir problem.  The present work keeps the reservoir
polarization fixed and studies data that the degree does not contain:
critical-value order, Maxwell coexistence, and physical membership.  The
present analysis further establishes the model-specific elimination and
global selection theorem, the boundary-layer scaling, and the
leading-profile classification at a Maxwell boundary.

The absolute action of an accelerating black hole depends on its
renormalization prescription.  Hale et al.~\cite{HaleEtAl2025ChargedAccelerating}
found the standard and topological prescriptions to differ in the presence
of an overall deficit.  In the present fixed-charge ensemble
their difference is the branch-independent shift \(3\mu T/(4P)\).  It changes
absolute free energies but leaves branch differences, the Maxwell set,
winner order, thermodynamic jumps, latent heat, and turning curve unchanged.

The paper is organized as follows.  Section~\ref{sec:cmetric} gives the
fixed-\((P,Q,\mu)\) reduction, proves the exact Maxwell-turning theorem, and
derives its latent heat and endpoint laws.  The small-tension blow-up and
thermodynamic-volume inversion follow in
Sec.~\ref{subsec:cmetric-maxwell-blowup}.  The selected-sheet
Maxwell--boundary leading-profile classification follows in
Sec.~\ref{sec:maxwell-boundary-germs}.
Section~\ref{sec:constrained-families} fixes the general constrained-family
framework.  The corrected two-extrema criterion, value-order obstruction,
and boundary continuity results are established in
Sec.~\ref{sec:local-versus-global};
the decorated wall complex and its covariance are then developed in
Sec.~\ref{sec:decorated-complex}.  Detailed eliminations, root-exhaustion
arguments, wall incidences, and action-scheme checks are collected in the
appendices.

\section{Exact Maxwell turning of the charged accelerating AdS black hole}
\label{sec:cmetric}

The charged AdS C-metric supplies a physical test in which stationary
branches, their critical values, and the admissible domain change
independently.  Its geometry belongs to the accelerating sector of the
Pleba\'nski--Demia\'nski family and contains the charged C-metric of
Kinnersley and Walker
\cite{KinnersleyWalker1970,PlebanskiDemianski1976,GriffithsPodolsky2006,
GriffithsPodolsky2009Book}.
We use the
standard single-string parametrization and the standard holographic
normalization of time
\cite{AppelsGregoryKubiznak2017Conical,AnabalonEtAl2019Thermodynamics},
held fixed as part of the ensemble.  Standard and
topological action renormalization give different absolute potentials in the
presence of an overall conical deficit
\cite{HaleEtAl2025ChargedAccelerating}.
\ref{app:cmetric-renormalization} proves that their difference is common to
all branches at fixed reservoir data, so the phase-selection results are
unchanged.  Formulas for absolute values in this section use the standard
scheme and fixed \((P,Q,\mu)\), with \(Q>0\), while \(T\) is the reservoir
control.

\subsection{Single-string reduction and physical domain}
\label{subsec:cmetric-reduction}

In the conventions of
Refs.~\cite{AppelsGregoryKubiznak2017Conical,AnabalonEtAl2019Thermodynamics},
the metric and Maxwell potential are
\begin{align}
 \mathrm ds^2&=\frac{1}{\Omega^2}\left[
 -\frac{f(r)}{\alpha^2}\mathrm dt^2+\frac{\mathrm dr^2}{f(r)}
 +r^2\left(\frac{\mathrm d\theta^2}{g(\theta)}
 +g(\theta)\sin^2\theta\frac{\mathrm d\phi^2}{K^2}\right)\right],
 \label{eq:cmetric-metric}\\
 \Omega&=1+Ar\cos\theta,\nonumber\\
 f(r)&=(1-A^2r^2)\left(1-\frac{2m}{r}+\frac{e^2}{r^2}\right)
       +\frac{r^2}{\ell^2},\nonumber\\
 g(\theta)&=1+2mA\cos\theta+e^2A^2\cos^2\theta,\nonumber\\
 B&=-\frac{e}{\alpha}\left(\frac1r-\frac1{r_+}\right)\mathrm dt .
 \nonumber
\end{align}
Writing \(\Xi=1+e^2A^2\), the normalized time and pressure are
\begin{equation}
 \alpha^2=\Xi(1-A^2\ell^2\Xi),
 \qquad
 P=\frac{3}{8\pi\ell^2}.
 \label{eq:cmetric-alpha}
\end{equation}
We regularize the north pole and retain a string at the south pole,
\begin{equation}
 K=\Xi+2mA,\qquad
 \mu=\frac{mA}{K},\qquad
 Q=\frac eK.
 \label{eq:cmetric-single-string}
\end{equation}
The mass, entropy, Hawking temperature, and thermodynamic volume in this
normalization are
\begin{align}
 M&=\frac{m(1-A^2\ell^2\Xi)}{K\alpha},
 &
 S&=\frac{\pi r_+^2}{K(1-A^2r_+^2)},\nonumber\\
 T_H&=\frac{f'(r_+)}{4\pi\alpha},
 &
 V&=\frac{4\pi}{3K\alpha}\left[
 \frac{r_+^3}{(1-A^2r_+^2)^2}+mA^2\ell^4\Xi\right].
 \label{eq:cmetric-thermodynamics}
\end{align}
They are the nonrotating specialization of the full-cohomogeneity first law
of Ref.~\cite{AnabalonEtAl2019Thermodynamics}.  Its restriction to fixed
\((P,Q,\mu)\) is
\begin{equation}
 \mathrm dM=T_H\,\mathrm dS .
 \label{eq:cmetric-restricted-first-law}
\end{equation}

To expose the fixed-\((Q,\mu)\) reduction, we rewrite the single-string
relations used in Ref.~\cite{AbbasvandiEtAl2019Snapping} in terms of
\begin{equation}
 d=1-2\mu,\qquad z=eA,\qquad q=\frac{Q}{\ell},\qquad u=Ar_+ .
 \label{eq:cmetric-reduced-variables}
\end{equation}
Equations~\eqref{eq:cmetric-single-string} give
\begin{align}
 \Xi&=1+z^2,&
 K&=\frac{1+z^2}{d},&
 mA&=\frac{\mu(1+z^2)}{d},\nonumber\\
 AQ&=\frac{zd}{1+z^2},&
 \frac mQ&=\frac{\mu(1+z^2)^2}{zd^2},&
 A\ell&=\frac{zd}{q(1+z^2)}.
 \label{eq:cmetric-exact-reduction}
\end{align}
At fixed \((q,\mu)\), the outer horizon is the largest physical root
\(u\in(0,1)\) of
\begin{equation}
 \begin{split}
 H(u,z;q,d)={}&
 (1-u^2)\left[
 u^2-\frac{(1-d)(1+z^2)}d\,u+z^2\right]\\
 &+\frac{u^4q^2(1+z^2)^2}{z^2d^2}=0 .
 \end{split}
 \label{eq:cmetric-horizon-polynomial}
\end{equation}
The reduced lapse factor is
\begin{equation}
 \alpha^2=1+z^2-\frac{z^2d^2}{q^2}.
 \label{eq:cmetric-reduced-alpha}
\end{equation}
On a simple horizon sheet we use the constrained derivative
\begin{equation}
 \mathscr D_z=\partial_z-\frac{H_z}{H_u}\partial_u .
 \label{eq:cmetric-constrained-derivative}
\end{equation}

The dimensionless state functions
\begin{equation}
 \tau=4\pi Q T_H,\qquad
 s=\frac{S}{\pi Q^2},\qquad
 \mathfrak m=\frac MQ,\qquad
 \nu=\frac{3V}{4\pi Q^3}
\end{equation}
take the form
\begin{align}
 \tau&=\frac{zdH_u}{(1+z^2)u^2\alpha},
 &
 s&=\frac{u^2(1+z^2)}{z^2d(1-u^2)},\nonumber\\
 \mathfrak m&=\frac{\mu\alpha}{zd},
 &
 \nu&=\frac1\alpha\left[
 \frac{u^3(1+z^2)^2}{z^3d^2(1-u^2)^2}
 +\frac{\mu zd}{q^4}\right].
 \label{eq:cmetric-reduced-state-functions}
\end{align}
The first law becomes
\begin{equation}
 \mathscr D_z\mathfrak m=\frac{\tau}{4}\mathscr D_zs .
 \label{eq:cmetric-dimensionless-first-law}
\end{equation}
Consequently, for \(\vartheta=4\pi QT\),
\begin{equation}
 \frac{\mathcal G_T}{Q}=\mathfrak m-\frac{\vartheta s}{4},
 \qquad
 \mathscr D_z\left(\frac{\mathcal G_T}{Q}\right)
 =\frac{\tau-\vartheta}{4}\mathscr D_zs .
 \label{eq:cmetric-offshell}
\end{equation}
On an equilibrium sheet, \(\vartheta=\tau\), and we write its dimensionless
critical value as
\begin{equation}
 \mathfrak g:=\frac{G}{Q}
 =\mathfrak m-\frac{\tau s}{4}.
 \label{eq:cmetric-dimensionless-gibbs}
\end{equation}
The entropy-coordinate restriction implicit in
Eq.~\eqref{eq:cmetric-offshell} is unnecessary.  Before selecting any
horizon root, direct differentiation gives, on \(H=0\),
\begin{equation}
 H_us_z-H_zs_u
 =-\frac{4\mu u^2(1+z^2)}{z^3d^2}.
 \label{eq:cmetric-entropy-wedge}
\end{equation}
Combining this identity with the temperature in
Eq.~\eqref{eq:cmetric-reduced-state-functions} yields
\begin{equation}
 \mathscr D_zs
 =-\frac{4\mu}{z^2d\alpha\tau}<0
 \qquad
 (0<\mu<1/4,\ \alpha>0,\ \tau>0).
 \label{eq:cmetric-entropy-regularity}
\end{equation}
Thus entropy is a regular, strictly monotone coordinate on each connected
fixed-control component of the positive-temperature physical horizon sheet,
and \(\tau=\vartheta\) is equivalent there to stationarity of
\(\mathcal G_T\).  The Morse criterion is likewise
unambiguous, with a canonical minimum satisfying
\(\mathscr D_z\tau<0\).  At \(\mu=0\), where the accelerating coordinate
itself degenerates, the nonaccelerating limit is covered by \(r_+\) or \(S\).
\ref{app:cmetric-identities} proves the off-shell polynomial identity
behind Eq.~\eqref{eq:cmetric-entropy-wedge}.

The algebraic horizon condition is, however, only the first of several
admissibility tests.  The physical state space also requires
\begin{equation}
 0\leq\mu<\frac14,\qquad
 \alpha^2>0,\qquad
 g(\theta)>0\quad(0\leq\theta\leq\pi),
 \label{eq:cmetric-basic-admissibility}
\end{equation}
the existence of an outer black-hole horizon \(0<u<1\), and the absence of an
acceleration horizon on the conformal boundary.  With \(v=\cos\theta\), define
\begin{equation}
 C(v)=(v^2-1)\left(1+
 2\frac{\mu(1+z^2)}d\,v+z^2v^2\right)
 +\frac{q^2(1+z^2)^2}{z^2d^2}.
 \label{eq:cmetric-boundary-polynomial}
\end{equation}
The slowly accelerating chamber has \(C(v)>0\) for all \(v\in[-1,1]\).
In the present notation, the boundary-horizon parametrization of
Ref.~\cite{AbbasvandiEtAl2019Snapping} follows from \(C=C_v=0\) and reads
\begin{align}
 mA&=\frac{v(1+2z^2v^2-z^2)}{1-3v^2},\nonumber\\
 A\ell&=\frac{\sqrt{1-3v^2}}
 {(1-v^2)\sqrt{1-z^2v^2}}.
 \label{eq:cmetric-slow-face}
\end{align}
The original inequalities select the physical signs and interval of \(v\).
In particular, \(\alpha^2>0\) is weaker than slow acceleration
\cite{AbbasvandiEtAl2019Snapping}; using it as the only boundary condition
adds states with an extra horizon.

\subsection{Wall equations and the physical Maxwell test}
\label{subsec:cmetric-walls}

The fold surface is the physical image of
\begin{equation}
 H=0,\qquad \mathscr D_z\tau=0,
 \label{eq:cmetric-fold-system}
\end{equation}
under
\begin{equation}
 T=\frac{\tau}{4\pi Q},\qquad
 P=\frac{3q^2}{8\pi Q^2}.
\end{equation}
An ordinary canonical critical endpoint additionally satisfies
\begin{equation}
 \mathscr D_z^2\tau=0,\qquad
 \mathscr D_z^3\tau\ne0.
 \label{eq:cmetric-critical-system}
\end{equation}
The physical solutions of Eqs.~\eqref{eq:cmetric-fold-system} and
\eqref{eq:cmetric-critical-system} project to the ordinary critical curve
\(q=q_{\rm crit}(\mu)\).
For two distinct Euclidean equilibria \((u_i,z_i)\) at the same
\((q,\mu)\), the extended critical-value wall is obtained from
\begin{equation}
 H_i=0,\qquad
 \tau_1=\tau_2,\qquad
 \mathfrak m_1-\frac{\tau_1s_1}{4}
 =\mathfrak m_2-\frac{\tau_2s_2}{4}.
 \label{eq:cmetric-critical-value-system}
\end{equation}
It becomes a physical Maxwell point only after both states pass every
admissibility test, lie in entropy-regular patches, and are local canonical
minima,
\begin{equation}
 (\mathscr D_zs_i)(\mathscr D_z\tau_i)>0,
 \label{eq:cmetric-local-stability}
\end{equation}
and their common value is the lowest among all physical equilibria.
The distinction between Eq.~\eqref{eq:cmetric-critical-value-system} and this
global test is what prevents a free-energy crossing involving a clipped or
unstable sheet from being misidentified as coexistence.

At fixed \((Q,\mu)\), two equilibrium branches obey
\begin{equation}
 \mathrm d(\Delta G)=-\Delta S\,\mathrm dT+\Delta V\,\mathrm dP.
 \label{eq:cmetric-delta-G}
\end{equation}
On a regular coexistence surface,
\begin{equation}
 \frac{\mathrm dP}{\mathrm dT}=\frac{\Delta S}{\Delta V},
 \qquad
 \frac{\mathrm dT}{\mathrm dP}=\frac{\Delta V}{\Delta S}.
 \label{eq:cmetric-clapeyron}
\end{equation}
These identities provide consistency checks on the coexistence surface.
Figure~\ref{fig:cmetric-phase-slice} shows the stationary and
critical-value curves on a representative slice.

\begin{figure}[t]
 \centering
 \includegraphics[width=\linewidth]{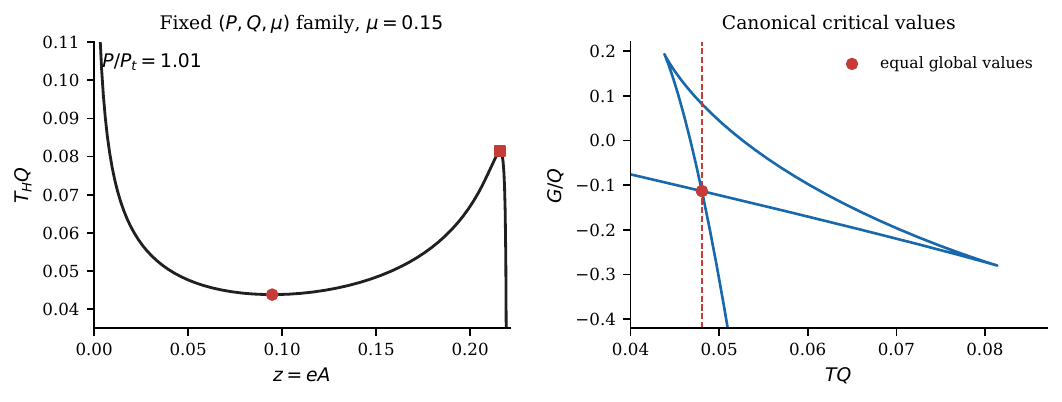}
 \caption{A representative numerical slice
 at \(Q=1\), \(\mu=0.15\), and \(P/P_t=1.01\).  The left panel shows the two
 canonical folds.  The right panel is the parametric critical-value curve;
 the solid circular marker represents two distinct physical
 equilibria with the same \((T,G)\).  This slice lies below the algebraic
 threshold \(\mu_+\) given by Eq.~\eqref{eq:cmetric-mu-plus-strong}.}
 \label{fig:cmetric-phase-slice}
\end{figure}

\subsection{The boundary corner and the snapping pressure}
\label{subsec:cmetric-X}

The point \(X\) identified in Ref.~\cite{AbbasvandiEtAl2019Snapping} is the
intersection of the normalized-time face, the bulk-extremal face, and the
slow-acceleration face.  Its ambient coordinates obey
\begin{equation}
 A\ell=\frac1{\sqrt{1+z^2}},
 \qquad
 mA=z\sqrt{1+z^2}.
 \label{eq:cmetric-X-ambient}
\end{equation}
Imposing Eq.~\eqref{eq:cmetric-single-string}, these relations reduce to
\begin{equation}
 \frac{\mu}{d}=\frac{z}{\sqrt{1+z^2}},
 \qquad q=\mu.
 \label{eq:cmetric-X-reduced}
\end{equation}
It follows that
\begin{equation}
 P_t=\frac{3\mu^2}{8\pi Q^2}.
 \label{eq:cmetric-Pt}
\end{equation}
The snapping pressure in Eq.~\eqref{eq:cmetric-Pt}, the snapping of the
swallowtail, the zeroth-order line, and the reentrant phase behavior were
established in Ref.~\cite{AbbasvandiEtAl2019Snapping}.  In the reduced
variables, these features meet at a wall-complex corner where the critical
slice loses a stationary sheet through the physical boundary.

Near this corner, two independent square-root scales appear.  At fixed
\(q=\mu\), let \(z_X\) be determined by
Eq.~\eqref{eq:cmetric-X-reduced}.  Then
\begin{equation}
 \alpha^2=\frac{2}{z_X}(z_X-z)
 +O\bigl((z_X-z)^2\bigr).
 \label{eq:cmetric-X-alpha-scaling}
\end{equation}
At the same point the horizon polynomial has a double root.  If
\(\xi=u-u_X\) and \(\rho_e\) is transverse to the extremal face, then
\begin{equation}
 H=c_{20}\xi^2+c_{01}\rho_e+\cdots,
 \qquad c_{20}c_{01}\ne0.
 \label{eq:cmetric-X-horizon-scaling}
\end{equation}
Hence \(\xi=O(\rho_e^{1/2})\), whereas
\(\alpha=O(\rho_\alpha^{1/2})\) with
\(\rho_\alpha=\alpha^2\).  Since \(T_H\) contains \(H_u/\alpha\),
\begin{equation}
 M=O(\rho_\alpha^{1/2}),\qquad
 V=O(\rho_\alpha^{-1/2}),\qquad
 T_H=O\left(\sqrt{\rho_e/\rho_\alpha}\right).
 \label{eq:cmetric-X-weighted-scaling}
\end{equation}
The limiting temperature is therefore path-dependent unless the relative
weight of the two faces is fixed.  These independent weights make \(X\) a
boundary-corner singularity of the selected physical family.

\subsection{The exact turning locus in the physical Maxwell set}
\label{subsec:cmetric-exact-turning}

Fix \((Q,\mu)\).  A regular finite-separation Maxwell turn is a smooth point
of a two-phase coexistence component at which the phases remain distinct,
\(\Delta S\neq0\), and \(\mathrm dT/\mathrm dP=0\).  The Clapeyron relation
\cite{WeiLiu2015Clapeyron} gives
\begin{equation}
 \frac{\mathrm dT}{\mathrm dP}=\frac{\Delta V}{\Delta S},
 \label{eq:cmetric-turn-definition}
\end{equation}
so a regular turn is equivalent to \(\Delta V=0\).  For distinct states on
the same connected entropy-regular horizon component considered here,
Eq.~\eqref{eq:cmetric-entropy-regularity} makes \(\Delta S\neq0\) automatic.
For the exact pair below the same conclusion also follows directly from
Eq.~\eqref{eq:cmetric-turning-entropy-gap}.  This excludes a critical
coalescence from the definition; the latter appears only as the upper
endpoint of the turning locus.

\begin{theorem}[Complete physical Maxwell-turning classification]
\label{thm:cmetric-exact-turning-main}
In the standard single-string fixed-\((P,Q,\mu)\) ensemble, define
\begin{equation}
 \mu=\frac{\chi}{(1+\chi)^2},\qquad
 0<\chi<\chi_c,\qquad
 \chi_c^2=\frac{2\sqrt3-3}{3}.
\end{equation}
There is a connected locus of strict finite-separation physical Maxwell
turning points with
\begin{align}
 q^2&=\frac{\chi^2(1+\chi^2)}{(1+\chi)^4},
 &\tau&=4\mu,\nonumber\\
 \mathfrak g_-&=\mathfrak g_+=0,
 &\nu_-&=\nu_+,
\end{align}
and the two phases are the roots of
\begin{align}
 u_-&=u_+=\chi,\nonumber\\
 0={}&(\chi^6+\chi^4-2\chi^2+1)w^2\nonumber\\
 &+(\chi^6+2\chi^4-\chi^2)w+\chi^6 .
\end{align}
Here \(w=z^2\), and \(z=eA\) is the charge--acceleration coordinate defined
in Eq.~\eqref{eq:cmetric-reduced-variables}; the horizon coordinate is
\(u=Ar_+\).
Both phases are admissible and entropy-regular, and each is a strict canonical
minimum.  Every other physical equilibrium at the same controls has strictly
larger Gibbs free energy.  The turning is nondegenerate,
with
\begin{equation}
 \left.\frac{\mathrm d^2\tau}{\mathrm dq^2}\right|_{\rm turn}
 =\frac{4(1-\chi)^2(1+\chi)^4}
 {\chi^3(1+\chi^2)^2}>0.
\end{equation}
Every such physical Maxwell turning in this ensemble lies on this locus,
whose parameter range is
\begin{equation}
 0<\mu<\mu_+,
 \qquad
 \mu_+=\frac{\sqrt{2\sqrt3-3}}
 {2(1+\sqrt{2\sqrt3-3})}.
\end{equation}
It has no positive lower endpoint and terminates by coalescence at the ordinary
critical point \(\mu=\mu_+\).  Its lower closure is the
Maxwell--admissibility corner resolved by the small-tension blow-up.
\end{theorem}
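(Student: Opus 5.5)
The proof is an elimination argument in the reduced variables \((u,z;q,d)\), followed by a separate global selection argument. For two states \((u_\pm,z_\pm)\) at common \((q,\mu)\), I would write the turning system as the polynomial ideal generated by \(H(u_\pm,z_\pm)=0\), \(\tau_-=\tau_+\), \(\mathfrak g_-=\mathfrak g_+\), and \(\nu_-=\nu_+\), with \(\tau,\mathfrak m,s,\nu\) taken from Eq.~\eqref{eq:cmetric-reduced-state-functions}. The lapse \(\alpha\) enters through square roots, so I would treat \(\alpha_\pm\) as auxiliary variables with \(\alpha_\pm^2\) given by Eq.~\eqref{eq:cmetric-reduced-alpha}, and clear denominators on the positive domain \(0<u<1\), \(z>0\), \(0<\mu<1/4\).

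The first and hardest step is to show that this ideal forces \(u_-=u_+\). I would form resultants eliminating \(\alpha_\pm\) first and then \(z_\pm\), or equivalently compute a Gröbner basis in the symmetric and antisymmetric combinations of the two states. Dividing out the trivial diagonal \((u_-,z_-)=(u_+,z_+)\) should leave a factor whose only positive real component is \(u_-=u_+\) with \(z_-\neq z_+\). On that component, the difference quotients \(H(u,z_+)-H(u,z_-)\) divided by \(z_+^2-z_-^2\), together with the analogous quotients for \(\tau\) and \(\mathfrak g\), become symmetric in \(w_\pm=z_\pm^2\). Expressing them through \(w_++w_-\) and \(w_+w_-\) should produce the rational parametrization \(\mu=\chi/(1+\chi)^2\), \(q^2=\chi^2(1+\chi^2)/(1+\chi)^4\). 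The stated quadratic would then follow as \(w^2-(w_++w_-)w+w_+w_-\), and direct substitution gives \(\tau=4\mu\) and \(\mathfrak g_\pm=0\).

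I expect the real-root exhaustion here to be the main obstacle. Spurious components have to be excluded by sign arguments on \((0,1)\), for instance by showing that the leftover factors have coefficients of fixed sign, or by applying Sturm sequences in \(\chi\).

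Once the locus is found, the remaining checks follow in order.

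1. The range of \(\chi\) comes from the discriminant of the \(w\)-quadratic together with positivity of \(w_\pm\). The coalescence \(w_-=w_+\) at \(\chi_c\) reproduces \(\mu_+\) and matches the critical system~\eqref{eq:cmetric-critical-system}.
2. For every \(\chi\in(0,\chi_c)\), admissibility requires \(\alpha_\pm^2>0\), \(g>0\), \(u=\chi\) being the largest root of \(H\), and \(C(v)>0\) on \([-1,1]\). Each of these is a univariate positivity statement on an open interval, which I would check by factorization or root isolation.
3. Strict minimality follows from criterion~\eqref{eq:cmetric-local-stability} using \(\mathscr D_zs<0\) from Eq.~\eqref{eq:cmetric-entropy-regularity}, so it suffices to verify \(\mathscr D_z\tau<0\) at both states.
4. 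Global selection requires enumerating all other solutions of \(H=0\), \(\tau=4\mu\) at the same controls. I would use a subresultant sequence in \(z\) to count the physical roots and evaluate \(\mathfrak g\) on each, showing that every other admissible root gives \(\mathfrak g>0\).
5. Nondegeneracy follows by implicitly differentiating the coexistence system along \(q\), using the Clapeyron relation~\eqref{eq:cmetric-clapeyron} with \(\Delta V=0\). This gives \(d^2\tau/dq^2\) as \(d(\Delta V/\Delta S)/dq\), rescaled, and the stated closed form can then be simplified on the locus.

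Completeness of the classification is automatic. Any regular turn satisfies \(\Delta V=0\) by Eq.~\eqref{eq:cmetric-turn-definition}, so it lies on the ideal analyzed above, and the exhaustion shows it belongs to this locus. As \(\chi\to0\) we have \(\mu\to0\) with no positive lower threshold. The corresponding closure point, \(q\to\mu\) with \(\alpha\to0\) on one sheet, is the corner \(X\) described by Eq.~\eqref{eq:cmetric-X-reduced}.
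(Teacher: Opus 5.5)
Your plan follows the paper's proof essentially step for step. It runs an unrestricted two-state elimination that forces $u_-=u_+$, reduces the common-horizon sector to the quadratic $F_\chi(w)$ whose discriminant fixes $0<\chi<\chi_c$, and supplies exact admissibility and Hessian-sign certificates. It then uses a saturated subresultant enumeration at the turning controls to show the only other positive-temperature candidate has $\mathfrak g_3>0$, and obtains the curvature by differentiating the Clapeyron relation at $\Delta\nu=0$.

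One adjustment is needed in the common-horizon step. Your claim that the difference quotients for $\tau$ and $\mathfrak g$ are symmetric in $w_\pm$ does not hold as stated: $\tau$, $\mathfrak g$ and $\nu$ are rational in $\rho=\alpha/z$, not in $w$. That step should therefore use symmetric functions of $\rho_\pm$, as the paper does after solving the horizon equation for $a=(1-d)/d$ and without squaring the temperature. Doing so produces a second algebraic branch, which is discarded only because $(\rho_++\rho_-)^2-4\rho_+\rho_-<0$ there.
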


\begin{proof}
Begin with two independent positive-domain horizon states satisfying equal
temperature, Gibbs free energy, and thermodynamic volume.  The unrestricted
elimination in \ref{app:unrestricted-turning-classification} forces
\(u_-=u_+\); the common horizon coordinate therefore follows from the
two-horizon equations.  Substitution of \(u_-=u_+=\chi\) reduces the system to
the quadratic displayed in the theorem, and the rigidity analysis in
\ref{app:common-horizon-rigidity} exhausts its real positive sector.

The state-function factorizations and exact sign estimates in
\ref{app:cmetric-turning-certificate} establish entropy regularity,
admissibility, positive canonical Hessians, and the stated curvature.  The
saturated subresultant reconstruction in \ref{app:saturated-reconstruction}
enumerates every positive-temperature equilibrium at the same controls and
proves the strict global value order.  Together, these results establish
physical phase selection.  They also show that no physical
face is reached before the critical merger and that the lower closure occurs
only at \(\mu=0\).
\end{proof}

The equal-temperature,
equal-free-energy, and equal-volume equations admit a common
parametrization.  Introduce \(\chi\) by
\begin{equation}
 \mu=\frac{\chi}{(1+\chi)^2},
 \qquad
 d=\frac{1+\chi^2}{(1+\chi)^2}.
 \label{eq:cmetric-chi-mu}
\end{equation}
The coexisting phases have the same outer-horizon coordinate
\(u_-=u_+=\chi\), while \(w_\pm=z_\pm^2\) are the two roots of
\begin{align}
 F_\chi(w)&=A_\chi w^2+B_\chi w+\chi^6=0,\nonumber\\
 A_\chi&=\chi^6+\chi^4-2\chi^2+1,\nonumber\\
 B_\chi&=\chi^6+2\chi^4-\chi^2.
 \label{eq:cmetric-turning-polynomial}
\end{align}
Explicitly,
\begin{equation}
 w_\pm=
 \frac{\chi^2\left[1-2\chi^2-\chi^4
 \mathbin{\pm}(1-\chi^2)\sqrt{1-6\chi^2-3\chi^4}\right]}
 {2\left(\chi^6+\chi^4-2\chi^2+1\right)}.
 \label{eq:cmetric-turning-roots}
\end{equation}
Substitution in the horizon equation and the state functions gives
\begin{align}
 q^2&=\frac{\chi^2(1+\chi^2)}{(1+\chi)^4}
     =\mu^2(1+\chi^2),\nonumber\\
 \tau_-&=\tau_+=4\mu,
 &
 \mathfrak g_-&=\mathfrak g_+=0,\nonumber\\
 \nu_-&=\nu_+
 =\frac{(1-\chi)(1+\chi)^5}
 {\chi^3(1+\chi^2)^2}.
 \label{eq:cmetric-turning-data}
\end{align}
The common zero in Eq.~\eqref{eq:cmetric-turning-data} is a standard-scheme
normalization; equality and branch ordering are scheme invariant by
Proposition~\ref{prop:exact-scheme-robustness}.
Thus
\begin{equation}
 \frac{P_{\rm turn}}{P_t}=1+\chi^2,
 \qquad
 QT_{\rm turn}=\frac{\mu}{\pi}.
 \label{eq:cmetric-turning-controls}
\end{equation}
This stationary point is a strict local minimum of the coexistence
temperature.  Let \(p=q^2\) and set
\(\Delta Y=Y(w_+)-Y(w_-)\).  Along a regular Maxwell sheet,
\begin{equation}
 \frac{\mathrm d\tau}{\mathrm dp}=2\frac{\Delta\nu}{\Delta s}.
 \label{eq:cmetric-dimensionless-clapeyron-p}
\end{equation}
At the exact pair \(\Delta\nu=0\).  Exact implicit differentiation at fixed
\((\tau,\mu)\) gives
\begin{align}
 \left(\partial_p\nu\right)_{\tau,+}
 -\left(\partial_p\nu\right)_{\tau,-}
 &=\mathcal C_V(\chi)(w_+-w_-),\nonumber\\
 \mathcal C_V(\chi)&=
 \frac{(\chi-1)(\chi+1)^9A_\chi}
 {2\chi^9(1+\chi^2)^4}<0.
 \label{eq:cmetric-turning-curvature-factor}
\end{align}
The entropy jump reduces on the same quadratic to
\begin{equation}
 \Delta s=-\frac{A_\chi(1+\chi)^2(w_+-w_-)}
 {\chi^4(1+\chi^2)(1-\chi^2)}<0.
 \label{eq:cmetric-turning-entropy-jump}
\end{equation}
Consequently the curvature with respect to the physical pressure coordinate
is nonzero and positive.  In the dimensionless charge coordinate it takes
the closed form
\begin{equation}
 \left.\frac{\mathrm d^2\tau}{\mathrm dq^2}\right|_{\rm turn}
 =\frac{4(1-\chi)^2(1+\chi)^4}
 {\chi^3(1+\chi^2)^2}>0.
 \label{eq:cmetric-turning-positive-curvature}
\end{equation}
\ref{app:cmetric-turning-certificate} gives the quotient-ring
calculation and proves that its implicit Jacobian is nonzero throughout the
open interval.

The exact pair also determines the strength of the first-order transition.
We label the root \(w_-\) by \(L\), for the large-entropy phase,
and \(w_+\) by \(S\), for the small-entropy phase.

\begin{corollary}[Latent heat along the turning locus]
\label{cor:cmetric-turning-latent-heat}
Let
\begin{equation}
 D_\chi=1-6\chi^2-3\chi^4.
 \label{eq:cmetric-turning-Dchi}
\end{equation}
At every point of the open turning locus,
\begin{align}
 \frac{S_L-S_S}{\pi Q^2}
 &=\frac{(1+\chi)^2\sqrt{D_\chi}}
 {\chi^2(1+\chi^2)},
 \label{eq:cmetric-turning-entropy-gap}\\
 \frac{\mathcal L_{\rm turn}}Q
 =\frac{M_L-M_S}{Q}
 &=\frac{T_{\rm turn}(S_L-S_S)}Q
 =\frac{\sqrt{D_\chi}}{\chi(1+\chi^2)}>0.
 \label{eq:cmetric-turning-latent-heat}
\end{align}
The latent heat decreases strictly from infinity to zero as
\(\mu\) runs from \(0\) to \(\mu_+\).  At the two ends,
\begin{align}
 \frac{\mathcal L_{\rm turn}}Q
 &=\frac1\mu-2-5\mu+O(\mu^2),
 \qquad \mu\downarrow0,
 \label{eq:cmetric-latent-small-mu}\\
 \frac{\mathcal L_{\rm turn}}Q
 &=\mathcal A_+(\mu_+-\mu)^{1/2}
 +O\!\left((\mu_+-\mu)^{3/2}\right),
 \label{eq:cmetric-latent-upper-law}\\
 \mathcal A_+&=
 \left[
 \frac{12(1+\chi_c)^3}
 {\chi_c(1+\chi_c^2)(1-\chi_c)}
 \right]^{1/2}
 \simeq10.8537 .
 \nonumber
\end{align}
\end{corollary}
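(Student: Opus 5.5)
The plan is a direct computation from the exact pair of Theorem~\ref{thm:cmetric-exact-turning-main}, with all reductions carried out in the quotient ring \(\mathbb Q(\chi)[w]/(F_\chi)\).

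\emph{Entropy gap.} Substitute \(u=\chi\) into the entropy in Eq.~\eqref{eq:cmetric-reduced-state-functions}:
\[
 s=\frac{\chi^2(1+w)}{w\,d(1-\chi^2)},
 \qquad d=\frac{1+\chi^2}{(1+\chi)^2}.
\]
This gives
\[
 s_L-s_S=\frac{\chi^2}{d(1-\chi^2)}\Bigl(\frac1{w_-}-\frac1{w_+}\Bigr)
 =\frac{\chi^2(w_+-w_-)}{d(1-\chi^2)\,w_+w_-} .
\]
By Vieta's formulas for \(F_\chi\), \(w_+w_-=\chi^6/A_\chi\). The difference \(w_+-w_-\) is read off from Eq.~\eqref{eq:cmetric-turning-roots} as \(\chi^2(1-\chi^2)\sqrt{D_\chi}/A_\chi\); one checks that \(B_\chi^2-4A_\chi\chi^6=\chi^4(1-\chi^2)^2D_\chi\). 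Combining these, the factors of \(A_\chi\) and \(1-\chi^2\) cancel, leaving \((1+\chi)^2\sqrt{D_\chi}/(\chi^2(1+\chi^2))\). This agrees in sign and magnitude with Eq.~\eqref{eq:cmetric-turning-entropy-jump}, which serves as a check.

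\emph{Latent heat.} Both phases have \(\mathfrak g=0\) and the common value \(\tau=4\mu\). Hence \(\mathfrak m=\tau s/4=\mu s\) on each phase, so \(M_L-M_S=T_{\rm turn}(S_L-S_S)\) holds identically. In dimensionless form,
\[
 \frac{\mathcal L}{Q}=\frac{\tau}{4}\,\Delta s=\mu\,\Delta s .
\]
Using \(\mu=\chi/(1+\chi)^2\), this collapses to \(\sqrt{D_\chi}/(\chi(1+\chi^2))\). Positivity on the open locus follows because \(D_\chi>0\) exactly for \(\chi<\chi_c\), since \(\chi_c\) is the positive root of \(3\chi^4+6\chi^2-1\).

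\emph{Monotonicity.} The map \(\chi\mapsto\mu\) is increasing on \((0,1)\), with \(d\mu/d\chi=(1-\chi)/(1+\chi)^3\), so it suffices to work in \(\chi\). There the numerator \(\sqrt{D_\chi}\) decreases and the denominator \(\chi(1+\chi^2)\) increases, so the quotient decreases strictly. It goes from \(+\infty\) as \(\chi\to0\) to \(0\) as \(\chi\to\chi_c\).

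\emph{Small-\(\mu\) expansion.} Invert \(\mu=\chi/(1+\chi)^2\) as a series:
\[
 \chi=\mu+2\mu^2+5\mu^3+O(\mu^4)
 \qquad\text{(Catalan-type inversion).}
\]
Also expand
\[
 \frac{\sqrt{D_\chi}}{\chi(1+\chi^2)}=\chi^{-1}\bigl(1-4\chi^2+O(\chi^4)\bigr).
\]
Substituting gives \(1/\mu-2-5\mu+O(\mu^2)\). The only care needed is the bookkeeping of orders: \(1/\chi=\mu^{-1}(1-2\mu-\mu^2+\dots)\), and the \(-4\chi\) correction contributes \(-4\mu\).

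\emph{Upper endpoint law.} This step is the one to do carefully. Near \(\chi_c\), \(D_\chi\) has a simple zero, \(D_\chi=D'(\chi_c)(\chi-\chi_c)+O((\chi-\chi_c)^2)\), where
\[
 -D'(\chi_c)=12\chi_c+12\chi_c^3=12\chi_c(1+\chi_c^2).
\]
The endpoint \(\mu_+\) is regular for the map \(\chi\mapsto\mu\), because \(d\mu/d\chi\ne0\) there (\(\chi_c<1\)). Therefore
\[
 \chi_c-\chi=\frac{(1+\chi_c)^3}{1-\chi_c}(\mu_+-\mu)+O\bigl((\mu_+-\mu)^2\bigr).
\]
Hence \(\mathcal L/Q=\sqrt{D_\chi}/(\chi(1+\chi^2))\) is \(\sqrt{\mu_+-\mu}\) times a function analytic in \(\mu_+-\mu\). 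This gives the stated form, with error \(O((\mu_+-\mu)^{3/2})\). The leading coefficient is
\[
 \mathcal A_+=\frac{\bigl[12\chi_c(1+\chi_c^2)(1+\chi_c)^3/(1-\chi_c)\bigr]^{1/2}}{\chi_c(1+\chi_c^2)},
\]
which simplifies to the stated bracket. A numerical evaluation at \(\chi_c^2=(2\sqrt3-3)/3\) then confirms \(\simeq10.8537\).
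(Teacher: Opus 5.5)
Your proposal is correct and follows the paper's proof essentially step by step. Both use the root gap $w_+-w_-=\chi^2(1-\chi^2)\sqrt{D_\chi}/A_\chi$ in the entropy jump, equal Gibbs values at $QT_{\rm turn}=\mu/\pi$ for the latent heat, monotonicity in $\chi$, inversion of $\mu=\chi/(1+\chi)^2$ at the lower end, and linearization of $D_\chi$ and $\mu_+-\mu$ at $\chi_c$. The differences are cosmetic: you recompute $\Delta s$ from $s$ and Vieta instead of quoting Eq.~\eqref{eq:cmetric-turning-entropy-jump}, and you use a monotone-factor argument instead of the logarithmic derivative; your analyticity remark also justifies the $O\bigl((\mu_+-\mu)^{3/2}\bigr)$ remainder more explicitly than the paper does.
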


\begin{proof}
Equation~\eqref{eq:cmetric-turning-roots} gives
\begin{equation}
 w_+-w_-=
 \frac{\chi^2(1-\chi^2)\sqrt{D_\chi}}{A_\chi}.
 \label{eq:cmetric-turning-root-gap}
\end{equation}
Substitution in Eq.~\eqref{eq:cmetric-turning-entropy-jump} proves
Eq.~\eqref{eq:cmetric-turning-entropy-gap}.  Equal Gibbs values at a common
temperature give \(M_L-M_S=T_{\rm turn}(S_L-S_S)\); inserting
\(QT_{\rm turn}=\mu/\pi\) and
\(\mu=\chi/(1+\chi)^2\) proves
Eq.~\eqref{eq:cmetric-turning-latent-heat}.  Its logarithmic derivative with
respect to \(\chi\) is
\begin{equation}
 \frac{D_\chi'}{2D_\chi}-\frac1\chi
 -\frac{2\chi}{1+\chi^2}<0,
 \qquad D_\chi'=-12\chi(1+\chi^2),
\end{equation}
so the decrease is strict.  The lower expansion follows by inverting
\(\mu=\chi/(1+\chi)^2\).  At the upper end,
\(D_\chi=-D'_{\chi_c}(\chi_c-\chi)+
O((\chi_c-\chi)^2)\) and
\(\mu_+-\mu=[(1-\chi_c)/(1+\chi_c)^3]
(\chi_c-\chi)+O((\chi_c-\chi)^2)\), which gives
Eq.~\eqref{eq:cmetric-latent-upper-law}.
\end{proof}

\begin{corollary}[Reentrant winner ordering at every physical turn]
\label{cor:cmetric-reentrant-ordering}
Fix \(0<\mu<\mu_+\).  There is an \(\varepsilon_T(\mu)>0\) such that, for
\[
 T_{\rm turn}<T<T_{\rm turn}+\varepsilon_T(\mu),
\]
the fixed-\((T,Q,\mu)\) pressure path meets the physical Maxwell component
at exactly two nearby pressures
\begin{equation}
 P_1(T,\mu)<P_{\rm turn}(\mu)<P_2(T,\mu).
\end{equation}
As pressure increases through these two values, the global minimum in the
physical sector is selected in the order
\begin{equation}
 S\ \longrightarrow\ L\ \longrightarrow\ S .
 \label{eq:cmetric-local-reentrant-sequence}
\end{equation}
Both crossings are first order.  The local reentrant window exists for every
positive tension below \(\mu_+\); its unscaled temperature depth collapses
as \(O(\mu^3/Q)\) when \(\mu\downarrow0\).
\end{corollary}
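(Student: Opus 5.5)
The approach is a local analysis of the coexistence curve near the turn, using Theorem~\ref{thm:cmetric-exact-turning-main} and the Clapeyron relation, followed by a global-selection argument at the perturbed controls.

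First, fix \(0<\mu<\mu_+\) and parametrize the physical Maxwell component near the turning pair by the pressure coordinate \(p=q^2\). The theorem supplies two distinct strict canonical minima at the turn, each nondegenerate with \(\mathscr D_z\tau\neq0\). By the implicit function theorem applied to \(H_i=0\) and \(\tau_i=\vartheta\), each minimum therefore continues smoothly in \((p,\vartheta)\). The gap function \(D=\mathfrak g_L-\mathfrak g_S\) vanishes at the turn and has \(\partial_\vartheta D\propto-\Delta s\neq0\) by Eq.~\eqref{eq:cmetric-turning-entropy-gap}. Hence \(D=0\) defines a smooth curve \(\tau_{\rm coex}(p)\) near \(p_{\rm turn}\).

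On this curve, Eq.~\eqref{eq:cmetric-dimensionless-clapeyron-p} gives \(\mathrm d\tau/\mathrm dp=0\) at the turn, since \(\Delta\nu=0\). Equation~\eqref{eq:cmetric-turning-positive-curvature} then gives a strictly positive second derivative. The turn is therefore a nondegenerate local minimum of \(\tau_{\rm coex}\), so \(T_{\rm coex}(P)-T_{\rm turn}\) is quadratic in \(P-P_{\rm turn}\) with positive coefficient. By the Morse lemma in one variable, for \(T\) slightly above \(T_{\rm turn}\) the level set \(T_{\rm coex}=T\) near the turn consists of exactly two points \(P_1<P_{\rm turn}<P_2\), both depending continuously on \(T\).

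Second, identify the winner in each interval. At fixed \(T\), \(\partial_P D=\Delta V\) by Eq.~\eqref{eq:cmetric-delta-G}. Since \(\Delta\nu\) changes sign at the turn, we have \(\Delta V(P_1)\) and \(\Delta V(P_2)\) of opposite signs. Equivalently, \(D(T,P)\) has the sign of \(-\Delta S\,(T-T_{\rm coex}(P))\) to leading order. Because \(\Delta S=S_L-S_S>0\) and \(T>T_{\rm coex}(P)\) exactly for \(P\in(P_1,P_2)\), we get \(D<0\) on \((P_1,P_2)\), so \(L\) wins there. Outside this interval \(D>0\), so \(S\) wins. This yields the sequence \(S\to L\to S\). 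Both crossings are first order because \(\Delta S\neq0\), giving a nonzero latent heat by Corollary~\ref{cor:cmetric-turning-latent-heat}.

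Third, and this is the main obstacle, we must show that no other physical equilibrium undercuts the two selected branches at nearby controls, and that admissibility persists. The theorem gives a strict global gap at the turn itself. I would argue by compactness and continuity. The equilibria at fixed controls are the roots of the saturated subresultant system, and these depend continuously on the controls. The strict inequalities defining the physical sector are open, while the turning pair lies strictly inside it. So for a sufficiently small neighborhood the two minima remain admissible and every competing equilibrium keeps a Gibbs value strictly larger.

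The delicate point is equilibria near the physical boundary, where new states could enter through a face, especially near the corner \(X\) with its path-dependent temperature. For fixed \(\mu<\mu_+\) with \(\mu>0\), the theorem states that no face is reached along the locus. I would use this to bound the Gibbs values of states entering through faces from below by a positive margin, and then choose \(\varepsilon_T(\mu)\) smaller than both this margin scale and the Morse neighborhood.

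Finally, the depth estimate follows from the window bound \(T_{\rm turn}<T<T_t\) together with Eq.~\eqref{eq:intro-collapse}. That equation gives \(Q(T_t-T_{\rm turn})=\mu^3/2\pi+O(\mu^4)\), so the temperature depth is \(O(\mu^3/Q)\).
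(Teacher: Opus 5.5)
Your proposal is correct and follows the paper's proof essentially step by step. The paper argues the same way: the nonzero implicit Jacobian and the positive curvature make the turn a nondegenerate minimum of the coexistence temperature, the one-variable Morse lemma gives \(P_1<P_{\rm turn}<P_2\), the envelope identity \(\partial_T(G_L-G_S)=-(S_L-S_S)<0\) fixes \(S\to L\to S\), the strict margins and the strict global gap persist by continuity, and Eq.~\eqref{eq:cmetric-temperature-collapse} gives the depth. For the persistence step, which the paper states in one line, the fact that the turning pair reaches no face only secures the pair's own admissibility; to exclude competitors, use the complete enumeration of Lemma~\ref{lem:app-saturated-stationary-reconstruction}, where every other algebraic equilibrium at the turning controls has \(\tau<0\) or \(\mathfrak g>0\) and lies off the \(u\), \(w\), and normalization faces, together with the compactness bounds.
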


\begin{proof}
The implicit Jacobian of the two-phase equations is nonzero and
Eq.~\eqref{eq:cmetric-turning-positive-curvature} makes the turning a strict
minimum of the coexistence temperature.  The Morse lemma on the regular
Maxwell component therefore gives the two stated intersections.  The exact
global-value gap and all physical margins are strict at the turning pair, so
they persist in a neighborhood.  With
\(\Delta G=G_S-G_L\), the envelope identity gives
\begin{equation}
 \left.\frac{\partial\Delta G}{\partial T}\right|_{P,Q,\mu}
 =-(S_S-S_L)>0.
\end{equation}
Thus \(L\) wins above the coexistence temperature and \(S\) below it.
A horizontal line immediately above a strict minimum lies below the
coexistence curve outside its two intersections and above it between them,
which proves Eq.~\eqref{eq:cmetric-local-reentrant-sequence}.  The latent
heat in Eq.~\eqref{eq:cmetric-turning-latent-heat} is nonzero throughout the
open interval.  Finally,
Eq.~\eqref{eq:cmetric-temperature-collapse} gives the small-tension scale.
\end{proof}

At the turning condition \(\Delta V=0\), the vanishing coexistence slope
occurs between two separated global minima.  The entropy jump and latent heat
are finite at every point of the open locus, vanish with the square-root
merger law as \(\mu\uparrow\mu_+\), and diverge as \(\mu\downarrow0\).
The lower limit is therefore a singular boundary-layer endpoint.

The discriminant of Eq.~\eqref{eq:cmetric-turning-polynomial} is
\begin{equation}
 \operatorname{disc}_wF_\chi
 =\chi^4(1-\chi^2)^2(1-6\chi^2-3\chi^4).
 \label{eq:cmetric-turning-discriminant}
\end{equation}
This connected exact turning locus exists for
\begin{equation}
 0<\chi<\chi_c,
 \qquad
 \chi_c^2=\frac{2\sqrt3-3}{3}.
 \label{eq:cmetric-chi-window}
\end{equation}
The map \(\chi\mapsto\mu\) is strictly increasing on this interval.  At
\(\chi=\chi_c\), the two roots merge at the ordinary critical endpoint and
give
\begin{align}
 0&<\mu<\mu_+,\nonumber\\
 \mu_+&=
 \frac{\sqrt{2\sqrt3-3}}
 {2\left(1+\sqrt{2\sqrt3-3}\right)},\nonumber\\
 \mu_+&\simeq0.202602 .
 \label{eq:cmetric-mu-plus-strong}
\end{align}
Equivalently, \(\mu_+\) is the unique root in \(0<\mu<1/4\) of
\begin{equation}
 p(\mu)=64\mu^4-48\mu^2+24\mu-3=0.
 \label{eq:cmetric-mu-polynomial}
\end{equation}
Indeed, \(p(0)<0<p(1/4)\), while
\(p'(\mu)=8(32\mu^3-12\mu+3)>0\) on the physical tension interval.

Set \(h=\sqrt{2\sqrt3-3}\).  The endpoint data are
\begin{align}
 z_c&=2-\sqrt3,
 &q_c&=\frac{1}{(1+h)(1+\sqrt3)},\nonumber\\
 u_c&=\frac{h}{\sqrt3},
 &(A\ell)_c&=\frac{1+\sqrt3}{4},\nonumber\\
 \alpha_c&=\sqrt3-1,
 &\frac{P_c}{P_t}&=\frac2{\sqrt3},\nonumber\\
 T_c&=\frac{\mu_+}{\pi Q},
 &S_c&=\frac{\pi Q^2}{q_c},\nonumber\\
 M_c&=\frac{\mu_+Q}{q_c},
 &G_c&=0 .
 \label{eq:cmetric-upper-exact-point}
\end{align}
The number-field calculation in \ref{app:cmetric-identities}
verifies
\begin{equation}
 \mathscr D_z\tau=\mathscr D_z^2\tau
 =\mathscr D_z\nu=0,
 \qquad
 \mathscr D_z^3\tau\ne0,
 \qquad
 \mathscr D_zs\ne0.
 \label{eq:cmetric-upper-certificate}
\end{equation}
Hence the upper endpoint of this turning locus is an
interior ordinary critical point with a vanishing limiting Clapeyron slope;
the boundary corner \(X\) governs its lower closure.

There is no positive lower tension on this exact connected turning locus.
Inverting Eq.~\eqref{eq:cmetric-chi-mu} gives
\begin{equation}
 \chi=\mu+2\mu^2+5\mu^3+14\mu^4+O(\mu^5),
\end{equation}
and therefore
\begin{equation}
 (P_{\rm turn}-P_t)Q^2
 =\frac{3\mu^4}{8\pi}
 \left(1+4\mu+14\mu^2+O(\mu^3)\right).
 \label{eq:cmetric-pressure-collapse}
\end{equation}
Its closure has lower endpoint \(\mu_-=0\).  The quartic collapse explains why a
finite-resolution phase diagram can suggest a nonzero lower threshold for
the turn.  The unrestricted classification in
Theorem~\ref{thm:cmetric-exact-turning-main} now excludes a second physical
turning component that could supply such a threshold.

The algebraic pair is physical throughout
Eq.~\eqref{eq:cmetric-chi-window}.  Since
\(A_\chi=(1-\chi^2)^2+\chi^6>0\), \(B_\chi<0\), and
\begin{equation}
 w_-+w_+-\chi^2
 =-\frac{\chi^6(\chi^2+2)}{A_\chi}<0,
\end{equation}
one has \(0<w_-<w_+<\chi^2\).  On either root,
\begin{align}
 \alpha_\pm&=
 \frac{\sqrt{w_\pm}(1-\chi^2)(\chi^2-w_\pm)}
 {(1+w_\pm)\chi^4}>0,\nonumber\\
 g(-1)&=\frac{(1+w_\pm)(1-\chi)^2}{1+\chi^2}>0.
 \label{eq:cmetric-turning-margins}
\end{align}
The outer-horizon, slow-acceleration, entropy-regularity, local-stability, and
global-minimum
proofs are given in
\ref{app:cmetric-turning-certificate}.  They prove that the two
\(G=0\) states are strict canonical minima, that neither reaches a physical
face before the critical merger, and that every remaining admissible
positive-temperature Euclidean equilibrium has positive potential.  Thus
Eq.~\eqref{eq:cmetric-turning-data} lies in the physical Maxwell set and
satisfies the admissibility, stability, and global-minimum conditions beyond
critical-value equality.
Figures~\ref{fig:cmetric-exact-reentrant-wedge} and
\ref{fig:cmetric-threshold-boundary} display the exact locus, its singular
small-tension scales, and the two endpoint limits.

\begin{figure}[t]
 \centering
 \includegraphics[width=\linewidth]{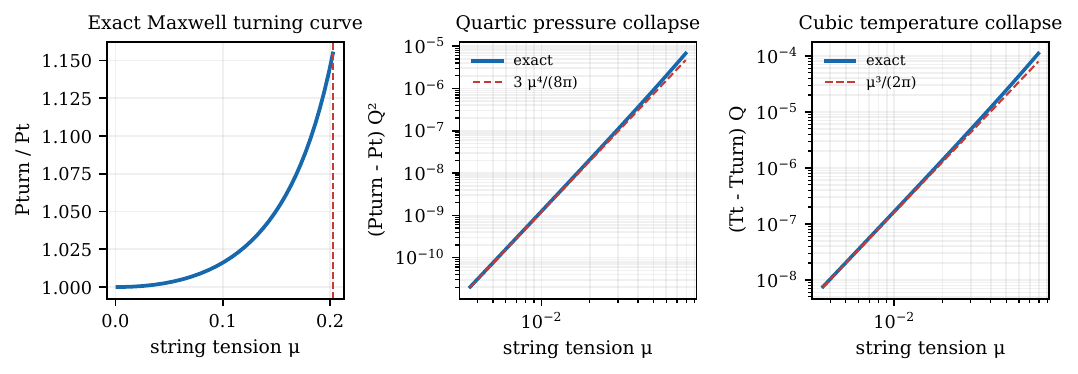}
 \caption{Exact turning locus and its small-tension scales.  The left
 panel shows the turning pressure relative to the snapping pressure.  The
 middle and right panels resolve the small-tension limits
 \(P_{\rm turn}-P_t\sim3\mu^4/(8\pi Q^2)\) and
 \(T_t-T_{\rm turn}\sim\mu^3/(2\pi Q)\).  Solid curves follow from the
 closed expressions; dashed curves are their leading asymptotic terms.}
 \label{fig:cmetric-exact-reentrant-wedge}
\end{figure}

\begin{figure}[t]
 \centering
 \includegraphics[width=\linewidth]{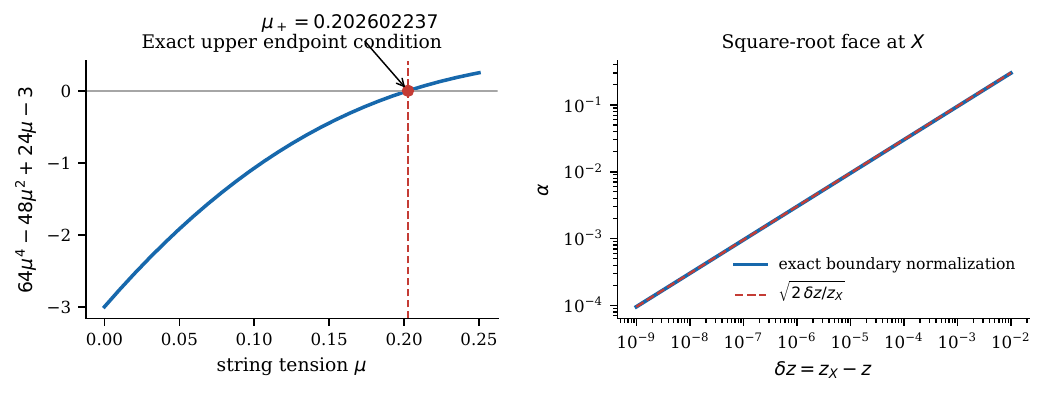}
 \caption{Exact endpoint and boundary behavior.  The left panel isolates
 the unique physical root of Eq.~\eqref{eq:cmetric-mu-polynomial}.  The right
 panel verifies
 \(\alpha\sim\sqrt{2(z_X-z)/z_X}\) on the corner slice \(q=\mu\).}
 \label{fig:cmetric-threshold-boundary}
\end{figure}

\subsection{Bicritical closure}
\label{subsec:cmetric-bicritical-closure}

The bicritical junction was identified in
Ref.~\cite{AbbasvandiEtAl2019Snapping}.  We derive its one-sided Maxwell
limit in closed form.  The corner temperature must be distinguished from the
reservoir temperature selected by a physical approach.  Parametrize the
bicritical limit by
\begin{equation}
 \mu=\frac{x}{\mathcal D_x},
 \qquad
 \mathcal D_x=1+2x-x^2,
 \qquad 0<x<\sqrt2-1.
 \label{eq:cmetric-bicritical-x}
\end{equation}
At \(q=\mu\), the limiting \(X\) phase and the coexisting large phase have
the common horizon coordinate \(u=x\), but
\begin{equation}
 z_X^2=\frac{x^2}{1-3x^2+x^4},
 \qquad
 z_L^2=\frac{x^4}{1-2x^2}.
 \label{eq:cmetric-bicritical-z}
\end{equation}
The Maxwell approach from \(q>\mu\) selects
\begin{align}
 \tau_t&=\frac{4\mu}{\sqrt{1-x^2}},
 &s_X&=\mathcal D_x,
 &s_L&=\frac{\mathcal D_x}{x^2},\nonumber\\
 \mathfrak g_X&=\mathfrak g_L
 =-\frac{x}{\sqrt{1-x^2}}.
 \label{eq:cmetric-bicritical-data}
\end{align}
At \(X\) itself, \(H_u\) and \(\alpha\) vanish simultaneously, so
\(T_H\propto H_u/\alpha\) depends on the relative approach weights.
Equation~\eqref{eq:cmetric-bicritical-data} gives the finite value selected
along the admissible Maxwell path.

Here \(G_X\) denotes the one-sided limiting critical value of the
disappearing small-black-hole sheet as \(q\downarrow\mu\) from the slowly
accelerating region; the excluded corner is not promoted to an independent
boundary phase.  Its volume diverges at the corner, so no smooth expansion in
an arbitrary pressure-transverse direction is available.  The exact statement
concerns the one-sided fixed-pressure cut \(P=P_t\).  Define the limiting
value continuation \(G_X^{\rm lim}(T;P_t)=-TS_X\) and
\(J=G_L-G_X^{\rm lim}\).  With
\(\delta\tau=\tau_t-\tau>0\), the envelope identity on the regular large
branch gives
\begin{equation}
 \frac{J}{Q}
 =\frac{\mathcal D_x(1-x^2)}{4x^2}\,\delta\tau+O(\delta\tau^2).
 \label{eq:cmetric-exact-jump-closure}
\end{equation}
The coefficient is strictly positive, so this one-sided continued value gap
vanishes linearly at the bicritical junction.  Determining the approach
exponent of the physical zeroth-order wall away from this cut requires
pressure-transverse data.  In physical variables,
\begin{align}
 J&=(S_L-S_X)(T_t-T)
 +O\!\left((T_t-T)^2\right),\nonumber\\
 S_L-S_X&=\pi Q^2\frac{\mathcal D_x(1-x^2)}{x^2}.
 \label{eq:cmetric-jump-physical}
\end{align}
The boundary-to-turning temperature depth is
\begin{equation}
 Q(T_t-T_{\rm turn})
 =\frac{\mu}{\pi}
 \left(\frac{1}{\sqrt{1-x^2}}-1\right)
 =\frac{\mu^3}{2\pi}+\frac{2\mu^4}{\pi}+O(\mu^5).
 \label{eq:cmetric-temperature-collapse}
\end{equation}

\subsection{Small-tension Maxwell boundary layer}
\label{subsec:cmetric-maxwell-blowup}

The fourth-order collapse of the pressure interval sets the outer scale of a
singular boundary layer
that contains an entire segment of the physical coexistence curve.  The two
coexisting black holes occupy different state-space charts in this limit.
One has \(z=O(\mu^2)\) and an entropy of order \(\mu^{-2}\), whereas the
boundary-linked phase has \(z=O(\mu)\), finite entropy, and a lapse factor of
order \(\mu\).  Resolving both charts gives a closed limiting Maxwell curve.

Put \(p=q^2\), and let \(\tau_t\) denote the temperature selected by the
one-sided Maxwell approach to the bicritical boundary.  Thus, with \(x\)
defined by
\begin{equation}
 \mu=\frac{x}{1+2x-x^2},
 \qquad
 \tau_t=\frac{4\mu}{\sqrt{1-x^2}},
 \label{eq:cmetric-blowup-boundary-temperature}
\end{equation}
one has
\begin{equation}
 \tau_t=4\mu+2\mu^3+8\mu^4+\frac{43}{2}\mu^5+O(\mu^6).
 \label{eq:cmetric-blowup-boundary-series}
\end{equation}
Introduce the boundary-layer controls
\begin{align}
 p&=\mu^2\bigl(1+\mu^2r^2\bigr),\nonumber\\
 \widehat P&=\frac{p-\mu^2}{\mu^4}=r^2,\nonumber\\
 \widehat T&=\frac{\tau_t-\tau}{2\mu^3}
 =\frac{2\pi Q}{\mu^3}(T_t-T).
 \label{eq:cmetric-blowup-controls}
\end{align}

\begin{theorem}[C-metric Maxwell boundary layer]
\label{thm:cmetric-maxwell-blowup}
Fix \(0<r_0<r_1<\infty\).  For sufficiently small positive tension, the
physical Maxwell component whose closure contains the bicritical boundary
has a unique two-phase solution in the blow-up charts below for every
\(r\in I=[r_0,r_1]\).  The notation \(O_I\) means uniformity on this fixed
compact interval.  Denote the low-\(z\), large-entropy phase by \(L\) and the
high-\(z\), boundary-linked phase by \(S\).  Uniformly on this interval,
\begin{align}
 u_L={}&\mu+2\mu^2+(3+2r)\mu^3
 +(2+8r)\mu^4+O_I(\mu^5),\nonumber\\
 z_L={}&\mu^2+4\mu^3+
 \left(11+4r+\frac{r^2}{2}\right)\mu^4\nonumber\\
 &+\left(24+24r+2r^2\right)\mu^5+O_I(\mu^6),
 \label{eq:cmetric-blowup-large-state}\\
 u_S={}&\mu+2\mu^2+(3+2r)\mu^3
 +(2+8r)\mu^4+O_I(\mu^5),\nonumber\\
 z_S={}&\mu+2\mu^2+\frac92\mu^3+11\mu^4\nonumber\\
 &+\left(\frac{227}{8}-\frac52r^2\right)\mu^5
 +\left(\frac{303}{4}-15r^2\right)\mu^6+O_I(\mu^7).
 \label{eq:cmetric-blowup-small-state}
\end{align}
Equivalently, in the polynomial coordinate \(w=z^2\),
\begin{align}
 w_L={}&\mu^4+8\mu^5+(r^2+8r+38)\mu^6\nonumber\\
 &+(8r^2+80r+136)\mu^7+O_I(\mu^8),\nonumber\\
 w_S={}&\mu^2+4\mu^3+13\mu^4+40\mu^5\nonumber\\
 &+(121-5r^2)\mu^6+(364-40r^2)\mu^7+O_I(\mu^8).
 \label{eq:cmetric-blowup-w-states}
\end{align}
Their common temperature and critical value satisfy
\begin{align}
 \tau_M(r,\mu)
 & =4\mu+2(1-r)^2\mu^3+8(1-r)\mu^4+O_I(\mu^5),
 \label{eq:cmetric-blowup-tau}\\
 \mathfrak g_M(r,\mu)
 & =\mu(r-1)-2\mu^2+O_I(\mu^3).
 \label{eq:cmetric-blowup-gibbs}
\end{align}
Consequently the rescaled coexistence curve converges in \(C^2\) on compact
subintervals of \(r>0\) to
\begin{equation}
 \boxed{\widehat T=2r-r^2+4\mu r+O_I(\mu^2),
 \qquad \widehat P=r^2.}
 \label{eq:cmetric-universal-maxwell-curve}
\end{equation}
In particular, its limiting profile is
\begin{equation}
 \widehat T=2\sqrt{\widehat P}-\widehat P.
 \label{eq:cmetric-universal-maxwell-profile}
\end{equation}
The two phases are strict canonical minima and are the global minima in the
physical sector of the fixed-\((P,Q,\mu)\) ensemble.  Across their coexistence
temperature, \(L\)
is selected on the high-temperature side and \(S\) on the low-temperature
side.
\end{theorem}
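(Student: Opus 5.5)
The plan is to treat the Maxwell system as a regular perturbation problem in blow-up coordinates, one chart per phase, and then close it with the implicit function theorem. The two-phase equations are $H_L=H_S=0$, $\tau_L=\tau_S$, $\mathfrak g_L=\mathfrak g_S$ at $p=\mu^2(1+\mu^2r^2)$. The exact turning locus of Theorem~\ref{thm:cmetric-exact-turning-main} already fixes the scales; inverting $\mu=\chi/(1+\chi)^2$ there gives $u\simeq\chi\simeq\mu$, $w_+\simeq\mu^2$ and $w_-\simeq\mu^4$.

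I would introduce rescaled unknowns for each phase:
\[
 u_L=\mu(1+\mu U_L),\quad z_L=\mu^2(1+\mu Z_L),\quad u_S=\mu(1+\mu U_S),\quad z_S=\mu(1+\mu Z_S).
\]
I then substitute into $H$, $\tau$, $s$ and $\mathfrak m$ from Eqs.~\eqref{eq:cmetric-horizon-polynomial}--\eqref{eq:cmetric-reduced-state-functions}, using $d=1-2\mu$. After dividing each equation by its leading power of $\mu$, the system should become a polynomial (or Puiseux-analytic) system in $(U_L,Z_L,U_S,Z_S,\tau;\mu,r)$ that is smooth at $\mu=0$.

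Two factors need care in the $S$ chart. The lapse is $\alpha^2=1+w-w d^2/p$, and with $w_S\simeq\mu^2$ it is $O(\mu^2)$, so $\alpha_S=\mu\,a(\cdot)$ and $\alpha_S$ must be extracted as a rescaled variable. The term $\mu zd/q^4$ in $\nu$ is singular, but $\nu$ never enters the equations, so this causes no difficulty.

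At $\mu=0$ the truncated system should be explicitly solvable and reproduce the leading coefficients in Eqs.~\eqref{eq:cmetric-blowup-large-state}--\eqref{eq:cmetric-blowup-w-states}. In particular it should give $\tau=4\mu+2(1-r)^2\mu^3+\dots$ and $\mathfrak g=\mu(r-1)+\dots$. I would check that its Jacobian with respect to the rescaled unknowns and $\tau$ is invertible for every $r>0$. The bicritical data of Eq.~\eqref{eq:cmetric-bicritical-z} (at $r\to0$) and the exact turn (at $r=1$, where $\widehat T=1$ agrees with Eqs.~\eqref{eq:cmetric-pressure-collapse} and \eqref{eq:cmetric-temperature-collapse}) serve as consistency checks. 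The implicit function theorem, applied uniformly on the compact interval $I$, then gives existence, uniqueness in the charts, smooth dependence on $(\mu,r)$, and the $O_I$ remainders. Higher coefficients follow by order-by-order matching, which is routine series algebra.

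Next I would derive the profile. Substituting into $\widehat T=(\tau_t-\tau_M)/(2\mu^3)$ with Eq.~\eqref{eq:cmetric-blowup-boundary-series} gives
\[
 \widehat T=\frac{2\mu^3+8\mu^4-2(1-r)^2\mu^3-8(1-r)\mu^4}{2\mu^3}+O(\mu^2)=2r-r^2+4\mu r+O(\mu^2).
\]
$C^2$ convergence follows because the remainders are smooth in $r$ uniformly, so their $r$-derivatives are also $O_I$.

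I would identify the component by continuity: as $r$ increases the solution branch passes through the exact turn at $r=1$, and as $r\to0$ it tends to the bicritical data, so it is the component whose closure contains the bicritical boundary.

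For stability I would compute the sign of $\mathscr D_z\tau$ on each phase from the leading-order forms. Equation~\eqref{eq:cmetric-entropy-regularity} gives $\mathscr D_z s<0$, so by Eq.~\eqref{eq:cmetric-local-stability} a strict minimum requires $\mathscr D_z\tau<0$.

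Winner ordering follows from $\partial_T(G_S-G_L)=S_L-S_S>0$, together with $s_L\sim\mu^{-2}\gg s_S=O(1)$.

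I expect the main obstacle to be the global minimality. It requires enumerating all other positive-temperature admissible equilibria at the same $(p,\tau,\mu)$ in the blow-up regime and showing each has $\mathfrak g>\mathfrak g_M$. I would first try to reuse the saturated subresultant reconstruction of \ref{app:saturated-reconstruction}. After a rescaling it should locate the remaining stationary sheets, including the intermediate unstable branch and any extra horizon roots. I would then compare their critical values at leading order in $\mu$, expecting an $O(\mu)$ positive gap uniformly on $I$, and rule out remaining sheets by the admissibility margins (outer horizon, $g(-1)>0$, slow acceleration $C(v)>0$). Verifying the slow-acceleration margin for the $S$ phase is delicate, since it sits within $O(\mu)$ of the face. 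I would expand $C(v)$ at its minimizer in the $S$ chart and show the leading coefficient is positive for $r>0$.
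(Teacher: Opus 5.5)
Your architecture matches the paper's: two charts, a uniform implicit-function argument on \(I\), then a separate exhaustion and value comparison. The central step, however, fails with the weights you chose. With \(u_L=\mu(1+\mu U_L)\), \(z_L=\mu^2(1+\mu Z_L)\), \(u_S=\mu(1+\mu U_S)\), \(z_S=\mu(1+\mu Z_S)\), the control \(r\) enters only through \(p=\mu^2(1+\mu^2r^2)\), below the leading balance of every equation. The \(\mu=0\) truncation is therefore \(r\)-independent and cannot produce \(2(1-r)^2\mu^3\) or \(\mu(r-1)\).

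The truncation is also degenerate, reflecting the double-root and zero-lapse structure of the corner \(X\).
\begin{itemize}
\item In the \(S\) chart, \(\mu^{-3}\mathcal H=2Z_S-4+O(\mu)\) and \(\alpha^2=1+w-wd^2/p=-\mu(2Z_S-4)+O(\mu^2)\). Adjoining \(\alpha_S=\mu a_S\) as an unknown gives a lapse equation whose leading part duplicates the horizon equation. Moreover \(\tau_S=2(U_S-2)/a_S+O(\mu)\) loses all \(a_S\)-dependence once \(U_S=2\). The value \(a_S=r\) is fixed only at relative order \(\mu^2\), after the correction \(Z_S=2+\tfrac92\mu+\cdots\) is known.
\item In the \(L\) chart, take \(\tau=\mu\widehat\tau\). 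The horizon gives only \(Z_L=2U_L\) at leading order. The \(O(\mu^{-1})\) part of \(\mathfrak g_L\), namely \(1/W-\widehat\tau V^2/(4W^2)\), reduces at \(V=W=1\) to the same condition \(\widehat\tau=4\) as the \(L\)-temperature equation.
\end{itemize}
So \(U_L\) and \(a_S\) are undetermined and the Jacobian at \(\mu=0\) is singular.

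The missing idea is the resolved weighting. Build the known terms into the charts:
\begin{itemize}
\item \(u=\mu+2\mu^2+a\mu^3\) on the \(L\) sheet and \(u=\mu+2\mu^2+c\mu^3\) on the \(S\) sheet;
\item \(z_L=\mu^2+4\mu^3+b\mu^4\);
\item \(z_S=\mu+2\mu^2+\tfrac92\mu^3+11\mu^4+k\mu^5\).
\end{itemize}
Then solve the horizon equation, which is linear in \(p\), for \(p\) rather than selecting a quartic root. Now \((a,b,c,k)\) enter the leading orders of \(p_L=p\), \(p_S=p\), \(\tau_S=\tau_L\), \(\mathfrak g_S=\mathfrak g_L\). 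The lapse becomes \(\alpha_S=\mu\sqrt A\), and the limiting system has Jacobian determinant \(4/r\). Order-by-order matching in your variables can reproduce the coefficients formally, but it does not give the nondegenerate system the uniform implicit-function theorem needs.

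The global-minimality step has a second gap. The saturated subresultant reconstruction in the appendix is certified only at the exact turning controls, one point \(r=r_{\rm turn}(\mu)\) per tension, via resultants in \(\chi\) and Sturm isolation at \(\chi=1/4\). It says nothing about \((\mu,r)\in(0,\mu_0)\times I\). Redoing it with \(r\) as a parameter would still have to be uniform as \(\mu\to0\), where the equilibria sit at different scales.

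The paper instead proves uniform exhaustion by compactness:
\begin{itemize}
\item Positivity of the charge term in \(\mathcal H\) gives the exact bounds \(0<u<2\mathfrak a\) and \(0<z^2<\mathfrak a^2\).
\item Limits with \(z/\mu\to Z_*\in(0,1)\) contradict \(\tau_M\to0^+\).
\item The corner \(Z_*=1\) is bootstrapped into the \(S\) chart, where the residual map has a single nondegenerate zero.
\item The case \(z/\mu\to0\) leads to the low chart. There the limiting temperature residual \(2(3-V)/\sqrt{V(2-V)}-4\) has exactly two simple zeros, \(V=1\) (the \(L\) phase) and \(V=9/5\) (the unstable third state). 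The limits \(W\to0\), \(W\to\infty\) and all intermediate scales are excluded by exact on-shell identities.
\end{itemize}
The value comparison is then \(\mathfrak g_3=2/(27\mu)\) against \(\mathfrak g_M=O(\mu)\), a gap of order \(\mu^{-1}\), not \(\mu\). Outer-horizon selection for \(S\) also requires resolving the near-double horizon root into the cluster \(C_\pm=3\pm2r\), whose inner member has limiting \(\tau/\mu=-4\). Checking the margins you list does not supply this.
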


Substitution of the two chart ans\"atze into the unsquared Maxwell equations
gives a leading algebraic system with a unique physical solution and a
nonzero Jacobian.  A uniform implicit-function argument supplies the stated
series and their \(C^2\) control, while the compactness analysis exhausts the
remaining positive-temperature equilibria and fixes their value order.  The
details are given in \ref{app:cmetric-blowup-proof}; the limiting profile and
volume inversion are shown in Fig.~\ref{fig:cmetric-universal-maxwell-blowup}.

\begin{figure*}[t]
 \centering
 \includegraphics[width=0.86\textwidth]{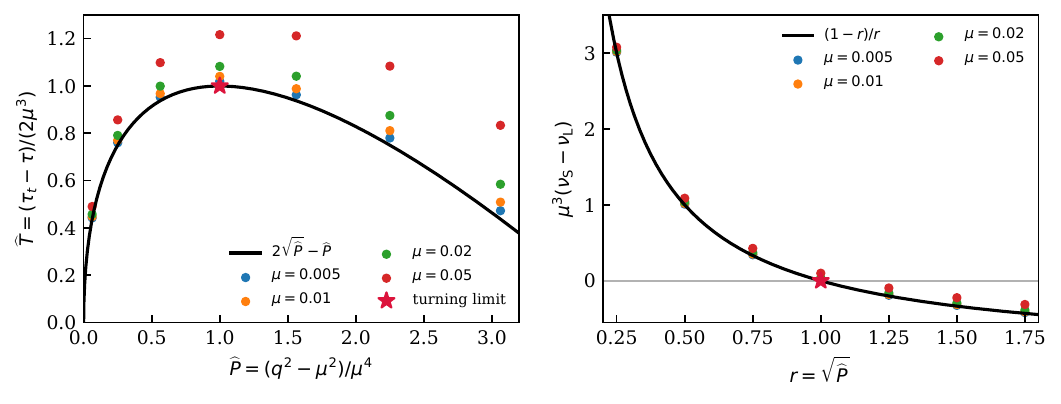}
 \caption{The small-tension Maxwell boundary layer.  The left panel shows
 the boundary-referenced coexistence curve.  Finite-tension solutions of the
 unsquared two-phase equations converge to
 \(\widehat T=2\sqrt{\widehat P}-\widehat P\); the star marks its limiting
 turn.  The right panel shows the rescaled thermodynamic-volume jump.  Its
 limiting zero is the star at \(r=1\), while the finite-tension zeros approach
 it from \(r>1\); the entropy jump remains negative.  The solid curves are the asymptotic formulas in
 Eqs.~\eqref{eq:cmetric-universal-maxwell-profile} and
 \eqref{eq:cmetric-blowup-jumps}.}
 \label{fig:cmetric-universal-maxwell-blowup}
\end{figure*}

The turn is generated by a thermodynamic-volume inversion while the entropy
ordering stays fixed.  On the Maxwell pair,
\begin{equation}
 s_S-s_L=-\frac1{\mu^2}\bigl(1+O_I(\mu)\bigr),
 \qquad
 \nu_S-\nu_L=\frac{1-r}{r\mu^3}+O_I(\mu^{-2}).
 \label{eq:cmetric-blowup-jumps}
\end{equation}
The large-entropy phase has the geometric volume
\(\nu_L=\mu^{-3}(1+O_I(\mu))\).  In the boundary-linked phase, the
acceleration contribution to the thermodynamic volume is amplified by
\(\alpha_S\sim\mu r\), giving
\(\nu_S=(r\mu^3)^{-1}(1+O_I(\mu))\).  Consequently, at leading order,
\(S\), despite its much smaller entropy, has the larger thermodynamic volume
for \(0<r<1\).  The limiting scaled volumes agree at \(r=1\), and their order
is restored for \(r>1\).  At finite tension the exact equality occurs at
\(r=r_{\rm turn}=1+2\mu+O(\mu^2)\).  The
dimensionless Clapeyron equation becomes
\begin{equation}
 \frac{\mathrm d\tau_M}{\mathrm dp}
 =\frac{2(r-1)}{r\mu}+O_I(1)
 =2\frac{\nu_S-\nu_L}{s_S-s_L},
 \label{eq:cmetric-blowup-clapeyron}
\end{equation}
so the change of coexistence slope is caused by the volume inversion.

The C-metric asymptotics realize
the selected-sheet quotient of
Corollary~\ref{thm:maxwell-corner-universality} with
\begin{equation}
 (a,b,c;m,n,k)=(4,2,3;2,1,1),
 \qquad \epsilon=\mu,\qquad \rho=r,
 \label{eq:cmetric-universality-weights}
\end{equation}
and
\begin{equation}
 c_p=c_s=c_v=\rho_*=1,\qquad \lambda=2.
 \label{eq:cmetric-universality-data}
\end{equation}
The selected physical face for this comparison is the normalized-time face.
Its strict transforms are taken separately in the two state charts. Using
Eq.~\eqref{eq:cmetric-blowup-lapse} and the solved pressure coefficient gives,
uniformly on every \(I\Subset(0,\infty)\),
\begin{equation}
 \beta_S^{\mathrm{res}}
 :=\frac{\alpha_S}{\mu}
 =r\bigl[1+O_I(\mu)\bigr],
 \qquad
 \beta_L^{\mathrm{res}}
 :=\alpha_L
 =1+O_I(\mu^2).
 \label{eq:cmetric-resolved-sheetwise-face}
\end{equation}
With \(\rho=r\), these margins give the labeled transverse sheetwise
incidence \(\mathfrak f_{S|L}\): the \(S\) phase reaches the selected face,
whereas \(L\) remains in the interior. The bulk-extremal and
slow-acceleration faces that also meet at \(X\) are additional fixed
incidence data. Their margins are strictly positive on every such \(I\), as
shown in \ref{app:cmetric-blowup-proof}; their joint contact at
\(r=0\) lies outside the single-face Newton datum.
The two-chart desingularization determines all the orders above.  The
pressure Newton edge gives
\(p-\mu^2=\mu^4r^2\) exactly.  The low-\(z\) chart carries the large entropy
\(s_L\sim\mu^{-2}\), whereas the boundary-linked phase has finite entropy.
Both geometric volumes scale as \(\mu^{-3}\), but the boundary-linked lapse
contributes the additional factor \(r^{-1}\).  Their difference has the
two-term Newton-face form \(r^{-1}-1\), as displayed in
Eq.~\eqref{eq:cmetric-blowup-jumps}.
Equations~\eqref{eq:cmetric-blowup-boundary-series} and
\eqref{eq:cmetric-blowup-tau} give the boundary matching condition.
Thus \(\iota=1\), \(\sigma=3\), and
Corollary~\ref{thm:maxwell-corner-universality} gives
\begin{equation}
 \widehat P=r^2,
 \qquad
 \widehat T\longrightarrow2r-r^2
 =2\sqrt{\widehat P}-\widehat P.
 \label{eq:cmetric-normal-form-specialization}
\end{equation}
It also fixes the limiting turning curvature to \(-1/2\).  Equation
\eqref{eq:cmetric-universal-maxwell-profile} therefore realizes the
\((m,n,k)=(2,1,1)\) class.  Its simple turn is stable on every compact
resolved interval under the remainder class of
Corollary~\ref{thm:maxwell-corner-universality} below.  Families with different
corner weights or a different leading Newton polynomial belong to different
classes.

\begin{proposition}[Reduced stationary-saddle barrier]
\label{prop:cmetric-reduced-barrier}
Let \(\mathfrak g_3\) be the critical value of the third equilibrium in the
boundary layer, and let \(\mathfrak g_M\) be the common value of \(L\) and
\(S\).  Uniformly for \(r\) in compact subsets of \((0,\infty)\),
\begin{align}
 \Delta\mathfrak g^\ddagger
 &:=\mathfrak g_3-\mathfrak g_M
 =\frac{2}{27\mu}+O_I(1),
 \label{eq:cmetric-reduced-gibbs-barrier}\\
 \Delta I_E^{\rm red}
 &:=\beta Q\Delta\mathfrak g^\ddagger
 =\frac{2\pi Q^2}{27\mu^2}
 +O_I\left(\frac{Q^2}{\mu}\right).
 \label{eq:cmetric-reduced-action-barrier}
\end{align}
At the same time,
\begin{equation}
 S_L-S_S=\frac{\pi Q^2}{\mu^2}\bigl[1+O_I(\mu)\bigr],
 \qquad
 T_M(S_L-S_S)=\frac{Q}{\mu}\bigl[1+O_I(\mu)\bigr].
 \label{eq:cmetric-boundary-layer-strength}
\end{equation}
\end{proposition}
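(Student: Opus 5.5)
The plan is to compute \(\mathfrak g_3-\mathfrak g_M\) with the weighted-area identity, Eq.~\eqref{eq:intro-weighted-area}, in the form Eq.~\eqref{eq:cmetric-offshell}. I would not try to construct the third equilibrium in closed form. Fix \(r\in I\) and the controls \((p,\mu)\) of Theorem~\ref{thm:cmetric-maxwell-blowup}, and set \(\vartheta=\tau_M(r,\mu)\). By Eq.~\eqref{eq:cmetric-entropy-regularity}, \(s\) is a regular monotone coordinate on the connected positive-temperature horizon component containing \(L\), the third equilibrium, and \(S\). The off-shell identity then gives
\begin{equation*}
 \mathfrak g_3-\mathfrak g_M=\frac14\int_{s_L}^{s_3}\bigl[\tau(s)-\tau_M\bigr]\,\mathrm ds ,
\end{equation*}
with the orientation fixed by \(s_S<s_3<s_L\). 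The endpoints are supplied by the theorem: \(s_L\sim\mu^{-2}\) from the low-\(z\) chart and \(s_S=O(1)\) from the boundary-linked chart. The scaling count fixes what the intermediate chart must look like. To get \(\mu^{-1}\), the integrand must be \(O(\mu)\) over an entropy interval of length \(O(\mu^{-2})\). So the third state should sit at \(s=\sigma\mu^{-2}\) with \(\sigma\) of order one, where the isotherm deviation is of order \(\mu\) rather than \(\mu^3\).

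\textbf{Step 1 (intermediate chart).} I would substitute \(s=\sigma/\mu^{2}\) together with the induced ansatz \(u=\mu(1+O(\mu))\), \(z=\mu^{2}\zeta(\sigma)\) into \(H=0\) at \(p=\mu^2(1+\mu^2r^2)\). Expanding \(\tau\) from Eq.~\eqref{eq:cmetric-reduced-state-functions} should give
\begin{equation*}
 \tau(s)-\tau_M=\mu\,\varphi(\sigma)+O_I(\mu^{2}),
\end{equation*}
uniformly on compact \(\sigma\)-sets. The leading profile \(\varphi\) should be independent of \(r\), which is why the barrier coefficient is parameter-free. Its zeros should be \(\sigma=1\), matching \(L\), \(\sigma=0\), matching \(S\) through the boundary-layer overlap, and a single interior zero \(\sigma_3\). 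The interior zero gives \(\mathfrak g_3\) by the implicit-function theorem with a nonzero \(\varphi'(\sigma_3)\), and uniqueness comes from the saturated enumeration already used in Theorem~\ref{thm:cmetric-maxwell-blowup}. I expect \(\varphi\) to be a low-degree polynomial, of the type \(\varphi=c\,\sigma(1-\sigma)\) or a cubic with \(\int\varphi=0\) over the Maxwell interval. The constant \(2/27\) should then come from \(\tfrac14\int_{\sigma_3}^{1}\varphi\,\mathrm d\sigma\), and it must be checked that the two lobes have equal area, which is the leading Maxwell condition. \textbf{This matching is the main obstacle.} The integral runs from the \(S\) chart through the intermediate chart into the \(L\) chart. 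I would therefore split it at \(\sigma=\delta\) and \(\sigma=1-\delta\), and show separately that the inner-chart contributions are \(O_I(1)\). For the \(S\) side, \(|\tau-\tau_M|=O(\mu)\) on an entropy range of size \(O(\delta\mu^{-2})\), plus an \(O(1)\) range; this part is absorbed by taking \(\delta\to0\) after \(\mu\to0\). The error stays \(O_I(1)\) because the chart expansions hold uniformly.

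\textbf{Step 2 (action).} Multiply by \(\beta Q\). With \(\beta=1/T_M=4\pi Q/\tau_M\) and Eq.~\eqref{eq:cmetric-blowup-tau}, \(\beta Q=(\pi Q^2/\mu)(1+O_I(\mu^2))\). This yields Eq.~\eqref{eq:cmetric-reduced-action-barrier}, with the \(O_I(1)\) term of Step 1 producing the \(O_I(Q^2/\mu)\) remainder.

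\textbf{Step 3 (strength).} Equation~\eqref{eq:cmetric-boundary-layer-strength} is immediate from Eq.~\eqref{eq:cmetric-blowup-jumps} and the definition \(S=\pi Q^2 s\). Combining it with \(T_M=\tau_M/(4\pi Q)=(\mu/\pi Q)(1+O_I(\mu^2))\) gives \(T_M(S_L-S_S)=(Q/\mu)[1+O_I(\mu)]\). Finally, the scheme shift \(3\mu T/(4P)\) is branch independent, so the differences are scheme invariant.
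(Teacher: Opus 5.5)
\textbf{Gap in Step~1.} Writing \(\mathfrak g_3-\mathfrak g_L=\tfrac14\int_{s_L}^{s_3}(\tau-\tau_M)\,\mathrm ds\) on the scale \(s=\sigma\mu^{-2}\) is a legitimate alternative to the paper's route. However, Step~1, which carries all the content, is set up wrongly, and the constant \(2/27\) is never actually derived.

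Your ansatz \(u=\mu(1+O(\mu))\), \(z=\mu^2\zeta\) cannot describe a one-parameter family. Write \(u=\mu V\), \(z=\mu^2W\). At leading order the horizon equation becomes \(V^2-2V+V^4/W^2=0\), so \(V=1\) forces \(W=1\) and leaves only the point \(L\). In particular the third state, which has \(u_3\simeq\tfrac95\mu\), lies outside your ansatz.

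The correct chart is the low chart of Lemma~\ref{lem:cmetric-uniform-root-exhaustion}. It is the graph \(W^2=V^3/(2-V)\), \(0<V<2\), on which \(s\mu^2\to\sigma=(2-V)/V\) and \(\tau/\mu\to2(3-V)/\sqrt{V(2-V)}\). This gives
\[
 \varphi(\sigma)=\frac{1+3\sigma}{\sqrt\sigma}-4
 =\frac{(1-\sqrt\sigma)(1-3\sqrt\sigma)}{\sqrt\sigma}.
\]
This \(\varphi\) is not a polynomial. It vanishes only at \(\sigma=1\) (the phase \(L\)) and at \(\sigma_3=1/9\) (the third state, \(V=9/5\), \(W=27/5\)), and it diverges at \(\sigma=0\). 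The phase \(S\) is not a zero of \(\varphi\). It sits at \(s=O(1)\) near the corner \((u/\mu,z/\mu)=(1,1)\), which is reached from \(\sigma\to0\) only through the \(z=O(\mu)\) region.

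\textbf{The obstacle you single out is misplaced.} Because \(\mathfrak g_L=\mathfrak g_M\), your integral runs only over \(\sigma\in[1/9,1]\). That is a compact part of the low chart, where the lemma gives uniform \(C^1\) convergence together with existence and uniqueness of the third state. The saturated reconstruction you cite concerns the exact turning controls, not the boundary layer. No matching with the \(S\) chart or the \(z=O(\mu)\) chart is needed.

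Had you integrated from \(S\) instead, your bound \(|\tau-\tau_M|=O(\mu)\) on the \(S\) side would be false. Near \(\sigma=0\) one has \(\varphi\sim\sigma^{-1/2}\), and \(\tau\) itself is \(O(1)\) when \(z=O(\mu)\). The \(\delta\)-splitting with \(\delta\to0\) after \(\mu\to0\) then controls the remainder only as \(o(1/\mu)\), not \(O_I(1)\).

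\textbf{How it closes.} With the correct chart, \(\int_{1/9}^{1}\varphi\,\mathrm d\sigma=-8/27\). Hence \(\mathfrak g_3-\mathfrak g_L=\frac{1}{4\mu}\cdot\frac{8}{27}+O_I(1)=\frac{2}{27\mu}+O_I(1)\), since the \(O_I(\mu)\) corrections to \(\varphi\) and to the endpoints contribute only \(O_I(1)\).

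The paper avoids the integral altogether. It evaluates \(\mathfrak g=\mathfrak m-\tau s/4\) pointwise at \((V,W)_3=(9/5,27/5)\), where \(\mathfrak m\simeq5/(27\mu)\) and \(\tau s/4\simeq1/(9\mu)\). It then combines this with \(\mathfrak g_M=O_I(\mu)\) from Theorem~\ref{thm:cmetric-maxwell-blowup}.

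Your weighted-area version, once repaired, has the merit of exhibiting the whole reduced profile, \(\mathcal G_T/Q\simeq\mu^{-1}\tfrac12\sqrt\sigma\,(1-\sqrt\sigma)^2\), with its maximum \(2/(27\mu)\) at \(\sigma_3\). Steps~2 and~3 are correct as written.
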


\begin{proof}
The complete stationary classification in
\ref{app:cmetric-blowup-proof} gives
\(\mathfrak g_3=2/(27\mu)+O_I(1)\) and
\(\mathfrak g_M=O_I(\mu)\).  It also places the third equilibrium between
the two minima in the monotone entropy coordinate and gives it the opposite
canonical Morse index.  Hence its value is the barrier of the reduced
one-state potential.  Since
\(\tau_M=4\mu+O_I(\mu^3)\) and
\(\beta=4\pi Q/\tau_M\),
Eq.~\eqref{eq:cmetric-reduced-action-barrier} follows.  The remaining two
relations use Eq.~\eqref{eq:cmetric-blowup-jumps} and
\(T_M=\mu/(\pi Q)+O_I(\mu^3/Q)\).
\end{proof}

The pressure and temperature widths of the coexistence wedge vanish in the
unscaled control plane, whereas the entropy discontinuity, latent heat, and
stationary barrier grow.  Equation
\eqref{eq:cmetric-reduced-action-barrier} compares stationary values along
the constrained black-hole family.  A decay rate in the full gravitational
configuration space would additionally require the relevant off-shell saddle,
its negative modes, and the fluctuation determinant.

Equation~\eqref{eq:cmetric-universal-maxwell-profile} is smooth in the
natural boundary coordinate \(r=\sqrt{\widehat P}\), but has an infinite
slope at \(\widehat P=0\).  This square-root edge records the degeneration
of the normalized lapse at the bicritical boundary and lies outside the
interior critical locus.  The profile has a nondegenerate maximum of
\(\widehat T\), equivalently a nondegenerate minimum of the coexistence
temperature, at \(r=1\) in the limiting profile.  At finite tension, the
first correction moves it to
\begin{align}
 r_{\rm turn}&=1+2\mu+O(\mu^2),\nonumber\\
 \widehat P_{\rm turn}&=1+4\mu+O(\mu^2),\nonumber\\
 \widehat T_{\rm turn}&=1+4\mu+O(\mu^2).
 \label{eq:cmetric-blowup-turning-location}
\end{align}
Using the exact parameter \(\chi\) on the turning locus gives the stronger
curvature identity
\begin{equation}
 \left.
 \frac{\mathrm d^2\widehat T}{\mathrm d\widehat P^2}
 \right|_{\rm turn}
 =-\frac{(1-\chi)^2}
 {2(1+\chi)^2(1+\chi^2)^3}
 \longrightarrow-\frac12.
 \label{eq:cmetric-blowup-exact-curvature}
\end{equation}
Thus the \(O(\mu^4)\) pressure width and the \(O(\mu^3)\) temperature depth
are the two projections of a resolved, nondegenerate Maxwell geometry.  The
expansion is uniform only when \(r\) remains in a compact subset of
\((0,\infty)\).  Its closure at \(r=0\) is fixed by the exact bicritical
data, whereas a limit in which \(r\) tends to zero with \(\mu\), or grows
without bound, requires a different scaling.  Remote portions of the
Maxwell surface therefore require separate charts.
Theorem~\ref{thm:cmetric-exact-turning-main} proves that none of them contains an
additional finite-separation physical turn.

\paragraph{Numerical wall atlas.}

The exact curve lies inside the three-dimensional wall complex shown in
Fig.~\ref{fig:cmetric-wall-atlas}.  Each fixed-tension section includes all
real quartic horizon roots in \(0<u<1\), subject to the angular, horizon,
temperature, stability, and slow-acceleration inequalities.  Maxwell values
are then compared with every admissible Euclidean equilibrium at the same
controls.

\begin{figure*}[t]
 \centering
 \includegraphics[width=0.90\textwidth]{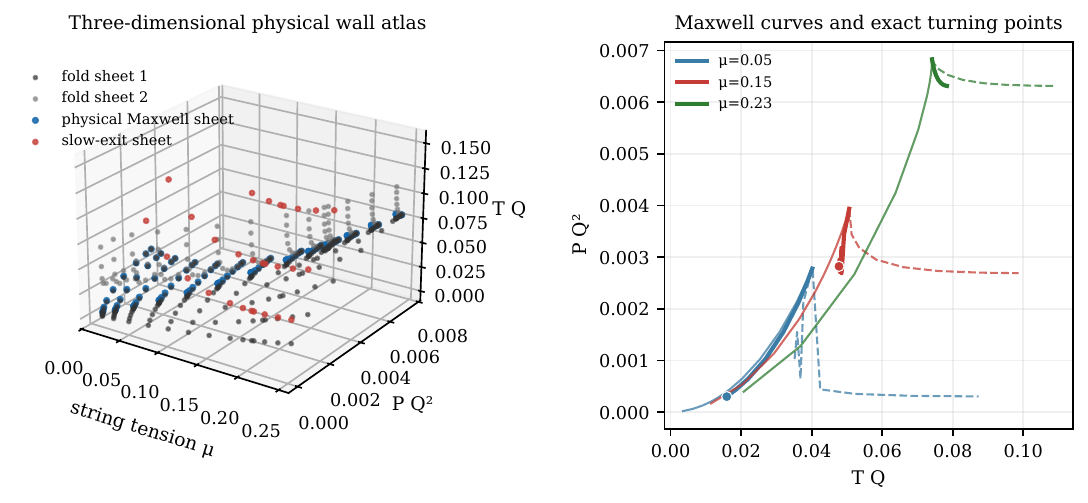}
 \caption{Sampled physical wall atlas at \(Q=1\).  The left panel shows the
 two fold sheets, the physical Maxwell sheet, and the slow-acceleration part
 of the admissibility-exit wall in \((T,P,\mu)\); the extremal part at
 \(T=0\) is omitted for visibility, and the displayed slow-exit points obey
 \(TQ<0.16\).  The right panel gives three
 fixed-tension sections.  Thick
 solid curves are physical Maxwell values, thin solid and dashed curves are
 spinodals, and marked points are the exact turnings.}
 \label{fig:cmetric-wall-atlas}
\end{figure*}

For \(q<\mu\), the connected state curve exits through the
slow-acceleration face.  For \(q>\mu\), it terminates at bulk extremality.
The equality \(q=\mu\) is the corner \(X\), where these faces meet the
normalization face.  On this slice, the lower temperature \(T_0\) used in the
standard phase diagram is the surviving canonical fold away from the
boundary-incidence stratum.  The upper value \(T_t\) is the
Maxwell--admissibility junction whose exact approach data are given in
Eq.~\eqref{eq:cmetric-bicritical-data}.
Table~\ref{tab:cmetric-chamber-ledger} summarizes the equilibrium structure,
phase selection, and active walls across these control regions.

\begin{table*}[t]
\centering
\begin{tabularx}{\textwidth}{@{}
 >{\raggedright\arraybackslash}p{0.20\textwidth}
 >{\raggedright\arraybackslash}p{0.27\textwidth}
 >{\raggedright\arraybackslash}p{0.27\textwidth}
 >{\raggedright\arraybackslash}X@{}}
\toprule
Control region & Equilibrium structure & Physical selection & Active wall\\
\midrule
\(q<\mu\), below the surviving fold
& No admissible Euclidean black-hole equilibrium
& No-black-hole chamber
& Slow-exit and fold boundaries\\
\(q<\mu\), above the surviving fold
& One minimum and one maximum
& Unique stable black hole
& One fold sheet\\
\(q=\mu\)
& Critical slice through the corner \(X\)
& Maxwell--admissibility junction at \(T_t\)
& Multi-face admissibility corner\\
\(\mu<q<q_{\rm crit}(\mu)\), outside the spinodals
& One admissible minimum
& Unique global phase
& None\\
\(\mu<q<q_{\rm crit}(\mu)\), between the spinodals
& Two minima separated by one maximum
& Lower critical value; equality on the physical Maxwell sheet
& Two fold sheets and one Maxwell sheet\\
\(q=q_{\rm crit}(\mu)\)
& Coalescing minimum--maximum pair
& Ordinary critical endpoint
& Cusp edge of the fold surface\\
\(q>q_{\rm crit}(\mu)\)
& One admissible Euclidean minimum
& Unique global phase
& None\\
\bottomrule
\end{tabularx}
\caption{Chamber structure in the sampled control window of the standard
single-string fixed-\((P,Q,\mu)\) ensemble.  The state-space exit changes
from the slow-acceleration face to bulk extremality at \(q=\mu\).}
\label{tab:cmetric-chamber-ledger}
\end{table*}

The exact elimination classifies every physical Maxwell turn at which the
phases remain distinct and proves global selection on the resulting locus.  On the displayed
fixed-tension sections intersecting \(0<\mu<\mu_+\), the Maxwell branch
emanating from the bicritical boundary meets the exact turning point.  The
atlas records the surrounding walls; uniqueness of the turn follows from the
positive-domain resultant.

\section{Resolved Maxwell--boundary profiles}
\label{sec:maxwell-boundary-germs}

\subsection{Value-gap reduction and resolved equivalence}
\label{subsec:maxwell-boundary-gap}

The difference of two competing critical values contains the information
needed to determine two-phase coexistence.  We study its leading profile on
the exceptional half-line of a prescribed corner blow-up.  The data consist
of two selected critical-value sheets, the physical admissibility margin
restricted to those sheets, and a fixed reservoir polarization.  This
selected-sheet problem is narrower than the right-equivalence problem for a
full boundary function germ \cite{Arnold1978Boundary} and differs from the
algebraic stationary bifurcations described by \(A\)-discriminants
\cite{NisseLimChang2024ADiscriminants}.

Through the Clapeyron equation, pressure contact of order \(m\) and a
volume-jump pole of order \(n\) give the leading factor
\(r^{m-n-1}R(r)\) after differentiation with respect to a resolved boundary
coordinate.  Positive simple roots of
\(R\) are therefore volume inversions and regular turns of the coexistence
curve.  The C-metric boundary layer realizes the primitive binomial case
\(R(r)=1-r\), \((m,n,k)=(2,1,1)\), whose weighted primitive is
\(2r-r^2\).

Let \(x_S\) and \(x_L\) be two disjoint nondegenerate local-minimum sheets
of a fixed thermodynamic potential. In reservoir coordinates
\((\vartheta,\pi)\), write their critical values as
\begin{equation}
 g_i(\vartheta,\pi;\epsilon)
 =\mathcal F(x_i(\vartheta,\pi;\epsilon),
             \vartheta,\pi;\epsilon),
 \qquad i=S,L,
 \label{eq:corner-critical-value-sheets}
\end{equation}
and define the oriented value gap
\begin{equation}
 D(\vartheta,\pi;\epsilon)=g_L-g_S.
 \label{eq:corner-value-gap}
\end{equation}
The jump convention agrees with the black-hole application below,
\(\Delta s=s_L-s_S\) and \(\Delta v=v_L-v_S\).  The labels are chosen so
that \(\Delta s>0\) on the Maxwell component.  With the pressure
normalization used below, the envelope identities read
\begin{equation}
 D_{\vartheta}=-\Delta s,
 \qquad
 D_{\pi}=\lambda\Delta v,
 \qquad \lambda>0.
 \label{eq:corner-gap-envelope}
\end{equation}
The constant \(\lambda\) only records the relative normalization of the
two reservoir coordinates. Equation
\eqref{eq:corner-gap-envelope} gives, on a regular component of \(D=0\),
\begin{equation}
 \frac{\dd\vartheta_M}{\dd\pi_M}
 =\lambda\frac{\Delta v}{\Delta s}.
 \label{eq:corner-universality-clapeyron}
\end{equation}

\begin{definition}[Resolved Maxwell--boundary profile]
\label{def:resolved-two-phase-germ}
A resolved Maxwell--boundary profile consists of the two
nondegenerate
minimum sheets in Eq.~\eqref{eq:corner-critical-value-sheets}, a regular
Maxwell component \(D=0\), and an oriented physical face
\(\beta=0\), \(\beta>0\), together with a corner chart
\begin{equation}
 (\epsilon,\rho)\longmapsto
 (\vartheta_M(\epsilon,\rho),\pi_M(\epsilon,\rho)),
 \qquad \epsilon>0,\quad \rho>0.
 \label{eq:corner-resolved-chart}
\end{equation}
For a control point \(\zeta=(\vartheta,\pi;\epsilon)\), define the selected
sheetwise admissibility margins
\begin{equation}
 b_i(\zeta)=\beta\bigl(x_i(\zeta),\zeta\bigr),
 \qquad i=S,L.
 \label{eq:corner-sheetwise-margins}
\end{equation}
Write
\(\zeta_M(\epsilon,\rho)
=(\vartheta_M(\epsilon,\rho),\pi_M(\epsilon,\rho);\epsilon)\)
and let
\begin{equation}
 \beta_i(\epsilon,\rho)
 =b_i\bigl(\zeta_M(\epsilon,\rho)\bigr),
 \qquad i=S,L,
 \label{eq:corner-sheet-face-pullback}
\end{equation}
be their Maxwell traces. We work in the fixed labeled sheetwise
face-incidence class
\begin{equation}
 \beta_S(\epsilon,\rho)=\upsilon_S(\epsilon,\rho)\rho,
 \qquad
 \beta_L(\epsilon,\rho)=\upsilon_L(\epsilon,\rho),
 \qquad \upsilon_S,\upsilon_L>0,
 \label{eq:corner-fixed-face-incidence}
\end{equation}
where \(\upsilon_S\) and \(\upsilon_L\) are positive \(C^2\) units on a
two-sided resolved collar of \(\rho=0\). Thus the labeled \(S\) minimum is the
unique boundary-linked member of the selected pair and meets the face
transversely, while the labeled \(L\) minimum remains uniformly in the
physical interior. We denote this incidence class by
\(\mathfrak f_{S|L}\). Here \(\rho=0\) is the lift of the physical face and
\(D_{\vartheta}\ne0\) on every compact subinterval of the resolved
positive half-line. All asymptotic statements are made after pullback to
this chart and are uniform in \(C^j\) on every
\(K\Subset(0,\infty)\) in the normalized coordinate introduced below.
The label in \(\mathfrak f_{S|L}\) and the sheetwise transverse contact
class are fixed throughout. Interchanging the boundary-linked sheet or
allowing a higher-order contact changes this incidence class.  Only the
restrictions \(b_S,b_L\) are retained.
\end{definition}

Geometrically, \(S\) is the phase that reaches the physical boundary, while
\(L\) remains in the interior.  The symbol \(\mathfrak f_{S|L}\) records which
labeled sheet meets the face and the transverse nature of that contact.
Selected-sheet data retain the face margin after evaluation on the two
equilibrium sheets, and fixed polarization keeps temperature and pressure in
their thermodynamic roles.  The classification compares the resulting
coexistence profiles under positive changes of units and the removal of terms
of higher corner weight.  The ordinary fixed-charge RN--AdS critical endpoint
lies outside this class because the phases coalesce and the two-well
hypothesis fails \cite{KubiznakMann2012PV}.
For the C-metric blow-up of
Sec.~\ref{subsec:cmetric-maxwell-blowup}, the normalized-time face meets the
boundary-linked \(S\) sheet transversely, while the \(L\) sheet remains in
the interior, and hence realizes \(\mathfrak f_{S|L}\).  Off-sheet extensions
such as \(\beta=\rho\) and \(\beta=\rho+y_S\) have the same restriction at
\(y_S=0\) and are identified in the selected-sheet problem.

The value-gap reduction retains all local two-phase selection information.

\begin{proposition}[Reduction to the value-gap germ]
\label{prop:corner-gap-reduction}
In disjoint state neighborhoods of two nondegenerate minima, a
parameter-preserving state diffeomorphism puts the potential in the form
\begin{equation}
 \mathcal F_i(y_i;\vartheta,\pi,\epsilon)
 =g_i(\vartheta,\pi;\epsilon)+\frac12 y_i^2,
 \qquad i=S,L.
 \label{eq:corner-two-well-Morse-form}
\end{equation}
Consequently the local Maxwell set and the ordering of the two minima are
determined by the sign of \(D\). A common increasing reparametrization of
the potential changes the gap, along \(D=0\), only by multiplication by a
positive unit. A boundary-compatible state-control equivalence changes each
selected-sheet admissibility margin only by multiplication by a positive
unit.
\end{proposition}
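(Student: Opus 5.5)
The plan has four parts, one for each assertion of Proposition~\ref{prop:corner-gap-reduction}: the Morse normal form, the Maxwell set, invariance under reparametrization of the potential, and transformation of the margins.

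\emph{Morse normal form.} I would apply the parameterized Morse lemma separately in each of the two disjoint state neighborhoods. Around the minimum sheet $x_i(\zeta)$, $\zeta=(\vartheta,\pi;\epsilon)$, nondegeneracy gives $\partial_x^2\mathcal F>0$. The implicit function theorem then makes $x_i(\zeta)$ a $C^k$ sheet. I would Taylor-expand with integral remainder:
\[
 \mathcal F(x,\zeta)-g_i(\zeta)
 =(x-x_i(\zeta))^2\,h_i(x,\zeta),
\]
where $h_i(x,\zeta)=\int_0^1(1-t)\,\partial_x^2\mathcal F(x_i+t(x-x_i),\zeta)\,dt$ and $h_i(x_i,\zeta)=\tfrac12\partial_x^2\mathcal F>0$. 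Setting $y_i=(x-x_i(\zeta))\sqrt{2h_i(x,\zeta)}$ gives a parameter-preserving local diffeomorphism with $\partial_x y_i>0$ at the sheet, and in this coordinate $\mathcal F_i=g_i+\tfrac12y_i^2$. For several state variables the same argument works with the matrix square root of the integral Hessian; this is the standard parametric Morse lemma. The only smoothness loss is one derivative, which is absorbed by the $C^2$ hypotheses.

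\emph{Maxwell set and ordering.} In the normal form, the local minimum value in well $i$ is exactly $g_i$. The ordering of the two minima is therefore the sign of $D=g_L-g_S$, and the local Maxwell set is $\{D=0\}$. No other data of the wells enter, because the wells are disjoint and each contains a unique critical point.

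\emph{Reparametrization of the potential.} Let $\tilde{\mathcal F}=\Phi(\mathcal F,\zeta)$ with $\partial_1\Phi>0$. The critical points and Morse indices are unchanged, and $\tilde g_i=\Phi(g_i,\zeta)$. By Hadamard's lemma,
\[
 \tilde D=\Phi(g_L,\zeta)-\Phi(g_S,\zeta)
 =\Bigl[\int_0^1\partial_1\Phi\bigl(g_S+tD,\zeta\bigr)\,dt\Bigr]D .
\]
The bracket is a positive unit, so $\tilde D$ is a positive-unit multiple of $D$ wherever the formula holds, in particular on $D=0$. The zero sets agree, and so does the sign, hence the ordering.

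\emph{Boundary-compatible equivalences.} Such an equivalence is a map $(x,\zeta)\mapsto(\phi(x,\zeta),\psi(\zeta))$ preserving the polarization of $\zeta$, carrying critical points to critical points, and sending the face to the face with orientation: $\tilde\beta\circ(\phi,\psi)=U\beta$ with $U>0$. Since critical sheets map to critical sheets, $\tilde x_i(\psi(\zeta))=\phi(x_i(\zeta),\zeta)$. Therefore
\[
 \tilde b_i(\psi(\zeta))=U(x_i(\zeta),\zeta)\,b_i(\zeta),
\]
which is a positive unit times $b_i$.

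The main obstacle is making the boundary-compatibility hypothesis precise enough to yield the multiplicative form. If $\beta$ is only required to preserve the zero set and the sign, I would invoke the $C^k$ division lemma: a function vanishing on the regular hypersurface $\beta=0$ is divisible by $\beta$. Positivity of the quotient on the zero set then follows from transversality, $d\beta\neq0$, together with orientation preservation. Positivity away from the face follows from sign preservation on the collar.
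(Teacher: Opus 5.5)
Your proposal is correct and follows essentially the same route as the paper. It applies the parameterized Morse lemma separately in the two disjoint wells, and your explicit square-root coordinate is a hands-on version of that lemma. It uses the Taylor--Hadamard identity $\Theta(g_L)-\Theta(g_S)=D\int_0^1\Theta_z(g_S+tD)\,\mathrm{d}t$ for the common increasing reparametrization, and it gets positive units on the selected-sheet margins.

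The one difference is minor. You obtain the sheetwise relation $\widetilde b_i(\psi(\zeta))=U(x_i(\zeta),\zeta)\,b_i(\zeta)$ by restricting an ambient face relation to the critical sheets. The paper simply invokes Eq.~\eqref{eq:corner-equivalence-face} as the defining property of the equivalence, remarking that full equivalences induce it. Your closing division-lemma discussion of off-sheet face functions is therefore not needed.
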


The sheetwise covariance is defined before passing to leading order.  A
label-preserving control map \(\Psi\), a common increasing transformation
\(\Theta\) of the two selected critical values, and positive sheet units
\(U_i\) obey
\begin{align}
 \widetilde g_i\bigl(\Psi(\zeta)\bigr)
 &=\Theta\bigl(g_i(\zeta),\zeta\bigr),
 \qquad \partial_1\Theta>0,
 \label{eq:corner-equivalence-potential}\\
 \widetilde b_i\bigl(\Psi(\zeta)\bigr)
 &=U_i(\zeta)b_i(\zeta),
 \qquad U_i>0,\qquad i=S,L.
 \label{eq:corner-equivalence-face}
\end{align}
The same \(\Theta\) acts on both wells and therefore preserves their value
order.
Equation~\eqref{eq:corner-equivalence-face} preserves which labeled sheet
is linked to the face and the transverse contact in
Eq.~\eqref{eq:corner-fixed-face-incidence}.  Every full boundary-compatible
state-control equivalence that maps the selected sheets label by label
induces Eqs.~\eqref{eq:corner-equivalence-potential} and
\eqref{eq:corner-equivalence-face}.  The quotient below retains the induced
Maxwell traces needed to classify a pressure turn; the covariance of the
complete wall complex is developed in Section~\ref{sec:decorated-complex}.

\subsection{Newton data and primitive classification}
\label{subsec:maxwell-corner-universality}

Let
\begin{equation}
 \begin{gathered}
 R(r)=\sum_{j=0}^{N}A_jr^{\kappa_j},
 \qquad 0=\kappa_0<\kappa_1<\cdots<\kappa_N,\\
 \kappa_j\in\mathbb N_0,
 \qquad |A_0|=1.
 \end{gathered}
 \label{eq:corner-Newton-polynomial}
\end{equation}
be a nonzero real polynomial with the zero coefficients omitted. We call
it the resolved Newton polynomial of the volume jump. Its positive
projective-dilation class is
\begin{equation}
 [R]=\{A R(B r): A>0,\ B>0\}.
 \label{eq:corner-projective-class}
\end{equation}
The sign of \(A_0\) is retained. Positive dilation changes the choice of
boundary coordinate, while positive multiplication changes the jump unit.

Let \(\rho=\rho_*r\), with \(\rho_*>0\), and suppose that, for constants
\(a,b,c,c_p,c_s,c_v,\delta>0\), integers \(m\geq1\), \(n\geq0\), and
the polynomial \(R\) in Eq.~\eqref{eq:corner-Newton-polynomial},
\begin{align}
 \pi_M-\pi_t
 &=c_p\epsilon^a\rho^m(1+R_p),
 \label{eq:corner-universality-pressure}\\
 \Delta s
 &=c_s\epsilon^{-b}(1+R_s),
 \label{eq:corner-universality-entropy}\\
 \Delta v
 &=-c_v\epsilon^{-c}\rho^{-n}
   \bigl[R(r)+R_v\bigr].
 \label{eq:corner-universality-volume}
\end{align}
The remainders obey, for every \(K\Subset(0,\infty)\),
\begin{equation}
 \|R_p\|_{C^2(K)}+
 \|R_s\|_{C^1(K)}+
 \|R_v\|_{C^1(K)}
 \leq C_K\epsilon^\delta.
 \label{eq:corner-universality-remainders}
\end{equation}
Here and below derivatives are taken with respect to \(r\). Set
\begin{equation}
 \iota=m-n,
 \qquad
 \sigma=a+b-c.
 \label{eq:corner-universality-weights}
\end{equation}
The finite list
\begin{equation}
\mathscr N=(a,b,c;m,n,[R])
\label{eq:corner-Newton-data}
\end{equation}
is the resolved Newton tuple. Throughout this subsection
\(\mathfrak f_{S|L}\) is held fixed, and \(\mathscr N\) records the
pressure and jump data that remain after this sheetwise incidence has been
specified.
Here \(a\) and \(m\) measure the pressure-layer scale and contact order,
\(b\) measures the entropy-gap scale, \(c\) and \(n\) measure the volume-gap
scale and pole order, and \([R]\) records the leading volume inversions.

The classification is complete in the following quotient. For
\(j=0,1,2\), \(\delta>0\), and \(K\Subset(0,\infty)\), let
\begin{equation}
 \mathcal R^j_\delta(K)
 =\left\{E_\epsilon:
   \|E_\epsilon\|_{C^j(K)}=O_K(\epsilon^\delta)\right\}.
 \label{eq:corner-remainder-quotient}
\end{equation}
Put \(\xi=\vartheta-\vartheta_M(\epsilon,\rho_*r)\). A
\emph{profile equivalence}, formally a polarization-preserving leading-profile
equivalence, between two
profiles with the same \(\mathfrak f_{S|L}\) consists of a
label-preserving resolved collar map \(\widehat\Psi\), positive constants
\begin{equation}
 A_\epsilon,A_r,A_\pi,A_\vartheta,A_D,A_s,A_v>0,
 \label{eq:corner-leading-positive-constants}
\end{equation}
and some \(\delta_0>0\), with the following properties. Both
\(\widehat\Psi\) and its inverse extend as \(C^2\) maps to the boundary
faces of the collar, their Jacobians are nonzero there, and their oriented
normal derivatives are positive. On the Maxwell trace,
\begin{align}
 \widetilde\epsilon
 &=A_\epsilon\epsilon(1+E_\epsilon),
 &E_\epsilon&\in\mathcal R^0_{\delta_0}(K),\nonumber\\
 \widetilde r
 &=A_r r(1+E_r),
 &E_r&\in\mathcal R^2_{\delta_0}(K).
 \label{eq:corner-equivalence-leading}
\end{align}
Write \(\widehat\Psi^*\) for pullback to the original Maxwell trace. For
an integrable profile let \(\vartheta_{\mathrm{ref}}=\vartheta_t\); for a
logarithmic or power-divergent profile use the value at \(r=1\). The four
thermodynamic traces are identified modulo the prescribed quotient when
\begin{align}
 \epsilon^{-a}
 \bigl[\widehat\Psi^*(\widetilde\pi_M-\widetilde\pi_t)
       -A_\pi(\pi_M-\pi_t)\bigr]
 &\in\mathcal R^2_{\delta_0}(K),
 \label{eq:corner-leading-pressure-quotient}\\
 \epsilon^{-\sigma}
 \bigl[\widehat\Psi^*(\widetilde\vartheta_M
       -\widetilde\vartheta_{\mathrm{ref}})
       -A_\vartheta(\vartheta_M-\vartheta_{\mathrm{ref}})\bigr]
 &\in\mathcal R^2_{\delta_0}(K),
 \label{eq:corner-equivalence-temperature}\\
 \epsilon^b
 \bigl[\widehat\Psi^*\widetilde{\Delta s}-A_s\Delta s\bigr]
 &\in\mathcal R^1_{\delta_0}(K),
 \nonumber\\
 \epsilon^c\rho^n
 \bigl[\widehat\Psi^*\widetilde{\Delta v}-A_v\Delta v\bigr]
 &\in\mathcal R^1_{\delta_0}(K).
 \label{eq:corner-equivalence-jump-conormal}
\end{align}
The first gap conormal is retained as well. Using \(D\) as the coordinate
normal to coexistence, it obeys
\begin{equation}
 \left.
 \frac{\partial(\widetilde D\circ\widehat\Psi)}{\partial D}
 \right|_{D=0}-A_D
 \in\mathcal R^1_{\delta_0}(K).
 \label{eq:corner-equivalence-gap-conormal}
\end{equation}
The sheetwise margins satisfy Eq.~\eqref{eq:corner-equivalence-face} with
positive units.  Positive nonzero boundary Jacobians and the stability of
\(\mathcal R^j_{\delta_0}(K)\), after decreasing \(\delta_0\) if necessary,
make these conditions an equivalence relation under composition and
inversion.  Its block-diagonal control scaling preserves the reservoir
polarization and hence the physical pressure direction.

Define
\begin{equation}
 J_q(r;r_0)=
 \begin{cases}
  (r^q-r_0^q)/q,&q\ne0,\\
  \log(r/r_0),&q=0,
 \end{cases}
 \label{eq:corner-primitive-monomial}
\end{equation}
and the Newton primitive
\begin{equation}
 \Phi_{\iota,R}(r;r_0)
 =\sum_{j=0}^{N}A_jJ_{\iota+\kappa_j}(r;r_0).
 \label{eq:corner-Newton-primitive}
\end{equation}
When \(\iota>0\), all exponents in this sum are positive and we use the
continuous extension \(r_0=0\). When \(\iota\leq0\), we use \(r_0=1\).

\begin{theorem}[Leading-profile classification at a Maxwell boundary]
\label{thm:maxwell-corner-Newton-classification}
Fix the sheet-incidence class
\(\mathfrak f_{S|L}\) of Definition~\ref{def:resolved-two-phase-germ}.
Suppose a resolved Maxwell--boundary profile in this category
satisfies
Eqs.~\eqref{eq:corner-universality-pressure}--
\eqref{eq:corner-universality-remainders} and the Clapeyron relation
\eqref{eq:corner-universality-clapeyron}. If \(\iota>0\), suppose in
addition that the boundary value fixes the integration constant through
\begin{equation}
 \lim_{\rho\downarrow0}\limsup_{\epsilon\downarrow0}
 \epsilon^{-\sigma}
 \left|\vartheta_M(\epsilon,\rho)
       -\vartheta_t(\epsilon)\right|=0.
 \label{eq:corner-universality-matching}
\end{equation}
Let
\begin{equation}
 \widehat\pi_\epsilon
 =\frac{\pi_M-\pi_t}
 {c_p\rho_*^m\epsilon^a},
 \qquad
 \mathcal A_\epsilon
 =\frac{\lambda m c_pc_v}{c_s}
  \rho_*^\iota\epsilon^\sigma.
 \label{eq:corner-general-normalization}
\end{equation}
For \(\iota>0\), set
\begin{equation}
 \widehat\vartheta_\epsilon(r)
 =\frac{\vartheta_t(\epsilon)
       -\vartheta_M(\epsilon,\rho_*r)}
       {\mathcal A_\epsilon}.
 \label{eq:corner-integrable-temperature-normalization}
\end{equation}
For \(\iota\leq0\), set instead
\begin{equation}
 \widehat\vartheta_\epsilon(r)
 =\frac{\vartheta_M(\epsilon,\rho_*)
       -\vartheta_M(\epsilon,\rho_*r)}
       {\mathcal A_\epsilon}.
 \label{eq:corner-singular-temperature-normalization}
\end{equation}
Then, in \(C^2(K)\) for every \(K\Subset(0,\infty)\),
\begin{equation}
 \widehat\pi_\epsilon(r)\longrightarrow r^m,
 \qquad
 \widehat\vartheta_\epsilon(r)
 \longrightarrow\Phi_{\iota,R}(r;r_0),
 \label{eq:corner-general-profile}
\end{equation}
where \(r_0=0\) for \(\iota>0\) and \(r_0=1\) for
\(\iota\leq0\). Equivalently, the limiting coexistence curve is
\begin{equation}
 \widehat\vartheta
 =\Phi_{\iota,R}
   \bigl(\widehat\pi^{1/m};r_0\bigr),
 \qquad \widehat\pi>0.
 \label{eq:corner-general-eliminated}
\end{equation}

The three boundary types are distinct. For \(\iota>0\), the Clapeyron
one-form is integrable and the normalized curve has a finite boundary
value. For \(\iota=0\), its leading term is \(A_0\log r\). For
\(\iota<0\), its leading term is \(A_0r^\iota/\iota\); a further logarithm
occurs exactly when \(\iota+\kappa_j=0\) for a monomial on the Newton face.
Thus the logarithmic and power-divergent profiles cannot satisfy the
finite-boundary matching condition without a cancellation that changes the
Newton tuple.

The tuple \(\mathscr N\) is invariant under profile equivalence in the fixed
category
\(\mathfrak f_{S|L}\). Conversely, two resolved leading profiles in this
same category with the same \(a,b,c,m,n\) and the same class \([R]\) are
equivalent in the quotient
Eqs.~\eqref{eq:corner-remainder-quotient}--
\eqref{eq:corner-equivalence-gap-conormal}.  Completeness holds for the fixed
incidence class \(\mathfrak f_{S|L}\), the selected Maxwell traces, and their
first gap conormal.

Let \(I=[r_-,r_+]\Subset(0,\infty)\), assume that neither endpoint is a
zero of \(R\), and suppose that \(R\) has exactly \(q\) simple zeros
\(r_1<\cdots<r_q\) in \(I\). For all sufficiently small \(\epsilon\),
the finite-\(\epsilon\) coexistence curve has exactly \(q\) regular turns
in \(I\), one at each
\begin{equation}
 r_{j,\epsilon}=r_j+O(\epsilon^\delta).
 \label{eq:corner-general-turn-location}
\end{equation}
Their normalized curvatures satisfy
\begin{equation}
 \left.
 \frac{\dd^2\widehat\vartheta_\epsilon}
      {\dd\widehat\pi_\epsilon^2}
 \right|_{r=r_{j,\epsilon}}
 =\frac{r_j^{\iota-2m+1}}{m^2}R'(r_j)
  +O(\epsilon^\delta).
 \label{eq:corner-general-turn-curvature}
\end{equation}
The sign sequence of \(R\) therefore fixes the complete slope and turn
sequence on \(I\). In particular, the finite Newton polynomial determines
the turn portrait on resolved compacta. Multiple positive roots are not
structurally stable under unrestricted perturbations of the leading
Newton coefficients. They form a positive-codimension semialgebraic
subset of the resultant discriminant; its generic smooth stratum in an
ordinary coefficient chart has codimension one. Such a root may split
into simple roots or leave the positive real axis under perturbation.
\end{theorem}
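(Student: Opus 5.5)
The plan is to integrate the Clapeyron relation \eqref{eq:corner-universality-clapeyron} along the resolved Maxwell trace. Every statement then follows from one leading-order formula for \(\dd\vartheta_M/\dd r\) with a controlled remainder.

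\textbf{Step 1: the normalized Clapeyron one-form.} Write \(\rho=\rho_*r\). By the chain rule,
\[
\partial_r\vartheta_M=\lambda\frac{\Delta v}{\Delta s}\,\partial_r\pi_M .
\]
Differentiating \eqref{eq:corner-universality-pressure} gives
\[
\partial_r\pi_M=c_p\rho_*^m\epsilon^a\,m\,r^{m-1}\bigl(1+R_p+rR_p'/m\bigr),
\]
so \(\partial_r\pi_M\) has a \(C^1(K)\) remainder of size \(O(\epsilon^\delta)\), because \(R_p\) is controlled in \(C^2\). Inserting \eqref{eq:corner-universality-entropy} and \eqref{eq:corner-universality-volume} and dividing by \(\mathcal A_\epsilon\) yields
\begin{equation*}
 \partial_r\widehat\vartheta_\epsilon(r)=r^{\iota-1}\bigl[R(r)+E_\epsilon(r)\bigr],
 \qquad \|E_\epsilon\|_{C^1(K)}=O_K(\epsilon^\delta),
\end{equation*}
with the sign fixed by the orientation \(\vartheta_t-\vartheta_M\) (respectively \(\vartheta_M(\rho_*)-\vartheta_M\)). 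The pressure statement \(\widehat\pi_\epsilon\to r^m\) in \(C^2(K)\) is immediate from \eqref{eq:corner-universality-pressure}.

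\textbf{Step 2: integration.} For \(\iota\le0\), integrate from \(r=1\). Since \(\widehat\vartheta_\epsilon(1)=0\), the derivative bound gives \(C^2(K)\) convergence to \(\sum_jA_jJ_{\iota+\kappa_j}(r;1)\) directly. The three boundary types are then read off from \(J_q\): \(q>0\) gives a finite value, \(q=0\) gives \(\log\), and \(q<0\) gives \(r^q/q\). The extra logarithms appear exactly when \(\iota+\kappa_j=0\).

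For \(\iota>0\) the remainders are controlled only on compacta, so the integration constant must come from the matching condition. I would fix a small \(\rho_0=\rho_*r_0\) and write
\[
\widehat\vartheta_\epsilon(r)=\widehat\vartheta_\epsilon(r_0)+\int_{r_0}^r\partial_r\widehat\vartheta_\epsilon .
\]
Then I would take \(\limsup_{\epsilon\downarrow0}\). By \eqref{eq:corner-universality-matching}, \(|\widehat\vartheta_\epsilon(r_0)|\) is \(o(1)\) as \(r_0\downarrow0\) uniformly in small \(\epsilon\), and \(\Phi_{\iota,R}(r_0;0)\to0\) because all exponents \(\iota+\kappa_j\) are positive. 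A diagonal argument in \(r_0\) therefore gives uniform convergence on \(K\) to \(\Phi_{\iota,R}(r;0)\), and the derivative estimate upgrades this to \(C^2(K)\).

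I expect this diagonal step to be the main obstacle, because it is the only place where information near \(r=0\), outside every compactum, enters. If \eqref{eq:corner-universality-matching} turns out to be too weak, the fallback is to state convergence modulo an additive constant and note that the matching is exactly what removes it. The eliminated form \eqref{eq:corner-general-eliminated} follows by inverting \(\widehat\pi_\epsilon=r^m(1+O_{C^2}(\epsilon^\delta))\) on \(K\) with the inverse function theorem in \(C^2\).

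\textbf{Step 3: turns and curvature.} Divide the two derivatives:
\[
\frac{\dd\widehat\vartheta_\epsilon}{\dd\widehat\pi_\epsilon}=\frac{r^{-n}}{m}\bigl[R(r)+\widetilde E_\epsilon\bigr],
\qquad \|\widetilde E_\epsilon\|_{C^1(K)}=O_K(\epsilon^\delta).
\]
Regular turns are zeros of this expression. Since \(R\) has no zeros at the endpoints of \(I\) and its zeros in \(I\) are simple, the implicit function theorem in \(C^1\) produces exactly one nearby zero \(r_{j,\epsilon}=r_j+O(\epsilon^\delta)\) for each root and no others. Differentiating once more in \(r\) and dividing by \(\dd\widehat\pi_\epsilon/\dd r=mr^{m-1}+O(\epsilon^\delta)\) gives, at a root,
\[
\frac{r^{-n}R'(r_j)}{m}\cdot\frac{1}{m\,r_j^{m-1}}=\frac{r_j^{\iota-2m+1}}{m^2}R'(r_j),
\]
which is \eqref{eq:corner-general-turn-curvature}. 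The sign sequence of \(R\) then fixes the complete slope sequence on \(I\). The statement about multiple roots is the standard fact that a repeated positive root lies on the zero set of the discriminant, a proper semialgebraic subset of the coefficient chart, and that perturbing off it splits the root or moves it into the complex plane.

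\textbf{Step 4: invariance and completeness.} For invariance, pull the traces back under a profile equivalence \eqref{eq:corner-equivalence-leading}. The pressure quotient \eqref{eq:corner-leading-pressure-quotient} forces \(\widetilde\epsilon^{\,\widetilde a}\widetilde r^{\,\widetilde m}\) to match \(A_\pi\epsilon^ar^m\) to leading order in \(C^2(K)\). Comparing \(\epsilon\)-orders and \(r\)-homogeneity on two values of \(r\) gives \(\widetilde a=a\) and \(\widetilde m=m\). The jump quotients \eqref{eq:corner-equivalence-jump-conormal} give \(b\) and \(c\) in the same way. For the volume jump, the relation \(\widetilde\rho^{\,-\widetilde n}\widetilde R(\widetilde r)\propto \rho^{-n}R(r)\) with \(\widetilde r=A_rr(1+o(1))\) gives \(\widetilde n=n\) by comparing the lowest monomials, and then \(\widetilde R(A_rr)=A\,R(r)\) with \(A>0\), so \([\widetilde R]=[R]\). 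The sign of \(A_0\) is preserved because all constants are positive. For completeness, given equal tuples, choose \(\widetilde\epsilon=A_\epsilon\epsilon\) and \(\widetilde r=Br\) with \(B\) realizing the dilation in \([R]\). Then take \(A_\pi,A_s,A_v,A_\vartheta\) as ratios of the leading coefficients, \(A_D\) as the ratio of the gap conormals \(|D_\vartheta|\), and \(U_i\) as the ratios of the units \(\upsilon_i\) in \(\mathfrak f_{S|L}\). All differences then fall into \(\mathcal R^j_\delta(K)\) by Steps 1 and 2.
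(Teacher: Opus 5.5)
Your Steps 1--3 and the invariance half of Step 4 follow the paper's proof closely. You use the same one-form $\partial_r\widehat\vartheta_\epsilon=r^{\iota-1}[R+E]$ with a $C^1(K)$ remainder and integrate from $r=1$ when $\iota\le0$. For $\iota>0$ you take the same order of limits: fix $r_0$, take $\limsup_{\epsilon\downarrow0}$, then let $r_0\downarrow0$. The matching condition gives exactly this iterated limit, not uniformity in $\epsilon$, and that is enough, so no fallback is needed. The perturbed-root equation, the curvature $\widehat\vartheta_\epsilon''/(\widehat\pi_\epsilon')^2$, and the action $R\mapsto AR(Br)$ are also the paper's. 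Your valuation comparison, based on the fixed weights $\epsilon^{-a}$, $\epsilon^{b}$, $\epsilon^{c}\rho^{n}$ in the quotient, is slightly more direct than the paper's collar-Jacobian argument and does not need the gap conormal. In Step 3, the implicit-function theorem gives only local uniqueness. The words ``no others'' require the bound $|R|\ge d>0$ on $I$ minus small neighborhoods of the $r_j$, and this is where the endpoint hypothesis enters.

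The step that does not work as written is completeness. The constant $A_D$ is the leading value of $\partial(\widetilde D\circ\widehat\Psi)/\partial D$ on $D=0$, so it depends on how $\widehat\Psi$ acts transversally to the Maxwell component. Your maps $\widetilde\epsilon=A_\epsilon\epsilon$, $\widetilde r=Br$ define $\widehat\Psi$ only on the trace. The ratio of $|D_\vartheta|$ that you propose equals $\widehat\Psi^*\widetilde{\Delta s}/\Delta s\to A_s$. For a polarization-preserving, block-diagonal scaling, however, $\vartheta-\vartheta_{\mathrm{ref}}$ is multiplied by $A_\vartheta$ and $\pi-\pi_t$ by $A_\pi$. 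The chain rule with $D_\vartheta=-\Delta s$ then gives $\partial(\widetilde D\circ\widehat\Psi)/\partial D=A_\vartheta A_s$. Unless $A_\vartheta=1$, realizing your value forces the extension to shear $\vartheta$ against $\pi$, which changes the physical pressure direction. The paper fixes this as follows. It uses $D$ and $\widetilde D$ as normal coordinates, extends the collar map by $\widetilde D=A_DD$ with $A_D=A_\vartheta A_s$, and matches the pressure unit (equivalently $\widetilde\lambda$) so that $A_v=A_D\lambda/(A_\pi\widetilde\lambda)$. The conormals $D_\vartheta=-\Delta s$ and $D_\pi=\lambda\Delta v$ then agree with the trace scalings, and the gap-conormal condition holds with zero remainder. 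Separately, for $\iota>0$ your claim that all differences lie in $\mathcal R^j_\delta(K)$ is stronger than Step 2 delivers. The matching condition controls the additive constant only as $o(1)$, with no rate, so the temperature relation of the quotient needs a rate in the matching. The paper's proof passes over this point in the same way.
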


The proof begins by substituting
Eqs.~\eqref{eq:corner-universality-pressure}--
\eqref{eq:corner-universality-volume} into the Clapeyron relation, which gives
\(\dd\widehat\vartheta_\epsilon/\dd r=
r^{\iota-1}[R(r)+O_K(\epsilon^\delta)]\) in \(C^1(K)\); integration yields
the Newton primitive in Eq.~\eqref{eq:corner-general-profile}.  Positive
dilations and units preserve \(\mathscr N\), while persistence of simple roots
gives the turn locations and a second differentiation gives
Eq.~\eqref{eq:corner-general-turn-curvature}.  Changing the incidence or
contact class, or retaining off-sheet and higher-jet data, changes the
quotient.  The details within the fixed category are given in
\ref{app:wall-normal-forms}.

Representative profiles of the three boundary types and the integrable
binomial family are shown in Fig.~\ref{fig:maxwell-boundary-profiles}.

\begin{figure*}[t]
 \centering
 \includegraphics[width=\textwidth]{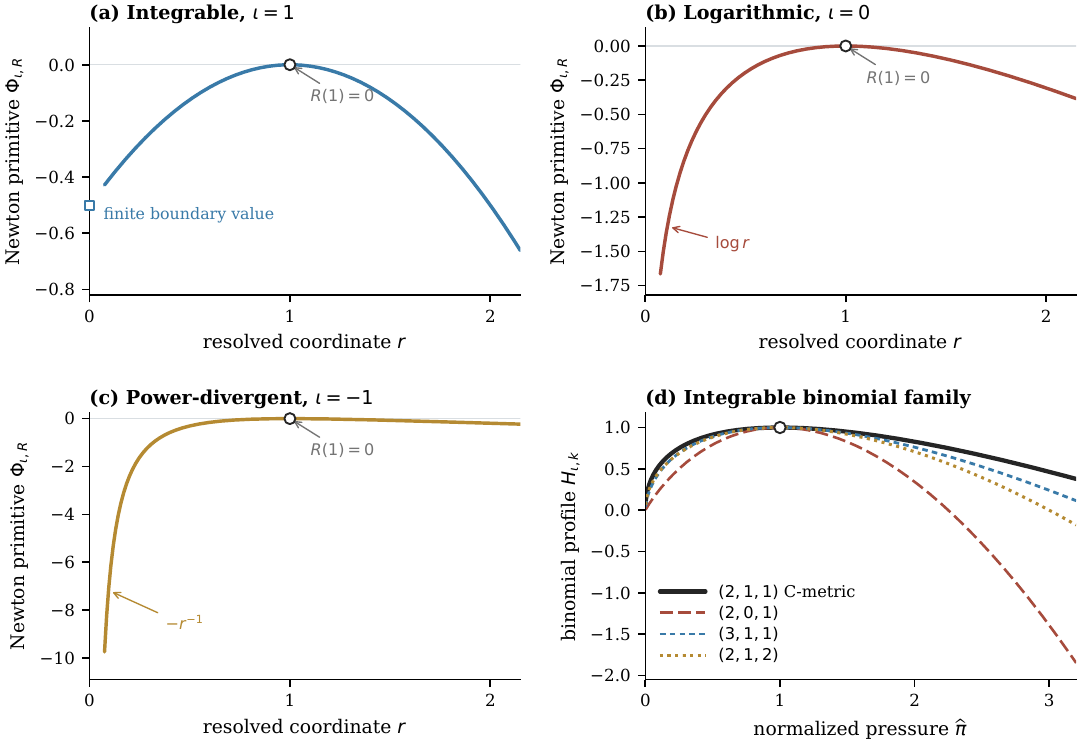}
 \caption{Representative resolved Maxwell--boundary profiles.  Panels
 (a)--(c) use \(R(r)=1-r\), with additive constants chosen so that
 \(\Phi_{\iota,R}(1;1)=0\).  The sign of \(\iota=m-n\) distinguishes a finite
 boundary value, logarithmic divergence, and power divergence; the open
 marker at \(r=1\) is the volume inversion \(R(1)=0\).  Panel (d) shows
 integrable binomial profiles as functions of \(\widehat\pi=r^m\).  Every
 curve turns at \((1,1)\).  The bold curve is the C-metric class
 \((m,n,k)=(2,1,1)\), for which
 \(\widehat\vartheta=2\sqrt{\widehat\pi}-\widehat\pi\).  The remaining curves
 are formal examples of other normal-form data.}
 \label{fig:maxwell-boundary-profiles}
\end{figure*}

Theorem~\ref{thm:maxwell-corner-Newton-classification} classifies profiles
after terms with a positive weight gap have been discarded by
Eq.~\eqref{eq:corner-universality-remainders}, while every monomial on the
leading Newton face is retained in \(R\).  This distinction matters because
the three-term polynomial
\(1-Ar^k+Br^{2k}\) may have zero, one double, or two simple positive roots;
the exponent gap \(k\) alone does not determine the turning portrait.

The binomial case realized by the accelerating black hole is more
constrained.

\begin{corollary}[Binomial Newton-face Maxwell corner]
\label{thm:maxwell-corner-universality}
Assume the hypotheses of
Theorem~\ref{thm:maxwell-corner-Newton-classification} with
\begin{equation}
 \iota=m-n>0,
 \qquad
 \sigma=a+b-c>0,
 \qquad
 R(r)=1-r^k,\quad k\geq1.
 \label{eq:corner-binomial-data}
\end{equation}
Equivalently, before normalization the leading volume factor is
\(A-B\rho^k\), with \(A,B>0\), and
\(\rho_*=(A/B)^{1/k}\). Define
\begin{align}
 r&=\frac{\rho}{\rho_*},
 &
 \widehat\pi_\epsilon
 &=\frac{\pi_M-\pi_t}
 {c_p\rho_*^m\epsilon^a},
 \nonumber\\
 \widehat\vartheta_\epsilon
 &=\frac{\iota(\iota+k)c_s}
 {k\lambda m c_pc_v\rho_*^\iota\epsilon^\sigma}
 (\vartheta_t-\vartheta_M).
 \label{eq:corner-universality-normalization}
\end{align}
Then, in \(C^2\) on every compact subinterval of \(r>0\),
\begin{align}
 \widehat\pi_\epsilon(r)&\longrightarrow r^m,
 \nonumber\\
 \widehat\vartheta_\epsilon(r)&\longrightarrow
 H_{\iota,k}(r)
 :=\frac{\iota+k}{k}r^\iota
   -\frac{\iota}{k}r^{\iota+k}.
 \label{eq:corner-universality-profile}
\end{align}
Eliminating \(r\) gives
\begin{equation}
 \boxed{
 \widehat\vartheta
 =\frac{\iota+k}{k}\widehat\pi^{\iota/m}
 -\frac{\iota}{k}\widehat\pi^{(\iota+k)/m}}
 \qquad (\widehat\pi>0).
 \label{eq:corner-universality-eliminated}
\end{equation}
The limiting curve has one turn at
\((\widehat\pi,\widehat\vartheta)=(1,1)\). On every fixed annulus
\(0<r_-<1<r_+<\infty\), the finite-\(\epsilon\) curve has, for all
sufficiently small \(\epsilon\), exactly one regular turn and
\begin{equation}
 r_\epsilon=1+O(\epsilon^\delta),
 \qquad
 \left.
 \frac{\dd^2\widehat\vartheta_\epsilon}
      {\dd\widehat\pi_\epsilon^2}
 \right|_{r=r_\epsilon}
 \longrightarrow-\frac{\iota(\iota+k)}{m^2}.
 \label{eq:corner-universality-turn}
\end{equation}
Thus the physical coexistence temperature has a nondegenerate local
minimum. The turn is exactly the inversion \(\Delta v=0\), while
\(\Delta s\) remains nonzero.
\end{corollary}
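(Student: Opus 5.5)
The corollary is the specialization of Theorem~\ref{thm:maxwell-corner-Newton-classification} to the binomial Newton polynomial \(R(r)=1-r^k\) with \(\iota>0\). The plan is to invoke that theorem, evaluate the Newton primitive in closed form, and renormalize.

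First, the reduction to the normalized form. If the leading volume factor is \(A-B\rho^k\) with \(A,B>0\), set \(\rho=\rho_*r\) with \(\rho_*=(A/B)^{1/k}\). Then \(A-B\rho^k=A(1-r^k)\). The constant \(A\) is absorbed into \(c_v\), which fixes \(A_0=1\), \(A_1=-1\), \(\kappa_1=k\), and places the unique positive root at \(r=1\).

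Second, the limiting profile. Since \(\iota>0\), Theorem~\ref{thm:maxwell-corner-Newton-classification} applies with \(r_0=0\) and gives
\(\widehat\vartheta_\epsilon\to\Phi_{\iota,R}(r;0)=r^\iota/\iota-r^{\iota+k}/(\iota+k)\) in \(C^2(K)\). The normalization in Eq.~\eqref{eq:corner-universality-normalization} is the one in Eq.~\eqref{eq:corner-integrable-temperature-normalization}, multiplied by the positive constant \(\iota(\iota+k)/k\). That constant turns the primitive into \(H_{\iota,k}(r)=\frac{\iota+k}{k}r^\iota-\frac{\iota}{k}r^{\iota+k}\), and \(H_{\iota,k}(1)=1\). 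The pressure limit \(\widehat\pi_\epsilon\to r^m\) is taken directly from the theorem. Substituting \(r=\widehat\pi^{1/m}\) gives Eq.~\eqref{eq:corner-universality-eliminated}, and the turn sits at \((\widehat\pi,\widehat\vartheta)=(1,1)\).

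Third, the turn count and curvature. On an annulus \([r_-,r_+]\ni1\), the only zero of \(R\) is the simple zero \(r=1\), with \(R'(1)=-k\). The simple-root persistence part of the theorem then yields exactly one regular turn, located at \(r_\epsilon=1+O(\epsilon^\delta)\).

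For the curvature, Eq.~\eqref{eq:corner-general-turn-curvature} gives \(R'(1)/m^2=-k/m^2\) for the unrescaled primitive. Multiplying by \(\iota(\iota+k)/k\) gives \(-\iota(\iota+k)/m^2\). As a cross-check, differentiate \(H\) twice in \(\widehat\pi=r^m\) at \(r=1\), using \(H'(1)=0\) and \(H''(1)=-\iota(\iota+k)\).

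Fourth, the physical interpretation. Because \(\widehat\vartheta\propto\vartheta_t-\vartheta_M\) with a positive factor, a maximum of \(\widehat\vartheta\) is a minimum of the coexistence temperature. The turn is exactly where \(R+R_v=0\), that is, where \(\Delta v=0\). Meanwhile \(\Delta s=c_s\epsilon^{-b}(1+R_s)\) stays nonzero for small \(\epsilon\).

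The main obstacle I anticipate is bookkeeping of the normalization constants, chiefly \(\lambda\), \(m\), \(\rho_*^\iota\), and the sign conventions between \(\Delta v=-c_v\epsilon^{-c}\rho^{-n}R\) and \(\vartheta_t-\vartheta_M\). These must be chained so that the sign is positive and the leading coefficient is exactly \((\iota+k)/k\). I would verify this against the C-metric specialization \((m,n,k)=(2,1,1)\), which must reproduce \(2r-r^2\) and curvature \(-1/2\).
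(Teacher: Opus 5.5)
Your proposal is correct and follows the paper's own proof. Both specialize the integrable primitive $\Phi_{\iota,R}(r;0)=r^\iota/\iota-r^{\iota+k}/(\iota+k)$, rescale it by $\iota(\iota+k)/k$ to obtain $H_{\iota,k}$, invoke the simple-root persistence part of the theorem at $r=1$, and convert $H''_{\iota,k}(1)=-\iota(\iota+k)$ with $(r^m)'|_{r=1}=m$ into the curvature $-\iota(\iota+k)/m^2$ (your route through the general curvature formula gives the same value), then use the sign reversal in $\vartheta_t-\vartheta_M$ to obtain a physical minimum.
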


Every allowed choice of Newton data has a local representative in any prescribed
sheetwise incidence class of the form
Eq.~\eqref{eq:corner-fixed-face-incidence}.

\begin{proposition}[Local two-well realizability of the resolved Newton data]
\label{prop:corner-Newton-realizability}
For every choice of \(a,b,c>0\), \(m\geq1\), \(n\geq0\), polynomial
\(R\) satisfying Eq.~\eqref{eq:corner-Newton-polynomial}, and positive
\(C^2\) sheet units \(\upsilon_S,\upsilon_L\), there exists a local
punctured resolved Maxwell--boundary profile with margins
\(\beta_S=\upsilon_S\rho\), \(\beta_L=\upsilon_L\), satisfying
Eqs.~\eqref{eq:corner-universality-pressure}--
\eqref{eq:corner-universality-volume} with zero remainders. It may be
realized by two strict quadratic wells. If \(\iota>0\), the realization
extends continuously to a finite coexistence temperature at the lifted
boundary. For \(\iota\leq0\), its normalized coexistence temperature has
the logarithmic or power divergence stated in
Theorem~\ref{thm:maxwell-corner-Newton-classification}.  This local
thermodynamic realization supplies the prescribed two-well critical values
and selected-sheet margins.  A global Gibbs completion, an off-shell face
extension, or realization by gravitational field equations requires
additional data.
\end{proposition}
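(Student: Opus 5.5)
The construction is explicit: I will build the two quadratic wells directly from the prescribed traces. On the resolved collar \(\epsilon>0\), \(\rho>0\), I take reservoir controls \((\vartheta,\pi)\) and a one-dimensional state coordinate for each well. The potential in disjoint neighborhoods is
\[
 \mathcal F_i=g_i(\vartheta,\pi;\epsilon)+\tfrac12y_i^2 ,
\]
as in Proposition~\ref{prop:corner-gap-reduction}. Then the two minima sit at \(y_i=0\), are strict, and have critical values \(g_i\). It remains to prescribe \(g_S,g_L\), the Maxwell chart and the face function so that Eqs.~\eqref{eq:corner-universality-pressure}--\eqref{eq:corner-universality-volume} hold with zero remainders and the margins are \(\beta_S=\upsilon_S\rho\), \(\beta_L=\upsilon_L\).

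First I fix the Maxwell chart. I set \(\pi_M(\epsilon,\rho)=\pi_t+c_p\epsilon^a\rho^m\), so \(\rho\) is determined by \(\pi\) on the punctured collar \(\pi>\pi_t\). I then define \(\vartheta_M\) by integrating the Clapeyron relation \eqref{eq:corner-universality-clapeyron} along \(\rho\):
\[
 \partial_\rho\vartheta_M=\lambda\,\frac{\Delta v}{\Delta s}\,\partial_\rho\pi_M
\]
with the target jumps \(\Delta s=c_s\epsilon^{-b}\) and \(\Delta v=-c_v\epsilon^{-c}\rho^{-n}R(\rho/\rho_*)\).

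\begin{itemize}
 \item For \(\iota>0\), I take the integration constant \(\vartheta_M(\epsilon,0)=\vartheta_t(\epsilon)\), where \(\vartheta_t\) is chosen freely. The integrand is \(O(\rho^{\iota-1})\), so the integral converges at \(\rho=0\) and gives the continuous finite boundary value.
 \item For \(\iota\le0\), I normalize at \(\rho=\rho_*\) instead. Computing the primitive explicitly produces \(\Phi_{\iota,R}\) with its logarithmic or power divergence, as stated in Theorem~\ref{thm:maxwell-corner-Newton-classification}.
\end{itemize}

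Since \(\partial_\rho\vartheta_M\) has a fixed sign wherever \(R\neq0\), the curve \(\rho\mapsto(\vartheta_M,\pi_M)\) is an embedded graph over \(\pi\). It is regular because \(\partial_\rho\pi_M>0\).

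Next I build the gap germ. In a tubular neighbourhood of the Maxwell curve I use coordinates \((\xi,\rho)\) with \(\xi=\vartheta-\vartheta_M(\epsilon,\rho)\). I choose \(g_S\) to be any smooth function, say \(g_S=-\vartheta\cdot s_0+\lambda\pi v_0\) with constants. I then set \(g_L=g_S+D\), where
\[
 D=-\Delta s(\epsilon)\,\xi
\]
is extended affinely in \(\xi\), with \(\Delta s\) the constant \(c_s\epsilon^{-b}\). Along \(\xi=0\) this gives \(D=0\) and \(D_\vartheta=-\Delta s\). Tangential differentiation along the curve gives
\[
 D_\pi=\Delta s\,\partial_\pi\vartheta_M=\lambda\Delta v
\]
by construction. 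So the envelope identities \eqref{eq:corner-gap-envelope} hold exactly on the Maxwell component, and the jumps restricted to it equal the prescribed leading forms with \(R_p=R_s=R_v=0\). The main point to check is the step from jumps to the envelope form. The identity \(D_\pi=\lambda\Delta v\) holds only on \(D=0\). That is all the hypotheses require, since the jumps are evaluated on the Maxwell trace.

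Finally, the face function is taken off-sheet, \(\beta(y,\zeta)=\upsilon_S(\epsilon,\rho)\rho\) on the \(S\) neighbourhood and \(\upsilon_L(\epsilon,\rho)\) on the \(L\) neighbourhood. Its restrictions \(b_i\) to \(y_i=0\) have the Maxwell traces \eqref{eq:corner-fixed-face-incidence}. The units are \(C^2\) and positive by assumption, so this lies in \(\mathfrak f_{S|L}\), and \(\rho=0\) is the lift of the face.

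The main obstacle is keeping the collar and chart regular, with all hypotheses of Definition~\ref{def:resolved-two-phase-germ} holding on every compact \(K\Subset(0,\infty)\), while staying honest that the construction is only local and punctured. Disjointness of the wells is automatic because they are separate components. The condition \(D_\vartheta\neq0\) holds because \(\Delta s>0\). I do not attempt a global Gibbs completion, an off-shell face extension or a gravitational realization, in line with the final sentences of the proposition.
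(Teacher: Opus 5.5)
Your construction is correct and matches the paper's proof step for step. You use the same pressure trace $\pi_t+c_p\epsilon^a\rho^m$, the same Newton-primitive temperature trace (constant fixed at $\rho=0$ for $\iota>0$ and at $r=1$ for $\iota\le0$), the same affine gap $D=-c_s\epsilon^{-b}[\vartheta-\vartheta_M]$, wells $g_L=g_S+D$ (the paper's $h\pm D/2$), and on-sheet margins $\upsilon_S\rho$, $\upsilon_L$. Two small remarks: $D_\pi=\lambda\Delta v$ actually holds on the whole punctured collar, not only on $D=0$, because $D$ is affine in $\vartheta$ and $\vartheta_M$ depends only on $\pi$; and the paper prescribes only the on-sheet margins, noting that an off-sheet extension like yours exists but is not part of the realized class.
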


The difference \(\iota=m-n\) decides whether the Clapeyron one-form reaches
the boundary.  The complete Newton polynomial records the leading volume
inversions, while the weights \(a,b,c\) retain the physical sizes of the
pressure, entropy, and volume layers even when two systems share the same
normalized curve.  Neither feature is contained in the stationary degree or in
the Morse indices of the two phases.

\section{Constrained thermodynamic families}
\label{sec:constrained-families}

Consider an effective one-state family in a fixed thermodynamic ensemble.
Let \(X\subset\mathbb R\) be an open interval,
let \(\Lambda\) be a smooth control manifold, and let
\begin{equation}
 \mathcal F:X\times\Lambda\longrightarrow\mathbb R,
 \qquad
 \rho_a:X\times\Lambda\longrightarrow\mathbb R,
 \quad a=1,\ldots,q,
 \label{eq:family-and-faces}
\end{equation}
be of class \(C^r\), \(r\geq4\). At fixed control the open physical set is
\begin{equation}
 \mathcal A_\lambda
 =\{x\in X:\rho_a(x,\lambda)>0\text{ for all }a\}.
 \label{eq:admissible-set}
\end{equation}
The functions in \eqref{eq:family-and-faces} are defined on an ambient
collar of the closure of the physical set. Consequently an ambient
stationary sheet continues across a physical face.  This distinguishes loss
of physical membership from destruction of a stationary point.

The data form the triple \((L,\mathfrak p,\mathcal A)\). Here
\(L\) is the equilibrium Legendrian, \(\mathfrak p\) is the reservoir
polarization that fixes the controls and the thermodynamic potential, and
\(\mathcal A\) is the admissible domain. We keep this triple fixed. Changing
the ensemble or promoting a boundary configuration to a thermodynamic phase
defines a different selection problem.
The fixed-reservoir role of the polarization and its Brouwer--Morse degree
were analyzed in Ref.~\cite{Li2026LegendreCovariant}.  Here the polarization
is held fixed, while \(\mathcal A\), critical-value order, and global
selection enter as independent data.

\subsection{Stationary sheets and physical phases}
\label{subsec:stationary-sheets}

Fix a relatively compact control window \(U\Subset\Lambda\). The ambient
critical locus is
\begin{equation}
 \mathscr C
 =\{(x,\lambda)\in X\times U:\mathcal F_x(x,\lambda)=0\}.
 \label{eq:critical-locus}
\end{equation}
At a nondegenerate stationary point define
\begin{equation}
 \operatorname{ind}(x,\lambda)=
 \begin{cases}
  0,&\mathcal F_{xx}(x,\lambda)>0,\\
  1,&\mathcal F_{xx}(x,\lambda)<0.
 \end{cases}
 \label{eq:one-dimensional-index}
\end{equation}
The point is active when it lies in \(\mathcal A_\lambda\), locally stable
when \(\operatorname{ind}=0\), metastable when it is an active local minimum
but not the lowest active minimum, and globally selected when it has the
lowest value among all active local minima. Equal lowest values give
coexistence.  The selection problem considered below contains no additional
nonstationary reference background.  If prescribed reference-value sheets
participate, their equality walls with every stationary value, and their
mutual equality walls when more than one is present, must first be adjoined to
the critical-value complex and their value order added to the chamber datum.

Physical faces act only as admissibility filters; points on
\(\partial\mathcal A_\lambda\) are not counted as independent boundary
equilibria. A theory that includes genuine boundary phases
has boundary critical points and additional codimension-one events
\cite{Arnold1978Boundary,BorodzikNemethiRanicki2016Boundary,
BorodzikBuczynska2025Families} and defines a different class from the one
treated here.

\subsection{Properness and genericity}
\label{subsec:properness-genericity}

The general analysis uses the following hypotheses.
\begin{enumerate}
\item The restriction of the critical projection
\(\operatorname{pr}:\mathscr C\to U\) is proper and has finite fibers. In particular, no
stationary point escapes through an end of \(X\) while the controls remain
in a compact subset of \(U\).
\item Away from strata of codimension at least two, the stationary
projection has only ordinary \(A_2\) folds, equality of two distinct
critical values is transverse, and a nondegenerate stationary sheet meets
one regular physical face transversely.
\item At an ordinary fold \(p=(x_0,\lambda_0)\),
\begin{equation}
 \mathcal F_x(p)=\mathcal F_{xx}(p)=0,\qquad
 \mathcal F_{xxx}(p)\ne0,\qquad
 d_\lambda\mathcal F_x(p)\ne0
 \label{eq:ordinary-fold}
\end{equation}
in a direction normal to the projected fold.
\item The fold, critical-value, and admissibility strata have a locally
finite Whitney stratification on \(U\), with all incidences used by a path
homotopy included in the stratification. Real-analytic or
o-minimal-definable data on a compact window are standard sufficient
settings. Finite differentiability does not imply this property for a general
\(C^r\) family, so we assume it
\cite{Cerf1970Stratification,GoreskyMacPherson1988Stratified}.
\item Codimension-two points are limited, on the control window under
study, to the normal forms and transverse products explicitly entered in
the incidence data. In particular, no unrecorded nonisolated critical
set or branch escape is allowed.
\end{enumerate}

These assumptions define the sense in which the wall complex is complete.
The reconstructed portrait consists of stationary-sheet existence, state
order, Morse index, critical-value order, physical-face signs, metastability,
and the global winner.  Quantitative free-energy gaps, barrier heights,
transition rates, and finite-\(G_N\) dynamics require further model-specific
data.
Figure~\ref{fig:information-layers} separates the stationary, value-order,
and admissibility data retained below.

\begin{figure}[!b]
 \centering
 \includegraphics[width=\linewidth]{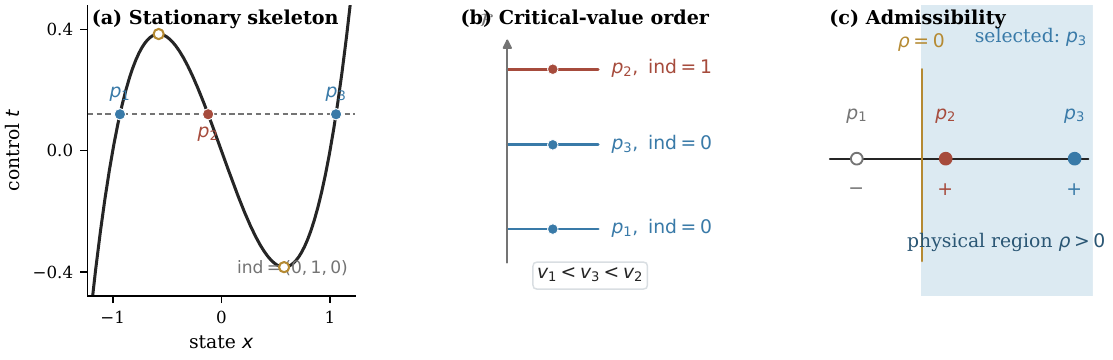}
 \caption{The three information layers retained by the decorated chamber
 datum. (a) The cubic stationary skeleton at fixed control has the
 state-ordered Morse word \((0,1,0)\); gold circles mark its folds.
 (b) Critical values carry an independent order, here
 \(v_1<v_3<v_2\), which is not the state order.
 (c) A physical face excludes \(p_1\) without destroying its ambient
 stationary sheet. After the face filter is applied, \(p_3\) is the
 selected active minimum. Neither the value order nor this exclusion can
 be inferred from the stationary skeleton alone.}
 \label{fig:information-layers}
\end{figure}

\section{Phase selection with a moving physical domain}
\label{sec:local-versus-global}

\subsection{The weighted-area obstruction}
\label{subsec:weighted-obstruction}

Consider a canonical one-state family
\begin{equation}
 \mathcal F_T(x;\eta)=M(x;\eta)-T S(x;\eta),
 \qquad S_x(x;\eta)>0,
 \label{eq:canonical-family}
\end{equation}
where \(\eta\) denotes the controls other than the reservoir temperature.
The on-shell temperature is
\begin{equation}
 T_H(x;\eta)=\frac{M_x(x;\eta)}{S_x(x;\eta)}.
 \label{eq:on-shell-temperature}
\end{equation}

\begin{proposition}[Weighted-area identity]
\label{prop:weighted-area}
For every \(x_i<x_j\),
\begin{equation}
 \mathcal F_T(x_j;\eta)-\mathcal F_T(x_i;\eta)
 =\int_{x_i}^{x_j}S_x(x;\eta)
       \bigl[T_H(x;\eta)-T\bigr]\,dx.
 \label{eq:weighted-area-identity}
\end{equation}
If the endpoints are stationary, their equality of free energies is a
weighted-area condition. The stationary cover, folds, and Morse indices
depend on the zero set of \(T_H-T\) and its local derivatives, whereas the
value difference also depends on the positive weight \(S_x\).
\end{proposition}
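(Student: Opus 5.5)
The identity follows from the fundamental theorem of calculus applied to \(x\mapsto\mathcal F_T(x;\eta)\) on \([x_i,x_j]\), using the factorization of its derivative already recorded in Eq.~\eqref{eq:intro-factorization}. Since \(\mathcal F\) is of class \(C^r\) with \(r\geq4\), the map \(x\mapsto\mathcal F_T(x;\eta)\) is \(C^1\) on the interval \(X\) for fixed \((T,\eta)\). Hence
\[
 \mathcal F_T(x_j;\eta)-\mathcal F_T(x_i;\eta)=\int_{x_i}^{x_j}\partial_x\mathcal F_T(x;\eta)\,dx .
\]
Differentiating Eq.~\eqref{eq:canonical-family} gives \(\partial_x\mathcal F_T=M_x-TS_x\). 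The hypothesis \(S_x>0\) makes the quotient in Eq.~\eqref{eq:on-shell-temperature} well defined, so we may write \(M_x=S_xT_H\). Therefore \(\partial_x\mathcal F_T=S_x(T_H-T)\), and substituting this into the integral yields Eq.~\eqref{eq:weighted-area-identity}. No admissibility face intervenes, because \(\mathcal F\), \(M\), and \(S\) are defined on the ambient interval \(X\) by the conventions of Section~\ref{sec:constrained-families}.

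For the interpretive clauses I will argue as follows. Because \(S_x>0\), the zero set of \(\partial_x\mathcal F_T\) coincides with the zero set of \(T_H-T\). At a stationary point, \(\mathcal F_{xx}=S_x\,\partial_xT_H\), since the term \(S_{xx}(T_H-T)\) vanishes there. The higher derivatives at a fold are likewise determined, up to the positive factor \(S_x\), by the jet of \(T_H-T\). Consequently the stationary cover, the fold conditions of Eq.~\eqref{eq:ordinary-fold}, and the indices of Eq.~\eqref{eq:one-dimensional-index} depend only on \(T_H-T\) and its local derivatives.

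By contrast, the value difference is the integral of \(T_H-T\) against the measure \(S_x\,dx\). To show that it genuinely depends on the weight, I will exhibit the effect of a change of weight. Replace \(S\) by \(\widetilde S\) with the same stationary zero set but with \(\widetilde S_x\) increased on a subinterval where \(T_H-T\) has a fixed sign, and keep \(T_H\) fixed by setting \(\widetilde M_x=\widetilde S_xT_H\). This changes the integral while leaving all the stationary data unchanged. Equality of the free energies at two stationary endpoints is then exactly the vanishing of this weighted area.

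The only delicate point is bookkeeping rather than analysis. The identity must be used on an interval contained in \(X\), and it is unaffected by the sign convention when \(x_i<x_j\) is reversed. There is no real obstacle to the proof.
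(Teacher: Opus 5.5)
Your proof is correct and follows the same route as the paper. The paper also factorizes \(\partial_x\mathcal F_T=S_x(T_H-T)\), integrates, and observes that \(\mathcal F_{T,xx}=S_x\partial_xT_H\) at a stationary point and \(\mathcal F_{T,xxx}=S_x\partial_x^2T_H\) at a fold. Your explicit weight-modification argument goes slightly beyond the paper's proof of this proposition and anticipates the iso-stationary construction of Theorem~\ref{thm:local-data-insufficient}.
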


\begin{proof}
The derivative of the canonical potential factorizes as
\begin{equation}
 \partial_x\mathcal F_T=M_x-TS_x=S_x(T_H-T).
 \label{eq:gradient-factorization}
\end{equation}
Integration gives \eqref{eq:weighted-area-identity}. At a stationary point,
\begin{equation}
 \mathcal F_{T,xx}=S_x\partial_xT_H,
 \label{eq:hessian-temperature}
\end{equation}
and at a fold,
\begin{equation}
 \mathcal F_{T,xxx}=S_x\partial_x^2T_H.
 \label{eq:third-temperature}
\end{equation}
The positivity of \(S_x\) preserves the local signs but does not remove the
weight from \eqref{eq:weighted-area-identity}.
\end{proof}

Maxwell equal-area constructions in black-hole thermodynamics and the
extended-space Clapeyron relation were developed in
Refs.~\cite{SpallucciSmailagic2013Maxwell,WeiLiu2015Clapeyron}.  The weight
in Eq.~\eqref{eq:weighted-area-identity} is the information that is lost
when only the graph or zero set of \(T_H-T\) is retained.

Ref.~\cite{ZhangEtAl2026Geometric} proposed an if-and-only-if
two-extrema diagnostic for the canonical small/large black-hole transition.
The statement below isolates the endpoint value order, admissibility, and
global-minimality conditions under which that diagnostic applies.

\begin{proposition}[Corrected two-extrema criterion]
\label{prop:corrected-two-extrema}
Fix the controls \(\eta\), and suppose two ordinary folds of \(T_H(x;\eta)\)
bound a connected temperature interval \(I=(T_-,T_+)\) with exactly three
interior stationary sheets
\begin{equation}
 x_S(T)<x_U(T)<x_L(T).
 \label{eq:three-sheet-order}
\end{equation}
Assume \(S_x>0\), with \(x_S\) and \(x_L\) strict minima and \(x_U\) a
strict maximum.  Define
\begin{equation}
 D(T)=\mathcal F_T(x_L(T);\eta)
      -\mathcal F_T(x_S(T);\eta).
 \label{eq:two-extrema-value-gap}
\end{equation}
Then
\begin{equation}
 D'(T)=-\bigl[S(x_L(T);\eta)-S(x_S(T);\eta)\bigr]<0.
 \label{eq:two-extrema-gap-monotonicity}
\end{equation}
Consequently the two stable sheets have at most one Maxwell equality in
\(I\).  They have one transverse equality in \(I\) if and only if
\begin{equation}
 \lim_{T\downarrow T_-}D(T)>0,
 \qquad
 \lim_{T\uparrow T_+}D(T)<0.
 \label{eq:two-extrema-endpoint-signs}
\end{equation}
That equality is a physical first-order transition only if both sheets are
admissible there and their common value is no larger than that of every other
active local minimum, including every competing phase admitted by the chosen
ensemble.  Thus two
nondegenerate extrema guarantee the local three-sheet cover; they do not by
themselves guarantee Maxwell equality or global phase coexistence.
\end{proposition}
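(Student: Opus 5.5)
The proof applies the envelope identity to the critical-value gap along the three-sheet cover, then reads off the Maxwell equality from the intermediate value theorem, and finally adds the physical conditions as a separate layer. First I would establish the derivative formula \eqref{eq:two-extrema-gap-monotonicity}. On the open interval \(I\) the three sheets are nondegenerate, since the ordinary folds sit only at the endpoints. The implicit function theorem applied to \(\partial_x\mathcal F_T=S_x(T_H-T)=0\) then makes \(x_S(T)\) and \(x_L(T)\) of class \(C^1\). By the chain rule,
\[
 \frac{d}{dT}\mathcal F_T(x_i(T);\eta)=\partial_x\mathcal F_T\,x_i'(T)+\partial_T\mathcal F_T=-S(x_i(T);\eta).
\]
Subtracting the two sheets gives \(D'(T)=-[S(x_L)-S(x_S)]\). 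Because \(S_x>0\) and \(x_S<x_L\) by \eqref{eq:three-sheet-order}, the entropy is strictly larger on \(L\), so \(D'<0\) throughout \(I\).

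Next comes the counting. A strictly decreasing function has at most one zero, and any zero it has is transverse because \(D'\ne0\). The endpoint limits in \eqref{eq:two-extrema-endpoint-signs} exist in \([-\infty,\infty]\) by monotonicity. They are in fact finite: properness keeps the sheets in a compact set as \(T\) approaches a fold, so the values stay bounded. If the lower limit is positive and the upper limit is negative, the intermediate value theorem applied to the continuous \(D\) gives a zero. Conversely, suppose \(D(T_*)=0\) for some \(T_*\in I\). Strict monotonicity then gives \(D>0\) to the left of \(T_*\) and \(D<0\) to its right, so the one-sided limits satisfy the strict inequalities. Strictness of the limit inequalities is the one delicate point. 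For \(T\) near \(T_-\) with \(T<T_*\), \(D(T)>D(T_0)>0\) for any fixed \(T_0\in(T_-,T_*)\), which gives a strictly positive limit; the upper end is handled the same way. This proves the equivalence.

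Finally, the physical statement is a matter of definitions rather than analysis. By Sec.~\ref{subsec:stationary-sheets}, global selection requires activity, meaning membership in \(\mathcal A_\lambda\), together with least value among the active minima. A zero of \(D\) compares only the two chosen sheets. So the zero is physical coexistence only when both sheets are admissible and no other active minimum, including any competing phase in the ensemble, has a lower value. To show these extra conditions cannot be dropped, I would point to two constructions: the clipped-cusp construction, in which a face removes \(x_S\) while its stationary sheet persists, and a third well of lower value (compare Fig.~\ref{fig:information-layers}). In both the three-sheet cover and the fold structure are unchanged, but the equality is not physical. I do not expect a real obstacle; the only care needed is in the endpoint limits at the folds, which is where properness enters.
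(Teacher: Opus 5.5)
Your proposal is correct and follows the paper's own argument essentially step for step. The envelope identity along the two stable sheets gives $D'(T)=-(S_L-S_S)<0$; strict monotonicity yields at most one zero, which is necessarily transverse; the intermediate value theorem together with the one-sided limits gives the endpoint-sign equivalence; and admissibility and the global comparison enter only as logically independent necessary conditions.

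Your explicit check that the one-sided limits are strictly signed is a small refinement that the paper leaves implicit. Two of your additions are not needed. Finiteness of the limits via properness is unnecessary, since monotone limits in $[-\infty,\infty]$ suffice. The ``third well'' illustration is optional, and such a lower competitor would have to be a separate phase of the ensemble, because the hypothesis allows only three stationary sheets over $I$.
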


Along either stationary sheet, the envelope theorem gives
\(\dd\mathcal F_T(x_i(T);\eta)/\dd T=-S(x_i(T);\eta)\).  Subtracting the two
identities yields Eq.~\eqref{eq:two-extrema-gap-monotonicity}; since
\(S_x>0\) and \(x_L>x_S\), the derivative is strictly negative.  Continuity
then gives a unique zero exactly when the endpoint signs reverse, and the
nonzero derivative makes that zero transverse.  Admissibility and comparison
with every other active minimum are separate global tests.  The fold-endpoint
limits are detailed in \ref{appsubsec:two-extrema-proof}.

In families where the endpoint signs, admissibility, and global comparison
follow from separate structure, the
diagnostic is valid.  For a general constrained thermodynamic family, the
two extrema settle only the local branch multiplicity.  The constructions
below show independently how the endpoint value order can move and how an
admissibility face can delete one of the competitors.

If two distinct strict interior minima coexist in a connected one-state
family and an interior maximum lies between them, then \(T_H-T\) has at least
three zeros.  Rolle's theorem therefore gives at least two extrema of \(T_H\).
Disconnected physical components and boundary states fall outside this
necessity statement.

\subsection{Iso-stationary families with different Maxwell sets}
\label{subsec:iso-stationary-counterexample}

\begin{theorem}[Local stationary data are insufficient]
\label{thm:local-data-insufficient}
There are real-analytic canonical families with identical equilibrium
cover, fold set, state ordering, Morse indices, Morse word, and Brouwer
degree, but with different Maxwell sets and opposite global winners.
Consequently, no rule based only on those stationary data determines the
Maxwell set or the critical-value order.
\end{theorem}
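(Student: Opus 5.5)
The plan is to build an explicit iso-stationary deformation, guided by the weighted-area identity of Proposition~\ref{prop:weighted-area}. Stationary data see only the zero set of \(T_H-T\). Critical values also see the weight \(S_x\). So I would keep \(T_H(x)\) fixed and change only the entropy measure.

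Fix a real-analytic temperature profile with two ordinary folds, for instance \(T_H(x)=x^3-3x\) on \(X=\mathbb R\), or a monotone reparametrisation that keeps \(T_H\) bounded on compacta. Take a family of positive analytic weights \(S_x=w_\kappa(x)>0\), for instance
\[
 w_\kappa(x)=\exp(\kappa x),
\]
and define \(M_\kappa(x)=\int_0^x w_\kappa T_H\). Then
\[
 \mathcal F_{T,\kappa}=M_\kappa-TS_\kappa,
 \qquad
 \partial_x\mathcal F_{T,\kappa}=w_\kappa(T_H-T).
\]
For every \(\kappa\) the following are identical:
\begin{itemize}
\item the stationary cover \(\{T_H=T\}\) and the fold set \(\{T_H'=0\}\);
\item the state order of the sheets;
\item the Morse indices, since \(\mathcal F_{xx}=w_\kappa T_H'\) with \(w_\kappa>0\);
\item the Morse word \((0,1,0)\) on the three-sheet interval and \((0)\) outside it;
\item the Brouwer degree, which is a sign count of the same indices.
\end{itemize}
If the problem should be posed as a control family, I would also let \(\eta\) enter \(T_H\) rigidly, so that all stationary walls coincide for every \(\kappa\).

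The core step is to show that the Maxwell temperature moves with \(\kappa\). On the three-sheet interval \(I=(T_-,T_+)\), Proposition~\ref{prop:corrected-two-extrema} gives a strictly decreasing gap
\[
 D_\kappa(T)=\int_{x_S(T)}^{x_L(T)}w_\kappa\,(T_H-T)\,\dd x .
\]
For the odd cubic and \(\kappa=0\), symmetry gives \(D_0(0)=0\), so the Maxwell temperature is \(T=0\). For \(\kappa>0\) the weight favours large \(x\). The integral at \(T=0\) splits into a negative lobe on \((x_S,x_U)\) and a positive lobe on \((x_U,x_L)\). The weight amplifies the second lobe, so \(D_\kappa(0)>0\). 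Symmetrically, \(D_{-\kappa}(0)<0\).

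By monotonicity of \(D_\kappa\) in \(T\), the unique Maxwell temperature \(T_M(\kappa)\) is shifted to opposite sides of \(0\) for \(\pm\kappa\). To make the endpoint-sign condition \eqref{eq:two-extrema-endpoint-signs} robust, I would take \(\kappa\) small. The endpoint limits at the folds are nonzero for \(\kappa=0\), so by continuity they keep their signs.

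Hence at a fixed control \(T\in(0,T_M(\kappa))\), or between \(T_M(-\kappa)\) and \(T_M(\kappa)\), one family selects \(x_S\) and the other selects \(x_L\). Those are opposite global winners, and the two Maxwell sets are different points. Since both families are analytic and share all the listed stationary data, no function of those data can output the Maxwell set or the value order.

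The main obstacle is making "identical stationary data" precise. The phrase "equilibrium cover" must mean the zero set in \((x,T)\), not the Legendrian graph including \(S\) and \(M\) values. If a referee insists on identical \(S\) on-shell, the claim would fail, so I would state explicitly that only the cover, folds, state order, indices, word and degree are shared.

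The sign of \(D_\kappa(0)\) is the computational step. I would settle it by differentiating in \(\kappa\) at \(\kappa=0\):
\[
 \partial_\kappa D_\kappa(0)\big|_{\kappa=0}=\int_{-\sqrt3}^{\sqrt3}x\,(x^3-3x)\,\dd x=\frac{72\sqrt3}{5}-24\sqrt3=-\frac{48\sqrt3}{5}<0,
\]
where \(\int_{-\sqrt3}^{\sqrt3}x^4\,\dd x=18\sqrt3/5\) and \(3\int_{-\sqrt3}^{\sqrt3}x^2\,\dd x=6\sqrt3\). The sign comes out negative, not positive as the lobe heuristic suggested. That only swaps which of \(\pm\kappa\) shifts \(T_M\) up; the conclusion is unchanged because the derivative is nonzero. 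An analytic implicit-function argument then gives \(T_M(\kappa)\) strictly monotone near \(0\).
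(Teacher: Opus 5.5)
This is essentially the paper's proof. The paper also keeps the stationary cubic fixed, tilts only the entropy measure by $e^{\varepsilon x}$, and moves the Maxwell point via $\partial_\varepsilon D(0,0)=-4/15\neq0$ and the implicit-function theorem, obtaining $t_M(\varepsilon)=-\tfrac{2}{15}\varepsilon+O(\varepsilon^2)$. Its cubic is $x^3-x=t$ on the bounded physical interval $(-3/2,3/2)$, which makes the degree well defined.

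There are two harmless slips in your version, and only the nonvanishing of the derivative matters:
\begin{itemize}
\item Your integral is $18\sqrt3/5-6\sqrt3=-12\sqrt3/5$, not $-48\sqrt3/5$.
\item The function $x^3-3x$ is positive on $(-\sqrt3,0)$ and negative on $(0,\sqrt3)$. So your lobe heuristic, done correctly, already predicts the negative sign rather than contradicting it.
\end{itemize}
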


The analytic construction and its global-value check are given in
\ref{appsubsec:iso-stationary-proof}. In that construction the
stationary equation is the \(\varepsilon\)-independent cubic
\(x^3-x=t\), whereas the unique Maxwell wall moves as
\begin{equation}
 t_M(\varepsilon)=-\frac{2}{15}\varepsilon+O(\varepsilon^2).
 \label{eq:Maxwell-shift-main}
\end{equation}
The families at \(\varepsilon=\delta\) and \(-\delta\) therefore select
opposite outer minima at \(t=0\) while retaining the same stationary data.

In some untruncated cusp families, two extrema of \(T_H\) do accompany a
crossing.  The extrema, branch cover, indices, and degree determine the
stationary skeleton, whereas the critical values locate the crossing and
fix the value order.  An admissibility filter can then remove a competing
sheet and eliminate the physical transition from the same ambient cusp.
Figure~\ref{fig:obstruction-Maxwell} displays the stationary skeleton and
the independently moving Maxwell wall in the analytic counterexample.

\begin{figure}[t]
 \centering
 \includegraphics[width=\linewidth]{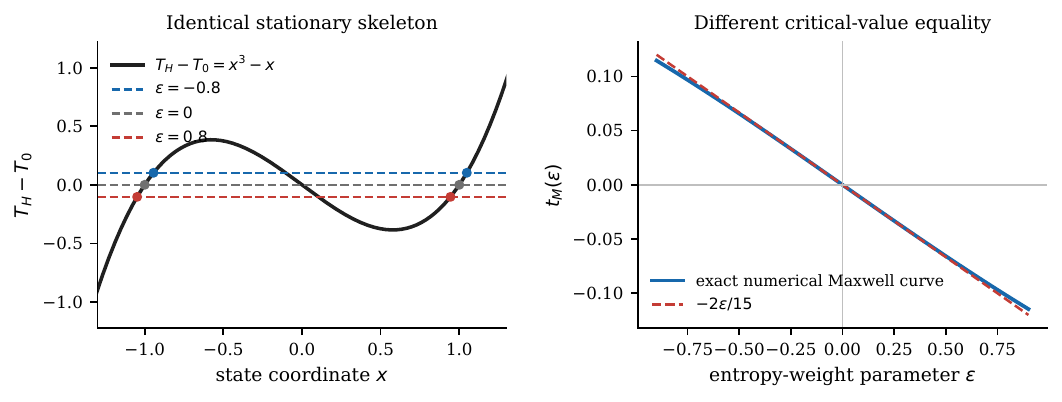}
 \caption{The iso-stationary counterexample.  The left panel shows the
 common cubic stationary skeleton and the distinct Maxwell temperatures for
 three entropy weights.  The two marked outer roots are the competing
 minima.  The right panel compares the numerically evaluated Maxwell curve
 with its analytic tangent \(t_M=-2\varepsilon/15\).  All curves are generated
 directly from Eqs.~\eqref{eq:epsilon-entropy}--\eqref{eq:D-epsilon-t}.}
 \label{fig:obstruction-Maxwell}
\end{figure}

\subsection{Admissibility can remove an ambient Maxwell crossing}
\label{subsec:clipped-cusp}

The obstruction in Theorem~\ref{thm:local-data-insufficient} moves the
Maxwell wall without changing the stationary skeleton. A second example
shows that an admissibility filter can remove the physical transition
altogether while leaving the ambient catastrophe unchanged.

\begin{proposition}[The clipped cusp]
\label{prop:clipped-cusp}
Consider
\begin{equation}
 \mathcal F(x;t)=\frac{x^4}{4}-\frac{x^2}{2}+tx,
 \qquad X=(-2,2).
 \label{eq:clipped-cusp-potential}
\end{equation}
On the full physical interval
\(\mathcal A^{\rm full}=(-3/2,3/2)\), the two stable sheets undergo a
physical Maxwell crossing at \(t=0\). On the clipped interval
\(\mathcal A^{\rm clip}=(-1/2,3/2)\), the same ambient family has exactly
one active stable sheet for \(|t|<1/8\) and hence no physical Maxwell
crossing in that window.
\end{proposition}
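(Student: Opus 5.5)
The approach is a direct computation with the quartic potential. I will work out the stationary sheets, their Morse indices and critical values, and then apply the face filter in each case.

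\emph{Stationary structure.} The stationary equation is \(\mathcal F_x=x^3-x+t=0\), with \(\mathcal F_{xx}=3x^2-1\). The folds sit at \(x=\pm1/\sqrt3\), i.e.\ at \(t=\pm t_f\) with \(t_f=2/(3\sqrt3)\approx0.385\). For \(|t|<t_f\) there are therefore three nondegenerate real roots \(x_-(t)<x_0(t)<x_+(t)\). The outer two are minima, with \(|x_\pm|>1/\sqrt3\), and the middle one is a maximum. On \(|t|\le1/8\) I will locate them with sign checks of \(x^3-x+t\). For example, \(x_+(t)\in(1,1.07)\) and \(x_-(t)\in(-1.07,-0.93)\) roughly, while \(x_0(t)\approx t\) lies in \((-0.13,0.13)\). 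In particular all three stay inside \((-3/2,3/2)\), since \(x^3-x+t\) has no zero near \(\pm3/2\) for these \(t\).

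\emph{Full interval.} Both outer minima are active throughout. Since \(X\) contains no other stationary points, they are the only active minima. By the symmetry \(\mathcal F(-x;-t)=\mathcal F(x;t)\), at \(t=0\) the minima \(x=\pm1\) have equal value \(-1/4\). To get transversality I will use the envelope identity \(D'(t)=x_+(t)-x_-(t)\), where \(D=\mathcal F(x_+)-\mathcal F(x_-)\). This derivative is nonzero, so the equality at \(t=0\) is a transverse Maxwell crossing with opposite winners on either side. It is then a physical first-order transition in the sense of Proposition~\ref{prop:corrected-two-extrema}: both sheets are admissible and no other active minimum competes.

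\emph{Clipped interval.} For \(|t|<1/8\), the left minimum \(x_-(t)<-0.9<-1/2\) lies outside \(\mathcal A^{\rm clip}\). Its ambient sheet persists; it is merely inactive. The right minimum lies in \((1,1.07)\subset(-1/2,3/2)\), so it is active. The middle root has index \(1\), so it is not a local minimum, even though it lies in \((-1/2,1/2)\).

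The face point \(x=-1/2\) is not counted as a phase, by the convention of Sec.~\ref{subsec:stationary-sheets}. Hence exactly one active stable sheet exists, and the critical-value equality at \(t=0\) involves an inactive competitor. There is therefore no physical Maxwell crossing in this window.

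The only real obstacle is making the root enclosures rigorous and uniform on \(|t|\le1/8\). I would handle this with explicit sign evaluations of the cubic at a few rational points, combined with monotonicity of \(x^3-x\) on the outer branches. In particular, I need \(x_-(t)<-1/2\) for all \(|t|<1/8\). This follows because on \([-1/2,-1/\sqrt3]\) the function \(x^3-x+t\) is at least \(3/8-1/8>0\), so no root lies there.
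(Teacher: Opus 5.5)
Your argument is correct and follows the paper's own route. Both use sign checks of \(x^3-x+t\) for \(|t|<1/8\), the equal outer values \(-1/4\) at \(t=0\), and exclusion of the left minimum by the clipped face; your envelope identity \(D'(t)=x_+-x_->0\) additionally gives transversality, which the paper leaves implicit.

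There are two harmless slips. For \(t>0\) one has \(x_+(t)<1\), so the enclosure is roughly \((0.93,1.06)\), not \((1,1.07)\). Also, the extra check on \([-1/\sqrt3,-1/2]\) (which you wrote backwards) is unnecessary: the leftmost root of the cubic lies on its increasing branch \(x<-1/\sqrt3\), so \(x_-<-1/\sqrt3<-1/2\) automatically.
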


The root-location calculation is given in
\ref{appsubsec:clipped-cusp-proof}. It leaves the ambient cover,
folds, indices, and critical values unchanged; only the face-sign data
change.

The two examples isolate complementary obstructions, since a positive entropy
weight changes the value order on a fixed stationary cover, whereas an
admissibility filter changes which entries of that order participate in
physical selection. Both pieces of information are absent from the
stationary topology.

\subsection{Continuity and zeroth-order transitions}
\label{subsec:no-jump}

Here it is important to distinguish two minimization problems. The first minimizes a
continuous off-shell potential over a closed physical domain. The second
minimizes only over active interior stationary sheets. The first has a
continuity theorem; the second can jump when a minimizing sheet is deleted
at an excluded boundary.

The following is the minimum-value form of Berge's maximum theorem
\cite{Berge1963Topological,AliprantisBorder2006Infinite}, together with the
locally inf-compact extension used below.

\begin{proposition}[Continuity of the closed-domain equilibrium value]
\label{prop:no-jump}
Let \(V\) be a metric control space and let
\(K:V\rightrightarrows\mathbb R\) take nonempty compact values. Suppose
that
\begin{equation}
 d_H\!\left(K(\lambda),K(\lambda_0)\right)\longrightarrow0
 \quad\text{as}\quad \lambda\longrightarrow\lambda_0,
 \label{eq:Hausdorff-continuity}
\end{equation}
and that \(\mathcal F\) is jointly continuous on a neighborhood of the
local graph of \(K\). Then
\begin{equation}
 g_{\rm cl}(\lambda)
 =\min_{x\in K(\lambda)}\mathcal F(x,\lambda)
 \label{eq:equilibrium-value}
\end{equation}
is continuous at \(\lambda_0\).

The compactness hypothesis can be replaced by a locally uniform compact
sublevel condition. More precisely, the same conclusion holds when \(K\)
has nonempty closed values, has a closed graph at \(\lambda_0\), is lower
hemicontinuous at \(\lambda_0\), and there are a neighborhood \(N\), a
compact set \(C\), and a number \(c\) such that
\begin{equation}
 \inf_{x\in K(\lambda)}\mathcal F(x,\lambda)<c,\qquad
 \{x\in K(\lambda):\mathcal F(x,\lambda)\leq c\}\subset C
 \quad(\lambda\in N).
 \label{eq:uniform-inf-compactness}
\end{equation}
\end{proposition}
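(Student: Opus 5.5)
The plan is to prove the compact-valued statement first and then reduce the inf-compact version to it. I will argue with sequences, since \(V\) is metric. Take \(\lambda_k\to\lambda_0\) and pick minimizers \(x_k\in K(\lambda_k)\); they exist by compactness and continuity. For the upper bound, let \(x_0\) be a minimizer on \(K(\lambda_0)\). Hausdorff convergence gives points \(y_k\in K(\lambda_k)\) with \(y_k\to x_0\). Joint continuity then yields
\[
\limsup_k g_{\rm cl}(\lambda_k)\le\lim_k\mathcal F(y_k,\lambda_k)=g_{\rm cl}(\lambda_0).
\]
For the lower bound, Hausdorff convergence gives \(\operatorname{dist}(x_k,K(\lambda_0))\to0\). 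Compactness of \(K(\lambda_0)\) then lets every subsequence of \(x_k\) have a further subsequence converging to some \(\bar x\in K(\lambda_0)\). Along it,
\[
\mathcal F(x_k,\lambda_k)\to\mathcal F(\bar x,\lambda_0)\ge g_{\rm cl}(\lambda_0).
\]
This gives \(\liminf_k g_{\rm cl}(\lambda_k)\ge g_{\rm cl}(\lambda_0)\). One point needs checking: the points \(x_k,y_k\) eventually lie in a compact neighborhood of \(K(\lambda_0)\), which sits inside the neighborhood of the local graph where \(\mathcal F\) is jointly continuous. This holds because a small enough closed \(\eta\)-neighborhood of the compact set \(K(\lambda_0)\) in \(\mathbb R\) is compact.

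For the second version I replace compactness by the sublevel condition \eqref{eq:uniform-inf-compactness}. The upper bound uses lower hemicontinuity in place of Hausdorff continuity. Let \(x_0\) be a minimizer at \(\lambda_0\); it exists because the \(c\)-sublevel set of \(K(\lambda_0)\) is closed and lies in \(C\), hence is compact and nonempty. Lower hemicontinuity gives \(y_k\in K(\lambda_k)\) with \(y_k\to x_0\), and continuity gives the limsup inequality as before. Existence of minimizers at \(\lambda_k\in N\) follows the same way. The sublevel set \(\{x\in K(\lambda_k):\mathcal F\le c\}\) is closed, nonempty because the infimum is below \(c\), and contained in \(C\), so a minimizer \(x_k\) exists. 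For the lower bound, all \(x_k\) lie in the compact set \(C\), so a subsequence converges to some \(\bar x\). The closed graph at \(\lambda_0\) gives \(\bar x\in K(\lambda_0)\), and continuity gives \(\liminf_k g_{\rm cl}(\lambda_k)\ge\mathcal F(\bar x,\lambda_0)\ge g_{\rm cl}(\lambda_0)\). Here I will assume joint continuity on a neighborhood of \(C\times\{\lambda_0\}\) intersected with the graph, and shrink \(N\) so that \((x_k,\lambda_k)\) stays in it.

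The subsequence-of-subsequence argument then gives convergence of the whole sequence, and sequential continuity is continuity in a metric space. I expect the main obstacle to be bookkeeping rather than anything deep. Joint continuity is only assumed near the local graph, so I must check that every point \(x_k\) and \(y_k\) used in the argument stays in that domain. The outline above handles this with compact neighborhoods in the first case and with \(C\) and a shrunken \(N\) in the second.
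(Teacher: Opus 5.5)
Your proof is correct and follows the same route as the paper's: a sequence of minimizers trapped in a compact set yields the lower limit, and a recovery sequence \(y_k\to x_0\) yields the upper limit. In the compact case the compact set is a neighborhood of \(K(\lambda_0)\) and Hausdorff convergence supplies both steps; in the second case the common compact set is \(C\), the closed graph places the limit in \(K(\lambda_0)\), and lower hemicontinuity supplies \(y_k\). Your extra checks (minimizers exist because the sublevel sets are closed, nonempty and contained in \(C\); joint continuity is only used at graph points) make explicit what the paper leaves implicit.
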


The proof is given in \ref{appsubsec:closed-domain-proof}.

Where the active stationary-minimum set is nonempty, define
\begin{equation}
 g_{\rm stat}(\lambda)
 =\min\{\mathcal F(x,\lambda):
 x\in\mathcal A_\lambda,\ \mathcal F_x=0,\ \mathcal F_{xx}>0\}.
 \label{eq:stationary-selector-value}
\end{equation}
Continuity can fail for this selector because the candidate correspondence
in \eqref{eq:stationary-selector-value} need not be lower hemicontinuous at
an admissibility wall.

\begin{proposition}[A transverse boundary deletion produces a finite jump]
\label{prop:boundary-deletion-jump}
For
\begin{equation}
 \mathcal F_\kappa(x)=(x^2-1)^2+\kappa(3x-x^3),
 \qquad 0<\kappa<\frac43,
 \label{eq:boundary-deletion-example}
\end{equation}
and the open physical condition \(x>\lambda\), the minimum over active
stationary minima jumps upward by \(4\kappa\) as \(\lambda\) crosses
\(-1\). On every fixed compact interval \([\lambda,R]\), \(R>1\), the
closed-domain value remains continuous.
\end{proposition}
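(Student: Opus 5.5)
The plan is a direct computation of the stationary set of \(\mathcal F_\kappa\), followed by an application of Proposition~\ref{prop:no-jump} for the closed-domain statement. The derivative factorizes as
\begin{equation*}
 \mathcal F_\kappa'(x)=4x(x^2-1)+3\kappa(1-x^2)=(x^2-1)(4x-3\kappa).
\end{equation*}
So the stationary points are \(x=-1\), \(x=3\kappa/4\) and \(x=1\). The hypothesis \(0<\kappa<4/3\) is used exactly here: it places \(3\kappa/4\) strictly inside \((0,1)\), so the three roots are distinct and ordered \(-1<3\kappa/4<1\).

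Next I would determine the Morse types and values. At a root of \(x^2-1\), differentiating the product gives \(\mathcal F_\kappa''(\pm1)=2x(4x-3\kappa)\big|_{x=\pm1}\). This yields \(\mathcal F_\kappa''(1)=2(4-3\kappa)>0\) and \(\mathcal F_\kappa''(-1)=2(4+3\kappa)>0\), so both outer roots are strict minima. At the middle root, \(\mathcal F_\kappa''(3\kappa/4)=4\bigl((3\kappa/4)^2-1\bigr)<0\), so it is a maximum. Direct substitution gives \(\mathcal F_\kappa(1)=2\kappa\) and \(\mathcal F_\kappa(-1)=-2\kappa\).

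The stationary selector \(g_{\rm stat}\) of Eq.~\eqref{eq:stationary-selector-value} then splits into two cases.
\begin{itemize}
\item For \(\lambda<-1\), both minima satisfy \(x>\lambda\), so \(g_{\rm stat}(\lambda)=-2\kappa\).
\item For \(-1\le\lambda<1\), the open condition \(x>\lambda\) deletes \(x=-1\) while \(x=1\) survives, so \(g_{\rm stat}(\lambda)=2\kappa\).
\end{itemize}
The jump across \(\lambda=-1\) is therefore \(2\kappa-(-2\kappa)=4\kappa>0\). The cause is that the candidate set \(\{-1,1\}\) collapses to \(\{1\}\), which is a failure of lower hemicontinuity at the admissibility wall.

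For the closed-domain value I would take \(K(\lambda)=[\lambda,R]\), whose Hausdorff distance to \(K(\lambda_0)\) is \(|\lambda-\lambda_0|\). Since \(\mathcal F_\kappa\) is a polynomial and hence jointly continuous, Proposition~\ref{prop:no-jump} gives continuity of \(g_{\rm cl}\) at every \(\lambda_0<R\). As a consistency check near \(\lambda=-1\): for \(\lambda\) slightly above \(-1\), \(\mathcal F_\kappa\) is increasing on \([\lambda,3\kappa/4]\), since \(\mathcal F_\kappa'>0\) on \((-1,3\kappa/4)\). The closed minimum is therefore \(\min\{\mathcal F_\kappa(\lambda),2\kappa\}=\mathcal F_\kappa(\lambda)\to-2\kappa\), which matches the value for \(\lambda<-1\). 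The minimizer has simply moved onto the boundary point.

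No step is a real obstacle. The only care needed is the sign bookkeeping of \(\mathcal F_\kappa''\) at the roots, and checking that \(R>1\) keeps \(x=1\) inside the interval so that both selectors are nonempty.
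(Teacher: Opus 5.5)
Your proposal is correct and follows the paper's own argument: you factor \(\mathcal F_\kappa'=(x^2-1)(4x-3\kappa)\), identify the outer stationary points as strict minima with \(\mathcal F_\kappa(\mp1)=\mp2\kappa\), read off the one-sided values of the stationary selector, and invoke Proposition~\ref{prop:no-jump} for \(K(\lambda)=[\lambda,R]\). Your explicit Hessian signs and your check that the boundary point \(x=\lambda\) takes over as the closed-domain minimizer spell out details that the paper states in one line.
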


The factorization and one-sided minimization are worked out in
\ref{appsubsec:closed-domain-proof}.

\begin{corollary}[Boundary criteria for zeroth-order transitions]
\label{cor:jump-diagnostic}
A finite discontinuity in a reported equilibrium free energy cannot arise
from the minimum of a jointly continuous potential over a nonempty,
Hausdorff-continuous compact closed physical family. It therefore identifies
at least one of the following mechanisms: the feasible set becomes empty; an
excluded open boundary deletes the winner; the feasible correspondence loses
a closed graph, upper or lower hemicontinuity, or Hausdorff continuity; a
minimizer arrives from a noncompact region; the off-shell potential loses
joint continuity, as can occur under a discontinuous change of action
normalization; or a competing phase has been omitted from the ensemble.
\end{corollary}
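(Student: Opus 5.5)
The plan is to argue by contraposition from Proposition~\ref{prop:no-jump} and its locally inf-compact extension. Suppose the reported equilibrium free energy \(g(\lambda)\) has a finite discontinuity at \(\lambda_0\). We assume the following. The feasible correspondence \(K(\lambda)\) is nonempty near \(\lambda_0\). It has compact values, or closed values together with the uniform sublevel condition \eqref{eq:uniform-inf-compactness}. It is Hausdorff continuous, so it has a closed graph and is upper and lower hemicontinuous. The off-shell potential \(\mathcal F\) is jointly continuous near the graph. Finally, the reported \(g\) coincides with the closed-domain minimum \(g_{\rm cl}\) over the full ensemble. Under these assumptions Proposition~\ref{prop:no-jump} makes \(g_{\rm cl}\) continuous at \(\lambda_0\), which is a contradiction. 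The first sentence of the corollary is exactly this statement.

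For the enumerative part, negate each hypothesis of Proposition~\ref{prop:no-jump} in turn and match it to an item of the list:
\begin{enumerate}
\item If \(K(\lambda)\) is empty on one side of \(\lambda_0\), we get the first mechanism.
\item If \(K\) fails to have a closed graph, is not upper or lower hemicontinuous, or is not Hausdorff continuous, we get the feasible-set mechanisms.
\item If compactness and the uniform sublevel bound both fail, a minimizing sequence can escape to a noncompact region, which is the ``arrives from a noncompact region'' item.
\item If joint continuity of \(\mathcal F\) fails, for instance through a discontinuous change of action normalization across \(\lambda_0\), we get the potential item.
\end{enumerate}
The remaining possibility is that all hypotheses hold but \(g\neq g_{\rm cl}\). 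Then either the reported value is the stationary selector \(g_{\rm stat}\) of \eqref{eq:stationary-selector-value} over an open admissible set, or the ensemble omits a competitor. In the first case the winner can be deleted at an excluded open boundary, as in Proposition~\ref{prop:boundary-deletion-jump}. In the second, adjoining the omitted phase restores the closed-domain minimization.

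The main obstacle is the step \(g\neq g_{\rm cl}\). We must show that a jump of the stationary selector, when the closed-domain value is continuous, can only come from boundary deletion or from an omitted phase. The plan there is as follows. Take a one-sided sequence \(\lambda_k\to\lambda_0\) of selected active minima \(x_k\). By compactness, pass to a limit \(x_*\). If \(x_*\) is interior and nondegenerate, it continues as an active minimum, so \(g_{\rm stat}\) is continuous along the sequence. Hence \(x_*\) must lie on \(\partial\mathcal A_{\lambda_0}\) (an excluded open face), reach a fold, or escape. Escape is already covered by the noncompactness item. At a fold, the merging minimum has its value attained continuously by the closed-domain minimum. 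Therefore a genuine jump of the reported value relative to \(g_{\rm cl}\) leaves only the excluded-boundary or omitted-competitor alternatives.
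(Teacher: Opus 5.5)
Your contrapositive framing matches the paper. The paper treats the corollary as Proposition~\ref{prop:no-jump} read backwards, with the list serving as a catalogue of its negated hypotheses, and the first sentence follows exactly as you say.

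The gap is in the step you flag as the main obstacle. There you fix \(K(\lambda)\) to be the closed physical family. You then try to show that, when every hypothesis holds for it, a jump of \(g_{\rm stat}\) can come only from face deletion or an omitted competitor. You dismiss folds by claiming that the merging minimum ``has its value attained continuously by the closed-domain minimum.'' That claim is false. Take \(\partial_x\mathcal F_t(x)=-(x+\tfrac12)\,x\,[(x-1)^2-t]\) with the fixed physical set \((-1,3)\) and \(K=[-1,3]\).
\begin{itemize}
\item For small \(t>0\) the active minima are \(-\tfrac12\) and \(1-\sqrt t\), and the second is lower by \(27/640+o(1)\).
\item The minimum \(1-\sqrt t\) annihilates with the maximum \(1+\sqrt t\) at \(t=0\), so \(g_{\rm stat}\) jumps by \(27/640\) there.
\item Meanwhile \(g_{\rm cl}\) is attained at the endpoint \(x=3\), far below the fold value, and is continuous.
\item No stationary point meets a face, nothing escapes, no phase is omitted, and your minimizers converge to an interior fold point.
\end{itemize}
So, with \(K\) fixed as the closed family, the dichotomy you are trying to prove is false, and your case analysis (interior, face, fold, escape) cannot close.

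The repair is to drop the detour through \(g_{\rm cl}\). Apply the contrapositive of Proposition~\ref{prop:no-jump} to the correspondence that the reported value actually minimizes over. For \(g_{\rm stat}\) this is the candidate correspondence of active nondegenerate minima in \eqref{eq:stationary-selector-value}. It is finite-valued, hence compact-valued. The fold jump above is then a failure of the closed-graph (upper hemicontinuity) property of that correspondence at \(t=0\). The deletion in Proposition~\ref{prop:boundary-deletion-jump} is the excluded-open-boundary item. With this reading every listed mechanism is simply a negated hypothesis: nonemptiness, closed or compact values, hemicontinuity or Hausdorff continuity, joint continuity of the potential, or minimization over the full ensemble. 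No separate comparison of \(g_{\rm stat}\) with \(g_{\rm cl}\) is needed.
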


An admissibility crossing produces a jump precisely when the departing sheet
is the global winner and its limiting value differs from that of the surviving
winner. If the two limiting values agree, the boundary wall meets the physical
Maxwell wall and the jump closes continuously.

\section{The decorated wall complex and reconstruction}
\label{sec:decorated-complex}

The exact C-metric phase diagram combines three geometrically distinct
events: creation or loss of stationary sheets at folds, exchange of
critical-value order on a Maxwell wall, and loss of physical membership at
an admissibility face.  A decorated chamber datum records all three, so
continuation through a control window preserves the information needed for
global phase selection.  This record determines the qualitative portrait of
branch existence, value order, and physical membership; model-dependent
magnitudes come from the thermodynamic potential.

\subsection{Walls, chambers, and decorations}
\label{subsec:walls-chambers}

Let \(\mathscr C_{\mathrm{reg}}\) be the nondegenerate part of
\(\mathscr C\), and let \(\operatorname{pr}\colon X\times U\to U\) be the projection.
The fold, critical-value, and admissibility walls are
\begin{align}
 \Delta
 &=\operatorname{pr}\{(x,\lambda):\mathcal F_x=\mathcal F_{xx}=0\},
 \label{eq:fold-wall}\\
 \mathcal M_{\mathrm{cv}}
 &=\overline{\left\{\lambda:
 \begin{array}{l}
 (x_i,\lambda),(x_j,\lambda)\in\mathscr C_{\mathrm{reg}},
 \quad x_i\neq x_j,\\[-2pt]
 \mathcal F(x_i,\lambda)=\mathcal F(x_j,\lambda)
 \end{array}\right\}},
 \label{eq:value-wall}\\
 \mathcal B_a
 &=\operatorname{pr}\{(x,\lambda)\in\mathscr C_{\mathrm{reg}}:
          \rho_a(x,\lambda)=0\},
 \qquad
 \mathcal B=\bigcup_{a=1}^q\mathcal B_a.
 \label{eq:admissibility-wall}
\end{align}
The closure in \eqref{eq:value-wall} retains incidences with higher strata.
On local nondegenerate sheets \(x_i(\lambda)\), define
\begin{equation}
 v_i(\lambda)=\mathcal F(x_i(\lambda),\lambda).
 \label{eq:critical-value-sheet}
\end{equation}
The regular equality wall of sheets \(i\) and \(j\) is transverse when
\begin{equation}
 d(v_i-v_j)\neq0.
 \label{eq:value-transversality}
\end{equation}
Because \(\mathcal F_x=0\) on each sheet, the envelope identity gives
\begin{equation}
 dv_i=\mathcal F_\lambda(x_i(\lambda),\lambda)\,d\lambda.
 \label{eq:envelope-critical-value}
\end{equation}
A regular face crossing is transverse when
\begin{equation}
 d\bigl[\rho_a(x_i(\lambda),\lambda)\bigr]\neq0.
 \label{eq:face-transversality}
\end{equation}

The physical Maxwell set is the subset
\begin{equation}
 \mathcal M_{\mathrm{eq}}\subset\mathcal M_{\mathrm{cv}}
 \label{eq:physical-Maxwell-subset}
\end{equation}
on which the equal sheets are both active minima and their common value is
no larger than every other active-minimum value. Thus
\(\mathcal M_{\mathrm{cv}}\) records all critical-value reorderings, whereas
only part of it describes physical coexistence.

Set
\begin{equation}
 \Sigma=\Delta\cup\mathcal M_{\mathrm{cv}}\cup\mathcal B.
 \label{eq:total-wall}
\end{equation}
For a chamber \(C\), a connected component of \(U\setminus\Sigma\), define
\begin{equation}
 \mathfrak P_C
 =(E_C,\prec_x,\prec_{\mathcal F},\operatorname{ind},\epsilon).
 \label{eq:decorated-chamber-datum}
\end{equation}
Here \(E_C\) is the finite set of stationary sheets, \(\prec_x\) is their
state order, \(\prec_{\mathcal F}\) is their critical-value order,
\(\operatorname{ind}\colon E_C\to\{0,1\}\), and
\begin{equation}
 \epsilon_i=
 \bigl(\operatorname{sgn}\rho_1(x_i,\lambda),\ldots,
       \operatorname{sgn}\rho_q(x_i,\lambda)\bigr)
 \label{eq:face-sign-vector}
\end{equation}
is the face-sign vector. A sheet is active exactly when every component of
\(\epsilon_i\) is positive.

\begin{theorem}[Chamber constancy]
\label{thm:chamber-constancy}
Under the assumptions of Section~\ref{sec:constrained-families},
\(\mathfrak P_C\) is constant on each chamber, up to continuation of sheet
labels. Hence the active stable set, metastable set, global winner, and
presence or absence of an active stationary phase are constant on \(C\).

Because \(X\) is a global interval, state order canonically labels the sheets
and excludes within-chamber monodromy in the category considered here.
\end{theorem}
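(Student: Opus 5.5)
The proof is a continuation argument: every entry of \(\mathfrak P_C\) is a locally constant function on \(C\), and \(C\) is connected. We fix a chamber \(C\) and a base point \(\lambda_0\in C\). At \(\lambda_0\) the fiber \(\operatorname{pr}^{-1}(\lambda_0)\cap\mathscr C\) is finite by hypothesis~1. Since \(\lambda_0\notin\Delta\), every stationary point satisfies \(\mathcal F_{xx}\neq0\). The implicit function theorem then continues each point to a \(C^{r-1}\) sheet \(x_i(\lambda)\) on a neighborhood \(N\) of \(\lambda_0\).

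\textbf{Step 1: the sheet set \(E_C\) is locally constant.} Properness of \(\operatorname{pr}\) on \(\mathscr C\) over a compact neighborhood rules out two failure modes: new stationary points cannot enter from the ends of \(X\), and the continued sheets cannot disappear. We argue by contradiction. Suppose there are points \(\lambda_k\to\lambda_0\) with stationary points \(y_k\) not on any continued sheet. Properness gives a convergent subsequence \(y_k\to y\in\operatorname{pr}^{-1}(\lambda_0)\). Then \(y=x_i(\lambda_0)\) for some \(i\), and local uniqueness in the implicit function theorem forces \(y_k=x_i(\lambda_k)\), a contradiction. Hence the number of sheets is constant on \(N\). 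Because \(X\) is an interval and distinct sheets never meet off \(\Delta\), the state order \(\prec_x\) of the \(x_i(\lambda)\) is preserved on \(N\). For the same reason, continuation along any path in \(C\) returns each sheet to the sheet of the same state rank, so the labels are canonical and there is no monodromy.

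\textbf{Step 2: the decorations are locally constant.} The index is \(\operatorname{sgn}\mathcal F_{xx}(x_i(\lambda),\lambda)\). This is continuous and nonzero off \(\Delta\), so it is constant. The differences \(v_i-v_j\) are continuous and vanish only on \(\mathcal M_{\rm cv}\), so \(\prec_{\mathcal F}\) is constant. The quantities \(\rho_a(x_i(\lambda),\lambda)\) are continuous and vanish only on \(\mathcal B_a\), so the face-sign vectors \(\epsilon_i\) are constant. A set that is open and closed in the connected set \(C\) is all of \(C\). Since each datum is constant near every point of \(C\), it is constant on all of \(C\), up to continuation of labels.

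\textbf{Step 3: the derived quantities.} The active stable set, the metastable set, the global winner, and nonemptiness of the active stationary phases are all defined purely in terms of \(E_C\), \(\operatorname{ind}\), \(\epsilon\), and \(\prec_{\mathcal F}\). They are therefore constant on \(C\).

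The step that needs the most care is the sheet-count argument in Step 1. Properness must be used on a compact neighborhood inside \(U\), so that points of the fiber over \(\lambda_0\) lying near the ends of \(X\) are controlled. The closure in the definition of \(\mathcal M_{\rm cv}\) must also be checked: it should make no difference because \(C\) avoids the whole wall, and in particular avoids the higher strata excluded by hypotheses 4 and 5.
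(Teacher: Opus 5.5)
Your proposal is correct and follows the paper's argument. The paper packages your Step~1 as Lemma~\ref{lem:finite-critical-covering}: a proper local diffeomorphism with finite fibers is a finite covering, and state order trivializes it when \(X\) is an interval. It then argues, as in your Step~2, that indices, value order, and face signs can change only across \(\Delta\), \(\mathcal M_{\mathrm{cv}}\), and \(\mathcal B_a\), so they are locally constant and hence constant on the connected chamber. Your sequential-compactness argument is a hands-on proof of that covering lemma.
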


Properness and finite fibers make the critical projection over a chamber a
finite covering.  Since no Hessian, critical-value difference, or physical
margin vanishes away from \(\Sigma\), the Morse indices, value order, and face
signs are constant on each lifted sheet.  The metastable set and the global
winner then follow from these finite data.  The covering argument is given in
\ref{app:transversality}.

For a chamber in which no active zero lies on a state-space endpoint, define
\begin{equation}
 W_C=\sum_{i\in E_C^{\mathrm{act}}}(-1)^{\operatorname{ind}_i}.
 \label{eq:active-signed-count}
\end{equation}
This is the Brouwer degree of \(\mathcal F_x\) on the admissible components
when those components are truncated to a compact domain with nonzero
boundary gradient. Without these boundary hypotheses,
\eqref{eq:active-signed-count} is an active signed count, not automatically
a degree.

\subsection{Wall actions and minimality}
\label{subsec:wall-actions-minimality}

\begin{proposition}[Regular wall actions]
\label{prop:wall-effects}
At a regular point of one wall, away from codimension-two incidences:
\begin{enumerate}
\item crossing \(\Delta\) inserts or deletes two state-adjacent points of
opposite Morse parity; their face signs agree, and, if the fold value differs
from every other critical value, their values are adjacent in
\(\prec_{\mathcal F}\);
\item crossing \(\mathcal M_{\mathrm{cv}}\) preserves sheets, indices, and
face signs and exchanges two adjacent entries of
\(\prec_{\mathcal F}\); the winner changes exactly when these entries are
the two lowest active minima;
\item crossing \(\mathcal B_a\) preserves the ambient sheets, indices, and
critical-value order but toggles one face sign of one sheet.
\end{enumerate}
The active signed count is unchanged at a fold or value wall. At a face
wall, a contribution \((-1)^{\operatorname{ind}_i}\) is added or removed only if every
other face sign of that sheet is positive.
\end{proposition}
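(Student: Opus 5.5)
The plan is to prove the three items of Proposition~\ref{prop:wall-effects} separately. In each case I would work in a small control neighbourhood \(N\) of a regular wall point \(\lambda_0\) that meets no other wall. I would use the finite-covering description from the proof of Theorem~\ref{thm:chamber-constancy}. By properness and finite fibers, every stationary sheet that is nondegenerate at \(\lambda_0\) continues smoothly over \(N\). On such a sheet the Hessian sign, every value difference not involved in the wall, and every face margin not involved in the wall keep a fixed sign on \(N\). Hence only the data attached to the one degenerating object can change. The separate chamber data on the two sides then follow by applying the finite-covering argument on each side.

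For \(\Delta\), I would use the ordinary fold hypothesis \eqref{eq:ordinary-fold}. Near \(p=(x_0,\lambda_0)\) the implicit function theorem applied to \(\mathcal F_x=0\), together with \(d_\lambda\mathcal F_x\neq0\) and \(\mathcal F_{xxx}\neq0\), puts the critical set in the form \(\lambda_\perp=c(x-x_0)^2+\cdots\) with \(c\neq0\). So on one side there are exactly two nearby roots and on the other side none. These two roots are state-adjacent, because any other sheet stays a bounded distance away. Along the critical curve, \(\mathcal F_{xx}\) changes sign at \(x_0\) since \(\mathcal F_{xxx}\neq0\), so the two roots have opposite Morse parity. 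Their face signs agree because \(\rho_a(x_0,\lambda_0)\neq0\) at a regular fold point off \(\mathcal B\), and continuity carries this to both roots. Their values both tend to \(\mathcal F(p)\). If \(\mathcal F(p)\) differs from every other critical value, no other value can fall between them for \(\lambda\) near \(\lambda_0\), so they are adjacent in \(\prec_{\mathcal F}\).

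For \(\mathcal M_{\mathrm{cv}}\), all sheets are nondegenerate and all margins are nonzero, so the sheets, indices and face signs persist. Transversality \eqref{eq:value-transversality} makes \(v_i-v_j\) change sign, and no other value pair is equal at \(\lambda_0\). Therefore exactly two entries of \(\prec_{\mathcal F}\) are exchanged. They are adjacent because every other value is separated from their common value at \(\lambda_0\). The winner is the lowest active minimum, so it changes exactly when the exchanged pair are the two lowest active minima.

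For \(\mathcal B_a\), transversality \eqref{eq:face-transversality} makes \(\rho_a(x_i(\lambda),\lambda)\) change sign on one sheet. The ambient sheets, indices and values are untouched, so exactly one face-sign entry toggles.

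The signed-count statements then follow directly. At a fold, the inserted pair contributes \((-1)^0+(-1)^1=0\) whether or not it is active, since both members share their face signs. At a value wall nothing changes. At a face wall, \(W_C\) changes only if the sheet's activity changes, and that happens exactly when all of its other face signs are positive. The main obstacle is making sure that the fold pair's activity is unambiguous. This is the place where I must invoke the assumption that fold points are off \(\mathcal B\) away from codimension two, together with the recorded incidence data of assumption 5.
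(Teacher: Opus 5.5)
Your proof is correct and follows the paper's argument item by item: nondegenerate sheets persist locally, the fold pair has opposite Hessian signs and inherits common face signs and adjacent values by continuity, transversality forces the sign change of \(v_i-v_j\) or of the single face margin at the other two walls, and the fold pair cancels in \(W_C\). The only difference is cosmetic: you obtain the fold structure directly from the implicit-function theorem applied to \(\mathcal F_x=0\) under \eqref{eq:ordinary-fold}, whereas the paper quotes the \(A_2\) normal form \(\mathcal F_0+\sigma(u^3/3-su)\) and reads off the Hessians \(2\sigma u_\pm\).
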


The wall-action table and its geometric illustration are collected in
\ref{appsubsec:operator-relations}, Table~\ref{tab:wall-actions},
and Fig.~\ref{fig:wall-operators}.

\begin{proposition}[Minimality by wall type]
\label{prop:wall-minimality}
No one of \(\Delta\), \(\mathcal M_{\mathrm{cv}}\), and \(\mathcal B\) can
be removed from a reconstruction of the full stationary, critical-value-order,
and admissibility portrait \eqref{eq:decorated-chamber-datum}.  This minimality is
by information type; an equilibrium-only reconstruction may omit value walls
involving unstable sheets or metastable minima above the winner.
\end{proposition}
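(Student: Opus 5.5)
For each wall type I will build an explicit pair of real-analytic one-state families on a common control window. The two families agree on the decorated data produced by the other two wall types, but their portraits \eqref{eq:decorated-chamber-datum} differ in some chamber. It then follows that any reconstruction ignoring that wall type gives the same output for both families, so it must be wrong for one of them. The three constructions reuse examples already established earlier in the paper.

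\emph{The value wall \(\mathcal M_{\mathrm{cv}}\).} I take the iso-stationary pair of Theorem~\ref{thm:local-data-insufficient}, at \(\varepsilon=\delta\) and \(\varepsilon=-\delta\), with no faces, so \(\mathcal B=\emptyset\) and every face-sign vector is trivially positive. The stationary equation \(x^3-x=t\) does not depend on \(\varepsilon\). The two families therefore share \(\Delta\), the state order, the Morse indices and the sheets. By \eqref{eq:Maxwell-shift-main}, the Maxwell walls sit at \(t_M(\pm\delta)\), which have opposite signs. At \(t=0\) the two families therefore have opposite orders \(\prec_{\mathcal F}\) and opposite winners, even though \(\Delta\) and \(\mathcal B\) carry identical data. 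Only the location of \(\mathcal M_{\mathrm{cv}}\) separates them. To restrict attention to equilibria, I note that here the swapped entries are the two active minima, so even the winner-only portrait needs this wall.

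\emph{The admissibility wall \(\mathcal B\).} I use the clipped cusp of Proposition~\ref{prop:clipped-cusp}. The full domain and the clipped domain give the same ambient cover, the same folds \(\Delta=\{|t|=2/(3\sqrt3)\}\) and the same critical values, hence the same \(\mathcal M_{\mathrm{cv}}\). In the clipped case the left minimum near \(x=-1\) has a negative face sign, while in the full case it is active. The winner on \(t<0\) differs, and for \(|t|<1/8\) the clipped family has one active minimum against two. So the data in \(\Delta\) and \(\mathcal M_{\mathrm{cv}}\) cannot recover the face-sign vector \(\epsilon_i\). To make the comparison inside a single family, I can instead let the face move, \(\rho=x-\lambda\) as in Proposition~\ref{prop:boundary-deletion-jump}, so that \(\mathcal B\) becomes a genuine wall in the \((t,\lambda)\) control plane.

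\emph{The fold wall \(\Delta\).} This is the step I expect to be hardest, because a fold changes the sheet set \(E_C\), and I must keep the value walls and face walls from accidentally encoding that change. My plan is to compare \(\mathcal F=x^3/3-tx\), whose chambers \(t<0\) and \(t>0\) have zero and two sheets, with \(\mathcal F=x^3/3+x\), which has no critical points at all. In both cases \(\mathcal M_{\mathrm{cv}}\) and \(\mathcal B\) are empty, apart from the closure at the fold point itself, which I will check is excluded because distinct regular sheets are required in \eqref{eq:value-wall}. The sheet count in the chamber \(t>0\) differs between the two families, and only \(\Delta\) records the difference. I must also make sure that the equilibrium-only version still fails: the created pair includes an active minimum, which becomes the winner, so even the winner-only portrait detects it.

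Finally, I will add a remark for the caveat in the statement. In the equilibrium-only reconstruction, a value wall between an unstable sheet and a minimum, or between two minima lying above the winner, toggles no winner. Such walls may therefore be dropped, which is why the minimality is claimed only by information type.
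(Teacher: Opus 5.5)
Your treatment of $\Delta$ and $\mathcal M_{\mathrm{cv}}$ is sound. The value-wall case is essentially the paper's, which also invokes the $\varepsilon$-shift of $t_M$. The paper's framing differs: it phrases each case as a failure of chamber constancy in a single family, i.e.\ dropping one wall type leaves a component of the complement of the other walls on which $\mathfrak P_C$ varies. Your pairwise indistinguishability framing is a legitimate alternative. However, your criterion must require the two families to share the entire input of a reconstruction without the omitted walls. That input is the base datum $\mathfrak P_{C_0}$ together with the decorations on the remaining operators, namely the value slots and common face signs of a fold pair. Your fold pair meets this with base chamber $t<0$. The iso-stationary pair meets it with a base chamber outside the fold window, because for both signs of $\varepsilon$ the newborn pair enters above the remote sheet.

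The admissibility step, as you lead with it, fails this criterion. In the clipped cusp, the pair created at $t=-2/(3\sqrt3)$ is born at $x=-1/\sqrt3<-1/2$. The fold operator therefore inserts an inactive pair in the clipped family and an active pair in the full one. Moreover, any base chamber of the window in which the left minimum exists already carries different face signs. The full-versus-clipped comparison thus shows only that face signs are not determined by the ambient stationary data. That is automatic, since the faces are extra data, and it does not show that a reconstruction supplied with $\mathfrak P_{C_0}$ and decorated fold operators breaks down without $\mathcal B$.

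What is required is a single family in which a face sign changes inside a region met by neither $\Delta$ nor $\mathcal M_{\mathrm{cv}}$. Your moving-face variant provides exactly this, and it should be the argument rather than an aside. Take $\rho=x-\lambda$ and fix $0<t<2/(3\sqrt3)$. Moving $\lambda$ across $x_-(t)$ deletes the winning left minimum while crossing no fold or value wall. This is the paper's proof, except that the paper uses $\mathcal F_\kappa$ with $\rho=x-\lambda$ near $\lambda=-1$, where the stationary data do not depend on $\lambda$ at all.

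There is also a sign slip. The full and clipped winners differ for $t>0$, where the left minimum has value $\approx-1/4-t$. For $t<0$ both families select the right minimum.

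Finally, your justification of the caveat is too quick for metastable minima. A value wall between two minima above the winner changes no winner when it is crossed. But if a later fold or face deletes the winner, that wall decides which minimum is promoted. It can therefore be omitted only on windows where no such deletion follows.
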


The fold, shifted-Maxwell, and physical-exit counterexamples proving
Proposition~\ref{prop:wall-minimality} are given in
\ref{app:transversality}.  Reconstruction of complete critical-value order or
barrier order requires the extended set \(\mathcal M_{\mathrm{cv}}\).

\subsection{Codimension-two coherence}
\label{subsec:codimension-two-coherence}

Codimension-two strata supply the relations needed to compare different
wall-crossing paths. Branch labels must be distinguished from ranks in
\(\prec_{\mathcal F}\): a value swap commutes with a face toggle in
branch-labeled notation, while in rank notation it transports the rank
at which the toggle acts. A transverse equality of three nondegenerate
critical values produces the braid relation among adjacent value swaps.
The operator identities and the six-sector calculation are given in
\ref{appsubsec:operator-relations}.

The versal \(A_3\) cusp and a transverse fold--face incidence supply
representative non-product relations.  At the latter, the two face walls are
\(s=\pm\sqrt t\), so the fold insertion and face toggle need not commute.
The local normal forms and chamber activity patterns are derived in
\ref{app:wall-normal-forms} and displayed in
Fig.~\ref{fig:codimension-two-coherence}.  A reconstruction records every
codimension-two stratum present in the chosen control window, including
simultaneous folds, fold--remote-value resonances, multiple face crossings,
and intersections of physical faces.

\subsection{Conditional reconstruction and covariance}
\label{subsec:reconstruction-covariance}

\begin{theorem}[Conditional stratified reconstruction]
\label{thm:path-reconstruction}
Suppose that the following data are specified on \(U\):
\begin{enumerate}
\item the oriented Whitney-stratified wall set \(\Sigma\), with branch and
face incidence labels on every regular codimension-one stratum;
\item one decorated datum \(\mathfrak P_{C_0}\) in a base chamber;
\item the fold, value-swap, or face-toggle operator on every oriented
regular wall; and
\item the complete local continuation relation at every codimension-two
stratum met by the path homotopies under consideration.
\end{enumerate}
Composition of wall operators reconstructs the qualitative stationary,
value-order, and admissibility portrait along every generic path beginning
in \(C_0\). Two paths with the same endpoints give the same result if they
are joined, relative to their endpoints, by a generic homotopy that meets
only the recorded codimension-two strata.  Path independence therefore holds
within each such generic homotopy class.
\end{theorem}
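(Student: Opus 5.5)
The plan is to argue by induction on the number of walls crossed, working along a fixed generic path, and then to handle path independence separately through the codimension-two data. Let \(\gamma\colon[0,1]\to U\) be a generic path starting in \(C_0\). By the local finiteness of the Whitney stratification, and by transversality of generic paths to strata of codimension at least one, \(\gamma\) meets only regular codimension-one strata of \(\Sigma\). It meets them transversely, at finitely many parameters \(0<t_1<\cdots<t_N<1\).

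On each open interval \((t_k,t_{k+1})\), the path stays in a single chamber. Theorem~\ref{thm:chamber-constancy} then says the datum \(\mathfrak P_C\) is constant there, up to continuation of labels. Properness gives a finite covering, and state order removes any monodromy. At each \(t_k\), the recorded incidence labels determine which regular wall is crossed and in which orientation. Proposition~\ref{prop:wall-effects} shows that the corresponding operator (fold insertion or deletion, adjacent value swap, or face toggle) carries the datum on one side exactly to the datum on the other. The ingredients are as follows:
\begin{itemize}
\item the local \(A_2\) normal form for folds, as in Eq.~\eqref{eq:ordinary-fold};
\item the transversality condition \(d(v_i-v_j)\neq0\) for value walls;
\item the transversality condition \(d\rho_a(x_i(\lambda),\lambda)\neq0\) for face walls.
\end{itemize}
Composing these operators in order reconstructs \(\mathfrak P\) in every chamber the path visits. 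The active set, metastable set and winner are read off from this datum, so the first claim follows.

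For path independence, let \(\Gamma\colon[0,1]^2\to U\) be a generic homotopy rel endpoints between \(\gamma_0\) and \(\gamma_1\). Genericity means \(\Gamma\) is transverse to the stratification, so it meets:
\begin{itemize}
\item codimension-one strata in curves;
\item only the recorded codimension-two strata, in isolated points;
\item no strata of codimension at least three.
\end{itemize}
I would pass from \(\gamma_0\) to \(\gamma_1\) through intermediate paths \(\Gamma(\cdot,s)\). The operator word changes only at finitely many values of \(s\), and these are of two kinds. In the first kind, a path becomes tangent to a codimension-one wall, which inserts or deletes a cancelling pair consisting of an operator and its inverse. Such a pair composes to the identity, since each wall operator is invertible under reversed orientation. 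In the second kind, the path sweeps across a codimension-two point. There the word changes by the local continuation relation recorded in hypothesis 4, for example the commutation of a branch-labeled value swap with a face toggle, the braid relation at a triple equality, or the non-product fold–face relation at \(s=\pm\sqrt t\). That relation is, by assumption, the complete and correct local relation. So the composite acting on \(\mathfrak P_{C_0}\) is unchanged across each event, and the two endpoint data agree.

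The main obstacle is the word-change bookkeeping in the homotopy argument. One must justify that a generic two-parameter family of paths meets only these two kinds of events. One must also justify that the local relation at a codimension-two point governs the global composite, which requires that the local models of hypothesis 4 hold on a neighbourhood through which the path is deformed. I would handle this with a Cerf-type argument: cover the image of \(\Gamma\) by finitely many small discs, each containing at most one codimension-two point, and arrange the homotopy so that each elementary move is supported in one disc. Then each move changes only a subword, and the result follows by checking, disc by disc, that the subword is replaced by an equivalent one. This statement is conditional: it only asserts that no information beyond the recorded strata is needed.
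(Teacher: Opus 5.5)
Your proposal is correct and follows the paper's proof. A generic path crosses finitely many regular walls transversely, and chamber constancy together with the recorded wall operators propagates \(\mathfrak P_{C_0}\). A generic homotopy rel endpoints then changes the crossing word only through tangency events, which produce cancelling inverse pairs (\(\mathbf c^-\mathbf c^+=\mathrm{id}\), \(\mathbf m^2=\mathrm{id}\), \(\mathbf a^2=\mathrm{id}\)), and through passages across recorded codimension-two strata, and these local moves are assembled by subdividing the homotopy square. One small difference: the paper subdivides the parameter square \([0,1]^2\) rather than covering the image of \(\Gamma\) by discs, which is the cleaner choice because that image may overlap itself.
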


Along a generic path, successive wall crossings compose the corresponding
fold, value-swap, and face-toggle operators.  A generic homotopy changes the
crossing word only through the recorded codimension-two moves, whose local
relations preserve the endpoint chamber data.  This gives path independence in the
stated homotopy class; the stratified continuation proof is given in
\ref{app:wall-normal-forms}.

The relevant covariance is narrower than covariance under an arbitrary
Legendre transformation. Consider two constrained families
\((\mathcal F,\rho_a,X,\Lambda)\) and
\((\mathcal F',\rho'_a,X',\Lambda')\). A phase-compatible equivalence
consists of a control diffeomorphism
\(\psi\colon\Lambda\to\Lambda'\), state diffeomorphisms
\(\phi_\lambda\colon X\to X'\), a permutation \(\varsigma\) of the physical
faces, and functions satisfying
\begin{align}
 \mathcal F'\bigl(\phi_\lambda(x),\psi(\lambda)\bigr)
 &=a(\lambda)\mathcal F(x,\lambda)+b(\lambda),
 \qquad a(\lambda)>0,
 \label{eq:phase-compatible-potential}\\
 \rho'_a\bigl(\phi_\lambda(x),\psi(\lambda)\bigr)
 &=u_a(x,\lambda)\rho_{\varsigma(a)}(x,\lambda),
 \qquad u_a(x,\lambda)>0.
 \label{eq:phase-compatible-faces}
\end{align}

\begin{theorem}[Phase-compatible covariance]
\label{thm:phase-compatible-covariance}
Under \eqref{eq:phase-compatible-potential} and
\eqref{eq:phase-compatible-faces}, \(\psi\) carries
\(\Delta\), \(\mathcal M_{\mathrm{cv}}\),
\(\mathcal M_{\mathrm{eq}}\), and \(\mathcal B\) to their primed
counterparts. It preserves Morse indices, critical-value order,
admissibility, metastability, the global winner, the active signed count,
and the wall-incidence relations. State order is transported by
\(\phi_\lambda\) and reverses if \(\phi_\lambda\) reverses orientation.

An ensemble enlarged by prescribed nonstationary reference sheets requires
the extended value walls described in
Section~\ref{subsec:stationary-sheets}; phase compatibility then also requires
the same positive affine map on every reference value.
\end{theorem}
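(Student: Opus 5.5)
The plan is to verify the two defining relations of a phase-compatible equivalence pointwise on the ambient critical locus. Then transport each wall and each decoration separately.

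\emph{Stationary data.} Differentiate \eqref{eq:phase-compatible-potential} in \(x\) at fixed \(\lambda\). This gives \(\mathcal F'_{x'}(\phi_\lambda(x),\psi(\lambda))\,\partial_x\phi_\lambda=a\,\mathcal F_x\). Since \(\partial_x\phi_\lambda\neq0\) and \(a>0\), the map \((x,\lambda)\mapsto(\phi_\lambda(x),\psi(\lambda))\) carries \(\mathscr C\) bijectively onto \(\mathscr C'\). At a stationary point the first-derivative term drops out, leaving \(\mathcal F'_{x'x'}(\partial_x\phi_\lambda)^2=a\,\mathcal F_{xx}\). Hence nondegeneracy and the Morse index of \eqref{eq:one-dimensional-index} are preserved, and \(\mathscr C_{\rm reg}\) maps to \(\mathscr C'_{\rm reg}\). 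Folds are exactly the points where \(\mathcal F_x=\mathcal F_{xx}=0\), so \(\psi(\Delta)=\Delta'\). State order is transported by the monotone map \(\phi_\lambda\), and it reverses exactly when \(\partial_x\phi_\lambda<0\).

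\emph{Values and faces.} On stationary sheets, \(v'_i(\psi(\lambda))=a(\lambda)v_i(\lambda)+b(\lambda)\). Because a single positive affine map acts on all sheets at fixed \(\lambda\), equalities and strict inequalities of critical values are preserved. So \(\prec_{\mathcal F}\) is preserved, the regular equality set maps onto its primed counterpart, and taking closures under the homeomorphism \(\psi\) gives \(\psi(\mathcal M_{\rm cv})=\mathcal M'_{\rm cv}\). From \eqref{eq:phase-compatible-faces} with \(u_a>0\), each face margin of a sheet has the same sign, after relabeling by \(\varsigma\). Therefore \(\psi(\mathcal B_{\varsigma(a)})=\mathcal B'_a\), the union \(\mathcal B\) is preserved, face-sign vectors are permuted componentwise, and activity is preserved. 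Transversality conditions \eqref{eq:value-transversality} and \eqref{eq:face-transversality} are also preserved on the walls. The reason is that \(d(v'_i-v'_j)=a\,d(v_i-v_j)\) wherever \(v_i=v_j\), and \(d(u_a\rho)=u_a\,d\rho\) wherever \(\rho=0\), after pullback by the diffeomorphism \(\psi\).

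\emph{Derived data.} Active minima, their value order, metastability and the global winner are defined purely in terms of index, activity and \(\prec_{\mathcal F}\), so all of them are preserved. In particular \(\psi(\mathcal M_{\rm eq})=\mathcal M'_{\rm eq}\). The active signed count \eqref{eq:active-signed-count} depends only on the active indices, so it is also preserved. Wall incidences are preserved because \(\psi\) is a diffeomorphism mapping each wall onto the corresponding primed wall with its branch labels. The main point needing care is the closure and incidence claim: sheets only become nondegenerate in the limit there, and one must confirm that labels of higher strata are carried along. This follows from continuity of \(\phi_\lambda\) and \(\psi\) together with properness.

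For the enlarged ensemble, prescribed reference values enter the order comparison alongside stationary values. Order is preserved exactly when the same map \(v\mapsto a v+b\) is applied to them, and the extended equality walls then transport by the same argument.
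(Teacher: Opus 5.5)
Your proposal is correct and follows essentially the same route as the paper's proof. Both differentiate the potential relation in the state variable to match stationary points and Hessian signs, use the single positive affine map for value order and equality-wall transversality, use the positive units $u_a$ for face signs and face transversality, and then read off the derived data and the enlarged-ensemble requirement. The one extra step in the paper is a third differentiation at a fold, $\mathcal F'_{x'x'x'}=a(\partial_x\phi_\lambda)^{-3}\mathcal F_{xxx}$, together with the remark that the fiber-preserving map keeps the unfolding transverse; this carries ordinary $A_2$ folds to ordinary folds, not just the set $\Delta$ to $\Delta'$, and you may want to add that line where you claim the wall-incidence relations are preserved.
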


\ref{app:transversality} proves the theorem by differentiating
\eqref{eq:phase-compatible-potential}. The positive affine relation is
sufficient but not necessary; it may be replaced by
\(\mathcal F'=\Theta(\mathcal F,\lambda)\) with
\(\partial\Theta/\partial\mathcal F>0\), provided that the same increasing map
acts on every competing value sheet and the state and face maps remain as
above. In contact language the corresponding structure is the triple
\((L,\mathfrak p,\mathcal A)\) comprising the equilibrium Legendrian, reservoir
polarization, and admissible domain. A general Legendre transformation can
change any of these objects and need not preserve phase dominance
\cite{GhoshBhamidipati2019Contact,Bravetti2019Contact}.
Reference~\cite{Li2026LegendreCovariant} established the covariance of the
fixed-reservoir Brouwer--Morse degree.  The refinement here decorates that
stationary invariant by critical-value order and physical membership, and
restricts the allowed action map so that the order of all competing phases
is preserved.

The active signed count \eqref{eq:active-signed-count} is a coarse
projection of \(\mathfrak P_C\). A Maxwell crossing can change the global
phase while leaving that count, the branch cover, and every Morse index
unchanged.

\section{Conclusions}
\label{sec:conclusions}

In the standard single-string ensemble, the charged slowly accelerating AdS
C-metric has a reentrant Maxwell turn at every positive string tension below
\begin{equation}
 \mu_+=
 \frac{\sqrt{2\sqrt3-3}}
 {2\bigl(1+\sqrt{2\sqrt3-3}\bigr)}.
\end{equation}
There is no positive lower threshold.  The unrestricted two-horizon
equations force the coexisting solutions to have a common value
\(Ar_+=\chi\), while their charge--acceleration coordinates remain
distinct.  The turn is then fixed by
\begin{equation}
 \mu=\frac{\chi}{(1+\chi)^2},\qquad
 \frac{P_{\rm turn}}{P_t}=1+\chi^2,
 \qquad QT_{\rm turn}=\frac{\mu}{\pi}.
\end{equation}
Both members of the pair are outer, slowly accelerating, entropy-regular
strict canonical minima.  Algebraic reconstruction of all remaining
stationary states shows that their common Gibbs value is lower than that of
every other physical equilibrium at the same controls.  The turn has positive
curvature and nonzero latent heat throughout the open interval; it ends at an
ordinary critical merger at \(\mu_+\) and at a Maxwell--admissibility corner
at zero tension.

The lower endpoint resolves the apparent disappearance of reentrance in
finite-resolution phase diagrams.  The displacement from the snapping line
is fourth order in pressure and third order in temperature,
\begin{equation}
 (P_{\rm turn}-P_t)Q^2\sim\frac{3\mu^4}{8\pi},
 \qquad
 Q(T_t-T_{\rm turn})\sim\frac{\mu^3}{2\pi}.
\end{equation}
After the corresponding blow-up, the full boundary-linked Maxwell segment
has the limiting profile
\begin{equation}
 \widehat T=2\sqrt{\widehat P}-\widehat P.
\end{equation}
The two phases live in different state-space charts.  Their entropy order
remains fixed, but their thermodynamic volumes exchange order at the turn.
The Clapeyron equation identifies volume inversion as the local mechanism of
the reentrant reversal.  At the same time the entropy gap, latent heat, and
stationary barrier diverge.

The C-metric also realizes the leading-profile classification at a Maxwell
boundary with fixed sheet incidence and reservoir polarization.  For two
nondegenerate minima, the critical-value gap \(D=G_L-G_S\) obeys
\(D_T=-\Delta S\) and \(D_P=\Delta V\).  The leading pressure contact and the
projective Newton class of \(\Delta V/\Delta S\) determine the normalized
coexistence profile.  In the integrable regime \(m>n\), its regular turns are
the positive simple roots of the Newton polynomial; \(m=n\) and \(m<n\)
produce logarithmic and power-divergent boundary classes.  For the C-metric,
\((a,b,c;m,n,k)=(4,2,3;2,1,1)\) has the primitive binomial shape
\((m,n,k)=(2,1,1)\), and its two-chart incidence yields the weighted primitive
\(2\sqrt{\widehat P}-\widehat P\).

The wall complex separates stationary folds, critical-value order, and
physical membership.  Two extrema of the Hawking temperature determine the
local three-sheet cover, but a first-order transition also requires endpoint
value reversal, admissibility, and comparison with every competing phase.
Likewise, a finite jump of the selected free energy signals a failure of the
continuity, compactness, or nonemptiness hypotheses for the feasible family,
deletion of a limiting state, a discontinuous action normalization, or an
omitted competitor.

The topological action shift found by Hale et al.\
\cite{HaleEtAl2025ChargedAccelerating} reduces in the present fixed-charge
ensemble to \(3\mu T/(4P)\) on every canonical branch.
Absolute free energies therefore depend on the prescription, but all branch
differences and the phase-selection results proved here do not.  The exact
turning locus, the global winner, \(\Delta S\), \(\Delta V\), the Clapeyron
slope, latent heat, curvature, and stationary barrier are identical
in the two prescriptions.

The exact result identifies the apparent low-tension disappearance of the
reentrant turn as a boundary-scaling effect.  Near a moving physical boundary,
reconstruction of the phase diagram requires the stationary structure, the
critical-value differences, and the admissibility of the competing branches.

It remains open which other projective Newton classes occur in complete
gravitational solutions.  Rotating accelerating black holes, Lovelock systems
with multiple reentrant transitions, and Born--Infeld AdS black holes provide
natural test cases because their additional scales may change the leading
thermodynamic-jump polynomial
\cite{AbbasvandiEtAl2019FinelySplit,FrassinoEtAl2014MultipleReentrant,
GunasekaranKubiznakMann2012Extended}.  Determining their classes requires the
local stationary branches together with the physical Maxwell component and
its boundary incidence.

\appendix
\renewcommand{\thetheorem}{\Alph{section}.\arabic{theorem}}
\numberwithin{figure}{section}
\numberwithin{table}{section}
\renewcommand{\thefigure}{\Alph{section}.\arabic{figure}}
\renewcommand{\thetable}{\Alph{section}.\arabic{table}}
\section{Transversality, chamber constancy, and covariance}
\label{app:transversality}

\subsection{Finite sheet coverings}
\label{appsubsec:finite-covering}

\begin{lemma}
\label{lem:finite-critical-covering}
Let \(C\) be a chamber. The restriction
\(\operatorname{pr}\colon\mathscr C_{\mathrm{reg}}\vert_C\to C\) is a finite covering.
If \(X\) is an interval, the state order trivializes this covering.
\end{lemma}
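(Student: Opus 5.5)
The plan is the standard local-homeomorphism-plus-properness argument, followed by a use of the linear order on the interval \(X\). First, local triviality. Take \(\lambda_0\in C\) and a point \((x_0,\lambda_0)\in\mathscr C_{\mathrm{reg}}\). Nondegeneracy means \(\mathcal F_{xx}(x_0,\lambda_0)\ne0\). Since \(\mathcal F\in C^r\) with \(r\geq4\), the implicit function theorem applied to \(\mathcal F_x=0\) gives a neighborhood \(W\) of \(\lambda_0\), an interval \(J\ni x_0\), and a \(C^{r-1}\) sheet \(x(\lambda)\). On \(J\times W\) this sheet is the unique zero of \(\mathcal F_x\). Hence \(\operatorname{pr}\) restricted to \(\mathscr C_{\mathrm{reg}}\vert_C\) is a local homeomorphism. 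Because \(C\) is disjoint from \(\Delta\), every point of \(\mathscr C\) over \(C\) is nondegenerate, so \(\mathscr C\vert_C=\mathscr C_{\mathrm{reg}}\vert_C\). No degenerate point is omitted, and the restricted map is still proper over \(C\).

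Second, finiteness and constancy of the fiber. By the properness and finite-fiber hypothesis of Section~\ref{subsec:properness-genericity}, the preimage of a compact neighborhood \(\overline W\subset C\) of \(\lambda_0\) is compact. The fiber over \(\lambda_0\) is finite, say \(\{x_1,\dots,x_N\}\), and has disjoint implicit-function neighborhoods \(J_i\times W_i\). I then claim that, after shrinking \(W\), every point of \(\mathscr C\) over \(W\) lies in some \(J_i\times W\). Suppose not. Then there are points \((y_k,\lambda_k)\) with \(\lambda_k\to\lambda_0\) and \(y_k\notin\bigcup J_i\). Properness gives a convergent subsequence, whose limit is a zero of \(\mathcal F_x\) over \(\lambda_0\) lying outside \(\bigcup J_i\). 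This is a contradiction, since all such zeros are among the \(x_i\).

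So over \(W\) the preimage is a disjoint union of \(N\) graphs, each mapped homeomorphically onto \(W\). This is exactly evenly covered. Connectedness of \(C\) then makes the fiber cardinality constant, so the restriction is a finite covering. The main point to check carefully is this properness step. The assumption excludes escape through the ends of \(X\), and that exclusion is what prevents sheets from appearing or disappearing away from \(\Delta\).

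Third, trivialization by state order. Since \(X\subset\mathbb R\) is an interval, the \(N\) points of each fiber are distinct reals. Order them as \(x_{(1)}(\lambda)<\cdots<x_{(N)}(\lambda)\). On \(W\) the local sheets \(x_i(\lambda)\) are continuous and never coincide, because the fibers have exactly \(N\) points. So their relative order is locally constant, and each ranked map \(\lambda\mapsto x_{(k)}(\lambda)\) is continuous on all of \(C\). The maps \(\lambda\mapsto(x_{(k)}(\lambda),\lambda)\), \(k=1,\dots,N\), are therefore \(N\) disjoint global continuous sections whose images exhaust the covering space. This gives a homeomorphism \(\mathscr C_{\mathrm{reg}}\vert_C\cong C\times\{1,\dots,N\}\) commuting with the projection, which is the claimed trivialization and excludes monodromy within a chamber.
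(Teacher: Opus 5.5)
Your proof is correct and follows the same route as the paper: the implicit-function theorem makes the projection a local diffeomorphism, properness with finite fibers makes it a finite covering, and ranking the fibers by the order on the interval \(X\) gives global sections that trivialize it. You also spell out two steps the paper leaves implicit: the evenly-covered-neighborhood argument, and the observation that \(\mathscr C\vert_C=\mathscr C_{\mathrm{reg}}\vert_C\) because \(C\) avoids \(\Delta\), which is what lets properness pass to the regular locus.
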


\begin{proof}
At every point of \(\mathscr C_{\mathrm{reg}}\),
\(\mathcal F_{xx}\neq0\). The implicit-function theorem therefore makes
\(\operatorname{pr}\) a local diffeomorphism. By assumption,
\(\operatorname{pr}\) is proper and its
fibers are finite. A proper local diffeomorphism with finite fibers is a
finite covering.

Suppose now that \(X\) is an interval. The number of stationary points in a
fiber is constant on \(C\); write them as
\begin{equation}
 x_1(\lambda)<x_2(\lambda)<\cdots<x_n(\lambda).
 \label{eq:ordered-critical-points}
\end{equation}
Each \(x_k\) agrees locally with one of the smooth implicit-function
sections. It is therefore continuous and locally smooth on all of \(C\).
The sections \(x_1,\ldots,x_n\) trivialize the covering. In particular, a
loop in \(C\) cannot permute the sheets while preserving their strict
linear order.
\end{proof}

\begin{proof}[Proof of Theorem~\ref{thm:chamber-constancy}]
Lemma~\ref{lem:finite-critical-covering} makes the stationary-sheet number
constant. The sign of \(\mathcal F_{xx}\) cannot change without crossing
\(\Delta\). Distinct stationary points cannot exchange state order without
colliding, and their critical values cannot exchange order without crossing
\(\mathcal M_{\mathrm{cv}}\). Finally, a component
\(\rho_a(x_i(\lambda),\lambda)\) of a face-sign vector cannot change sign
without crossing \(\mathcal B_a\). Every entry of
\(\mathfrak P_C\) is locally constant and hence constant on the connected
chamber. The active stable set, metastable set, and winner are functions of
this finite record and are therefore constant as well.

For a state space without a global order, the first part of
Lemma~\ref{lem:finite-critical-covering} still gives a finite covering, but
its continuation around a loop can permute sheets. This is the stated
monodromy qualification.
\end{proof}

\subsection{Regular wall actions and signed count}
\label{appsubsec:regular-wall-actions}

\begin{proof}[Proof of Proposition~\ref{prop:wall-effects}]
At a regular fold, smooth changes of the state and controls give the local
form
\begin{equation}
 \mathcal F(u;s,\eta)=\mathcal F_0(s,\eta)
 +\sigma\left(\frac{u^3}{3}-su\right),
 \qquad \sigma\in\{+1,-1\},
 \label{eq:fold-normal-form-app}
\end{equation}
after smooth state and control changes and addition of a
control-dependent function
\cite{Golubitsky1978Catastrophe}. For \(s>0\), its stationary points are
\(u_\pm=\pm\sqrt s\). Their Hessians
\begin{equation}
 \mathcal F_{uu}(u_\pm;s,\eta)=2\sigma u_\pm
 \label{eq:fold-Hessians}
\end{equation}
have opposite signs. Their critical values approach the same fold value as
\(s\downarrow0\). If the fold is not on an admissibility wall, continuity
makes every face sign of \(u_+\) equal to the corresponding sign of
\(u_-\) for sufficiently small \(s>0\). If the fold value is unequal to
every remote critical value, the two newborn values lie in one value gap
and are adjacent in \(\prec_{\mathcal F}\). Their contributions to
\eqref{eq:active-signed-count} cancel when active and are both absent when
inactive.

At a regular critical-value wall, \eqref{eq:value-transversality} implies
that \(v_i-v_j\) changes sign. No sheet, Hessian, or face sign changes.
Because no third value is equal at a regular wall point, \(i\) and \(j\)
are adjacent in value order. The selected phase changes only if both are
active minima and no active minimum lies below them.

At a regular admissibility wall,
\eqref{eq:face-transversality} changes one component of one face-sign
vector. The ambient point, its Hessian, and its critical value continue
smoothly. Its signed contribution is added or removed exactly when all
other face signs are positive.
\end{proof}

\begin{lemma}
\label{lem:active-degree}
Let \(J\subset X\) be a finite union of bounded open intervals whose
endpoints are not zeros of \(\mathcal F_x(\,\cdot\,,\lambda)\). If every
zero in \(J\) is nondegenerate, then
\begin{equation}
 \deg(\mathcal F_x,J,0)
 =\sum_{x_i\in J}\operatorname{sgn}\mathcal F_{xx}(x_i,\lambda)
 =\sum_{x_i\in J}(-1)^{\operatorname{ind}_i}.
 \label{eq:degree-local-sum}
\end{equation}
\end{lemma}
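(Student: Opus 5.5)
The proof reduces the statement to the standard one-dimensional degree theory for a \(C^1\) map on an interval, applied component by component.

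First, by additivity of the Brouwer degree over disjoint open sets, \(\deg(\mathcal F_x,J,0)\) is the sum of the degrees over the connected components of \(J\). Each component is a bounded open interval \((\alpha,\beta)\) with \(\mathcal F_x(\alpha,\lambda)\neq0\) and \(\mathcal F_x(\beta,\lambda)\neq0\), so it suffices to treat a single interval. The zeros of \(\mathcal F_x(\cdot,\lambda)\) on \([\alpha,\beta]\) are nondegenerate. By the implicit function theorem, or simply by \(\mathcal F_{xx}\neq0\), they are isolated, and since \([\alpha,\beta]\) is compact and contains no zeros at its endpoints, they are finite in number.

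Second, the regular-value definition of degree applies directly. Since \(0\) is a regular value of \(f=\mathcal F_x(\cdot,\lambda)\) on \((\alpha,\beta)\), the degree equals \(\sum_{f(x_i)=0}\operatorname{sgn}f'(x_i)=\sum\operatorname{sgn}\mathcal F_{xx}(x_i,\lambda)\).

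As a consistency check that needs no appeal to the general theory, the one-dimensional degree is also \(\tfrac12[\operatorname{sgn}f(\beta)-\operatorname{sgn}f(\alpha)]\). Between consecutive nondegenerate zeros \(f\) has constant sign, and each nondegenerate zero flips the sign in the direction of \(f'\). Telescoping the sign changes across the ordered zeros therefore reproduces the same sum of \(\operatorname{sgn}f'(x_i)\).

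Finally, definition \eqref{eq:one-dimensional-index} gives \(\operatorname{sgn}\mathcal F_{xx}=(-1)^{\operatorname{ind}_i}\), which is the second equality in the statement.

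No step is a real obstacle. The only point requiring care is the finiteness of the zero set, and that follows from compactness of the closed interval together with nondegeneracy and the nonvanishing of \(f\) at the endpoints.
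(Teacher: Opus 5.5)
Your proposal is correct and takes the same route as the paper, which simply invokes the regular-value formula for the one-dimensional Brouwer degree, notes that the endpoint condition makes the degree well defined, and reads off \(+1\) for a minimum and \(-1\) for a maximum. Your additivity over components, the finiteness argument, and the endpoint-sign telescoping check are correct elaborations of that argument.
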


\begin{proof}
This is the regular-value formula for the one-dimensional Brouwer degree.
The boundary condition makes the degree well defined. A minimum contributes
\(+1\) and a maximum contributes \(-1\).
\end{proof}

\subsection{Independence of the wall types}
\label{appsubsec:minimality-proof}

\begin{proof}[Proof of Proposition~\ref{prop:wall-minimality}]
Each wall type controls information that the other two do not contain.

For the fold type, take
\begin{equation}
 \mathcal F(u;s)=\frac{u^3}{3}-su.
 \label{eq:fold-necessity-example}
\end{equation}
There is no stationary point for \(s<0\) and there are two for \(s>0\).
If \(\Delta=\{s=0\}\) is omitted, sheet number and Morse data are not
constant on the purported chamber.

For the critical-value type, restrict the family of
Theorem~\ref{thm:local-data-insufficient} to a small \(t\)-interval
containing \(t_M(\varepsilon)\). No fold or face crossing occurs there, but
the active minima exchange dominance. Equation \eqref{eq:Maxwell-shift}
also shows that the common stationary skeleton cannot recover the wall
position.

For the admissibility type, use
\(\mathcal F_\kappa\) from \eqref{eq:boundary-deletion-example} with the
single face
\begin{equation}
 \rho(x,\lambda)=x-\lambda
 \label{eq:exit-necessity-face}
\end{equation}
near \(\lambda=-1\). The ambient stationary set, indices, and values do not
change, but the lower minimum is removed when the face crosses it. Omitting
\(\mathcal B\) loses both active phase existence and the winner change.
\end{proof}

\subsection{Phase-compatible equivalence}
\label{appsubsec:covariance-proof}

\begin{proof}[Proof of Theorem~\ref{thm:phase-compatible-covariance}]
Differentiate \eqref{eq:phase-compatible-potential} with respect to the
state coordinate at fixed control:
\begin{equation}
 \mathcal F'_{x'}(\phi_\lambda(x),\psi(\lambda))
       \partial_x\phi_\lambda(x)
 =a(\lambda)\mathcal F_x(x,\lambda).
 \label{eq:covariance-first-derivative}
\end{equation}
Since \(\partial_x\phi_\lambda\neq0\), stationary points correspond. At a
stationary point, a second differentiation gives
\begin{equation}
 \mathcal F'_{x'x'}
 =\frac{a(\lambda)}{(\partial_x\phi_\lambda)^2}
  \mathcal F_{xx}.
 \label{eq:covariance-Hessian}
\end{equation}
The prefactor is positive, so the Morse index is preserved. At a fold,
where the first two state derivatives vanish, a third differentiation gives
\begin{equation}
 \mathcal F'_{x'x'x'}
 =\frac{a(\lambda)}{(\partial_x\phi_\lambda)^3}
  \mathcal F_{xxx}.
 \label{eq:covariance-third}
\end{equation}
The fiber-preserving diffeomorphism
\((x,\lambda)\mapsto(\phi_\lambda(x),\psi(\lambda))\) also preserves the
transversality of the projected fold. Thus \(\Delta\) maps to
\(\Delta'\).

For two stationary points,
\begin{equation}
 \mathcal F'(x'_i,\lambda')-\mathcal F'(x'_j,\lambda')
 =a(\lambda)\bigl[
   \mathcal F(x_i,\lambda)-\mathcal F(x_j,\lambda)\bigr].
 \label{eq:covariance-value-difference}
\end{equation}
The positivity of \(a\) preserves both equality and order of critical
values. At a value equality,
\(d[a(v_i-v_j)]=a\,d(v_i-v_j)\), so regular-wall transversality is
preserved as well.

Equation \eqref{eq:phase-compatible-faces} preserves every face sign. At a
face zero,
\begin{equation}
 d\bigl(u_a\rho_{\varsigma(a)}\bigr)
 =u_a\,d\rho_{\varsigma(a)},
 \label{eq:covariance-face-transversality}
\end{equation}
so the face-wall transversality and incidence labels are also preserved.
All components of the chamber datum, the active-minimum order, the winner,
and the active signed count correspond. Since the entire stratified
incidence diagram is transported by a diffeomorphism, its local wall
relations correspond as well.  In an enlarged complex containing prescribed
reference-value sheets, the same argument applies only when those values are
transformed by the same positive affine map.
\end{proof}

\section{Wall normal forms and coherence relations}
\label{app:wall-normal-forms}

\subsection{Operator relations}
\label{appsubsec:operator-relations}

Let \(\mathbf c^+_{pq}\) and \(\mathbf c^-_{pq}\) insert and delete a fold
pair, including its state slots, value slots, indices, and common face
signs. Let \(\mathbf m_{pq}\) exchange the value order of two fixed branch
labels, and let \(\mathbf a_{pa}\) toggle face \(a\) of branch \(p\). Where
the partial fold maps are defined, their action on the finite chamber datum
gives
\begin{equation}
 \mathbf c^-_{pq}\mathbf c^+_{pq}=\mathrm{id},\qquad
 \mathbf m_{pq}^2=\mathrm{id},\qquad
 \mathbf a_{pa}^2=\mathrm{id}.
 \label{eq:operator-inverses}
\end{equation}

If two events involve disjoint branch and face labels, they alter disjoint
entries of \(\mathfrak P_C\) and commute. For a value swap and a face toggle
involving the same branch, the branch-labeled operators still commute:
one changes only \(\prec_{\mathcal F}\), and the other changes only
\(\epsilon\). In rank notation the branch occupying rank \(i\) moves to
rank \(i+1\) under \(\mathbf m_i\). Therefore
\begin{equation}
 \mathbf m_i\mathbf a_{i,a}
 =\mathbf a_{i+1,a}\mathbf m_i,\qquad
 \mathbf m_i\mathbf a_{i+1,a}
 =\mathbf a_{i,a}\mathbf m_i,
 \label{eq:Maxwell-admissibility-square}
\end{equation}
The Maxwell--admissibility square therefore follows from the rank convention.
On the side where one competitor is
inactive, its equality wall remains in \(\mathcal M_{\mathrm{cv}}\) but not
in \(\mathcal M_{\mathrm{eq}}\).

Suppose three distinct nondegenerate sheets have equal critical values at
\(\lambda_0\), and assume
\begin{equation}
 \operatorname{rank}d(v_1-v_2,v_2-v_3)\big|_{\lambda_0}=2.
 \label{eq:triple-value-rank}
\end{equation}
In a transverse two-dimensional slice, \(v_1-v_2\) and \(v_2-v_3\) are
coordinates. The third equality is their sum. The three equality lines cut
the slice into the six possible total orders. The two crossing words
connecting opposite sectors are
\(\mathbf m_i\mathbf m_{i+1}\mathbf m_i\) and
\(\mathbf m_{i+1}\mathbf m_i\mathbf m_{i+1}\); both induce the same
permutation of the three branch labels. Hence
\begin{equation}
 \mathbf m_i\mathbf m_{i+1}\mathbf m_i
 =\mathbf m_{i+1}\mathbf m_i\mathbf m_{i+1},\qquad
 \mathbf m_i\mathbf m_j=\mathbf m_j\mathbf m_i
 \quad(|i-j|>1).
 \label{eq:braid-and-commutation}
\end{equation}

Table~\ref{tab:wall-actions} records the three regular codimension-one
actions in the same notation.

\begin{table}[t]
\centering
\small
\caption{Changes carried by the three regular wall types. The statement
about \(W_C\) at an admissibility wall concerns a sheet that is otherwise
admissible.}
\label{tab:wall-actions}
\setlength{\tabcolsep}{3pt}
\begin{tabular}{p{0.12\linewidth}p{0.26\linewidth}
                p{0.34\linewidth}p{0.12\linewidth}}
\hline
{\raggedright Wall\par} &
{\raggedright Stationary and admissibility data\par} &
{\raggedright Value order and selection\par} &
{\raggedright \(W_C\)\par}\\
\hline
{\raggedright \(\Delta\)\par} &
{\raggedright one adjacent pair with opposite indices and equal face signs
is inserted or deleted\par} &
{\raggedright two adjacent values are inserted or deleted; metastability
can change\par} &
{\raggedright fixed\par}\\
{\raggedright \(\mathcal M_{\mathrm{cv}}\)\par} &
{\raggedright all sheets, indices, and face signs are unchanged\par} &
{\raggedright adjacent values swap; the winner changes only at the lowest
active-minimum swap\par} &
{\raggedright fixed\par}\\
{\raggedright \(\mathcal B\)\par} &
{\raggedright ambient sheets and indices are unchanged; one face sign
toggles\par} &
{\raggedright active order and the winner can change; a no-phase chamber can
appear\par} &
{\raggedright changes by \(\pm(-1)^{\operatorname{ind}}\)\par}\\
\hline
\end{tabular}
\end{table}

\begin{figure}[t]
 \centering
 \includegraphics[width=\linewidth]{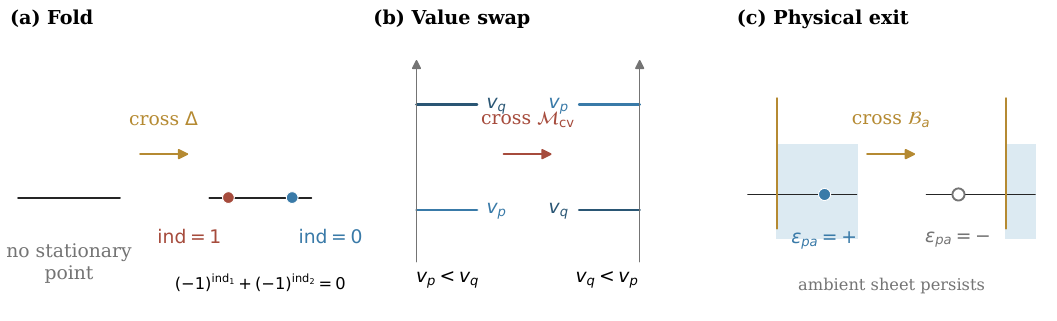}
 \caption{The three regular codimension-one wall actions. Crossing
 \(\Delta\) inserts a state-adjacent pair of opposite Morse parity;
 crossing \(\mathcal M_{\mathrm{cv}}\) swaps two adjacent critical values;
 crossing \(\mathcal B_a\) changes one face sign while the ambient sheet
 persists. Filled and open points denote active and inactive sheets.}
 \label{fig:wall-operators}
\end{figure}

\subsection{The cusp movie}
\label{appsubsec:cusp-movie}

The versal \(A_3\) potential germ can be written as
\begin{equation}
 \mathcal F(u;a,b)=\mathcal F_0(a,b)
 +\sigma\left(\frac{u^4}{4}+\frac{a u^2}{2}+bu\right),
 \qquad \sigma\in\{+1,-1\},
 \label{eq:cusp-normal-form-app}
\end{equation}
after smooth changes of state and control coordinates and addition of a
control-dependent function \cite{Golubitsky1978Catastrophe}. No
state-dependent multiplication of the potential is used, because such a
multiplication would not in general preserve stationary values. Its
stationary equation is
\begin{equation}
 u^3+au+b=0.
 \label{eq:cusp-stationary-cubic}
\end{equation}
A double root obeys
\begin{equation}
 u^3+au+b=0,\qquad 3u^2+a=0.
 \label{eq:cusp-double-root}
\end{equation}
Eliminating \(u\) gives
\begin{equation}
 4a^3+27b^2=0,\qquad a\leq0.
 \label{eq:cusp-discriminant-app}
\end{equation}
The interior \(4a^3+27b^2<0\) has three stationary points, and the exterior
has one.

Writing the double root as \(u=r\) gives the explicit parametrization
\begin{equation}
 a=-3r^2,\qquad b=2r^3.
 \label{eq:cusp-fold-parametrization}
\end{equation}
For \(r<0\), the double root is the left pair \(u_1=u_2=r\) and the simple
root is \(-2r>0\). For \(r>0\), the double root is the right pair
\(u_2=u_3=r\) and the simple root is \(-2r<0\). Thus the two fold arcs
coalesce different state-adjacent pairs.

For \(\sigma=+1\), the three points at \(b=0\), \(a<0\), are
\begin{equation}
 u_1=-\sqrt{-a},\qquad u_2=0,\qquad u_3=\sqrt{-a}.
 \label{eq:cusp-symmetric-roots}
\end{equation}
The outer points are minima with the common value
\(\mathcal F_0-a^2/4\), while the middle point is a maximum with value
\(\mathcal F_0\). For \(\sigma=-1\), all indices reverse. Thus the exact
normal form contains a value-equality curve \(b=0\) terminating at the cusp.

A path entering the cusp interior through one fold arc creates a
state-adjacent pair. A path leaving through the other arc deletes a
different adjacent pair. The remaining one-point configuration agrees with
the configuration obtained by continuation outside the cusp. After the two
fold crossings,
the surviving sheet may carry a different genealogical label. Coherence is
encoded by the specified cusp-continuation isomorphism between genealogical
labels.

\subsection{Fold--boundary incidence}
\label{appsubsec:fold-boundary}

\begin{lemma}[Relative fold--face normal form]
\label{lem:relative-fold-face}
Suppose \(p\) is an ordinary \(A_2\) fold on a regular physical face.
Assume that \(\rho_u(p)\ne0\) and that, after choosing the fold unfolding
coordinate \(t\), the graph of the face has an independent control
direction \(s\). Then local state and control diffeomorphisms and
multiplication of \(\rho\) by a positive unit put the pair in the form
\begin{align}
 \mathcal F(u;t,s,\eta)
 &=\mathcal F_0(t,s,\eta)
 +\sigma\left(\frac{u^3}{3}-tu\right),
 \label{eq:fold-boundary-normal-form-app-potential}\\
 \rho(u;t,s,\eta)&=\upsilon(u,t,s,\eta)(u-s),
 \qquad \upsilon>0.
 \label{eq:fold-boundary-normal-form-app}
\end{align}
\end{lemma}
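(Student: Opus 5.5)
The plan is to build the normal form in three successive normalizations. First we fix the potential, then the state coordinate, and last the face, taking care that each step leaves the earlier ones intact.

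Step 1 handles the fold. We apply the ordinary $A_2$ normal form already used in the proof of Proposition~\ref{prop:wall-effects}, Eq.~\eqref{eq:fold-normal-form-app}. Because of Eq.~\eqref{eq:ordinary-fold} and $d_\lambda\mathcal F_x\neq0$, there are a state diffeomorphism $u=u(x,\lambda)$, a control coordinate $t$ normal to the projected fold, and a function $\mathcal F_0$ of the controls alone such that $\mathcal F=\mathcal F_0+\sigma(u^3/3-tu)$. The remaining controls $(s',\eta)$ enter as passive parameters.

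Step 2 turns to the face. We write $\rho(u;t,s',\eta)$ in these coordinates. The hypothesis $\rho_u(p)\neq0$ allows the implicit function theorem to solve $\rho=0$ as $u=h(t,s',\eta)$. The hypothesis that the face graph has an independent control direction means that some combination of the remaining controls has nonzero $h$-derivative. We take $s:=h(t,s',\eta)$ as a new control coordinate. Together with $t$ and the rest, renamed $\eta$, it forms a control chart, and since it does not touch $t$ it preserves the fold form from Step 1. Hadamard's lemma then gives $\rho=\tilde\upsilon\,(u-s)$ with $\tilde\upsilon(p)=\rho_u(p)\neq0$. If $\tilde\upsilon<0$ the physical side sits at $u<s$. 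In that case we reflect $u\mapsto-u$, $t$ fixed, $s\mapsto-s$. This sends $u^3/3-tu$ to its negative, which is absorbed by $\sigma\mapsto-\sigma$, and $\rho$ becomes $(-\tilde\upsilon)(u-s)$ after the relabeling. The unit is now positive, and multiplying by a positive unit is permitted.

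The main obstacle is compatibility. The state change from Step 1 depends on the controls, and reparametrizing the controls in Step 2 must not bring back terms such as $su$ or $t$-dependence into the cubic. We avoid this because the Step 2 change acts only on the controls transverse to $t$. Also, $\mathcal F_0$ is allowed to absorb any control-only function. If the chosen $t$ itself appears in $h$, we take $s=h$ anyway, since the map $(t,s',\eta)\mapsto(t,h,\eta)$ is still a diffeomorphism provided $\partial_{s'}h\neq0$. As a last check, positivity of $\upsilon$ on a neighborhood follows by continuity from its value at $p$.
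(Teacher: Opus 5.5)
Your proposal is correct and follows the paper's proof essentially step for step. Both arguments use the parametrized $A_2$ normal form with passive controls, solve the face as $u=h$ by the implicit-function theorem, and make the base change $s=h$, which leaves $t$ and the state coordinate untouched. Both then apply Hadamard's lemma and finish with the reflection $(u,s,\sigma)\mapsto(-u,-s,-\sigma)$, which makes the unit positive while preserving $\rho>0$.
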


\begin{proof}
The parametrized \(A_2\) normal-form theorem first gives the displayed
form of \(\mathcal F\), with the remaining controls passive. Since
\(\rho_u(p)\ne0\), the implicit-function theorem writes the face as
\(u=h(t,\zeta,\eta)\). Independence of the face direction means that one
passive control can be chosen with
\(\partial h/\partial\zeta\ne0\) at \(p\). The base change
\(s=h(t,\zeta,\eta)\) is then a local control diffeomorphism and leaves the
fold state coordinate unchanged. Hadamard's lemma gives
\(\rho=\upsilon(u-s)\). If \(\upsilon(p)<0\), replace
\(u\) and \(s\) by \(u'=-u\) and \(s'=-s\), and replace
\(\sigma\) by \(\sigma'=-\sigma\). The fold potential retains the displayed
\(A_2\) form, while
\(\rho=(-\upsilon)(u'-s')\) has a positive unit. This coordinate change
preserves the original physical inequality \(\rho>0\); no reversal of the
face defining function is used.
\end{proof}

The fold is \(t=0\). For \(t>0\), the stationary points are
\begin{equation}
 u_-(t)=-\sqrt t,\qquad u_+(t)=\sqrt t.
 \label{eq:fold-boundary-roots}
\end{equation}
The face incidences \(\rho(u_\pm;t,s)=0\) are
\begin{equation}
 s=-\sqrt t,\qquad s=\sqrt t.
 \label{eq:fold-boundary-walls-app}
\end{equation}
With the convention \(\rho>0\), the activity table is
\begin{equation}
\begin{array}{c|c}
\text{control region} & \text{active newborn sheets}\\
\hline
s<-\sqrt t & u_-,u_+\\
-\sqrt t<s<\sqrt t & u_+\\
s>\sqrt t & \varnothing .
\end{array}
\label{eq:fold-boundary-ledger}
\end{equation}
Both face walls end on the fold. Hence there is no globally defined
commuting square obtained by moving a face toggle past a fold insertion.
The three-region incidence structure, transported through the local coordinate
equivalence, supplies the required coherence identification.

\begin{figure}[t]
 \centering
 \includegraphics[width=\linewidth]{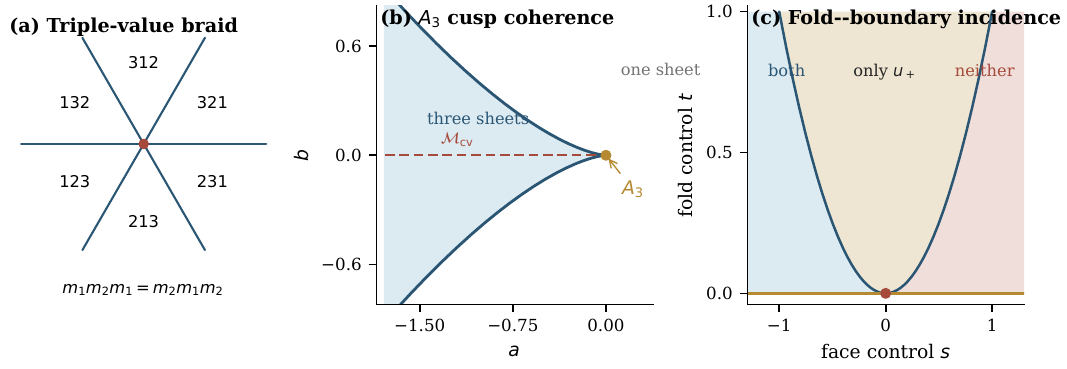}
 \caption{Codimension-two coherence. (a) Three transverse value-equality
 walls divide a normal slice into the six strict orders and implement the
 braid relation. (b) The fold discriminant \(4a^3+27b^2=0\) of the versal
 \(A_3\) potential bounds the three-sheet region, while \(b=0\) is the
 equality wall of the outer stationary values. (c) In the relative
 fold--face normal form, \(s=\pm\sqrt t\) bound chambers with two, one, and
 no active newborn sheets and terminate together at \(t=0\).}
 \label{fig:codimension-two-coherence}
\end{figure}

\subsection{Proofs for the resolved leading-profile classification}
\label{app:maxwell-corner-universality-proof}

\begin{proof}[Proof of Proposition~\ref{prop:corner-gap-reduction}]
The two stationary sheets are nondegenerate and have Morse index zero.
The parameterized Morse lemma \cite{GolubitskyGuillemin1973} can therefore be
applied in disjoint state neighborhoods, with the reservoir variables and
\(\epsilon\) retained as
parameters. A further positive rescaling of each state coordinate gives
Eq.~\eqref{eq:corner-two-well-Morse-form}. The comparison of the two
minimum values is consequently the comparison of \(g_S\) and \(g_L\), and
their equality is \(D=0\).

Let a common increasing reparametrization act on the two critical values,
\(g_i\mapsto \Theta(g_i,\vartheta,\pi,\epsilon)\), with
\(\Theta_z>0\). Taylor's
formula gives
\begin{equation}
 \Theta(g_L,\vartheta,\pi,\epsilon)
 -\Theta(g_S,\vartheta,\pi,\epsilon)
 =D\int_0^1
 \Theta_z(g_S+tD,\vartheta,\pi,\epsilon)\,\dd t.
 \label{eq:app-gap-positive-unit}
\end{equation}
The integral is a positive unit. The same conclusion follows immediately
from the selected-value relation
Eq.~\eqref{eq:corner-equivalence-potential}. Finally,
Eq.~\eqref{eq:corner-equivalence-face} maps each selected-sheet
admissibility margin to a positive unit times that margin. Hence both the
sign of the value gap and the physical membership of each selected sheet
are preserved. No assertion about an off-sheet extension of the face
function is needed.
\end{proof}

\begin{proof}[Proof of Theorem~\ref{thm:maxwell-corner-Newton-classification}]
On a fixed \(K\Subset(0,\infty)\), division of
Eqs.~\eqref{eq:corner-universality-volume} and
\eqref{eq:corner-universality-entropy} gives
\begin{equation}
 \frac{\Delta v}{\Delta s}
 =-\frac{c_v}{c_s}\epsilon^{b-c}\rho^{-n}
  \left[R(r)+E_v(\epsilon,r)\right],
 \qquad
 \|E_v\|_{C^1(K)}=O_K(\epsilon^\delta).
 \label{eq:app-corner-general-jump-ratio}
\end{equation}
Here multiplication by \((1+R_s)^{-1}\) preserves the estimate because
\(R_s=O_K(\epsilon^\delta)\). Differentiating
Eq.~\eqref{eq:corner-universality-pressure}, using
\(\rho=\rho_*r\), yields
\begin{equation}
 \frac{\dd\pi_M}{\dd r}
 =m c_p\rho_*^m\epsilon^a r^{m-1}
  \left[1+E_p(\epsilon,r)\right],
 \qquad
 \|E_p\|_{C^1(K)}=O_K(\epsilon^\delta).
 \label{eq:app-corner-general-pressure-derivative}
\end{equation}
It is strictly positive for small \(\epsilon\). Combining the last two
equations with Eq.~\eqref{eq:corner-universality-clapeyron} gives
\begin{equation}
 \frac{\dd\vartheta_M}{\dd r}
 =-\mathcal A_\epsilon r^{\iota-1}
  \left[R(r)+E(\epsilon,r)\right],
 \qquad
 \|E\|_{C^1(K)}=O_K(\epsilon^\delta),
 \label{eq:app-corner-general-temperature-derivative}
\end{equation}
where \(\mathcal A_\epsilon\) is defined in
Eq.~\eqref{eq:corner-general-normalization}. Thus
\begin{equation}
 \frac{\dd\widehat\vartheta_\epsilon}{\dd r}
 =r^{\iota-1}\left[R(r)+E(\epsilon,r)\right].
 \label{eq:app-corner-normalized-derivative}
\end{equation}

For \(\iota>0\), every exponent \(\iota+\kappa_j\) is positive and
\begin{equation}
 \int_0^r s^{\iota-1}R(s)\,\dd s
 =\sum_{j=0}^{N}\frac{A_j}{\iota+\kappa_j}
  r^{\iota+\kappa_j}
 =\Phi_{\iota,R}(r;0).
 \label{eq:app-corner-integrable-primitive}
\end{equation}
The matching condition
\eqref{eq:corner-universality-matching} fixes the additive constant.
Integrating Eq.~\eqref{eq:app-corner-normalized-derivative} first between
two positive points, then taking the lower point to zero after
\(\epsilon\downarrow0\), proves uniform \(C^0\) convergence on \(K\).
The derivative estimate gives \(C^2(K)\) convergence. For
\(\iota\leq0\), integration from \(1\) to \(r\) fixes the constant without
a boundary limit and gives
\begin{equation}
 \int_1^r s^{\iota-1}R(s)\,\dd s
 =\sum_{j=0}^{N}A_jJ_{\iota+\kappa_j}(r;1).
 \label{eq:app-corner-singular-primitive}
\end{equation}
This proves Eq.~\eqref{eq:corner-general-profile}. Since
\(\widehat\pi_\epsilon\to r^m\) in \(C^2(K)\) and the limiting pressure
derivative is positive, eliminating \(r\) proves
Eq.~\eqref{eq:corner-general-eliminated}.

The leading term of the integrand at \(r=0\) is
\(A_0r^{\iota-1}\). It is integrable precisely for \(\iota>0\). At
\(\iota=0\) its primitive is \(A_0\log r\), and at \(\iota<0\) it is
\(A_0r^\iota/\iota\). Any monomial for which
\(\iota+\kappa_j=0\) contributes the additional logarithm in
Eq.~\eqref{eq:corner-primitive-monomial}. Positive multiplication of the
temperature coordinate cannot convert a bounded primitive into a logarithm
or a power, or convert a logarithm into a power. The three boundary types
are therefore inequivalent.

Invariance in the leading-profile quotient follows because the two-sided
\(C^2\) collar extension and the nonzero positive normal Jacobian in the
definition of equivalence send
the defining functions of \(\epsilon=0\) and \(r=0\) to positive units
times those defining functions.  The pressure trace in
Eq.~\eqref{eq:corner-leading-pressure-quotient} therefore preserves the
\(\epsilon\)-valuation \(a\) and the pressure contact order \(m\).  The
two conormal relations in
Eq.~\eqref{eq:corner-equivalence-jump-conormal} preserve, respectively,
the entropy valuation \(b\), the volume valuation \(c\), and the volume
pole order \(n\).  A factor such as \(\epsilon^t\) in the transverse
Jacobian would change these valuations, but it is excluded both by the
nonvanishing boundary Jacobian and by
Eq.~\eqref{eq:corner-equivalence-gap-conormal}.

At leading order,
\(\widetilde r=A_rr\) and the conormal volume relation has an
\(r\)-independent positive coefficient.  After the positive jump amplitudes
and the changes of \(\rho_*\) are absorbed into \(c_v\), its action on the
leading volume factor is therefore
\begin{equation}
 R(r)\longmapsto A R(B r),
 \qquad A>0,\quad B>0.
 \label{eq:app-corner-polynomial-action}
\end{equation}
Hence it preserves \([R]\). A common increasing transformation of the
selected critical values multiplies \(\dd D\) on \(D=0\) by a positive
unit, by
Eq.~\eqref{eq:app-gap-positive-unit}; the constant leading conormal in
Eq.~\eqref{eq:corner-equivalence-gap-conormal} prevents an additional
\(r\)-dependent Newton factor. Independent positive changes of the two
reservoir units alter only the constants
\(c_p,c_s,c_v,\lambda\). This proves invariance of \(\mathscr N\).

Conversely, equality of \([R]\) supplies positive constants \(A,B\) in
Eq.~\eqref{eq:app-corner-polynomial-action}. The dilation
\(r\mapsto B^{-1}r\), followed by positive rescalings of pressure,
temperature, entropy jump, and volume jump, identifies the leading terms of
the four Maxwell traces. Equality of \(a,b,c,m,n\) identifies their
valuations. Taking the smaller of the two positive remainder exponents,
their differences satisfy
Eqs.~\eqref{eq:corner-leading-pressure-quotient}--
\eqref{eq:corner-equivalence-jump-conormal}. The primitive calculation
above supplies the temperature relation in all three regimes. These
dilations and rescalings extend with their inverses to the resolved collar.
They may be chosen compatibly with the envelope identities.  In particular,
take \(A_D=A_\vartheta A_s\), and choose the positive pressure unit, or
equivalently the target normalization \(\widetilde\lambda\), so that
\(A_v=A_D\lambda/(A_\pi\widetilde\lambda)\).  Using \(D\) and
\(\widetilde D\) as normal coordinates, extend the collar map by
\(\widetilde D=A_DD\).  Then the entropy and volume conormals agree with the
chosen trace scalings and
Eq.~\eqref{eq:corner-equivalence-gap-conormal} holds with zero remainder.

For the selected-sheet margins, the positive dilation maps \(\rho=0\) to
\(\widetilde\rho=0\) and preserves the physical side.
Equation~\eqref{eq:corner-fixed-face-incidence} gives
\begin{equation}
 U_S=\frac{\widehat\Psi^*\widetilde\beta_S}{\beta_S}>0,
 \qquad
 U_L=\frac{\widehat\Psi^*\widetilde\beta_L}{\beta_L}>0,
 \label{eq:app-corner-sheetwise-face-units}
\end{equation}
and both ratios extend as positive \(C^2\) units to the resolved face.
This proves completeness of the leading-profile quotient conditional on
\(\mathfrak f_{S|L}\).

It remains to prove the assertion about turns. Since
\(\dd\pi_M/\dd r>0\) and \(\Delta s\ne0\), a turn is equivalent, by the
Clapeyron equation, to
\begin{equation}
 R(r)+E(\epsilon,r)=0.
 \label{eq:app-corner-perturbed-root-equation}
\end{equation}
Choose pairwise disjoint closed intervals \(I_j\Subset I\), each containing
one simple root \(r_j\), so small that \(R'\) has a fixed nonzero sign on
\(I_j\). On the complement of their interiors, compactness gives
\begin{equation}
 \min\left|R(r)\right|=d>0.
 \label{eq:app-corner-root-gap}
\end{equation}
The \(C^1\) bound on \(E\) makes the left-hand side of
Eq.~\eqref{eq:app-corner-perturbed-root-equation} strictly monotone on
each \(I_j\), with opposite signs at its endpoints, and keeps its modulus
larger than \(d/2\) on the complement. It therefore has exactly one zero
in each \(I_j\) and no others. The implicit-function estimate gives
\(r_{j,\epsilon}=r_j+O(\epsilon^\delta)\).

At the perturbed root, Eqs.~\eqref{eq:app-corner-normalized-derivative} and
\eqref{eq:app-corner-general-pressure-derivative} give
\begin{align}
 \widehat\vartheta_\epsilon'(r_{j,\epsilon})&=0,
 \nonumber\\
 \widehat\vartheta_\epsilon''(r_{j,\epsilon})
 &=r_j^{\iota-1}R'(r_j)+O(\epsilon^\delta),
 \nonumber\\
 \widehat\pi_\epsilon'(r_{j,\epsilon})
 &=mr_j^{m-1}+O(\epsilon^\delta).
 \label{eq:app-corner-root-derivatives}
\end{align}
The second derivative under elimination is
\(\widehat\vartheta_\epsilon''/(\widehat\pi_\epsilon')^2\) at a turn, which
is precisely Eq.~\eqref{eq:corner-general-turn-curvature}. If \(R\) is
nonconstant, any positive multiple root satisfies \(R(r)=R'(r)=0\) and
therefore
\begin{equation}
 \operatorname{Res}(R,R')=0.
 \label{eq:app-corner-resultant-necessary}
\end{equation}
The resultant can also vanish at a nonreal or nonpositive multiple root. The resultant zero set is a
proper algebraic discriminant in the coefficient space of polynomials of
degree at most \(\kappa_N\), and the positive-multiple-root locus is a
semialgebraic subset of positive codimension, generically codimension one
along its smooth stratum. The positive-multiple-root locus is unstable under
\(C^1\) remainders, whereas the simple-root condition is open. This proves the
compact simple-root stability statement.
\end{proof}

\begin{proof}[Proof of Corollary~\ref{thm:maxwell-corner-universality}]
For \(R(r)=1-r^k\), Eq.~\eqref{eq:app-corner-integrable-primitive} gives
\begin{equation}
 \Phi_{\iota,R}(r;0)
 =\frac{r^\iota}{\iota}
  -\frac{r^{\iota+k}}{\iota+k}.
 \label{eq:app-corner-binomial-primitive}
\end{equation}
Multiplication by \(\iota(\iota+k)/k\) gives
\(H_{\iota,k}\) in Eq.~\eqref{eq:corner-universality-profile}. Its
derivative is
\begin{equation}
 H'_{\iota,k}(r)
 =\frac{\iota(\iota+k)}{k}r^{\iota-1}(1-r^k),
 \label{eq:app-corner-profile-derivative}
\end{equation}
so \(r=1\) is its only positive turn and
\begin{equation}
 H''_{\iota,k}(1)=-\iota(\iota+k).
 \label{eq:app-corner-binomial-second-derivative}
\end{equation}
The simple-root part of
Theorem~\ref{thm:maxwell-corner-Newton-classification} gives the unique perturbed
turn and its \(O(\epsilon^\delta)\) displacement. Since
\((r^m)'|_{r=1}=m\), elimination gives the curvature in
Eq.~\eqref{eq:corner-universality-turn}. The overall sign in
Eq.~\eqref{eq:corner-universality-normalization} is opposite to that of
the physical temperature, so the normalized maximum is a physical local
minimum.
\end{proof}

\begin{proof}[Proof of Proposition~\ref{prop:corner-Newton-realizability}]
Fix positive constants \(c_p,c_s,c_v,\rho_*,\lambda\), put
\(\iota=m-n\), \(\sigma=a+b-c\), and define on the punctured resolved
half-plane
\begin{align}
 \pi_M(\epsilon,r)
 &=\pi_t(\epsilon)
   +c_p\rho_*^m\epsilon^a r^m,
 \label{eq:app-corner-realizable-pressure}\\
 \vartheta_M(\epsilon,r)
 &=\vartheta_{\mathrm{ref}}(\epsilon)
   -\frac{\lambda m c_pc_v}{c_s}
    \rho_*^\iota\epsilon^\sigma
    \Phi_{\iota,R}(r;r_0).
 \label{eq:app-corner-realizable-temperature}
\end{align}
For \(\iota>0\), take \(r_0=0\) and
\(\vartheta_{\mathrm{ref}}=\vartheta_t\); for \(\iota\leq0\), take
\(r_0=1\). Since \(\pi_M\) is strictly increasing, it can be inverted on
\(r>0\). Denote the inverse by \(r=r(\pi,\epsilon)\), and set
\begin{equation}
 D(\vartheta,\pi;\epsilon)
 =-c_s\epsilon^{-b}
  \left[\vartheta-
  \vartheta_M(\epsilon,r(\pi,\epsilon))\right].
 \label{eq:app-corner-realizable-gap}
\end{equation}
Then \(D=0\) is the prescribed Maxwell component and
\begin{equation}
 D_{\vartheta}=-c_s\epsilon^{-b},
 \qquad
 D_{\pi}=-\lambda c_v\epsilon^{-c}
 (\rho_*r)^{-n}R(r).
 \label{eq:app-corner-realizable-jumps}
\end{equation}
Thus Eq.~\eqref{eq:corner-gap-envelope} realizes exactly the entropy and
volume jumps in Eqs.~\eqref{eq:corner-universality-entropy} and
\eqref{eq:corner-universality-volume}.

Choose an arbitrary smooth common value \(h\) and set
\begin{equation}
 g_S=h-\frac12D,
 \qquad
 g_L=h+\frac12D,
 \qquad
 \mathcal F_i(y_i)=g_i+\frac12y_i^2.
 \label{eq:app-corner-realizable-wells}
\end{equation}
The two local potentials have strict nondegenerate minima and value gap
\(D\). They define the required local two-sheet profile; on any punctured
compact control set they may also be joined through an arbitrarily high
smooth barrier without altering either selected critical value. This
optional joining is only a local thermodynamic completion and imposes no
gravitational equations. On the two disjoint critical sheets prescribe
\begin{equation}
 b_S=\upsilon_S\rho,
 \qquad b_L=\upsilon_L,
 \label{eq:app-corner-realizable-sheet-margins}
\end{equation}
with the positive \(C^2\) units specified in the proposition. Smooth
off-sheet extensions exist separately in the disjoint well neighborhoods,
but no choice of such an extension is part of the realized incidence class. Thus the
\(S\) sheet reaches the face transversely at \(r=0\), the \(L\) sheet stays
uniformly in the interior, and \(r>0\) is the physical side. The primitive in
Eq.~\eqref{eq:app-corner-realizable-temperature} is finite at that face
exactly when \(\iota>0\), which proves the last assertion.
\end{proof}

\subsection{Fold--value resonance and completeness data}
\label{appsubsec:fold-value-resonance}

In the fold normal form \eqref{eq:fold-normal-form-app}, the newborn values
are
\begin{align}
 v_+(t)
 &=\mathcal F_0(t,\eta)-\frac{2\sigma}{3}t^{3/2},
 \label{eq:fold-value-plus}\\
 v_-(t)
 &=\mathcal F_0(t,\eta)+\frac{2\sigma}{3}t^{3/2}.
 \label{eq:fold-value-minus}
\end{align}
Suppose a remote nondegenerate sheet has value
\(v_r(t,\delta)\), and use
\(\delta=v_r-\mathcal F_0\) as the second transverse control. Its equality
walls with the newborn pair have leading equations
\begin{equation}
 \delta=\mp\frac{2\sigma}{3}t^{3/2}+o(t^{3/2}).
 \label{eq:fold-remote-value-walls}
\end{equation}
These walls determine whether the newborn pair enters above or below the
remote value. A fold operator without this incidence information cannot
reconstruct \(\prec_{\mathcal F}\) across a homotopy that passes the
resonance.

The complete codimension-two incidence data for a selected control window must also
include every simultaneous event present there. Product strata, such as
two independent folds or two independent face crossings, carry commuting
relations. A stationary point meeting two physical faces carries the two
face labels and the corresponding corner incidence. A fold coincident with
an unrelated value or face event is a product only when their local branch
supports are disjoint. These cases explain why the short list of braid,
Maxwell--admissibility, cusp, and fold--boundary relations cannot serve as
a universal classification of all multi-control families.

\subsection{Proof of conditional reconstruction}
\label{appsubsec:reconstruction-proof}

\begin{proof}[Proof of Theorem~\ref{thm:path-reconstruction}]
Let \(\gamma\colon[0,1]\to U\) be a generic path beginning in \(C_0\).
Its inverse image of the regular codimension-one strata is a compact
zero-dimensional manifold and is therefore finite. It meets those strata
transversely. Between crossings, Theorem~\ref{thm:chamber-constancy} propagates
the chamber data. At a crossing, its orientation and incidence label
select the fold, value, or face operator. Their ordered composition
therefore reconstructs the chamber data on every subsequent path segment.

Now let
\begin{equation}
 H\colon[0,1]^2\longrightarrow U
 \label{eq:path-homotopy}
\end{equation}
be a generic homotopy relative to the endpoints between paths
\(\gamma_0\) and \(\gamma_1\). A generic two-dimensional map avoids wall
strata of codimension at least three. Its isolated nongeneric path movies
are of two kinds.
\begin{enumerate}
\item A path becomes tangent to a regular wall. On one side of the movie,
two consecutive crossings of opposite orientation are present; on the
other, neither is present. Their operators cancel by
\eqref{eq:operator-inverses}.
\item A path passes through a codimension-two wall stratum. The crossing
word on one side of the movie is replaced by the word on the other. The
recorded local continuation relation identifies the two composites.
\end{enumerate}
Local finiteness of the Whitney stratification and compactness of the
homotopy square reduce the comparison to finitely many such movies.
Subdivision of the square and composition of the local identities prove
that \(\gamma_0\) and \(\gamma_1\) induce the same endpoint chamber data. This is
the standard stratified continuation mechanism
\cite{Cerf1970Stratification,GoreskyMacPherson1988Stratified}.

If the sheet covering has nontrivial holonomy, continuation around a loop
can permute genealogical labels while preserving the unlabeled decorated
portrait. This is the stated monodromy action. In the global interval
setting, Lemma~\ref{lem:finite-critical-covering} removes that ambiguity.
A nonhomotopic path need not bound a homotopy square, so the argument gives
no equality for such paths.
\end{proof}

\section{Value-order obstructions and boundary selection}
\label{app:value-order-proofs}

\subsection{The corrected two-extrema criterion}
\label{appsubsec:two-extrema-proof}

\begin{proof}[Proof of Proposition~\ref{prop:corrected-two-extrema}]
Along a stationary sheet the envelope identity gives
\begin{equation}
 \frac{\mathrm d}{\mathrm dT}
 \mathcal F_T(x_i(T);\eta)
 =\left.\frac{\partial\mathcal F_T}{\partial T}\right|_{x=x_i(T)}
 =-S(x_i(T);\eta),
 \label{eq:stationary-envelope-identity}
\end{equation}
because the term proportional to \(\partial_x\mathcal F_T\) vanishes.
Subtracting the identities for \(x_L\) and \(x_S\) proves
Eq.~\eqref{eq:two-extrema-gap-monotonicity}; its sign is strict since
\(S_x>0\) and \(x_L>x_S\). Hence \(D\) is strictly decreasing and can
have at most one zero. Continuity on \(I\), together with the one-sided
limits at the ordinary folds, proves the equivalence with
Eq.~\eqref{eq:two-extrema-endpoint-signs} by the intermediate-value theorem.
At its zero,
\begin{equation}
 D'(T_M)=-\Delta S_{LS}\neq0,
 \label{eq:two-extrema-transverse-gap}
\end{equation}
so the crossing is transverse and has nonzero latent heat. Maxwell equality
compares only these two critical values. Admissibility and the comparison
with every other active local minimum, including every competing phase
admitted by the ensemble, are logically independent conditions and are
therefore necessary for a physical global transition.
\end{proof}

\subsection{An iso-stationary analytic family}
\label{appsubsec:iso-stationary-proof}

\begin{proof}[Proof of Theorem~\ref{thm:local-data-insufficient}]
Fix \(T_0\in\mathbb R\), take the ambient collar \(X=(-2,2)\), and use the
fixed physical set \(\mathcal A=(-3/2,3/2)\), with closure
\(K=[-3/2,3/2]\). For \(\varepsilon\) near zero, define
\begin{align}
 S_\varepsilon(x)
 &=S_0+\int_0^x e^{\varepsilon u}\,\dd u,
 \label{eq:epsilon-entropy}\\
 M_\varepsilon(x)
 &=M_0+\int_0^x e^{\varepsilon u}
       (T_0+u^3-u)\,\dd u,
 \label{eq:epsilon-mass}
\end{align}
where \(S_0\) is large enough that \(S_\varepsilon>0\) on the collar.
The functions are jointly real analytic at \(\varepsilon=0\), and
\(S_{\varepsilon,x}=e^{\varepsilon x}>0\). At reservoir temperature
\(T=T_0+t\), set
\begin{equation}
 \mathcal F_{\varepsilon,t}(x)
 =M_\varepsilon(x)-(T_0+t)S_\varepsilon(x).
 \label{eq:epsilon-potential}
\end{equation}
Then
\begin{equation}
 \partial_x\mathcal F_{\varepsilon,t}(x)
 =e^{\varepsilon x}(x^3-x-t).
 \label{eq:epsilon-gradient}
\end{equation}
The stationary equation
\begin{equation}
 x^3-x=t
 \label{eq:cubic-cover}
\end{equation}
is independent of \(\varepsilon\). Its folds are
\begin{equation}
 (x,t)=\left(\frac{1}{\sqrt3},-\frac{2}{3\sqrt3}\right),
 \qquad
 (x,t)=\left(-\frac{1}{\sqrt3},\frac{2}{3\sqrt3}\right).
 \label{eq:cubic-folds}
\end{equation}
They are ordinary because, at either fold,
\begin{equation}
 \mathcal F_{xxx}=6x e^{\varepsilon x}\neq0,
 \qquad
 \partial_t\mathcal F_x=-e^{\varepsilon x}\neq0.
 \label{eq:fold-genericity-check}
\end{equation}

For \(|t|<2/(3\sqrt3)\), denote the three roots by
\(x_-(t)<x_0(t)<x_+(t)\). At a root,
\begin{equation}
 \mathcal F_{xx}=e^{\varepsilon x}(3x^2-1).
 \label{eq:epsilon-index}
\end{equation}
The outer roots have index zero, the middle root has index one, and the
state-ordered Morse word is \((0,1,0)\) for every \(\varepsilon\). On
\(\mathcal A\), the gradient is negative at the left endpoint and positive
at the right endpoint throughout a fixed neighborhood of the fold window.
Thus, for every regular \(t\) in that window,
\begin{equation}
 \deg(\partial_x\mathcal F_{\varepsilon,t},
      (-3/2,3/2),0)=1.
 \label{eq:epsilon-degree}
\end{equation}
In the three-sheet region the same degree is the local sum
\(+1-1+1=1\). Properness is automatic on \(K\), and no stationary point
reaches its boundary.

The value difference between the outer stable sheets is
\begin{equation}
 D(\varepsilon,t)
 :=\mathcal F_{\varepsilon,t}(x_+(t))
   -\mathcal F_{\varepsilon,t}(x_-(t))
 =\int_{x_-(t)}^{x_+(t)}e^{\varepsilon x}
   (x^3-x-t)\,\dd x.
 \label{eq:D-epsilon-t}
\end{equation}
Endpoint terms vanish under differentiation because both endpoints solve
Eq.~\eqref{eq:cubic-cover}. Hence
\begin{equation}
 \partial_tD(\varepsilon,t)
 =-\int_{x_-(t)}^{x_+(t)}e^{\varepsilon x}\,\dd x<0.
 \label{eq:D-monotone}
\end{equation}
At \((\varepsilon,t)=(0,0)\), the outer roots are \(-1\) and \(1\), and
\begin{align}
 D(0,0)&=0,\nonumber\\
 \partial_tD(0,0)&=-2,
 \label{eq:D-derivatives-one}\\
 \partial_\varepsilon D(0,0)
 &=\int_{-1}^{1}x(x^3-x)\,\dd x
 =\frac25-\frac23
 =-\frac4{15}.
 \label{eq:D-derivatives-two}
\end{align}
The analytic implicit-function theorem gives a unique transverse Maxwell
curve \(t=t_M(\varepsilon)\) near the origin, with
\begin{equation}
 t_M'(0)
 =-\frac{\partial_\varepsilon D(0,0)}{\partial_tD(0,0)}
 =-\frac2{15},
 \qquad
 t_M(\varepsilon)=-\frac2{15}\varepsilon+O(\varepsilon^2).
 \label{eq:Maxwell-shift}
\end{equation}
The strict inequality \eqref{eq:D-monotone} makes the crossing unique in
any connected three-root subwindow in which it exists. At the origin the
middle stationary point is a strict maximum. The values at
\(\partial K\) exceed the outer minimum values by \(25/64\), so the outer
sheets remain the only global competitors after reducing the control
window. The shifted equality is therefore a physical Maxwell wall.

At \(t=0\), the two families with \(\varepsilon=\delta\) and
\(\varepsilon=-\delta\), for sufficiently small \(\delta>0\), select
opposite outer minima because
\(\partial_\varepsilon D(0,0)=-4/15\). Their entire stationary data remain
identical.
\end{proof}

\subsection{The clipped-cusp calculation}
\label{appsubsec:clipped-cusp-proof}

\begin{proof}[Proof of Proposition~\ref{prop:clipped-cusp}]
The stationary equation and Hessian are
\begin{equation}
 x^3-x+t=0,\qquad
 \mathcal F_{xx}=3x^2-1.
 \label{eq:clipped-cusp-stationary}
\end{equation}
For \(|t|<2/(3\sqrt3)\) there are three roots
\(x_-(t)<x_0(t)<x_+(t)\). The outer roots are minima and the middle root
is a maximum. At \(t=0\) the roots are \(-1,0,1\), and
\begin{equation}
 \mathcal F(-1;0)=\mathcal F(1;0)=-\frac14
 <\mathcal F(0;0)=0.
 \label{eq:clipped-cusp-Maxwell}
\end{equation}
The full interval contains all three points, so \(t=0\) is a physical
Maxwell crossing.

For \(|t|<1/8\),
\begin{equation}
 (x^3-x+t)\big|_{x=-3/2}=-\frac{15}{8}+t<0,\qquad
 (x^3-x+t)\big|_{x=-1/2}=\frac38+t>0.
 \label{eq:clipped-cusp-left-root}
\end{equation}
Thus \(x_-(t)\in(-3/2,-1/2)\) and is excluded by the clipped domain.
If \(t\geq0\), the middle root lies in \([0,1/\sqrt3)\); if \(t<0\),
the signs at \(-1/2\) and \(0\) place it in \((-1/2,0)\). The right
minimum remains in \((1/\sqrt3,3/2)\). Therefore the clipped family has
one active minimum and one active maximum, but no second active minimum.
The ambient cover, folds, indices, and critical values have not changed;
only the face-sign data have.
\end{proof}

\subsection{Closed-domain continuity and boundary deletion}
\label{appsubsec:closed-domain-proof}

\begin{proof}[Proof of Proposition~\ref{prop:no-jump}]
Let \(\lambda_n\to\lambda_0\). In the compact case choose
\begin{equation}
 x_n\in\operatorname*{argmin}_{x\in K(\lambda_n)}
 \mathcal F(x,\lambda_n).
 \label{eq:minimizer-sequence}
\end{equation}
Hausdorff convergence places the sequence in one compact neighborhood of
\(K(\lambda_0)\). Along a subsequence that realizes the lower limit,
\(x_n\to x_*\), and Hausdorff convergence gives
\(x_*\in K(\lambda_0)\). Joint continuity then yields
\begin{equation}
 \liminf_{n\to\infty}g_{\rm cl}(\lambda_n)
 =\mathcal F(x_*,\lambda_0)\geq g_{\rm cl}(\lambda_0).
 \label{eq:value-liminf}
\end{equation}
For a minimizer \(x_0\in K(\lambda_0)\), Hausdorff convergence supplies
\(y_n\in K(\lambda_n)\) with \(y_n\to x_0\). Hence
\begin{equation}
 \limsup_{n\to\infty}g_{\rm cl}(\lambda_n)
 \leq\lim_{n\to\infty}\mathcal F(y_n,\lambda_n)
 =g_{\rm cl}(\lambda_0).
 \label{eq:value-limsup}
\end{equation}

Under Eq.~\eqref{eq:uniform-inf-compactness}, every minimum is attained in
the common compact set \(C\). The closed-graph condition replaces the first
use of Hausdorff convergence, and lower hemicontinuity supplies a recovery
sequence for \(x_0\). The same two inequalities complete the proof. This
is the minimum version of Berge's maximum theorem
\cite{Berge1963Topological,AliprantisBorder2006Infinite}.
\end{proof}

\begin{proof}[Proof of Proposition~\ref{prop:boundary-deletion-jump}]
The derivative factorizes as
\begin{equation}
 \mathcal F_\kappa'(x)=(x^2-1)(4x-3\kappa).
 \label{eq:boundary-deletion-factorization}
\end{equation}
The stationary points are \(-1,3\kappa/4,1\). The outer points are strict
minima for \(0<\kappa<4/3\), and their values are
\begin{equation}
 \mathcal F_\kappa(-1)=-2\kappa,\qquad
 \mathcal F_\kappa(1)=2\kappa.
 \label{eq:boundary-deletion-values}
\end{equation}
For \(\lambda<-1\) both minima are active and \(-1\) wins. For
\(-1<\lambda<1\), the lower minimum is excluded and \(1\) is the only
active stationary minimum. The one-sided values of \(g_{\rm stat}\) thus
differ by \(4\kappa\). By contrast, the boundary point \(x=\lambda\) is
available in the closed problem and continuously replaces the departing
interior minimizer, so Proposition~\ref{prop:no-jump} applies.
\end{proof}

\section{C-metric identities and endpoint algebra}
\label{app:cmetric-identities}

\subsection{Reduction at fixed charge and tension}

Let \(z=eA\), \(d=1-2\mu\), and \(\Xi=1+z^2\).  North-pole
regularity and the single-string condition give
\begin{equation}
 K=\Xi+2mA,\qquad
 \mu=\frac{mA}{K}.
\end{equation}
Solving these equations yields
\begin{equation}
 mA=\frac{\mu\Xi}{d},\qquad K=\frac{\Xi}{d}.
 \label{eq:app-cmetric-mAK}
\end{equation}
Since \(Q=e/K=z/(AK)\),
\begin{equation}
 AQ=\frac{zd}{\Xi},\qquad
 \frac mQ=\frac{\mu\Xi^2}{zd^2}.
 \label{eq:app-cmetric-Am}
\end{equation}
Finally, if \(q=Q/\ell\),
\begin{equation}
 A\ell=\frac{zd}{q\Xi},\qquad
 \alpha^2=\Xi\left(1-A^2\ell^2\Xi\right)
 =\Xi-\frac{z^2d^2}{q^2}.
 \label{eq:app-cmetric-alpha}
\end{equation}

With \(u=Ar_+\), multiplication of \(f(r_+)=0\) by \(u^2\) gives
\begin{equation}
 H=(1-u^2)\left[u^2-\frac{(1-d)\Xi}{d}u+z^2\right]
 +\frac{u^4q^2\Xi^2}{z^2d^2}=0.
 \label{eq:app-cmetric-H}
\end{equation}
Since \(f=H/u^2\), its horizon derivative satisfies
\begin{equation}
 f_u\big|_{H=0}=\frac{H_u}{u^2}.
\end{equation}
Equations~\eqref{eq:app-cmetric-mAK}--\eqref{eq:app-cmetric-H} then reduce
the thermodynamic functions to Eqs.~\eqref{eq:cmetric-reduced-state-functions}.

\subsection{Restricted first-law identity}

On a simple horizon sheet, set
\begin{equation}
 \mathscr D_z=\partial_z-\frac{H_z}{H_u}\partial_u.
\end{equation}
Using
\begin{equation}
 \mathfrak m=\frac{\mu\alpha}{zd},\qquad
 s=\frac{u^2\Xi}{z^2d(1-u^2)},\qquad
 \tau=\frac{zdH_u}{\Xi u^2\alpha},
\end{equation}
direct differentiation and reduction modulo \(H=0\) give
\begin{equation}
 4\mathscr D_z\mathfrak m-\tau\mathscr D_zs=0.
 \label{eq:app-cmetric-first-law}
\end{equation}
This is also the fixed-\((P,Q,\mu)\) restriction of the
full-cohomogeneity first law.  Equation~\eqref{eq:app-cmetric-first-law}
couples the implicit variation of the outer horizon to the normalized mass
on the fixed-control thermodynamic sheet.

\subsection{Global entropy regularity}

The entropy coordinate is regular throughout the positive-temperature
physical sector.  Exact differentiation before imposing the horizon equation
gives
\begin{align}
 H_us_z-H_zs_u={}&
 -\frac{2(1-d)u^2\Xi}{z^3d^2}
 \nonumber\\
 &-\frac{4u(\Xi-2u^2)}{dz^3(1-u^2)^2}H.
 \label{eq:app-cmetric-entropy-offshell}
\end{align}
On \(H=0\), using \(1-d=2\mu\), this becomes
\begin{equation}
 H_us_z-H_zs_u
 =-\frac{4\mu u^2\Xi}{z^3d^2}.
 \label{eq:app-cmetric-entropy-wedge}
\end{equation}
For a simple positive-temperature horizon,
\(H_u=\tau\Xi u^2\alpha/(zd)\), and hence
\begin{equation}
 \mathscr D_zs
 =\frac{H_us_z-H_zs_u}{H_u}
 =-\frac{4\mu}{z^2d\alpha\tau}<0.
 \label{eq:app-cmetric-entropy-regularity}
\end{equation}
Equation~\eqref{eq:app-cmetric-entropy-wedge} also shows
\(dH\wedge ds\ne0\) when a particular graph coordinate fails at extremality;
entropy itself remains regular.  Therefore no entropy-coordinate zero is
missed by the Euclidean-equilibrium enumeration in the open sector
\(0<\mu<1/4\), \(\alpha>0\), and \(\tau>0\).  At equilibrium,
\begin{equation}
 \mathscr D_z^2\!\left(\frac{\mathcal G_T}{Q}\right)
 =\frac14(\mathscr D_z\tau)(\mathscr D_zs),
\end{equation}
so a canonical minimum is equivalently characterized by
\(\mathscr D_z\tau<0\).

\subsection{Slow-acceleration face}

At conformal infinity the relevant polynomial is
\begin{equation}
 C(v)=(v^2-1)(1+2mAv+z^2v^2)+\frac1{(A\ell)^2}.
\end{equation}
The generic boundary-horizon face obeys \(C=C_v=0\).  The derivative equation
is linear in \(mA\) and gives
\begin{equation}
 mA=\frac{v(1+2z^2v^2-z^2)}{1-3v^2}.
\end{equation}
Substitution into \(C=0\) gives
\begin{equation}
 \frac1{(A\ell)^2}
 =\frac{(1-v^2)^2(1-z^2v^2)}{1-3v^2},
\end{equation}
which proves Eq.~\eqref{eq:cmetric-slow-face}.  Squaring the latter equation
without retaining its sign and interval conditions produces extraneous
algebraic components; the unsquared expression is used in the physical
filter.

\subsection{The X point}

At \(X\), the equations \(\alpha=0\) and bulk extremality give
\begin{equation}
 A\ell=\frac1{\sqrt{1+z^2}},\qquad
 mA=z\sqrt{1+z^2}
\end{equation}
\cite{AbbasvandiEtAl2019Snapping}.  Combining them with
Eq.~\eqref{eq:app-cmetric-mAK} gives
\begin{equation}
 \frac{\mu}{d}=\frac{z}{\sqrt{1+z^2}}.
\end{equation}
Combining the first equation with \(A\ell=zd/(q\Xi)\) then gives
\begin{equation}
 q=\frac{zd}{\sqrt{1+z^2}}=\mu,
\end{equation}
and hence Eq.~\eqref{eq:cmetric-Pt}.

\subsection{The ordinary endpoint of the turning curve}

Let \(h=\sqrt{2\sqrt3-3}\).  The extension field
\begin{equation}
 \mathbb K=\mathbb Q(\sqrt3,h),\qquad h^2=2\sqrt3-3
\end{equation}
has basis \(1,\sqrt3,h,\sqrt3h\).  Define
\begin{align}
 d_c&=\frac1{1+h},
 &\mu_c&=\frac{h}{2(1+h)},
 &z_c&=2-\sqrt3,\nonumber\\
 q_c&=\frac1{(1+h)(1+\sqrt3)},
 &u_c&=\frac{z_c}{h}=\frac{\sqrt3h}{3}.
 \label{eq:app-cmetric-number-field-point}
\end{align}
At this point
\begin{equation}
 \Xi_c=4z_c,\qquad
 (A\ell)_c=\frac{1+\sqrt3}{4},\qquad
 \alpha_c=\sqrt3-1.
\end{equation}

Introduce a formal parameter \(\epsilon\), set \(z=z_c+\epsilon\), and solve
\(H(u,z;q_c,d_c)=0\) in \(\mathbb K[[\epsilon]]\).  The outer-horizon series
begins
\begin{equation}
 u=u_c-\frac{(71+41\sqrt3)h}{24}\epsilon^2
 -\frac{(97+56\sqrt3)h}{12}\epsilon^3+O(\epsilon^4).
 \label{eq:app-cmetric-u-series}
\end{equation}
In particular, \(\mathscr D_zu=0\).  Substitution into the exact state
functions gives
\begin{align}
 \tau&=4\mu_c+\tau_3\epsilon^3+O(\epsilon^4),\nonumber\\
 s&=\frac1{q_c}+s_1\epsilon+O(\epsilon^2),\nonumber\\
 \nu&=\nu_c+\nu_2\epsilon^2+O(\epsilon^3),
 \label{eq:app-cmetric-state-series}
\end{align}
where
\begingroup\small
\begin{align}
 \tau_3&=\frac{3(989+571\sqrt3)-(6393+3691\sqrt3)h}{48}
       =-57.8429\ldots,\nonumber\\
 s_1&=-\frac{(19+11\sqrt3)(1+h)}2
       =-31.9879\ldots,\nonumber\\
 \nu_2&=\frac{1668+963\sqrt3+(2631+1519\sqrt3)h}{6}
       =1153.45\ldots.
 \label{eq:app-cmetric-series-coefficients}
\end{align}
\endgroup
Thus
\begin{align}
 \mathscr D_z\tau=\mathscr D_z^2\tau
 &=\mathscr D_z\nu=0,\nonumber\\
 \mathscr D_z^3\tau&=6\tau_3\ne0,\nonumber\\
 \mathscr D_zs&=s_1\ne0.
\end{align}
The same substitution gives
\begin{equation}
 \frac{G_c}{Q}=0,\qquad
 \tau_c=4\mu_c,\qquad s_c=\frac1{q_c}.
\end{equation}

Since \(h^4+6h^2-3=0\) and \(h=2\mu_c/(1-2\mu_c)\),
\begin{equation}
 (1-2\mu_c)^4(h^4+6h^2-3)
 =64\mu_c^4-48\mu_c^2+24\mu_c-3=0.
\end{equation}
The monotonicity argument in Sec.~\ref{subsec:cmetric-exact-turning}
then isolates a unique root in \(0<\mu<1/4\).

The formal series identifies the ordinary endpoint of the finite-separation
curve in Eq.~\eqref{eq:cmetric-turning-polynomial}.  There the two roots
coalesce, and the vanishing quadratic discriminant gives the same endpoint
and upper tension in the global coordinate on the turning locus.

\subsection{Bicritical approach identities}

Let
\begin{align}
 \mu&=\frac{x}{\mathcal D_x},
 &\mathcal D_x&=1+2x-x^2,\nonumber\\
 d&=\frac{1-x^2}{\mathcal D_x},
 &q&=\mu,\nonumber\\
 0&<x<\sqrt2-1.
 \label{eq:app-bicritical-parameter}
\end{align}
For the two values of \(z\) in
Eq.~\eqref{eq:cmetric-bicritical-z}, direct substitution into the horizon
polynomial gives \(u=x\).  On the large branch,
\begin{equation}
 \alpha_L^2=\frac{(1-x^2)^3}{1-2x^2},
 \qquad
 \mathfrak m_L=\frac{\sqrt{1-x^2}}{x},
 \qquad
 s_L=\frac{\mathcal D_x}{x^2}.
 \label{eq:app-bicritical-large}
\end{equation}
On the one-sided limit of the disappearing interior sheet, \(\alpha_X=0\),
\(\mathfrak m_X=0\), and \(s_X=\mathcal D_x\).  Along the Maxwell approach,
the selected ratio \(H_u/\alpha\) gives
\begin{equation}
 \tau_t=\frac{4\mu}{\sqrt{1-x^2}}.
\end{equation}
It follows without further elimination that
\begin{align}
 \mathfrak g_L
 &=\frac{\sqrt{1-x^2}}{x}
 -\frac{1}{x\sqrt{1-x^2}}
 =-\frac{x}{\sqrt{1-x^2}},\nonumber\\
 \mathfrak g_X
 &=-\frac{\tau_t\mathcal D_x}{4}
 =-\frac{x}{\sqrt{1-x^2}}.
 \label{eq:app-bicritical-free-energies}
\end{align}
Finally, on the one-sided fixed-pressure cut \(P=P_t\), define
\(G_X^{\rm lim}(T;P_t)=-TS_X\).  The large phase is regular, so the envelope
identity gives
\(\mathrm d(G_L-G_X^{\rm lim})=-(S_L-S_X)\mathrm dT\).  With
\(\delta\tau=4\pi Q(T_t-T)=\tau_t-\tau\), this gives
\begin{equation}
 \frac{G_L-G_X^{\rm lim}}{Q}
 =\frac{\mathcal D_x(1-x^2)}{4x^2}\delta\tau+O(\delta\tau^2),
\end{equation}
which proves Eq.~\eqref{eq:cmetric-exact-jump-closure} and fixes its sign on
the fixed-pressure cut.

\section{Exact Maxwell turning and the surrounding wall atlas}
\label{app:cmetric-turning-certificate}

This appendix separates the exact proof for the finite-separation
Maxwell curve from the numerical continuation used to display the surrounding
wall complex.  All equalities and sign changes in
Secs.~\ref{subsec:cmetric-exact-turning} and
\ref{subsec:cmetric-bicritical-closure} follow from exact polynomial
arithmetic.  The three-dimensional point clouds give a numerical atlas over
the control range stated below.
We use resultants, subresultants, saturation, and Gr{\"o}bner bases in the
standard computational-algebra sense of
Ref.~\cite{CoxLittleOShea2025}; Sturm chains and exact real-root isolation
follow Ref.~\cite{BasuPollackRoy2006}.

\begin{theorem}[Exact finite-separation Maxwell turning]
\label{thm:app-cmetric-exact-turning}
In the standard single-string fixed-\((P,Q,\mu)\) ensemble, there is a
connected one-parameter locus of strict finite-separation physical Maxwell
turning points, with one turning point for each \(0<\mu<\mu_+\), where
\(\mu_+\) is given by Eq.~\eqref{eq:cmetric-mu-plus-strong}.  The two
equal-value states are admissible and entropy-regular, and each is a strict
canonical minimum.  No physical equilibrium has a lower potential.  Every
finite-separation physical Maxwell turning belongs to this locus.  The two
phases coalesce at \(\mu=\mu_+\); toward the other endpoint the locus extends
to \(\mu=0\).
\end{theorem}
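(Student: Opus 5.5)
The plan has four stages: set up the turning equations algebraically, certify the exact pair they produce, check that the pair is physical and globally selected, and then prove the converse exhaustion.

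\emph{Setting up the system.} I work in the reduced variables $(u,z;q,d)$ of Eq.~\eqref{eq:cmetric-reduced-variables}. A regular turn is characterized by $\Delta V=0$ through the Clapeyron relation \eqref{eq:cmetric-turn-definition}. I would therefore pose the unrestricted system
\[
H(u_1,z_1)=H(u_2,z_2)=0,\quad \tau_1=\tau_2,\quad \mathfrak g_1=\mathfrak g_2,\quad \nu_1=\nu_2
\]
in the unknowns $(u_1,w_1,u_2,w_2,q^2)$ at fixed $\mu$, with $w_i=z_i^2$. Before eliminating, I clear the square roots in $\alpha_i$ using the unsquared sign conventions. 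The mass enters as $\mathfrak m=\mu\alpha/(zd)$ and the temperature as $\tau\propto H_u/\alpha$, so every equation becomes polynomial after multiplying through by $\alpha_i z_i$. The branches with the wrong signs of $\alpha_i$ are kept as extraneous components and discarded at the end.

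\emph{Elimination and the exact pair.} I would compute a Gr\"obner basis, or an iterated resultant, of this ideal. It must be saturated by the degenerate loci $u_i=0$, $u_i^2=1$, $z_i=0$, $\alpha_i=0$, $H_{u,i}=0$, and by the coalescence locus $(u_1,w_1)=(u_2,w_2)$. The target is to show that the saturated ideal contains $u_1-u_2$ times a factor with no positive-domain zeros; this is the step behind Eq.~\eqref{eq:intro-common-horizon}. Substituting $u_1=u_2=\chi$ into $H$ then gives a rational relation between $w$ and $q^2$. Combined with $\tau_1=\tau_2$, it should force $q^2=\mu^2(1+\chi^2)$ and the parametrization $\mu=\chi/(1+\chi)^2$, and it should leave $w$ as a root of the quadratic $F_\chi$ in Eq.~\eqref{eq:cmetric-turning-polynomial}. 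I would then verify the identities $\tau=4\mu$, $\mathfrak g=0$, and the common value of $\nu$ by reduction modulo $F_\chi$ in $\mathbb Q(\chi)[w]$. The discriminant \eqref{eq:cmetric-turning-discriminant} gives the real window $0<\chi<\chi_c$. Monotonicity of $\chi\mapsto\mu$ then maps this window onto $(0,\mu_+)$. The coalescence of the two roots at $\chi_c$ is identified with the ordinary critical point through the number-field certificate \eqref{eq:cmetric-upper-certificate}.

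\emph{Physicality and selection.} First I check admissibility on the pair. The bounds $0<w_-<w_+<\chi^2$ make the lapse in \eqref{eq:cmetric-turning-margins} positive, and $g(\pm1)>0$ holds. The slow-acceleration condition $C(v)>0$ on $[-1,1]$ I would certify by a Sturm sequence in $v$ with coefficients in $\mathbb Q(\chi)[w]/F_\chi$, uniformly in $\chi$; the outer-horizon property is certified the same way, by a Sturm count of $H(\cdot,z_\pm)$ on $(\chi,1)$. Next, \eqref{eq:cmetric-entropy-regularity} gives entropy regularity. The minimum condition $\mathscr D_z\tau<0$ I would reduce to a rational function of $\chi$ of fixed sign. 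The curvature follows from implicit differentiation of the two-phase system at fixed $\mu$, using the factors \eqref{eq:cmetric-turning-curvature-factor} and \eqref{eq:cmetric-turning-entropy-jump}, together with a check that the Jacobian is nonzero. For global selection, at $\vartheta=4\mu$ and the turning $q$ I would eliminate $u$ from $H=0$ and $\tau=4\mu$ to obtain a univariate polynomial in $w$ over $\mathbb Q(\chi)$. After removing the factor $F_\chi$, I would show that every remaining admissible positive-temperature root has $\mathfrak g>0$. To do this I would adjoin $\mathfrak g$ as a variable, take the subresultant that describes $\mathfrak g$ on each cofactor root, and prove by Descartes/Sturm counting, uniformly in $\chi\in(0,\chi_c)$, that no root with $\mathfrak g\le0$ survives the admissibility filter.

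\emph{Main obstacle.} I expect the hardest part to be the unrestricted elimination that forces $u_1=u_2$, together with the uniform-in-$\chi$ global comparison. These are high-degree parametric eliminations. For the first I would try the symmetric-function substitution $\sigma=u_1+u_2$, $\pi=u_1u_2$ after factoring out $u_1-u_2$ from each antisymmetrized difference. For the uniform sign checks I would clear denominators and certify positivity on $(0,\chi_c)$ by exact root isolation, checking in addition that the boundary values at $\chi\to0$ and at $\chi_c$ do not degenerate.
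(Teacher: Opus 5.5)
Your outline follows the paper's structure: an unrestricted elimination down to a common horizon, reduction to \(F_\chi\), the discriminant window, certificates for admissibility, stability and curvature, and an eliminant comparison with the other equilibria at the turning controls. The gap is that the step you yourself flag as hardest is not supplied. For the unrestricted elimination you state only a target: a saturated ideal containing \((u_1-u_2)f\), with \(f\) free of positive-domain zeros. The one concrete tactic you offer for reaching it cannot work, for two reasons. First, each state depends on the pair \((u_i,w_i)\). A cleared difference such as \(\tau(u_1,w_1)-\tau(u_2,w_2)\) therefore vanishes on the codimension-two diagonal \((u_1,w_1)=(u_2,w_2)\), not on the hyperplane \(u_1=u_2\), so it is not divisible by \(u_1-u_2\). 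Second, the coexisting pair has \(u_1=u_2\) exactly. Phase distinctness lives entirely in \(w\), equivalently in the entropy, so a divided difference in \(u\) targets the wrong coordinate. In your variables the extraneous components also need not be visibly sign-definite. Even a successful Gr\"obner run would leave you a real-algebraic positivity problem that you have not addressed.

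The missing idea is the paper's reparametrization. Take \(a=(1-d)/d\), \(\mathfrak q=q^2/d^2\), \(\rho=\alpha/\sqrt w\) and \(y=\rho^2+1/\mathfrak q\). Set \(x=\mathfrak q u^2y/(1-u^2)\), which is a fixed positive multiple of the entropy at common controls, and \(h=\sqrt{\mathfrak q}\,u\rho\). In these variables the argument runs as follows.
\begin{enumerate}
\item The horizon equation is solved rationally for the control, \(a=u(\mathfrak q+x+x^2)/x\), with \(h^2=x-(1+x)u^2\); no horizon root is ever selected.
\item For the cleared numerators, \(V_x+T_x=(1+x)F_x\) with \(F_x=x(2x+1)(x+1)-\mathfrak q\). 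Equal temperature and volume then fix the ratio \(j/h=xF_y/(yF_x)\) of the two states' \(h\)-variables.
\item The remaining temperature equation is linear in \(\eta=h^2\), with leading coefficient \(x(x-y)F_yA\) and \(A>0\).
\item After saturating by these sign-definite factors, the \(\mathfrak q\)-resultant of the Gibbs and common-tension numerators has \(2xy+x+y-1\) as its only factor that can vanish for distinct positive \(x,y\).
\item A second resultant on that curve gives \(\mathfrak q=xy\), hence \(u_1=u_2=a/(1+x+y)\).
\end{enumerate}

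Two downstream claims also need correction.

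\emph{The common-horizon reduction.} In the common-horizon sector, the horizon relation together with \(\tau_1=\tau_2\) still leaves a one-parameter family at fixed \(\mu\). All three equalities are needed. They admit a second algebraic branch, \(\mathfrak q=(1+t)/(3-t)^2\) and \(\rho_1\rho_2=(3-t)/t\) with \(t=u^2\), plus an exceptional fiber at \(t=3-2\sqrt2\). These are excluded only because \((\rho_1+\rho_2)^2-4\rho_1\rho_2<0\) there, which is a reality argument, not ideal membership.

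\emph{Sign certification on the pair.} Quantities reduced modulo \(F_\chi\), such as \(\mathscr D_z\tau\) or the canonical Jacobian, are not rational functions of \(\chi\) alone. Their values depend on which root \(w_\pm\) is taken. So a ``fixed sign in \(\chi\)'' or a Sturm sequence over \(\mathbb Q(\chi)[w]/F_\chi\) is not well defined as you state it. The paper instead shows that the relevant resultants with \(F_\chi\) do not vanish on \(0<\chi<\chi_c\); this includes the resultant for the denominator factor \(\mathcal Q_T\). It then fixes the sign by exact interval evaluation at \(\chi=1/4\) and continuity along the two connected root sheets. Your uniform Sturm and Descartes steps would have to become this same pattern of nonvanishing resultant plus one sample point. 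The paper uses it with \(R_{\mathrm{slow}}\) for the slow face, and with \(\mathcal R_\chi\) and its discriminant for the global comparison.
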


\subsection{Polynomial proof for the coexisting pair}

Put \(w=z^2\) and use Eq.~\eqref{eq:cmetric-chi-mu}.  After clearing the
positive denominators in the horizon equation, define
\begin{align}
 \widehat H_\chi(u,w)={}&
 w(1+\chi^2)(1-u^2)(u^2+w)
 -2\chi w(1+w)u(1-u^2)\nonumber\\
 &+\chi^2(1+w)^2u^4.
 \label{eq:app-turning-cleared-horizon}
\end{align}
At \(u=\chi\), one finds identically
\begin{equation}
 \widehat H_\chi(\chi,w)=F_\chi(w),
 \label{eq:app-turning-horizon-F}
\end{equation}
with \(F_\chi\) defined in
Eq.~\eqref{eq:cmetric-turning-polynomial}.  Polynomial division of the
numerators of \(\tau^2-16\mu^2\), \(\alpha\mathfrak g\), and
\(\nu-\nu_0\) by \(F_\chi\) has zero remainder, where
\begin{equation}
 \nu_0=\frac{(1-\chi)(1+\chi)^5}
 {\chi^3(1+\chi^2)^2}.
\end{equation}
The positive root of the normalization factor is
\begin{equation}
 \alpha=
 \frac{\sqrt w(1-\chi^2)(\chi^2-w)}
 {(1+w)\chi^4}.
 \label{eq:app-turning-alpha}
\end{equation}
The inequalities below fix the positive-temperature sign of the squared
identity.  More explicitly, reduction modulo \(F_\chi\) gives
\begin{align}
 \tau&=4\mu,
 &\mathfrak m&=
 \frac{(1-\chi^2)(\chi^2-w)}
 {\chi^3(1+\chi^2)(1+w)},\nonumber\\
 s&=\frac{\chi^2(1+\chi)^2(1+w)}
 {w(1+\chi^2)(1-\chi^2)},
 &\nu&=\nu_0,
 \label{eq:app-turning-state-values}
\end{align}
and the critical free energy factors as
\begin{equation}
 \mathfrak g=
 \frac{F_\chi(w)}
 {\chi^3w(\chi-1)(\chi+1)(1+\chi^2)(1+w)}.
 \label{eq:app-turning-G-factor}
\end{equation}
Thus both roots have \(\tau=4\mu\), \(\mathfrak g=0\), and \(\nu=\nu_0\).

Write \(t=\chi^2\).  In the interval
Eq.~\eqref{eq:cmetric-chi-window},
\begin{equation}
 A_\chi=(1-t)^2+t^3>0,
 \qquad
 B_\chi=t(t^2+2t-1)<0.
\end{equation}
The product of the roots is \(\chi^6/A_\chi>0\), and their sum is positive.
Moreover,
\begin{equation}
 w_-+w_+-\chi^2
 =-\frac{\chi^6(\chi^2+2)}{A_\chi}<0.
\end{equation}
It follows that \(0<w_-<w_+<\chi^2\).  This proves
Eq.~\eqref{eq:app-turning-alpha} is real and positive and excludes the
normalization face.

The same inequalities prove that \(u=\chi\) is the outer horizon.  On
\(F_\chi=0\), division by \(u-\chi\) gives
\begin{equation}
 \widehat H_\chi(\chi+h,w)=h\sum_{k=0}^{3}c_kh^k.
 \label{eq:app-outer-horizon-quotient}
\end{equation}
In descending order, the coefficients are
\begin{align}
 c_3={}&(\chi^2-w)+\chi^2w(1+w),\nonumber\\
 c_2={}&2\chi\left[2\chi^2(1+w+w^2)-w(1-w)\right],\nonumber\\
 c_1={}&6\chi^4(1+w+w^2)+5\chi^2w^2+\chi^2w+w(1-w),\nonumber\\
 c_0={}&4\chi\left[\chi^4(1+w+w^2)+\chi^2w^2+w(\chi^2-w)\right].
 \label{eq:app-outer-horizon-coefficients}
\end{align}
The first, third, and fourth lines are manifestly positive.  In the second,
\(w<\chi^2\) implies
\begin{equation}
 2\chi^2(1+w+w^2)-w(1-w)
 >\chi^2(1+3w+2w^2)>0.
\end{equation}
Thus \(\widehat H_\chi(\chi+h,w)>0\) for every \(h>0\), and no larger
positive horizon exists.

The angular quadratic has its vertex at
\(v_*=-mA/w<-1\), because
\begin{equation}
 mA-w=
 \frac{(\chi-w)+\chi w(1-\chi)}{1+\chi^2}>0.
\end{equation}
Its minimum on \([-1,1]\) is therefore the positive value in
Eq.~\eqref{eq:cmetric-turning-margins}.  A contact with the slow-acceleration
face would require \(C=C_v=0\).  Eliminating \(v\) and then \(w\) gives,
apart from factors nonzero in the open interval,
\begin{align}
 R_{\rm slow}(\chi)={}&
 27\chi^{16}-108\chi^{14}+216\chi^{12}-108\chi^{10}
 \nonumber\\
 &+774\chi^8-612\chi^6+60\chi^2+7.
 \label{eq:app-slow-resultant}
\end{align}
More precisely, the resultant is
\begin{equation}
 2^{12}\chi^{56}(1-\chi^2)^8(1+\chi^2)^{10}
 R_{\rm slow}(\chi).
\end{equation}
In \(U=\chi^2\), the Sturm chain has the sign pattern
\begin{equation}
 (+,+,-,-,+,+,-,-,+)
\end{equation}
at both \(U=0\) and \(U=1/6\).  Its variation count is four at both
endpoints, so \(R_{\rm slow}\) has no zero on \(0<U<1/6\), which contains
the complete interval~\eqref{eq:cmetric-chi-window}.  A rational interior
point has positive slow margin on both root sheets; no slow-boundary contact
can occur before the critical merger.

\subsection{Exact nondegeneracy of the turning}

The condition \(\Delta V=0\) determines a stationary point of the
coexistence curve.  Its curvature can also be fixed algebraically.  Set
\(p=q^2\) and \(\rho=\alpha/\sqrt w\).  On one equilibrium phase, use the
three rational constraints
\begin{equation}
 H=0,
 \qquad
 \rho^2-\frac{1+w}{w}+\frac{d^2}{p}=0,
 \qquad
 \tau=\text{constant}
 \label{eq:app-turning-isothermal-constraints}
\end{equation}
to differentiate \((u,w,\rho)\) with respect to \(p\).  At \(u=\chi\), the
Jacobian determinant reduces modulo \(F_\chi\) to
\begin{equation}
 \mathcal J_\chi(w)=
 -\frac{4(\chi-1)(3\chi^4+6\chi^2-1)(A_\chi w+B_\chi)}
 {\chi^6(\chi+1)(1+\chi^2)}.
 \label{eq:app-turning-response-Jacobian}
\end{equation}
If \(w'\) is the other root of \(F_\chi\), Vieta's relation gives
\(A_\chi w+B_\chi=-A_\chi w'\ne0\).  All remaining factors in
Eq.~\eqref{eq:app-turning-response-Jacobian} are nonzero in the open
interval, so the isothermal response is regular on both phases.

The quotient-ring calculation gives
\begin{align}
 K_\chi(w):={}&\left(\partial_p\nu\right)_{\tau,\mu}\nonumber\\
 ={}&\frac{(1+\chi)^8}{2\chi^9(1+\chi^2)^4}
 \bigl[\chi^8w+5\chi^8-5\chi^6\nonumber\\
 &\hspace{16mm}-3\chi^4w-8\chi^4+3\chi^2w-w\bigr].
 \label{eq:app-turning-isothermal-volume}
\end{align}
The coefficient of \(w\) factors as
\begin{equation}
 \mathcal C_V(\chi)=
 \frac{(\chi-1)(\chi+1)^9A_\chi}
 {2\chi^9(1+\chi^2)^4}<0.
 \label{eq:app-turning-kappa}
\end{equation}
Therefore
\(K_\chi(w_+)-K_\chi(w_-)=\mathcal C_V(\chi)(w_+-w_-)<0\).
Equation~\eqref{eq:app-turning-state-values} gives the exact entropy
difference
\begin{equation}
 s_+-s_-=-\frac{A_\chi(1+\chi)^2(w_+-w_-)}
 {\chi^4(1+\chi^2)(1-\chi^2)}<0.
 \label{eq:app-turning-entropy-difference}
\end{equation}
In terms of \(q\), the dimensionless Clapeyron identity is
\begin{equation}
 \frac{\mathrm d\tau}{\mathrm dq}
 =4q\frac{\nu_+-\nu_-}{s_+-s_-}.
 \label{eq:app-turning-clapeyron-q}
\end{equation}
Differentiating at \(\nu_+=\nu_-\), inserting
\(q^2=\chi^2(1+\chi^2)/(1+\chi)^4\), and using
Eqs.~\eqref{eq:app-turning-kappa} and
\eqref{eq:app-turning-entropy-difference} yields
\begin{equation}
 \left.\frac{\mathrm d^2\tau}{\mathrm dq^2}\right|_{\rm turn}
 =\frac{4(1-\chi)^2(1+\chi)^4}
 {\chi^3(1+\chi^2)^2}>0.
 \label{eq:app-turning-positive-curvature}
\end{equation}
Thus every point of the exact locus is a nondegenerate local minimum of the
Maxwell temperature as a function of pressure.

\subsection{Canonical stability}

On the explicit horizon surface, let \(\widehat q^2(u,w)\) be obtained by
solving \(H=0\), and let \(\widehat\tau^2\) and \(s\) be the corresponding
state functions.  Define
\begin{equation}
 J_T=\det\frac{\partial(\widehat q^2,\widehat\tau^2)}
 {\partial(u,w)},
 \qquad
 J_S=\det\frac{\partial(\widehat q^2,s)}{\partial(u,w)}.
 \label{eq:app-turning-J-definitions}
\end{equation}
Their product has the sign of the fixed-\(q\) canonical Hessian.  At
\(u=\chi\), reduction modulo \(F_\chi\) gives
\begin{equation}
 J_S=\frac{2}{\chi w(1+\chi)^2(1+w)}>0.
 \label{eq:app-turning-JS}
\end{equation}
On a simple horizon sheet, the tangent to a fixed-\(q\) level is annihilated
by \(d\widehat q^2\).  Equation~\eqref{eq:app-turning-JS} therefore implies
that the entropy derivative along this tangent is nonzero.  Both exact phases
lie in entropy-regular coordinate patches.
Let \(\widetilde N_T\) and \(\widetilde D_T\) be the coprime numerator and
denominator of \(J_T|_{u=\chi}\).  Direct cancellation gives
\begin{align}
 \widetilde D_T(\chi,w)={}&\chi^9(1+\chi)^3(1+w)^6
 \mathcal Q_T(\chi,w)^2,
 \label{eq:app-stability-denominator}\\
 \mathcal Q_T(\chi,w)={}&
 w(\chi^6+2\chi^4-2\chi^2+1)
 +\chi^2(3\chi^2-1).
 \label{eq:app-stability-denominator-factor}
\end{align}
The factors outside \(\mathcal Q_T^2\) are positive in the physical
domain.  The remaining factor cannot vanish on either turning sheet because
\begin{align}
 \operatorname{Res}_w(F_\chi,\mathcal Q_T)
 ={}&\chi^8(1-\chi)^2(1+\chi)^2
 \nonumber\\
 &\times(1+\chi^2)^3>0
 \qquad (0<\chi<\chi_c).
 \label{eq:app-stability-denominator-resultant}
\end{align}
Thus \(\widetilde D_T>0\) at both roots of \(F_\chi\); in particular,
transport of the Hessian sign cannot be obstructed by a pole.  The integer
content of \(\widetilde N_T\) is \(16\).  If
\(N_T=\widetilde N_T/16\) is chosen with positive leading coefficient,
exact elimination gives
\begin{align}
 \operatorname{Res}_w(F_\chi,N_T)={}&
 \chi^{30}(\chi-1)^{14}(\chi+1)^4(1+\chi^2)^{11}
 \nonumber\\
 &\times(3\chi^4+6\chi^2-1)^2.
 \label{eq:app-stability-resultant-primitive}
\end{align}
Equivalently,
\(\operatorname{Res}_w(F_\chi,\widetilde N_T)\) is the right-hand side of
Eq.~\eqref{eq:app-stability-resultant-primitive} multiplied by \(2^8\).
Both versions are nonzero on the open interval
Eq.~\eqref{eq:cmetric-chi-window}.  At \(\chi=1/4\), exact root isolation
gives
\begin{equation}
 \frac1{250}<w_-<\frac1{200},
 \qquad
 \frac{57}{1000}<w_+<\frac{29}{500},
\end{equation}
and rational interval evaluation gives \(J_T>0\) on both boxes.  The two
simple root sheets are connected and cannot change sign without a zero of
Eq.~\eqref{eq:app-stability-resultant-primitive}; their denominator is
strictly positive by Eq.~\eqref{eq:app-stability-denominator-resultant}.
Consequently \(J_T>0\) on both sheets throughout
\(0<\chi<\chi_c\).  Together with Eq.~\eqref{eq:app-turning-JS}, this proves
that both coexisting states are strict canonical minima.

\subsection{Stationary eliminant at the exact controls}

It remains to exclude a physical Euclidean equilibrium below the common
value \(G=0\).  At the exact controls, retain the polynomial normalization
\begin{equation}
 \widehat E_\chi(u,w)
 =\bigl(\partial_u\widehat H_\chi(u,w)\bigr)^2
 -16w(1+w)^2u^4(\chi^2-w).
 \label{eq:app-stationary-generator-normalization}
\end{equation}
This is the cleared numerator of \(\tau^2-16\mu^2\) with its integer and
parameter content fixed.  Eliminating \(w\) gives the exact identity
\begin{align}
 \operatorname{Res}_w(\widehat H_\chi,\widehat E_\chi)
 ={}&16\chi^4u^8(u-\chi)^2(1+\chi^2)^2
 (u-1)^4(u+1)^4\mathcal R_\chi(u),
 \label{eq:app-stationary-resultant}
\end{align}
where
\begin{align}
 \mathcal R_\chi(u)={}&
 21\chi^4u^6+20\chi^2u^6
 +18\chi^5u^5+64\chi^3u^5\nonumber\\
 &+40\chi u^5+9\chi^6u^4+36\chi^4u^4
 +38\chi^2u^4\nonumber\\
 &+20u^4+24\chi^5u^3+4\chi^3u^3-16\chi u^3\nonumber\\
 &-2\chi^4u^2+4\chi^2u^2+5u^2-24\chi^3u\nonumber\\
 &-14\chi u+9\chi^2 .
 \label{eq:app-middle-polynomial}
\end{align}
At the ends of the outer-horizon interval,
\begin{align}
 \mathcal R_\chi(0)&=9\chi^2>0,\nonumber\\
 \mathcal R_\chi(\chi)
 &=16\chi^4(1+\chi^2)(3\chi^4+6\chi^2-1)<0,\nonumber\\
 \mathcal R_\chi(1)
 &=(3\chi^3+7\chi^2+\chi+5)^2>0 .
 \label{eq:app-middle-endpoint-signs}
\end{align}
The only nonmanifest polynomial factor in the discriminant used below,
written in \(U=\chi^2\), is
\begin{align}
 \mathcal P(U)={}&13689U^8-159192U^7+333180U^6\nonumber\\
 &+1275288U^5+1171510U^4+436120U^3\nonumber\\
 &+52028U^2+2600U+25 .
 \label{eq:app-middle-discriminant-factor}
\end{align}

\subsection{Saturated reconstruction of the remaining equilibria}
\label{app:saturated-reconstruction}

The projection in Eq.~\eqref{eq:app-stationary-resultant} may acquire roots
through degree loss or denominator clearing.  We therefore work in the
corresponding saturation.  With the normalization in
Eq.~\eqref{eq:app-stationary-generator-normalization}, the polynomials
\(\widehat H_\chi\) and \(\widehat E_\chi\) have degrees two and four.  The
leading coefficient of the first polynomial has the manifestly positive
form
\begin{equation}
 \operatorname{lc}_w\widehat H_\chi
 =(1-u^2)\bigl[(u-\chi)^2+1-u^2\bigr]+\chi^2u^4>0
 \qquad (0<u<1).
 \label{eq:app-H-leading-positive}
\end{equation}
In particular, a physical common zero cannot escape to \(w=\infty\).

The subresultant sequence of
\((\widehat H_\chi,\widehat E_\chi)\) with respect to \(w\) has degrees
\(4,2,1,0\).  For \(0<u<1\) with \(u\ne\chi\), its linear member is
\begin{equation}
 S_1=-4u(u-\chi)(u-1)(u+1)
 \bigl[a_\chi(u)w+b_\chi(u)\bigr],
 \label{eq:app-linear-subresultant}
\end{equation}
where \(b_\chi\) is the constant coefficient and the leading coefficient is
\begin{align}
a_\chi(u)={}&
 \chi^9(12u^9-18u^7+3u^3)
 +\chi^8(4u^{12}+18u^8-32u^6+9u^4)\nonumber\\
&+\chi^7(10u^{11}+24u^9-46u^7)
 +\chi^7(10u^5+17u^3-5u)\nonumber\\
 &+\chi^6(-2u^{12}+32u^{10}+6u^8-70u^6
 +27u^4-5u^2+2)\nonumber\\
&+\chi^5(-11u^{11}+74u^9-98u^7)
 +\chi^5(26u^5+31u^3-13u)\nonumber\\
 &+\chi^4(-13u^{12}+30u^{10}+10u^8-54u^6
 +27u^4-9u^2+4)\nonumber\\
&+\chi^3(-25u^{11}+85u^9-94u^7)
 +\chi^3(22u^5+23u^3-11u)\nonumber\\
 &+\chi^2(-3u^{12}-3u^{10}+24u^8-26u^6
 +9u^4-3u^2+2)\nonumber\\
&+\chi(-6u^{11}+21u^9-24u^7)
 +\chi(6u^5+6u^3-3u)\nonumber\\
&+4u^{12}-15u^{10}+20u^8-10u^6+u^2 .
 \label{eq:app-linear-subresultant-leading}
\end{align}
Put \(U=\chi^2\) and define
\begin{align}
 C(U)={}&81U^9+261U^8-494U^7-2263U^6\nonumber\\
 &-1144U^5+3844U^4+7768U^3\nonumber\\
 &+6340U^2+2704U+400,
 \label{eq:app-C-polynomial}\\
 E(U)={}&9U^{10}-107U^9-224U^8+7099U^7\nonumber\\
 &-22403U^6-33453U^5+145606U^4\nonumber\\
 &+252521U^3+301240U^2+287900U+122500.
 \label{eq:app-E-polynomial}
\end{align}
Exact elimination gives
\begin{align}
 \operatorname{Res}_u(\mathcal R_\chi,a_\chi)
 ={}&2^{12}\chi^{16}(1+\chi^2)^6
 (3\chi^3-7\chi^2+\chi-5)^2\nonumber\\
 &\times(3\chi^3+7\chi^2+\chi+5)^2 C(\chi^2)^4,
 \label{eq:app-R-a-resultant}\\
 \operatorname{Res}_u\!\left(
 \mathcal R_\chi,\operatorname{lc}_w\widehat E_\chi\right)
 ={}&2^{24}\chi^{10}(1+\chi^2)(1+9\chi^2)
 C(\chi^2)E(\chi^2).
 \label{eq:app-R-E-leading-resultant}
\end{align}
Neither right-hand side vanishes in the interval
Eq.~\eqref{eq:cmetric-chi-window}.  Indeed, for \(0<U<1/6\), the positive
terms omitted from the following lower bounds give
\begin{align}
 C(U)
 &>U^4\!\left(3844-1144U-2263U^2-494U^3\right)\nonumber\\
 &>U^4(3844-191-63-3)
 =3587U^4>0,\label{eq:app-C-positive-bound}\\
 E(U)
 &>U^4\!\left(145606-33453U-22403U^2-224U^4-107U^5\right)\nonumber\\
 &>U^4(145606-5576-623-1-1)
 =139405U^4>0.
 \label{eq:app-E-positive-bound}
\end{align}
Since \(\chi_c^2<1/6\), both polynomials are positive throughout the
required interval.  The two cubic factors in
Eq.~\eqref{eq:app-R-a-resultant} have fixed nonzero signs for
\(0<\chi<1/2\).

\begin{lemma}[Saturated reconstruction at the turning controls]
\label{lem:app-saturated-stationary-reconstruction}
Fix \(0<\chi<\chi_c\).  For \(0<u<1\) with \(u\ne\chi\), every finite
common zero of \(\widehat H_\chi(u,w)\) and
\(\widehat E_\chi(u,w)\) is obtained from a root of
\(\mathcal R_\chi(u)\) by the unique rational reconstruction
\(w=-b_\chi(u)/a_\chi(u)\).  The polynomial
\(\mathcal R_\chi\) has precisely two positive roots.  One lies in
\((0,\chi)\) and has negative Hawking temperature; the other lies in
\((\chi,1)\) and is the unique additional positive-temperature algebraic
candidate.  If the latter candidate is admissible, its Gibbs value is
strictly positive.  Hence no physical equilibrium lies below the common
value \(\mathfrak g=0\) of the two exact
turning phases.
\end{lemma}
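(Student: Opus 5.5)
The plan is to apply the standard subresultant structure theorem to the pair \((\widehat H_\chi,\widehat E_\chi)\), viewed as polynomials in \(w\) with coefficients in \(\mathbb Q(\chi)[u]\). Then each claim about roots is made uniform on \(0<\chi<\chi_c\) by a nonvanishing resultant or discriminant, together with a single exact evaluation at a rational sample point such as \(\chi=1/4\), as in the canonical-stability argument above.

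\emph{Reconstruction.} Fix \(0<u<1\) with \(u\ne\chi\). By Eq.~\eqref{eq:app-H-leading-positive}, \(\operatorname{lc}_w\widehat H_\chi\ne0\). Now suppose \(u\) is a root of \(\mathcal R_\chi\). Then Eq.~\eqref{eq:app-R-E-leading-resultant} gives \(\operatorname{lc}_w\widehat E_\chi(u)\ne0\), so neither degree drops under specialization. The specialization property of subresultants then says that the common zeros in \(w\) are the roots of the first nonvanishing subresultant. The degree-zero member is the resultant, which vanishes. The linear member \(S_1\) in Eq.~\eqref{eq:app-linear-subresultant} has prefactor \(-4u(u-\chi)(u-1)(u+1)\ne0\), and \(a_\chi(u)\ne0\) by Eq.~\eqref{eq:app-R-a-resultant}, using the positivity bound Eq.~\eqref{eq:app-C-positive-bound} and the fixed signs of the two cubic factors. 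Hence the gcd is linear and \(w=-b_\chi(u)/a_\chi(u)\) is the unique finite common zero. Conversely, by Eq.~\eqref{eq:app-stationary-resultant}, every common zero with \(u\notin\{0,\chi,\pm1\}\) projects to a root of \(\mathcal R_\chi\).

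\emph{Root count.} The endpoint signs in Eq.~\eqref{eq:app-middle-endpoint-signs} give at least one root of \(\mathcal R_\chi\) in \((0,\chi)\) and at least one in \((\chi,1)\). To show there are exactly two positive roots, I would argue that the number of real roots in \((0,1)\) can change only at a zero of \(\operatorname{disc}_u\mathcal R_\chi\), or when a root crosses \(u=0,\chi,1\). The endpoint values are nonzero throughout the interval, and I would show the discriminant is nonzero there through its only nonmanifest factor \(\mathcal P(U)\). Since \(\mathcal P\) has positive low-order coefficients, a term-dropping lower bound on \(0<U<1/6\) should work, as in Eqs.~\eqref{eq:app-C-positive-bound}--\eqref{eq:app-E-positive-bound}. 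Roots beyond \(u=1\) do not matter. A Sturm sequence at \(\chi=1/4\) then fixes the count at exactly one root in each subinterval.

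\emph{Temperature sign.} On a common zero, \(\widehat E_\chi=0\) forces \(\tau=\pm4\mu\), with the sign fixed by \(\partial_u\widehat H_\chi\) (the factor \(\alpha\) is taken positive, and the other factors are positive). The sign of \(\partial_u\widehat H_\chi\) on each reconstructed sheet is continuous in \(\chi\). It cannot vanish there, because \(\partial_u\widehat H_\chi=0\) together with \(\widehat E_\chi=0\) would force \(w(\chi^2-w)=0\). For \(0<w<\chi^2\), I would exclude this by a resultant in \(\chi\); if \(w\) lies outside that range, \(\alpha^2\le0\) and the state is inadmissible anyway. A single evaluation at \(\chi=1/4\) then shows \(\tau<0\) on the root in \((0,\chi)\) and \(\tau>0\) on the root in \((\chi,1)\).

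\emph{Gibbs value.} This is the main obstacle. On the remaining candidate, substitute \(w=-b_\chi/a_\chi\) into the factorization Eq.~\eqref{eq:app-turning-G-factor}; its numerator \(\widehat H_\chi(\chi,w)\) generalizes to the reduced numerator of \(\alpha\mathfrak g\) at general \(u\). Then compute \(\operatorname{Res}_u(\mathcal R_\chi,\mathrm{num}(\alpha\mathfrak g))\) and show it has no zero for \(0<\chi<\chi_c\), by factoring and by lower bounds of the same type as above. Then \(\mathfrak g\) has constant sign along the connected candidate sheet, and an exact evaluation at \(\chi=1/4\), using rational root isolation and interval arithmetic, gives \(\mathfrak g>0\). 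If the resultant turns out to have interior zeros, I would first check whether they lie where the candidate already violates \(\alpha^2>0\), the outer-horizon test, or the slow-acceleration test, and restrict the sign argument to the admissible subinterval. Combining all four steps with the exclusion of \(u=\chi\) (the turning pair itself) shows that no physical equilibrium lies below \(\mathfrak g=0\).
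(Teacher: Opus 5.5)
\textbf{Comparison with the paper.} Your reconstruction, root-count, and temperature-sign steps are essentially the paper's argument. The same three nonvanishing conditions (\(\operatorname{lc}_w\widehat H_\chi>0\), \(\operatorname{Res}_u(\mathcal R_\chi,a_\chi)\neq0\), \(\operatorname{Res}_u(\mathcal R_\chi,\operatorname{lc}_w\widehat E_\chi)\neq0\)) control the subresultant specialization. A nonzero discriminant, via the bound on \(\mathcal P\), together with a Sturm count at \(\chi=1/4\), fixes the roots. The sign of \(\partial_u\widehat H_\chi\) is transported because \(\tau^2=16\mu^2\).

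The real difference is the Gibbs step. You substitute \(w=-b_\chi/a_\chi\) and eliminate \(u\) against \(\mathcal R_\chi\). The paper instead eliminates \(w\) between \(\widehat H_\chi\) and the squared condition \(G_0=\operatorname{num}[\alpha^2/(wd^2)-s^2]\), which is \(\mathfrak m^2=(\tau s/4)^2\) with \(\tau=4\mu\). It obtains \(\operatorname{Res}_w(\widehat H_\chi,G_0)=\chi^2u^4(u-\chi)^4(1+\chi)^8(1+\chi^2)(u-1)^4(u+1)^4\). This result is independent of \(\mathcal R_\chi\) and of the reconstruction, so no contingency is needed. The sign then follows from continuity along the sheet, the absence of poles, and \(2/25<\mathfrak g_3<81/1000\) at \(\chi=1/4\). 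Your route is sound in principle, but it is a much larger elimination whose nonvanishing is not visible in advance; that is why you needed a fallback.

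The paper's factorization in fact reflects an exact identity. Since \(\alpha^2=1-w/\chi^2\) at these controls,
\[
 \widehat H_\chi+\bigl[w(\chi^2-w)(1-u^2)^2-\chi^2u^4(1+w)^2\bigr]
 =w(1+w)(1-u^2)(u-\chi)^2 .
\]
So on the horizon surface \(\alpha^2/(wd^2)-s^2\) is a positive multiple of \((u-\chi)^2\). Hence \(\mathfrak g=\mu(\alpha/(zd)-s)>0\) on every positive-temperature equilibrium at these controls with \(u\ne\chi\), without any sample-point evaluation.

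\textbf{Two points need tightening.}

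First, the statement asserts \emph{exactly two positive} roots of \(\mathcal R_\chi\). Dismissing \((1,\infty)\) leaves that claim unproved. It follows, as in the paper, from the nonvanishing leading and constant coefficients \(\chi^2(20+21\chi^2)\) and \(9\chi^2\), together with a Sturm count on all of \((0,\infty)\) at \(\chi=1/4\).

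Second, in the temperature step the fallback ``inadmissible anyway'' does not protect the continuity argument. Real common zeros satisfy \((\partial_u\widehat H_\chi)^2=16w(1+w)^2u^4(\chi^2-w)\ge0\), so a sheet can only touch \(w=\chi^2\). At such a contact \(\partial_u\widehat H_\chi\) could change sign while the sheet returns to \(w<\chi^2\). You must therefore exclude the contact outright, and this is immediate:
\begin{itemize}
\item \(\widehat H_\chi(u,0)=\chi^2u^4>0\);
\item \(\widehat H_\chi(u,\chi^2)=\chi^2(1+\chi^2)\bigl[(1-u^2)(u-\chi)^2+(1+\chi^2)u^4\bigr]>0\) for \(0<u<1\).
\end{itemize}
The paper uses the resultant \(\chi^6(1+9\chi^2)\) for the same purpose. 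Your evaluation at \(\chi=1/4\) should also certify \(0<w<\chi^2\) on both root boxes.
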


\begin{proof}
Equations~\eqref{eq:app-H-leading-positive},
\eqref{eq:app-R-a-resultant}, and
\eqref{eq:app-R-E-leading-resultant} have three consequences.  The input
degrees do not drop at a root of \(\mathcal R_\chi\), the linear
subresultant does not become constant, and the common root is the unique
finite real number
\begin{equation}
 w=-\frac{b_\chi(u)}{a_\chi(u)}.
 \label{eq:app-w-reconstruction}
\end{equation}
There is no cleared root at \(w=0\), since
\(\widehat H_\chi(u,0)=\chi^2u^4>0\).

For completeness, the number and labels of the positive
\(\mathcal R_\chi\)-roots are also fixed without continuation assumptions.
Its discriminant is
\begin{align}
&\operatorname{Res}_u
 (\mathcal R_\chi,\partial_u\mathcal R_\chi)\nonumber\\
={}&-2^{20}\chi^8(1+\chi^2)(1+9\chi^2)
 (20+21\chi^2)\nonumber\\
&\times(\chi^4+2\chi^2+5)
 (3\chi^3-7\chi^2+\chi-5)^2\nonumber\\
&\times(3\chi^3+7\chi^2+\chi+5)^2
 \mathcal P(\chi^2),
 \label{eq:app-R-discriminant-complete}
\end{align}
where \(\mathcal P\) is given in
Eq.~\eqref{eq:app-middle-discriminant-factor}.  Its only negative term is
\(-159192U^7\), and hence, for \(0<U<1/6\),
\begin{equation}
 \mathcal P(U)>U^6(333180-159192U)
 >306648U^6>0.
 \label{eq:app-P-positive-bound}
\end{equation}
Thus Eq.~\eqref{eq:app-R-discriminant-complete} is nonzero.  At
\(\chi=1/4\), Sturm isolation gives precisely two positive roots,
\begin{equation}
 \frac{2039}{10000}<u_{\rm in}<\frac{51}{250}<\frac14,
 \qquad
 \frac{198}{625}<u_3<\frac{3169}{10000}<1.
 \label{eq:app-R-refined-boxes}
\end{equation}
The constant and leading coefficients of \(\mathcal R_\chi\) are
\(9\chi^2\) and \(\chi^2(20+21\chi^2)\), respectively.  A root can
therefore enter neither through \(u=0\) nor through infinity.  The endpoint
signs in Eq.~\eqref{eq:app-middle-endpoint-signs}, together with the
nonvanishing discriminant, transport one root in \((0,\chi)\) and one in
\((\chi,1)\) throughout the open interval.

Rational interval evaluation of Eq.~\eqref{eq:app-w-reconstruction} on both
boxes in Eq.~\eqref{eq:app-R-refined-boxes} gives
\(w>0\) and \(w<\chi^2\).  On the first box \(H_u<0\), whereas on the
second \(H_u>0\).  These signs cannot change because
\(\tau^2=16\mu^2>0\).  Nor can either sheet meet the normalization face.
After setting \(w=\chi^2\), remove the nonzero \(u\)-polynomial content by
defining
\begin{equation}
 \widehat H_\chi^{\mathrm{nf}}(u)
 =\frac{\widehat H_\chi(u,\chi^2)}
 {\chi^2(1+\chi^2)}.
 \label{eq:app-normalization-face-primitive}
\end{equation}
The finite-temperature equation contains the factor
\(2\chi u^2+u-\chi\), and
\begin{equation}
 \operatorname{Res}_u\!\left(
 \widehat H_\chi^{\mathrm{nf}},2\chi u^2+u-\chi\right)
 =\chi^6(1+9\chi^2)>0.
 \label{eq:app-normalization-face-resultant}
\end{equation}
The resultant of the unnormalized restriction
\(\widehat H_\chi(u,\chi^2)\) is
\(\chi^{10}(1+\chi^2)^2(1+9\chi^2)\), as required by homogeneity.
Thus \(u_{\rm in}\) is the negative-temperature sheet, while \(u_3\) is
the unique additional positive-temperature candidate.

Finally, a zero of its Gibbs value would satisfy the squared necessary
condition \(\mathfrak m^2=(\tau s/4)^2\).  At the exact controls define
\begin{equation}
 G_0(\chi,u,w)=\operatorname{num}\!\left[
 \frac{\alpha^2}{wd^2}-s^2\right].
 \label{eq:app-third-state-G-generator}
\end{equation}
The full resultant, with the elimination variable displayed explicitly, is
\begin{equation}
 \operatorname{Res}_w(\widehat H_\chi,G_0)
 ={}
 \chi^2u^4(u-\chi)^4(1+\chi)^8(1+\chi^2)
 (u-1)^4(u+1)^4,
 \label{eq:app-third-state-G-resultant}
\end{equation}
which is nonzero on the third sheet.  Exact interval evaluation at
\(\chi=1/4\) gives
\(2/25<\mathfrak g_3<81/1000\).  Since
\(0<u_3<1\), \(0<w<\chi^2\), and the reconstruction coefficient never
vanishes, no thermodynamic pole intervenes.  Hence
\(\mathfrak g_3>0\) on the complete connected sheet.  This proves that the
two \(\mathfrak g=0\) minima exhaust the physical global minima at the exact
turning controls.
\end{proof}

\subsection{Rigidity of common-horizon Maxwell turns}
\label{app:common-horizon-rigidity}

The thermodynamic equations rigidly fix \(u_-=u_+\) on the exact curve within
the common-horizon sector.

\begin{theorem}[Common-horizon rigidity]
\label{thm:common-horizon-rigidity}
Fix \(0<\mu<1/4\) and \(q>0\).  Suppose two distinct real equilibria have
the same horizon coordinate \(u\in(0,1)\), positive normalization, equal
temperature, equal Gibbs free energy, and equal thermodynamic volume.  Then
they lie on Eqs.~\eqref{eq:cmetric-chi-mu}--\eqref{eq:cmetric-turning-data},
with \(\chi=u\).  Such a pair exists if and
only if \(0<u<\chi_c\).  Consequently, for every fixed \(\mu\) there is at
most one finite-separation common-horizon Maxwell turning, and it exists
precisely for \(0<\mu<\mu_+\).
\end{theorem}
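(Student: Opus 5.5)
The plan is to regard the shared coordinate \(u\in(0,1)\) as fixed and to view the two equilibria as distinct real roots \(w_1\neq w_2\) of the horizon polynomial \(\widehat H(u,w;q,d)\), which is quadratic in \(w=z^2\) once the positive denominators are cleared (as in Eq.~\eqref{eq:app-turning-cleared-horizon}, now with \(d\) and \(p=q^2\) free). For fixed \((u,q,d)\) there are at most two such roots, so the pair is exactly the root pair of this quadratic. Each of the three equal-value conditions is then a symmetric function of \(w_1,w_2\). I will write equal temperature, equal Gibbs value, and equal volume as difference quotients \([Y(w_1)-Y(w_2)]/(w_1-w_2)=0\). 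Reducing each quotient modulo \(\widehat H\) with Vieta's relations turns it into a polynomial condition in \((u,p,d)\) alone. Temperature is treated in the squared form \(\tau^2\), with the sign restored afterwards using \(\alpha>0\) and \(H_u\) from Eq.~\eqref{eq:cmetric-reduced-state-functions}. The Gibbs condition is treated through \(\alpha\mathfrak g\), or through the unsquared relation \(\mathfrak m=\tau s/4\) once equal \(\tau\) holds.

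Next, I will eliminate \(p\) and \(d\) from the three conditions by successive resultants, or by a Gröbner basis in the ring localized at \(u(1-u^2)\) and at the leading coefficient of \(\widehat H\). The goal is to show that, after saturation by factors that cannot vanish in the physical domain, the real ideal is generated by \(d-(1+u^2)/(1+u)^2\) and \(p-u^2(1+u^2)/(1+u)^4\). This is Eq.~\eqref{eq:cmetric-chi-mu} with \(\chi=u\), and substituting back gives \(\widehat H=F_u(w)\). On that locus, Eq.~\eqref{eq:app-turning-state-values} and Eq.~\eqref{eq:app-turning-G-factor} already confirm \(\tau=4\mu\), \(\mathfrak g=0\), and \(\nu=\nu_0\), so the converse direction is immediate. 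The spurious components that the elimination produces must be excluded one by one: those with \(u=0\) or \(u=1\), \(\mu=0\), \(\alpha=0\) (the normalization face, \(w=\chi^2\)), \(w_1=w_2\) (the discriminant), and sign-flipped temperature branches. For each I will show that it either violates \(0<\mu<1/4\), \(\alpha>0\), or \(u\in(0,1)\), or forces coincident states. This exclusion is the main obstacle. The volume in Eq.~\eqref{eq:cmetric-reduced-state-functions} carries \(1/\alpha\) and the term \(\mu zd/q^4\), which is odd in \(z\), so I will first clear \(\alpha\) using the relation \(\rho^2=(1+w)/w-d^2/p\) with \(\rho=\alpha/\sqrt w\), exactly as in Eq.~\eqref{eq:app-turning-isothermal-constraints}. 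That makes every condition rational in \((u,w,\rho)\). The sign of \(\rho\) will then be tracked separately rather than squared away.

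For existence, distinct real roots of \(F_u\) require \(\operatorname{disc}_wF_u>0\). By Eq.~\eqref{eq:cmetric-turning-discriminant} this means \(1-6u^2-3u^4>0\), that is, \(u<\chi_c\). I will then cite the positivity argument following Eq.~\eqref{eq:cmetric-turning-margins}, which gives \(0<w_-<w_+<u^2\), so that \(\alpha>0\) by Eq.~\eqref{eq:app-turning-alpha}. This shows the pair exists if and only if \(0<u<\chi_c\). Finally, \(\chi\mapsto\chi/(1+\chi)^2\) is strictly increasing on \((0,\chi_c)\subset(0,1)\). Hence \(u\) is determined by \(\mu\), and so are \(q\) and the unordered pair \(\{w_\pm\}\). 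This gives at most one common-horizon turning for each \(\mu\), and the image interval is exactly \((0,\mu_+)\).
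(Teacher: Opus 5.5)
Your existence and uniqueness argument (the discriminant, \(0<w_-<w_+<u^2\), and monotonicity of \(u\mapsto u/(1+u)^2\)) is fine. The rigidity direction, however, has a genuine gap. It sits at the step where Vieta's relations in \(w\) are supposed to turn the three equalities into polynomial conditions in \((u,p,d)\).

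None of \(\tau\), \(\mathfrak g\), \(\nu\) is rational in \(w\). With \(\rho=\alpha/\sqrt w\) and \(\rho^2=1+1/w-d^2/p\), each is a rational function of \(w\) divided by \(\rho\); for instance \(\tau=dH_u/((1+w)u^2\rho)\) and \(\mathfrak m=\mu\rho/d\). The quantities \(\rho_1,\rho_2\) are two independent square roots, and no symmetric reduction modulo \(\widehat H\) eliminates them. So you cannot both keep \(\rho\) unsquared and reduce everything to \((u,p,d)\).

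Squaring is harmless for \(\nu>0\), and for \(\tau\) once positive temperature is imposed. The sign of \(\mathfrak g\) is not fixed, however, so squaring the Gibbs equality admits \(\mathfrak g_1=-\mathfrak g_2\), which no physical inequality removes.

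Your two proposed substitutes are not equivalent to the hypothesis. Since \(\alpha_1\neq\alpha_2\), equality of \(\alpha\mathfrak g\) differs from equality of \(\mathfrak g\) unless the common value is zero. The relation \(\mathfrak m=\tau s/4\) is literally \(\mathfrak g=0\). Read as \(\Delta\mathfrak m=\tau\Delta s/4\) it is correct, but it contains \(\rho_1-\rho_2\), so the objection above applies. As written, both substitutes presuppose the conclusion \(\mathfrak g_\pm=0\) and would miss any pair sharing a nonzero Gibbs value. The appendix identities for \(\alpha\mathfrak g\) modulo \(F_\chi\) only verify the locus once it is already known.

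The missing idea, which is how the paper proceeds, is to make \(\rho\) itself the state variable. Set \(\mathfrak k=d^2/q^2\) and \(y=1+1/w=\rho^2+\mathfrak k\). Then the horizon equation is a quadratic in \(y\), and every state function is rational in \(\rho\). Because \(y_1-y_2=(\rho_1-\rho_2)(\rho_1+\rho_2)\), the Gibbs difference at common temperature, divided by \(\rho_1-\rho_2\neq0\), gives the exact unsquared relation \(\tau=4\mu(1-u^2)/(u^2R)\) with \(R=\rho_1+\rho_2\). After that, all conditions are polynomial in \(R\), \(P=\rho_1\rho_2\), \(t=u^2\) and \(\mathfrak k\), and the Vieta sum fixes \(a=(1-d)/d\).

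Within your own framework there is an alternative repair. You could remove the common factor \(1/\rho\) by cross-multiplying, e.g.\ \(\tau_1\mathfrak g_2=\tau_2\mathfrak g_1\) and \(\tau_1\nu_2=\tau_2\nu_1\), which are \(\rho\)-free. You would then keep a single squared temperature condition, with its sign fixed by \(\rho_i>0\) and \(\tau>0\).

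The \(\rho\)-formulation also corrects your anticipated outcome. The eliminant has a second branch, \(\mathfrak k=(3-t)^2/(1+t)\), \(a=2u/(3-u^2)\). On it the controls are real with \(0<\mu<1/4\), but \(R^2-4P<0\), so the two states are complex conjugates. There is also an exceptional fiber at \(t=3-2\sqrt2\). In \((u,p,d)\) this branch is a real component on which the horizon quadratic has nonreal roots. Your exclusion list covers a vanishing discriminant but not a negative one. Until that reality test is added, the real elimination locus is not generated by the branch-I equations \(d=(1+u^2)/(1+u)^2\), \(p=u^2(1+u^2)/(1+u)^4\) alone.
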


\begin{proof}
Set
\begin{align}
 a&=\frac{1-d}{d},
 &\mathfrak k&=\frac{d^2}{q^2},\nonumber\\
 \rho_i&=\frac{\alpha_i}{\sqrt{w_i}}>0,
 &y_i&=1+\frac1{w_i}=\rho_i^2+\mathfrak k .
 \label{eq:app-rigidity-variables}
\end{align}
Multiplication of the horizon equation by \(1/w=y-1\) gives the quadratic
\begin{equation}
 \mathcal L(y;u)=\frac{u^4}{\mathfrak k}y^2+(1-u^2)u(u-a)y
 +(1-u^2)^2=0.
 \label{eq:app-rigidity-horizon}
\end{equation}
The thermodynamic functions take the form
\begin{align}
 \tau_i&=\frac{d\,\partial_u\mathcal L(y_i;u)}{y_i u^2\rho_i},
 &s_i&=\frac{u^2y_i}{d(1-u^2)},
 &\mathfrak m_i&=\frac{\mu\rho_i}{d},\nonumber\\
 \nu_i&=\frac1{\rho_i}\left[
 \frac{u^3y_i^2}{d^2(1-u^2)^2}+\frac{\mu d}{q^4}\right].
 \label{eq:app-rigidity-state-functions}
\end{align}
Distinct phases have \(\rho_1\ne\rho_2\).  Put
\begin{equation}
 R=\rho_1+\rho_2>0,\qquad P=\rho_1\rho_2>0,\qquad
 X=R^2,\qquad t=u^2.
\end{equation}
Equal Gibbs values and equal temperatures first give
\begin{equation}
 \tau=\frac{4\mu(1-u^2)}{u^2R}.
 \label{eq:app-rigidity-temperature}
\end{equation}
Equating this expression to each Hawking temperature in
Eq.~\eqref{eq:app-rigidity-state-functions}, using Vieta's relations for
Eq.~\eqref{eq:app-rigidity-horizon}, and imposing \(\nu_1=\nu_2\) yields a
polynomial system.  The Vieta sum first fixes
\begin{equation}
 a=\frac{u\{t(-2P+X+\mathfrak k)+\mathfrak k\}}
 {\mathfrak k(1-t)}.
 \label{eq:app-rigidity-a-elimination}
\end{equation}
After this substitution and removal only of the positive factors
\(u,\mathfrak k,R,1-t,y_1+y_2\), the remaining equations are
\begin{align}
0={}&P^2t^2-2P\mathfrak kt^2+X\mathfrak kt^2+\mathfrak k^2t^2\nonumber\\
&-\mathfrak kt^2+2\mathfrak kt-\mathfrak k,
 \label{eq:app-rigidity-EH}\\
0={}&-2P^2t+2PXt-2P\mathfrak kt^2+6P\mathfrak kt\nonumber\\
&+X\mathfrak kt^2-X\mathfrak kt+\mathfrak k^2t^2
-2\mathfrak k^2t-\mathfrak k^2,
 \label{eq:app-rigidity-EV}\\
0={}&-2Pt^2+2Pt+Xt^2+Xt+\mathfrak kt^2-\mathfrak k,
 \label{eq:app-rigidity-ED}\\
0={}&16P^2t^2-8P^2t-6PXt^2-2PXt\nonumber\\
&-20P\mathfrak kt^2+4P\mathfrak k+X^2t^2+X^2t\nonumber\\
&+9X\mathfrak kt^2-X\mathfrak k+8\mathfrak k^2t^2
-8\mathfrak kt^2\nonumber\\
&+16\mathfrak kt-8\mathfrak k.
 \label{eq:app-rigidity-ES}
\end{align}
These four equations use no squared sign choice.  Their exact lexicographic
Gröbner basis contains
\begin{equation}
 -\mathfrak k(t-1)^3(\mathfrak kt-t-1)
 (-\mathfrak kt-\mathfrak k+t^2-6t+9).
 \label{eq:app-rigidity-branches}
\end{equation}
It also contains
\begin{equation}
 \mathfrak k(t-1)\bigl[Pt(t^2-6t+1)+2\mathfrak kt(t+1)
 -t^3+3t^2+t-3\bigr].
 \label{eq:app-rigidity-P-basis}
\end{equation}
Because \(\mathfrak k>0\) and \(0<t<1\), there are only two algebraic
branches.  If
\(t^2-6t+1\ne0\), the same basis and
Eq.~\eqref{eq:app-rigidity-ED} give
\begin{align}
\text{I}\quad
 \mathfrak k&=\frac{1+t}{t},
 &P&=\frac{1+t}{t},\nonumber\\
 X&=\frac{(1-t)^2}{t^2},
 &a&=\frac{2u}{1+u^2},
 \label{eq:app-rigidity-branch-I}\\
\text{II}\quad
 \mathfrak k&=\frac{(3-t)^2}{1+t},
 &P&=\frac{3-t}{t},\nonumber\\
 X&=\frac{(1-t)^2(3-t)}{t(1+t)},
 &a&=\frac{2u}{3-u^2}.
 \label{eq:app-rigidity-branch-II}
\end{align}
The second branch has
\begin{equation}
 X-4P=-\frac{(t-3)(t^2-6t-3)}{t(1+t)}<0
 \qquad (0<t<1),
\end{equation}
and therefore cannot contain two real \(\rho_i\).  At the only exceptional
value in \((0,1)\), \(t=3-2\sqrt2\), both branches have
\(\mathfrak k=4+2\sqrt2\).  Direct substitution leaves the two possibilities
\begin{equation}
 (P,X)=(4+2\sqrt2,12+8\sqrt2),\qquad
 (P,X)=(8+6\sqrt2,4+4\sqrt2),
\end{equation}
for which \(X-4P=-4\) and
\(-28-20\sqrt2\), respectively.  Neither is realizable by positive
\(\rho_1,\rho_2\).

On branch I,
\begin{equation}
 X-4P=\frac{1-6t-3t^2}{t^2}.
\end{equation}
Two distinct real positive roots exist exactly when
\(1-6t-3t^2>0\), or \(0<u<\chi_c\).  Moreover,
Eq.~\eqref{eq:app-rigidity-branch-I} gives
\begin{equation}
 d=\frac{1+u^2}{(1+u)^2},\qquad
 \mu=\frac{u}{(1+u)^2},\qquad
 q^2=\frac{u^2(1+u^2)}{(1+u)^4}.
\end{equation}
Equation~\eqref{eq:app-rigidity-temperature} then gives
\(\tau=4\mu\).  Recovering \(w_i=1/(y_i-1)\) reproduces precisely the two
roots in Eq.~\eqref{eq:cmetric-turning-roots}; substitution gives
\(\mathfrak g_1=\mathfrak g_2=0\) and the common volume in
Eq.~\eqref{eq:cmetric-turning-data}.  Finally,
\(u\mapsto u/(1+u)^2\) is strictly increasing for \(0<u<1\), which proves
the fixed-tension uniqueness statement.
\end{proof}

\subsection{Unrestricted positive-domain elimination}
\label{app:unrestricted-turning-classification}

The common-horizon restriction can be removed before any squaring of the
temperature.  Throughout this subsection, a finite-separation
positive-temperature Maxwell turning pair means two distinct physical phases
at common controls with equal temperature, equal Gibbs value, and
\(\Delta V=0\).

\begin{proposition}[Unrestricted common-horizon rigidity]
\label{prop:app-unrestricted-common-horizon-rigidity}
Every finite-separation positive-temperature Maxwell turning pair in the open
physical domain of the standard single-string fixed-\((P,Q,\mu)\) ensemble
has equal horizon coordinate.
\end{proposition}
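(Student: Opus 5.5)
Two states $(u_1,w_1)$ and $(u_2,w_2)$ are given at common controls $(q,\mu)$. The plan is to write each one in the rational variables of the rigidity proof, Eq.~\eqref{eq:app-rigidity-variables}:
\[
 y_i=1+\frac1{w_i},\qquad \rho_i=\frac{\alpha_i}{\sqrt{w_i}}>0,\qquad y_i=\rho_i^2+\mathfrak k .
\]
This removes the square root in $\alpha$ without squaring the temperature. Each state then satisfies its own horizon quadratic $\mathcal L(y_i;u_i)=0$ from Eq.~\eqref{eq:app-rigidity-horizon}. The state functions are those of Eq.~\eqref{eq:app-rigidity-state-functions} evaluated at $u_i$. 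They are rational in $(u_i,\rho_i,\mathfrak k,a,d)$, and the $\rho_i$ enter linearly, so no sign choice is made.

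First, eliminate the free energy as in the common-horizon case. From $\mathfrak m_i=\mu\rho_i/d$, $s_i=u_i^2y_i/(d(1-u_i^2))$ and $\tau_1=\tau_2=\tau$, equal Gibbs values give the linear relation
\[
 \mu(\rho_1-\rho_2)=\frac{\tau}{4}\left[\frac{u_1^2y_1}{1-u_1^2}-\frac{u_2^2y_2}{1-u_2^2}\right].
\]
This expresses $\tau$ rationally. The two temperature equations $\tau=d\,\partial_u\mathcal L(y_i;u_i)/(y_iu_i^2\rho_i)$, the two horizon equations, and $\nu_1=\nu_2$ then form a polynomial system in $(u_1,u_2,\rho_1,\rho_2,\mathfrak k,a)$. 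The relation $\mathfrak k=d^2/q^2$ ties in the pressure, and $a=(1-d)/d$ ties in the tension. Since $a$ enters each horizon equation linearly, the first horizon equation can be solved for $a$, exactly as in Eq.~\eqref{eq:app-rigidity-a-elimination}. Then the antisymmetric combinations are divided by $\rho_1-\rho_2$, which is nonzero because the phases are distinct.

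Second, and this is the key step, compute an elimination ideal and show that it contains an element of the form
\[
 (u_1-u_2)^{e}\,\Pi(u_1,u_2,\rho_1,\rho_2,\mathfrak k)\quad\text{with }\Pi\neq0\text{ on the positive domain},
\]
or more precisely that the saturation of the ideal by the positive factors ($u_i$, $1-u_i^2$, $\rho_i$, $\mathfrak k$, $y_i$, $\rho_1-\rho_2$) lies in the ideal generated by $u_1-u_2$. In practice I would work in symmetric and antisymmetric coordinates:
\[
 \sigma=u_1+u_2,\quad \delta=u_1-u_2,\quad R=\rho_1+\rho_2,\quad P=\rho_1\rho_2 .
\]
I would then compute a lexicographic Gröbner basis eliminating $\mathfrak k,a$ first. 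The aim is to exhibit a basis element that factors as $\delta$ times polynomials whose remaining factors have manifest sign on $0<u_i<1$, $\rho_i>0$, $\mathfrak k>0$, just as Eq.~\eqref{eq:app-rigidity-branches} did at $\delta=0$.

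The main obstacle is the leftover factor in the $\delta\neq0$ component. It probably will not be manifestly signed. If it is not, I would argue on the real semialgebraic set by cylindrical decomposition in $(u_1,u_2)$. On each cell, a Sturm or Descartes count for the univariate eliminant in $P$ (or in $\mathfrak k$) would be combined with the realizability conditions $R^2-4P>0$ and $a=2\mu/(1-2\mu)\in(0,1)$, where $a<1$ because $\mu<1/4$. The outer-horizon condition, that $u_i$ is the largest root of its quartic, enters through positivity of the quotient coefficients as in Eq.~\eqref{eq:app-outer-horizon-coefficients}. These conditions should exclude every real point with $\delta\neq0$. Once $u_1=u_2$ is forced, Theorem~\ref{thm:common-horizon-rigidity} applies directly and places the pair on the exact locus.
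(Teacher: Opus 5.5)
\textbf{There is a genuine gap.} Your setup is sound as far as it goes. Solving one horizon equation for $a$ and reading the other as the common-tension condition is what the paper does too, and the Gibbs relation does express $\tau$ rationally. But the proposition consists entirely of the next step: excluding every real positive solution with $u_1\neq u_2$. The proposal does not supply this. You plan to compute a Gr\"obner basis and hope for an element $\delta^e\Pi$ with $\Pi$ signed. You already expect the leftover factor not to be signed, and then fall back on an unspecified cylindrical decomposition. That is a plan for a computation, not an argument, and several of the tools you name would not work as stated.
\begin{itemize}
\item The pair $R=\rho_1+\rho_2$, $P=\rho_1\rho_2$ worked in Theorem~\ref{thm:common-horizon-rigidity} only because $y_1,y_2$ were the two roots of a single quadratic $\mathcal L(y;u)$, so Vieta applied. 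When $u_1\neq u_2$, each $\rho_i$ is tied to its own $u_i$. The system is then not invariant under $\rho_1\leftrightarrow\rho_2$ with $u_i$ fixed, and it cannot be written in $(\sigma,\delta,R,P)$.
\item The coefficients in Eq.~\eqref{eq:app-outer-horizon-coefficients} were computed at $u=\chi$ modulo $F_\chi$. They give no outer-horizon information at a general $u_i$, and the outer-horizon condition is not needed here anyway.
\item Your ``more precise'' formulation is reversed. The containment $I:S^\infty\subseteq\langle u_1-u_2\rangle$ would say that every diagonal point is a solution. What you need is that $u_1-u_2$ vanishes on the positive real zero set of $I:S^\infty$.
\item $\rho_1\neq\rho_2$ does not follow from distinctness alone when $u_1\neq u_2$. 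It follows because $\rho_1=\rho_2$, together with equal Gibbs values and $\tau>0$, forces $s_1=s_2$. That gives $u_1=u_2$, so the states coincide.
\end{itemize}

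The missing idea is a choice of variables in which the elimination closes using positivity alone. The paper sets $x=u^2y/(\mathfrak k(1-u^2))$, a rescaled entropy, and $h=u\rho/\sqrt{\mathfrak k}$. Then $h^2=x-(1+x)u^2$, and the horizon equation is solved as $a=u(\mathfrak q+x+x^2)/x$. The identity $V_x+T_x=(1+x)F_x$ turns equal temperature and equal volume into the rational lapse ratio $j/h=xF_y/(yF_x)$, and it forces $F_x,F_y>0$. The remaining temperature equation is then linear in $\eta=h^2$, with leading coefficient $x(x-y)F_yA$, where $A$ is a sum of positive monomials. Substituting the unique $\eta_*$ into the Gibbs and common-tension equations leaves two polynomials in $(x,y,\mathfrak q)$. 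Their $\mathfrak q$-resultant consists of a constant, $(x-y)^8$, factors positive for $x,y>0$, and the single factor $2xy+x+y-1$. On that curve a second resultant excludes the residual cofactors, so $\mathfrak q=xy$ and $B_x/x=B_y/y=1+x+y$, which gives $u_1=u_2=a/(1+x+y)$. The surviving factor is a relation between the two entropies, and equal horizons appear only after this exceptional-fiber reconstruction. Without an analogous reduction, your plan has no certificate for the case $u_1\neq u_2$.
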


The following coordinates give every cleared denominator a fixed sign.  Let
\begin{equation}
 a=\frac{1-d}{d},\qquad \mathfrak k=\frac{d^2}{q^2},\qquad
 \mathfrak q=\frac1{\mathfrak k}=\frac{q^2}{d^2}>0 .
 \label{eq:app-unrestricted-controls}
\end{equation}
For a physical equilibrium with horizon coordinate \(u\), put
\begin{align}
 \rho&=\frac{\alpha}{\sqrt w}>0,
 &y&=\rho^2+\mathfrak k,\nonumber\\
 x&=\frac{u^2y}{\mathfrak k(1-u^2)}>0,
 &h&=\frac{u\rho}{\sqrt{\mathfrak k}}>0.
 \label{eq:app-unrestricted-state-change}
\end{align}
The transformation is nonsingular in the open physical domain.  It gives
\begin{equation}
 h^2=x-(1+x)u^2,\qquad 0<h^2<x.
 \label{eq:app-unrestricted-lapse-relation}
\end{equation}
Substitution in the horizon equation
\eqref{eq:app-rigidity-horizon} reduces that equation to
\begin{equation}
 a=u\frac{\mathfrak q+x+x^2}{x}.
 \label{eq:app-unrestricted-horizon-solved}
\end{equation}
Thus the horizon constraint has been solved without choosing a polynomial
root.

Write \(\eta=h^2\) and introduce
\begin{equation}
 B_x=\mathfrak q+x+x^2,\qquad
 F_x=x(2x+1)(x+1)-\mathfrak q .
 \label{eq:app-unrestricted-BF}
\end{equation}
Three polynomial numerators will be used:
\begin{align}
 T_x(\eta)&=\eta B_x+(2x+1)(x^2+x-\mathfrak q),\nonumber\\
 G_x(\eta)&=\eta(x+2)B_x
 -x(2x+1)(x^2+x-\mathfrak q),\nonumber\\
 V_x(\eta)&=-\eta B_x
 +x\{\mathfrak q+x(2x+1)(x+1)\}.
 \label{eq:app-unrestricted-numerators}
\end{align}
Direct substitution in Eq.~\eqref{eq:app-rigidity-state-functions} gives
\begin{align}
 \frac{(1+a)\tau}{\sqrt{\mathfrak q}}
 &=\frac{T_x(\eta)}{hx(1+x)},\nonumber\\
 4\sqrt{\mathfrak q}\,\mathfrak g
 &=\frac{G_x(\eta)}{hx(1+x)},\nonumber\\
 2\mathfrak q^{3/2}\nu
 &=\frac{V_x(\eta)}{hx(1+x)}.
 \label{eq:app-unrestricted-scaled-state-functions}
\end{align}

\begin{lemma}[Nondegeneracy of the positive-domain reduction]
\label{lem:app-unrestricted-legal-reduction}
For two distinct physical phases \((x,h)\) and \((y,j)\) at common
\((a,\mathfrak q)\), relabeled if necessary so that \(x>y\), one has
\[
 x,y,h,j,F_x,F_y,A>0,\qquad x-y>0.
\]
Equality of temperature and volume fixes the positive lapse ratio \(j/h\)
and then a unique rational value \(\eta_*(x,y,\mathfrak q)\).  Every divisor
used in these two eliminations is nonzero and has fixed sign after the phase
ordering.
\end{lemma}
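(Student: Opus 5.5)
The plan is to work entirely in the coordinates \eqref{eq:app-unrestricted-state-change}. The identity \eqref{eq:app-unrestricted-horizon-solved} expresses each phase's horizon coordinate rationally as \(u=a x/B_x\), and \eqref{eq:app-unrestricted-lapse-relation} then gives \(\eta=h^2\) as a rational function of \((x;a,\mathfrak q)\). A physical phase at fixed controls is therefore determined by the single number \(x\), with \(h=\sqrt{\eta}>0\).

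\emph{Positivity and ordering.} The inequalities \(x,y,h,j>0\) are immediate from the definitions, since \(\rho>0\), \(\mathfrak k>0\) and \(0<u<1\). For distinctness I argue by contraposition. If \(x=y\), then the horizon relation forces \(u_x=u_y\), and \eqref{eq:app-unrestricted-lapse-relation} forces \(h=j\). Since \((u,\rho)\) recovers \((u,w)\), the two phases coincide. So distinct phases have \(x\neq y\), and we may relabel so that \(x-y>0\).

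The factor \(A\), which I read as the control combination \(1+a\) multiplying \(\tau\) in \eqref{eq:app-unrestricted-scaled-state-functions}, is positive because \(0<\mu<1/4\) gives \(a>0\). For \(F_x>0\) I would use positive temperature. By \eqref{eq:app-unrestricted-scaled-state-functions}, \(\tau>0\) is equivalent to \(T_x(\eta)>0\). I then plan to combine this with \(0<\eta<x\) and the explicit form \(\eta=x-(1+x)a^2x^2/B_x^2\), rewriting \(T_x(\eta)\) as a positive combination of \(F_x\) and manifestly positive terms. This step is the one I expect to be the main obstacle: the sign of \(F_x\) is not read off directly, and it may instead come from positive volume, since \(V_x(\eta)=-\eta B_x+x(F_x+2\mathfrak q)\). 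I would try both and keep whichever yields a clean factorization.

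\emph{Elimination.} Equal temperature and equal volume, read through \eqref{eq:app-unrestricted-scaled-state-functions}, give
\[
 \frac{T_x(\eta)}{h x(1+x)}=\frac{T_y(\zeta)}{j y(1+y)},\qquad
 \frac{V_x(\eta)}{h x(1+x)}=\frac{V_y(\zeta)}{j y(1+y)},
\]
where \(\zeta=j^2\). Each equation is solved for the ratio \(j/h\). Solving the volume equation gives
\[
 j/h=\frac{x(1+x)V_y(\zeta)}{y(1+y)V_x(\eta)},
\]
which is legitimate because \(\nu>0\) forces \(V_x,V_y>0\); so this ratio is positive and well defined. Squaring \(j/h\) gives \(\zeta/\eta\), and \(\zeta\), \(\eta\) are linear in \(\eta\) through \(T\), \(V\) and \(B\).

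Cross-multiplying the two equations cancels \(h\) and \(j\) and yields \(T_x(\eta)V_y(\zeta)=T_y(\zeta)V_x(\eta)\). Since \(T\) and \(V\) are affine in \(\eta\) and \(\zeta\), substituting \(\zeta=(j/h)^2\eta\) reduces the system to an equation linear in \(\eta\). Its solution is the unique rational value \(\eta_*(x,y,\mathfrak q)\).

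\emph{Divisors.} The remaining task is to list every divisor that appears. These are \(x\), \(y\), \(1+x\), \(1+y\), \(B_x\), \(B_y\), \(V_x\), \(V_y\), \(F_x\), \(F_y\), \(x-y\) and the coefficient of \(\eta\) in the final linear equation. I would factor that coefficient, expecting it to be \((x-y)\) times a product of the \(B\)'s and \(F\)'s with a positive polynomial, so that each divisor is a product of quantities already shown to be positive.
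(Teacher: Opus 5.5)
Your positivity and distinctness steps are fine, but the elimination has a genuine gap. Solving the volume equation gives $j/h=x(1+x)V_y(\zeta)/[y(1+y)V_x(\eta)]$. Its right-hand side still contains the unknowns $\eta=h^2$ and $\zeta=j^2$, so it does not fix the ratio. Substituting $\zeta=(j/h)^2\eta$ into $T_x(\eta)V_y(\zeta)=T_y(\zeta)V_x(\eta)$ is therefore circular. What you actually get is a nonlinear system in $(\eta,j/h)$, not an equation linear in $\eta$ with coefficients depending only on $(x,y,\mathfrak q)$. Neither the ``fixed ratio'' nor the uniqueness of $\eta_*$ follows.

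The missing idea is the $\eta$-free identity
\[
 T_x(\eta)+V_x(\eta)=(1+x)F_x ,
\]
which is one line from your own rewriting $V_x=-\eta B_x+x(F_x+2\mathfrak q)$. The prefactors $(1+a)/\sqrt{\mathfrak q}$ and $2\mathfrak q^{3/2}$ are common to both phases. Adding the equal-temperature and equal-volume relations therefore gives $F_x/(hx)=F_y/(jy)$, i.e.\ $j/h=\Lambda_h=xF_y/(yF_x)>0$. Only after this is the remaining temperature equation $\Lambda_h\, y(1+y)T_x(\eta)=x(1+x)T_y(\Lambda_h^2\eta)$ genuinely linear in $\eta$. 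The same identity also gives $F_x>0$ at once from $\tau>0$ and $\nu>0$. Your temperature-only route for $F_x>0$ also works: $B_x>0$ and $\eta<x$ give $0<T_x(\eta)<T_x(x)=(1+x)(3x^2+x-\mathfrak q)$, so $F_x>2x^3$. The volume-only route cannot work, because $V_x>0$ yields only $F_x>\eta B_x/x-2\mathfrak q$.

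Second, the nondegeneracy you leave as an expectation is the actual content of the lemma, and the $A$ in the statement is not $1+a$. Multiply the linear equation by $y^2F_x^2$. Its $\eta$-coefficient becomes $xF_y\,[y^2(1+y)F_xB_x-x^2(1+x)F_yB_y]$, and the bracket equals $(x-y)A$ with
\[
 A=\mathfrak q^2(x^2+xy+y^2+x+y)+x^2y^2(1+x)(1+y)(2x+2y+3).
\]
The term linear in $\mathfrak q$ cancels. This is the polynomial $A$ of the lemma in factored form, and it is manifestly positive. That factorization is what excludes a degree drop and makes $\eta_*$ unique. It also shows that the divisors used are $yF_x$ for the ratio and the leading coefficient $x(x-y)F_yA$ for $\eta_*$, all of fixed positive sign once $x>y$. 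Without this computation your divisor list ends in an unverified factor, and the lemma's central claim remains unproved.
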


\begin{proof}
The identity
\begin{equation}
 V_x(\eta)+T_x(\eta)=(1+x)F_x
 \label{eq:app-unrestricted-VT-identity}
\end{equation}
and Eq.~\eqref{eq:app-unrestricted-scaled-state-functions} show that positive
temperature and positive thermodynamic volume imply \(F_x>0\), and likewise
\(F_y>0\).

Consider two distinct physical phases, denoted by \((x,h)\) and \((y,j)\),
at common \((a,\mathfrak q)\).  They have \(x\ne y\).  Indeed, if \(x=y\),
Eq.~\eqref{eq:app-unrestricted-horizon-solved} first gives equal positive
horizon coordinates, and Eq.~\eqref{eq:app-unrestricted-lapse-relation}
then gives \(h=j\), so the two states coincide.  Equality of temperature
and volume, together with
Eq.~\eqref{eq:app-unrestricted-VT-identity}, fixes the positive ratio
\begin{equation}
 \Lambda_h:=\frac jh=\frac{xF_y}{yF_x}>0.
 \label{eq:app-unrestricted-R}
\end{equation}
The remaining temperature equation is
\begin{equation}
 E_T=\Lambda_h y(1+y)T_x(\eta)
 -x(1+x)T_y(\Lambda_h^2\eta)=0.
 \label{eq:app-unrestricted-ET}
\end{equation}
It is linear in \(\eta\).  After clearing the positive denominator in
\(\Lambda_h\),
its leading coefficient is
\begin{equation}
 x(x-y)F_y A(x,y,\mathfrak q),
 \label{eq:app-unrestricted-ET-leading}
\end{equation}
where
\begin{align}
 A={}&\mathfrak q^2(x^2+xy+x+y^2+y)
 +2x^4y^3+2x^4y^2
 +2x^3y^4+7x^3y^3
 \nonumber\\
 &+5x^3y^2+2x^2y^4+5x^2y^3+3x^2y^2>0.
 \label{eq:app-unrestricted-A-positive}
\end{align}
Consequently \(E_T\) determines a unique rational value
\(\eta=\eta_*(x,y,\mathfrak q)\); no degree drop is possible in the physical
domain.
Every monomial in Eq.~\eqref{eq:app-unrestricted-A-positive} is positive, so
\(A>0\).  After relabeling so that \(x>y\), all factors
\(F_x,F_y,A,x-y\) have fixed positive sign.
The denominators generated by this substitution are, up to positive
monomials, \(F_yA\), \(F_xA\), and \(F_xF_yA\) in \(\eta_*\), the Gibbs
equation, and the common-tension equation, respectively.  They are therefore
nonzero before the primitive numerators are formed.
\end{proof}

\begin{lemma}[The only positive projected component]
\label{lem:app-unrestricted-primary-eliminant}
Every distinct positive physical solution of the equal-Gibbs and
common-tension equations after substitution of \(\eta_*\) satisfies
\(2xy+x+y=1\).
\end{lemma}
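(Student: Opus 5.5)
The plan is to reduce the two remaining equations to polynomials in the three positive unknowns \((x,y,\mathfrak q)\) only, and then eliminate \(\mathfrak q\). By Lemma~\ref{lem:app-unrestricted-legal-reduction}, after relabeling so that \(x>y\), the lapse ratio \(\Lambda_h=xF_y/(yF_x)\) and the value \(\eta=\eta_*(x,y,\mathfrak q)\) are explicit rational functions. The denominators of \(\Lambda_h\) and \(\eta_*\) are positive multiples of \(F_x\), \(F_y\), and \(A\).

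First, I would write the equal-Gibbs condition using Eq.~\eqref{eq:app-unrestricted-scaled-state-functions}:
\begin{equation*}
 \Lambda_h\,y(1+y)\,G_x(\eta_*)=x(1+x)\,G_y(\Lambda_h^2\eta_*).
\end{equation*}
Second, I would write the common-tension condition using Eqs.~\eqref{eq:app-unrestricted-lapse-relation} and \eqref{eq:app-unrestricted-horizon-solved}. These give \(u_x^2=(x-\eta)/(1+x)\) and \(a=u_xB_x/x\). Because \(u_x,u_y>0\), equality of \(a\) is equivalent to the squared relation
\begin{equation*}
 y^2(1+y)(x-\eta_*)B_x^2=x^2(1+x)\bigl(y-\Lambda_h^2\eta_*\bigr)B_y^2,
\end{equation*}
and this squaring introduces no spurious sign branch. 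Clearing the denominators \(F_yA\), \(F_xA\), and \(F_xF_yA\), which the lemma shows are nonzero, gives two primitive numerators \(\mathcal N_G(x,y,\mathfrak q)\) and \(\mathcal N_V(x,y,\mathfrak q)\). Their common zeros contain every physical solution.

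Next, I would compute \(\operatorname{Res}_{\mathfrak q}(\mathcal N_G,\mathcal N_V)\), or a lexicographic Gröbner basis with \(\mathfrak q\) eliminated first, and factor it over \(\mathbb Q[x,y]\). I expect it to factor into three kinds of terms. The first kind consists of manifest factors: powers of \(x\), \(y\), \(x-y\), \(1+x\), \(1+y\), and sums of positive monomials. The second is the factor \(2xy+x+y-1\). The third is a small number of residual factors. The manifest factors are excluded by \(x>y>0\). I would also check the leading coefficients of \(\mathcal N_G\) and \(\mathcal N_V\) in \(\mathfrak q\) for possible degree loss, and confirm that they do not vanish on the positive domain, so that a projected zero actually lifts to a common zero.

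The main obstacle is excluding the residual factors. Each such factor may well have positive real zeros in the \((x,y)\) quadrant, so a projected zero alone is not a contradiction. For each residual component I would first try to solve for \(\mathfrak q\) from the lower-degree generator, using a linear subresultant or the Gröbner element linear in \(\mathfrak q\). I would then test the result against the physical inequalities still unused: \(\mathfrak q>0\), \(F_x,F_y>0\), \(0<\eta_*<x\), and \(0<\Lambda_h^2\eta_*<y\), the last two coming from Eq.~\eqref{eq:app-unrestricted-lapse-relation}. The aim is to show that some required margin becomes a polynomial with a fixed wrong sign on that component. The tools would be positive-coefficient rewrites in the variables \(y\) and \(x-y\), or Sturm-type sign arguments as used for \(R_{\rm slow}\) and \(\mathcal P\). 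If a residual component resists these tests, I would restrict it to the curve where \(\mathcal N_G\) and \(\mathcal N_V\) intersect, and take a second resultant with the offending margin to show that they have no common positive zero. After these exclusions, \(2xy+x+y=1\) is the only component left, which is the claim.
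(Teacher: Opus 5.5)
Your plan is the paper's proof. Substitute \(\Lambda_h\) and \(\eta_*\) into the equal-Gibbs equation and the squared common-tension equation, cancel only the nonvanishing factors \(x,y,1+x,1+y,F_x,F_y,x-y,A\) (equivalently, saturate by their product), and factor \(\operatorname{Res}_{\mathfrak q}\) of the two primitive numerators, of degrees four and six in \(\mathfrak q\). The residual components you anticipate do not occur: apart from \(x-y\) and \(2xy+x+y-1\), every factor of that resultant (for instance \(2x^2+2xy+3x+2y^2+3y+1\) and \(R_+\)) has only positive coefficients, so this necessary condition needs neither the margins \(0<\eta_*<x\), \(0<\Lambda_h^2\eta_*<y\) nor a leading-coefficient check.
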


\begin{proof}
Equal Gibbs values give
\begin{equation}
 E_G=\Lambda_h y(1+y)G_x(\eta)
 -x(1+x)G_y(\Lambda_h^2\eta)=0.
 \label{eq:app-unrestricted-EG}
\end{equation}
Common positive tension, using
Eq.~\eqref{eq:app-unrestricted-horizon-solved}, also gives
\begin{equation}
 E_a=
 \frac{B_x^2(x-\eta)}{x^2(1+x)}
 -\frac{B_y^2(y-\Lambda_h^2\eta)}{y^2(1+y)}=0.
 \label{eq:app-unrestricted-Ea}
\end{equation}
This equation is equality of \(a^2\); the positive sign of \(a\) is already
fixed by the physical tension interval.  Substitute \(\eta_*\) into
Eqs.~\eqref{eq:app-unrestricted-EG} and
\eqref{eq:app-unrestricted-Ea}.  After canceling only powers of
\(x,y,1+x,1+y,F_x,F_y,x-y,A\), let
\(\mathcal P(x,y,\mathfrak q)\) and
\(\mathcal Q(x,y,\mathfrak q)\) be the primitive remaining numerators.  Their
degrees in \(\mathfrak q\) are four and six.  Equivalently, the elimination is
performed in the physical saturation
\begin{align}
 I_{\rm phys}
 ={}&\left\langle
 \operatorname{num}E_G(\eta_*),\operatorname{num}E_a(\eta_*)
 \right\rangle:S_{\rm phys}^{\infty},
 \label{eq:app-unrestricted-saturation-ideal}\\
 S_{\rm phys}
 ={}&xy(1+x)(1+y)F_xF_y(x-y)A .
 \label{eq:app-unrestricted-saturation-factor}
\end{align}
Every factor in \(S_{\rm phys}\) is nonzero and has fixed sign after the
phase ordering, with \(F_x,F_y,A>0\), so the saturation removes no physical
solution.  Exact elimination then gives
\begingroup\small
\begin{align}
 \operatorname{Res}_{\mathfrak q}(\mathcal P,\mathcal Q)
={}&-16x^{16}y^{16}(1+x)^2(1+y)^2(x-y)^8
 (2x+1)(2y+1)
 \nonumber\\
 &\times(1+x+y)^4(2xy+x+y)^2
 (2xy+x+y-1)
 \nonumber\\
 &\times(x^2+x+2y^3+3y^2+y)^2
 \nonumber\\
 &\times(2x^3+3x^2+x+y^2+y)^2
 \nonumber\\
 &\times(2x^2+2xy+3x+2y^2+3y+1)^9
 \nonumber\\
 &\times R_+(x,y),
 \label{eq:app-unrestricted-first-resultant}
\end{align}
\endgroup
with
\begin{align}
 R_+(x,y)={}&4x^3y+2x^3+4x^2y^2
 +18x^2y+5x^2+4xy^3
 \nonumber\\
 &+18xy^2+18xy+4x+2y^3
 \nonumber\\
 &+5y^2+4y+1.
 \label{eq:app-unrestricted-positive-tail}
\end{align}
Every factor in Eq.~\eqref{eq:app-unrestricted-first-resultant} is strictly
positive for \(x,y>0\), except the displayed nonzero phase-separation factor
and the single factor \(2xy+x+y-1\).  A physical turning pair must therefore
satisfy
\begin{equation}
 2xy+x+y=1.
 \label{eq:app-unrestricted-entropy-pair}
\end{equation}
\end{proof}

\begin{lemma}[Exceptional fibers and reconstruction]
\label{lem:app-unrestricted-exceptional-fibers}
On the positive locus \(2xy+x+y=1\), every distinct physical solution has
\(\mathfrak q=xy\); no exceptional value of \(x\) supports an additional
common \(\mathfrak q\)-root.
\end{lemma}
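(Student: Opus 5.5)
On the locus \(2xy+x+y=1\) I would parametrize one phase coordinate by the other, \(y=(1-x)/(1+2x)\), with \(0<x<1\) forcing \(0<y<1\). Then \(\mathcal P\) and \(\mathcal Q\) become univariate polynomials in \(\mathfrak q\) with coefficients in \(\mathbb Q(x)\). The claim is that on this fiber their only common root is \(\mathfrak q=xy\).

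\emph{Step 1.} Substitute \(\mathfrak q=xy\) into both restricted polynomials and verify that both vanish identically in \(x\). This shows that \(\mathfrak q-xy\) divides each of them over \(\mathbb Q(x)\).

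\emph{Step 2.} Divide out this factor, writing \(\mathcal P=(\mathfrak q-xy)\widetilde{\mathcal P}\) and \(\mathcal Q=(\mathfrak q-xy)^e\widetilde{\mathcal Q}\). Then compute \(\operatorname{Res}_{\mathfrak q}(\widetilde{\mathcal P},\widetilde{\mathcal Q})\) as a rational function of \(x\). I expect this resultant to factor into powers of \(x\), \(1-x\), \(1+x\), \(1+2x\), terms like \(2x^2+2x+1\), and at most a few nonmanifest polynomials. I would then certify, as in Eqs.~\eqref{eq:app-C-positive-bound}--\eqref{eq:app-P-positive-bound}, that each factor is nonzero on \(0<x<1\), using either monomial lower bounds or Sturm isolation.

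\emph{Step 3.} A nonzero resultant means that no \(x\) in the open interval carries a second common root. The remaining risk is degree drop: a common root could escape to infinity, or be lost where a leading coefficient vanishes. So I would also compute \(\operatorname{lc}_{\mathfrak q}\widetilde{\mathcal P}\) and \(\operatorname{lc}_{\mathfrak q}\widetilde{\mathcal Q}\) and show they are nonzero on \((0,1)\). Where they are not, I would evaluate the leading-coefficient resultants at those exceptional \(x\) directly, which gives the ``no exceptional value'' clause.

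\emph{Step 4.} There is one further case. If \(\mathfrak q-xy\) appears only to first order in one polynomial, a distinct solution at \(\mathfrak q=xy\) is automatically consistent. If instead the cofactor itself has a root at \(\mathfrak q=xy\), that is harmless, since it yields the same value. The physical constraints \(F_x,F_y,A>0\) from Lemma~\ref{lem:app-unrestricted-legal-reduction} then exclude spurious sign branches. Note that \(F_x>0\) at \(\mathfrak q=xy\) should hold, and I would check it directly.

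The main obstacle is Step 2. The cofactor resultant may be a large polynomial in \(x\) with mixed signs, and certifying its nonvanishing on \((0,1)\) may require exact Sturm sequences rather than simple bounds. If a genuine root appears in \((0,1)\), I would reconstruct \(\mathfrak q\) there via the linear subresultant. I would then show the resulting \(\mathfrak q\) is negative, or violates \(h^2<x\) or \(\eta_*>0\), so that it is not physical.
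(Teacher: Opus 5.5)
Your plan is essentially the paper's proof: on $y=(1-x)/(2x+1)$ both primitive numerators factor as $-(2x+1)L$ times cofactors $\mathcal P_3,\mathcal Q_5$. Here $L=(2x+1)\mathfrak q+x^2-x=(2x+1)(\mathfrak q-xy)$, and an explicit $\operatorname{Res}_{\mathfrak q}(\mathcal P_3,\mathcal Q_5)$, nonzero on the distinct-phase set, forces $L=0$. One correction: that resultant contains $(2x^2+2x-1)^6$, which vanishes at $x=(\sqrt3-1)/2\in(0,1)$, precisely the excluded point $x=y$, so nonvanishing must be certified on the distinct-phase set rather than on all of $(0,1)$; your separate degree-drop check is redundant, because a nonzero specialized Sylvester resultant already excludes any common finite root.
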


\begin{proof}
It remains to reconstruct \(\mathfrak q\), including the exceptional values at
which a generic polynomial gcd could increase.  Set
\begin{equation}
 y=\frac{1-x}{2x+1}.
 \label{eq:app-unrestricted-yx}
\end{equation}
Positivity gives \(0<x<1\), and \(x\ne y\) is equivalent to
\(2x^2+2x-1\ne0\).  On this locus the two primitive equations factor as
\begin{align}
 \mathcal P&=-(2x+1)L\mathcal P_3,
 &\deg_{\mathfrak q}\mathcal P_3&=3,\nonumber\\
 \mathcal Q&=-(2x+1)L\mathcal Q_5,
 &\deg_{\mathfrak q}\mathcal Q_5&=5,
 \label{eq:app-unrestricted-restricted-PQ}\\
 L&=(2x+1)\mathfrak q+x^2-x .
 \nonumber
\end{align}
The possible common roots of the two residual cofactors are excluded by
\begin{align}
 \operatorname{Res}_{\mathfrak q}(\mathcal P_3,\mathcal Q_5)
={}&98304x^8(x-1)^8(x+1)^2(x+2)^2
 \nonumber\\
 &\times(2x+1)^{10}(2x^2+1)^8
 \nonumber\\
 &\times(x^2+x+1)^2(2x^2+2x-1)^6
 \nonumber\\
 &\times(2x^2+4x+3)^5
 \nonumber\\
 &\times(4x^2+2x+3)A_5(x)^2
 \nonumber\\
 &\times B_5(x)^2,
 \label{eq:app-unrestricted-exceptional-resultant}\\
 A_5(x)={}&4x^5+10x^4+9x^3+3x^2+1,
 \nonumber\\
 B_5(x)={}&4x^5+10x^4+9x^3
 +2x^2-x+3.
 \nonumber
\end{align}
The right-hand side is nonzero on the distinct-phase interval.  In
particular, \(B_5(x)>3-x>0\).  Hence \(L=0\), and
\begin{equation}
 \mathfrak q=\frac{x(1-x)}{2x+1}=xy.
 \label{eq:app-unrestricted-kappa-xy}
\end{equation}

Finally, \(B_x=x(1+x+y)\) and \(B_y=y(1+x+y)\).  The common-\(a\) horizon
relation \eqref{eq:app-unrestricted-horizon-solved} therefore yields
\begin{equation}
 u_1=\frac{a}{1+x+y}=u_2.
 \label{eq:app-unrestricted-common-horizon-conclusion}
\end{equation}
\end{proof}

\begin{proof}[Proof of Proposition~\ref{prop:app-unrestricted-common-horizon-rigidity}]
Lemmas~\ref{lem:app-unrestricted-legal-reduction}--
\ref{lem:app-unrestricted-exceptional-fibers} show that every turning pair
in the unrestricted physical domain satisfies
Eq.~\eqref{eq:app-unrestricted-common-horizon-conclusion} and therefore
belongs to the common-horizon sector.
Theorem~\ref{thm:common-horizon-rigidity} then gives the closed locus in
Theorem~\ref{thm:cmetric-exact-turning-main}, including its exact tension
range.  The elimination has used the thermodynamic positivity conditions
only to remove factors with fixed sign; outer-horizon, slow-acceleration,
canonical-stability, and global-minimum conditions are established by the
exact arguments for the surviving locus.
\end{proof}

Combining Theorem~\ref{thm:common-horizon-rigidity} with the unrestricted
elimination in \ref{app:unrestricted-turning-classification} proves that the
same locus exhausts all finite-separation physical Maxwell turns in the
standard single-string ensemble.

The turning locus, its physicality and stability, the endpoint laws, and
global phase selection have therefore been established by exact polynomial
identities and real-root isolation.  The numerical continuation below is used
only to display the surrounding wall complex.

\subsection{Numerical wall construction}

For each control pair \((q,\mu)\), the construction enumerates all real roots
of the quartic horizon equation in \(0<u<1\) and then imposes the
positive-temperature outer-horizon and slow-acceleration conditions.  Fold
sheets are obtained from the zeros of \(\mathscr D_z\tau\) on each connected
physical interval.  Maxwell candidates solve the two horizon equations
together with equal temperature and equal Gibbs value, after which they are
compared with every physical equilibrium at the same controls.

Over \(0.015\leq\mu\leq0.245\), this continuation resolves both the
slow-acceleration and bulk-extremal exit faces and reproduces the chamber
structure shown in Fig.~\ref{fig:cmetric-wall-atlas}.  The sampling grids,
direct two-phase checks, and residual data are included in the Supplementary
Material.

\section{Small-tension Maxwell boundary-layer analysis}
\label{app:cmetric-blowup-proof}

This appendix proves Theorem~\ref{thm:cmetric-maxwell-blowup} directly from
the unsquared reduced C-metric equations.  The desingularization keeps the
two state-space charts separate, and every coefficient follows algebraically
from the exact equations.
Throughout, \(I=[r_0,r_1]\Subset(0,\infty)\), and \(O_I\) denotes a remainder
uniform on that fixed compact interval.

\subsection{Desingularized two-phase equations}

\begin{proof}[Proof of Theorem~\ref{thm:cmetric-maxwell-blowup}]
The horizon equation is first solved for \(p\), which avoids choosing a
branch of the quartic root during the expansion.  The two state-space charts
are
\begin{align}
 u_L&=\mu+2\mu^2+a\mu^3,
 &z_L&=\mu^2+4\mu^3+b\mu^4,\nonumber\\
 u_S&=\mu+2\mu^2+c\mu^3,
 &z_S&=\mu+2\mu^2+\frac92\mu^3+11\mu^4+k\mu^5.
 \label{eq:cmetric-blowup-charts}
\end{align}
Substitution in the exact horizon surface gives
\begin{align}
 p_L={}&\mu^2+(-4a+2b-10)\mu^4
 +8(a-b+3)\mu^5+O(\mu^6),\nonumber\\
 p_S={}&\mu^2+A\mu^4+A_1\mu^5+O(\mu^6),
 \label{eq:cmetric-blowup-pressure-expansion}\\
 A={}&-c^2+6c-2k+\frac{191}{4},\nonumber\\
 A_1={}&8c^2-44c+12k-129.
 \nonumber
\end{align}
The lapse factors have different orders,
\begin{equation}
 \alpha_L=1+O(\mu^2),
 \qquad
 \alpha_S=\mu\sqrt{A}\left(1+\frac{A_1}{2A}\mu+O(\mu^2)\right).
 \label{eq:cmetric-blowup-lapse}
\end{equation}
Writing the temperature as \(\tau=N/\alpha\), its required numerator
coefficients are
\begin{align}
 N_L={}&4\mu-2(5a-2b+7)\mu^3
 +4(5a-4b+10)\mu^4+O(\mu^5),\nonumber\\
 N_S={}&2(c-3)\mu^2-4(2c-5)\mu^3+O(\mu^4).
 \label{eq:cmetric-blowup-temperature-numerators}
\end{align}
It is convenient to write \(\mathfrak g=J/\alpha\).  Direct reduction by
the horizon equation gives
\begin{align}
 J_L={}&\frac{a-5}{2}\mu+(3-2a)\mu^2+O(\mu^3),\nonumber\\
 J_S={}&C_g\mu^2+C_{g,1}\mu^3+O(\mu^4),
 \label{eq:cmetric-blowup-value-numerators}\\
 C_g={}&-\frac{4c^2-22c+8k-197}{4},\nonumber\\
 C_{g,1}={}&8c^2-43c+12k-131.
 \nonumber
\end{align}

After division by the indicated powers of \(\mu\), the two pressure
equations, equal temperature, and equal critical value have the limiting
system
\begin{equation}
 \begin{split}
 -4a+2b-10-r^2&=0,\\
 A-r^2&=0,\\
 \frac{2(c-3)}{\sqrt A}-4&=0,\\
 \frac{C_g}{\sqrt A}-\frac{a-5}{2}&=0.
 \end{split}
 \label{eq:cmetric-blowup-leading-system}
\end{equation}
Its physical solution is
\begin{equation}
 a=c=3+2r,
 \qquad
 b=11+4r+\frac{r^2}{2},
 \qquad
 k=\frac{227}{8}-\frac52r^2.
 \label{eq:cmetric-blowup-leading-solution}
\end{equation}
The Jacobian of Eq.~\eqref{eq:cmetric-blowup-leading-system} with respect to
\((a,b,c,k)\) has determinant
\begin{equation}
 \det D_{(a,b,c,k)}\mathcal E_0=\frac4r.
 \label{eq:cmetric-blowup-jacobian}
\end{equation}
It is nonzero for every \(r>0\).  The analytic implicit-function theorem,
applied uniformly on \([r_0,r_1]\), therefore produces the unique nearby
Maxwell pair.  Solving the first correction gives
\begin{equation}
 (a_1,b_1,c_1,k_1)=
 \left(8r+2,\,2r^2+24r+24,\,8r+2,
 \frac{303}{4}-15r^2\right),
 \label{eq:cmetric-blowup-first-correction}
\end{equation}
which proves Eqs.~\eqref{eq:cmetric-blowup-large-state} and
\eqref{eq:cmetric-blowup-small-state}.  The low-\(z\) chart is regular in the
lapse and yields Eqs.~\eqref{eq:cmetric-blowup-tau} and
\eqref{eq:cmetric-blowup-gibbs} without squaring either equality.  Combining
Eq.~\eqref{eq:cmetric-blowup-tau} with
Eq.~\eqref{eq:cmetric-blowup-boundary-series} proves
Eq.~\eqref{eq:cmetric-universal-maxwell-curve}.

The solution lies in the physical chamber.  The low-\(z\) phase has
\(\alpha_L=1+O_I(\mu^2)\), and the positive term proportional to
\(p/z_L^2\) keeps its boundary polynomial strictly positive.  For the
high-\(z\) phase, write the angular coordinate in the boundary polynomial as
\(v=\mu y\).  Uniformly for bounded \(y\), direct substitution gives
\begin{equation}
 C(\mu y)=\mu^2\bigl[(y-1)^2+r^2\bigr]+O_I(\mu^3).
 \label{eq:cmetric-blowup-slow-margin}
\end{equation}
Outside this angular scale the leading \(v^2\) term is positive.  Hence the
slow-acceleration margin is positive for \(r\geq r_0\) and sufficiently small
\(\mu\).  The angular metric function tends uniformly to one.

\medskip
\noindent\textit{Outer-horizon selection.}
Put \(w=z^2\) and write
\begin{align}
 \mathcal H(u,w)
 &=(1-u^2)\bigl(u^2-2\mathfrak a u+w\bigr)
 +\frac{u^4p(1+w)^2}{wd^2},\nonumber\\
 \mathfrak a&=\frac{\mu(1+w)}d .
 \label{eq:cmetric-blowup-Hw}
\end{align}
At a root with \(0<u<1\), the last term in
Eq.~\eqref{eq:cmetric-blowup-Hw} is positive.  Consequently
\begin{equation}
 u^2-2\mathfrak a u+w<0,
 \qquad
 0<u<2\mathfrak a,
 \qquad
 0<w<\mathfrak a^2.
 \label{eq:cmetric-blowup-core-bound}
\end{equation}
These are exact inequalities, before any expansion.  They put every
possible larger root in a compact \(u/\mu\) interval.

Let \(w_L(\mu,r)\) and \(w_S(\mu,r)\) be the two functions in
Eq.~\eqref{eq:cmetric-blowup-w-states}, and set
\begin{equation}
 V_L=\frac{u_L}{\mu},
 \qquad
 C_S=\frac{u_S-\mu-2\mu^2}{\mu^3}.
 \label{eq:cmetric-blowup-root-coordinates}
\end{equation}
Direct expansion of the polynomial gives
\begin{equation}
 \mu^{-2}\mathcal H(\mu V,w_L)
 =V(V-1)(V^2+V+2)+O_I(\mu)
 \label{eq:cmetric-blowup-low-root-factor}
\end{equation}
uniformly on compact \(V\)-intervals.  Analytic division by the known
root defines the continuous quotient
\begin{align}
 \mathcal Q_L(V;\mu,r)
 &=\frac{\mathcal H(\mu V,w_L)}{\mu^2(V-V_L)},\nonumber\\
 \mathcal Q_L(V_L;\mu,r)
 &=
 \left.\partial_V\bigl[\mu^{-2}\mathcal H(\mu V,w_L)\bigr]
 \right|_{V=V_L} .
 \label{eq:cmetric-blowup-low-root-quotient}
\end{align}
Equation~\eqref{eq:cmetric-blowup-low-root-factor} implies
\begin{equation}
 \mathcal Q_L(V;\mu,r)
 \longrightarrow V(V^2+V+2)>0
 \label{eq:cmetric-blowup-low-root-sign}
\end{equation}
uniformly for \(r\in I\), a fixed \(0<\varepsilon<1\), and
\(V\in[1-\varepsilon,2+\varepsilon]\).
In view of Eq.~\eqref{eq:cmetric-blowup-core-bound}, every root larger
than \(u_L\) would lie in this interval and would make the left-hand side
of Eq.~\eqref{eq:cmetric-blowup-low-root-quotient} vanish.  This is
impossible for sufficiently small \(\mu\), so \(u_L\) is the outer root.

The high-\(z\) polynomial has a double leading root and must be resolved
one weight further.  Put \(u_Y=\mu+Y\mu^2\) and
\(u_C=\mu+2\mu^2+C\mu^3\).  Uniformly for bounded \(Y\) and \(C\),
\begin{align}
 \mu^{-4}\mathcal H(u_Y,w_S)
 &=(Y-2)^2-6\mu(Y-2)+O_I(\mu^2),
 \label{eq:cmetric-blowup-high-root-first}\\
 \mu^{-6}\mathcal H(u_C,w_S)
 &=[C-(3+2r)][C-(3-2r)]+O_I(\mu).
 \label{eq:cmetric-blowup-high-root-factor}
\end{align}
The preceding \(u=\mu V\) expansion starts with \((V-1)^2\).
Successive domination in these three displayed expansions shows that
every root in the high-\(z\) cluster has, in order,
\(V=1+O_I(\mu)\), \(Y=2+O_I(\mu)\), and bounded \(C\).  Thus there
are precisely two roots in this cluster,
\begin{equation}
 C_-=3-2r+O_I(\mu),
 \qquad
 C_+=3+2r+O_I(\mu),
 \label{eq:cmetric-blowup-high-roots}
\end{equation}
and the expansion of \(u_S\) identifies \(C_S=C_+\).  Dividing by this
root gives
\begin{equation}
 \mathcal Q_S(C;\mu,r)=
 \frac{\mathcal H(\mu+2\mu^2+C\mu^3,w_S)}
 {\mu^6(C-C_S)}
 \longrightarrow C-(3-2r).
 \label{eq:cmetric-blowup-high-root-quotient}
\end{equation}
The quotient is again extended at \(C=C_S\) by differentiation.  For
\(C\geq C_S\), its limiting value is at least \(4r\geq4r_0\).
Hence it is uniformly positive for small \(\mu\), and \(u_S\) has no
larger root.  The two selected horizons are therefore outer roots.  The same
expansion gives
\begin{equation}
 \frac{\tau(C;\mu,r)}\mu
 =\frac{2(C-3)}r+O_I(\mu)
 \label{eq:cmetric-blowup-fixed-w-temperature}
\end{equation}
on the two roots in Eq.~\eqref{eq:cmetric-blowup-high-roots}.  The inner
root \(C_-\) has limiting temperature \(-4\), whereas the outer root
\(C_+\) has limiting temperature \(+4\).  Thus the choice of the outer
root is also fixed by the unsquared positive-temperature equation.

At fixed \(p\),
\begin{align}
 \left(\frac{\partial\tau}{\partial z}\right)_{p,L}
 &=-\frac1\mu+O(1),\nonumber\\
 \left(\frac{\partial\tau}{\partial z}\right)_{p,3}
 &=\frac{25}{81\mu}+O(1),\nonumber\\
 \left(\frac{\partial\tau}{\partial z}\right)_{p,S}
 &=-\frac{1}{r^2\mu^4}\bigl(1+O_I(\mu)\bigr).
 \label{eq:cmetric-blowup-stability}
\end{align}
The entropy derivative is negative on the positive-temperature physical
sheet.  Thus \(L\) and \(S\) are strict minima, while the third state is a
maximum.  Lemma~\ref{lem:cmetric-uniform-root-exhaustion} below and the strict
Gibbs-value gap show that the two minima exhaust the physical global minima
in this boundary layer.  Finally,
\(s_S-s_L=-\mu^{-2}(1+O_I(\mu))\), so the envelope identity implies that
\(L\) wins above coexistence and \(S\) below it.
\end{proof}

\begin{lemma}[Uniform exhaustion of the boundary-layer equilibria]
\label{lem:cmetric-uniform-root-exhaustion}
Let \(I=[r_0,r_1]\Subset(0,\infty)\),
\begin{equation}
 p=\mu^2(1+\mu^2r^2),\qquad r\in I,
\end{equation}
and let \(\tau_M(\mu,r)\) be the Maxwell temperature constructed above.
There exist constants
\begin{equation}
 \mu_0>0,\qquad c_0>0,\qquad C_0<\infty,\qquad \delta_0>0,
 \label{eq:cmetric-uniform-exhaustion-constants}
\end{equation}
depending only on \(I\), with the following property.  For every
\(0<\mu<\mu_0\) and every \(r\in I\), each solution of the unsquared system
\begin{equation}
 \begin{aligned}
  \mathcal H(u,z^2)&=0, & \alpha^2&>0, & 0<u&<1,\quad z>0,\\
  \tau(u,z;p,\mu)&=\tau_M(\mu,r)>0
 \end{aligned}
 \label{eq:cmetric-uniform-unsquared-system}
\end{equation}
lies in exactly one of the following two scale-separated charts:
\begin{align}
 \mathcal U_H={}&\left\{
 \left|\frac u\mu-1\right|\le C_0\mu,
 \quad
 \left|\frac{z^2}{\mu^2}-1\right|\le C_0\mu
 \right\},
 \label{eq:cmetric-uniform-high-chart}\\
 \mathcal U_L={}&\left\{
 c_0\le\frac u\mu\le2-c_0,
 \quad
 c_0\le\frac z{\mu^2}\le C_0
 \right\}.
 \label{eq:cmetric-uniform-low-chart}
\end{align}
The high chart contains exactly one positive-temperature solution, the
boundary-linked branch \(S\).  The low chart contains exactly two solutions,
the large-entropy branch \(L\) and the third branch \(3\).  More precisely,
in the disjoint union of the two coordinate charts, they lie in pairwise
disjoint \(\delta_0\)-neighborhoods of
\begin{align}
 (C,\Omega)_S&=(3+2r,121-5r^2),
 \label{eq:cmetric-uniform-high-center}\\
 (V,W)_L&=(1,1),\qquad
 (V,W)_3=\left(\frac95,\frac{27}{5}\right),
 \label{eq:cmetric-uniform-low-centers}
\end{align}
where
\begin{align}
 \mathcal U_H:\quad
 u&=\mu+2\mu^2+C\mu^3,\nonumber\\
 z^2&=\mu^2+4\mu^3+13\mu^4+40\mu^5+\Omega\mu^6,
 \label{eq:cmetric-uniform-high-coordinates}\\
 \mathcal U_L:\quad
 u&=\mu V,\qquad z=\mu^2W.
 \label{eq:cmetric-uniform-low-coordinates}
\end{align}
In particular, no equilibrium occurs at an intermediate scale between
\(z=O(\mu)\) and \(z=O(\mu^2)\), and the smallness threshold \(\mu_0\) is
uniform for \(r\in I\).
\end{lemma}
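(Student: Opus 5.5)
The plan is to localize all solutions of the unsquared system \eqref{eq:cmetric-uniform-unsquared-system} in two steps. First come a priori bounds that hold before any expansion. Then come scale-by-scale domination arguments that force each solution into \(\mathcal U_H\) or \(\mathcal U_L\), after which a uniform implicit-function count is done in each chart.

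\textbf{A priori localization.} The exact core bound \eqref{eq:cmetric-blowup-core-bound} gives \(0<u<2\mathfrak a\) and \(0<w<\mathfrak a^2\), with \(\mathfrak a=\mu(1+w)/d=\mu(1+O(\mu))\). So every root has \(u/\mu\in(0,2+O(\mu))\) and \(z/\mu\le1+O(\mu)\). Positivity of \(\alpha^2=1+w-wd^2/p\), with \(p=\mu^2(1+\mu^2r^2)\), gives a lower bound on \(z\) relative to the boundary value only when \(z\sim\mu\). For small \(z\) the lapse is automatically near one.

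I then split according to the size of \(\zeta=z/\mu\in(0,1]\).

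\begin{itemize}
\item \emph{Bounded-away regime, \(\zeta\ge\eta\).} Write \(u=\mu V\). Then \(\mu^{-2}\mathcal H\) has the limit \(V^2-2V+\zeta^2+\zeta^{-2}V^4\cdot(\text{limit of }p/\mu^2)\), up to a positive factor. This is \(\ge(V-1)^2+\) positive terms unless \(\zeta\to1\) and \(V\to1\). Hence such roots are pushed into \(\zeta=1+o(1)\), \(V=1+o(1)\).
\item \emph{Refinement to \(\mathcal U_H\).} Iterate the expansions \eqref{eq:cmetric-blowup-high-root-first} and \eqref{eq:cmetric-blowup-high-root-factor}, now with \(w\) free through \(\Omega\). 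Use the temperature equation \(\tau=\tau_M=4\mu+O(\mu^3)\) together with the lapse \(\alpha\approx\mu\sqrt{A}\) to pin \(\Omega\). The key identity should be that, after substituting \(u_C\), the horizon and temperature equations reduce at leading order to \(A(C,\Omega)=r^2\) and \(2(C-3)/\sqrt{A}=4\). Their Jacobian in \((C,\Omega)\) is nonzero, which yields exactly one solution near \((3+2r,121-5r^2)\). The sign of the temperature, as in \eqref{eq:cmetric-blowup-fixed-w-temperature}, excludes the \(C_-\) root.
\item \emph{Small-\(\zeta\) regime, \(\zeta\le\eta\).} The dominant balance of \(\mathcal H\) forces \(z\sim\mu^2\). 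If \(z\gg\mu^2\), the \(p/w\) term is negligible and \(u^2-2\mu u+w\approx0\) gives \(u\approx2\mu\) or \(u\approx w/(2\mu)\). In either case I will show that \(\tau\) is inconsistent with \(4\mu\): it is of the wrong order, or the lapse is \(\approx1\) while \(H_u\) has the wrong magnitude. The same kind of argument excludes \(z\ll\mu^2\), where the \(p/w\) term dominates and forces \(u\ll\mu\), making \(\tau\) too large.
\end{itemize}

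Inside \(\mathcal U_L\), with \(u=\mu V\) and \(z=\mu^2W\), the limiting equations are the horizon relation \(V(V-1)(\cdots)\) generalized to free \(W\), together with \(\tau/\mu\to\) an explicit rational function of \((V,W)\) set equal to \(4\). I would compute these two limits and check that their real solutions with \(c_0\le V\le2-c_0\) and \(W>0\) are exactly \((1,1)\) and \((9/5,27/5)\). Both are nondegenerate, and the Jacobian determinant should be nonzero at each. Uniform implicit-function arguments over \(r\in I\), which are compact, then give exactly two nearby solutions and fix \(\mu_0,c_0,C_0,\delta_0\).

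The main obstacle is the intermediate-scale exclusion \(\mu^2\ll z\ll\mu\), together with the transition zones near its ends. There the uniform domination estimates must be patched across overlapping windows \(z\in[K\mu^2,\eta\mu]\). I would try first to write \(\tau\) exactly as \(zdH_u/((1+w)u^2\alpha)\) and bound it below by a quantity \(\gg\mu\) on those windows, using only the exact inequalities \eqref{eq:cmetric-blowup-core-bound} and \(\alpha\le\sqrt{1+w}\). Compactness in the rescaled variables then closes the argument.
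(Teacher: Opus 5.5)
Your architecture matches the paper's: exact a priori bounds, a split according to $z/\mu$, and uniform implicit-function counts with residual gaps in the two charts. Your terminal systems are also right. $A(C,\Omega)=r^2$ is exactly $C^2-6C+r^2+\Omega-112=0$, and the low-chart limits are $V^2-2V+V^4/W^2=0$ together with $2(3-V)/\sqrt{V(2-V)}=4$.

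The gap is in how solutions are driven into $\mathcal U_H$. In your regime $\zeta=z/\mu\ge\eta$ the charge term is mis-scaled. With $u=\mu V$, $w=\mu^2\zeta^2$ and $p\sim\mu^2$, one has $u^4p(1+w)^2/(wd^2)\sim\mu^4V^4/\zeta^2$, so $\mu^{-2}\mathcal H\to V^2-2V+\zeta^2$. Its zero set is the whole arc $\zeta^2=V(2-V)$, $0<V<2$. Along that arc the quartic has genuine simple roots with $\alpha^2\approx1-\zeta^2>0$, so the horizon equation alone localizes nothing. What removes the arc is the unsquared temperature. There $\tau\to2\zeta(V-1)/(V^2\sqrt{1-\zeta^2})$, an $O(1)$ quantity, whereas $\tau_M=4\mu+O(\mu^3)\to0$. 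This forces $V\to1$ and then $\zeta\to1$.

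Even then, reaching the corner $(V,\zeta)=(1,1)$ does not place you in the $(C,\Omega)$ chart. Iterating the fixed-$w$ expansions in $Y$ and $C$ presupposes exactly the boundedness that must be proved. At the corner the horizon has a double root and $w$ is free. With $x=V-1$ and $y=z^2/\mu^2-1$, the horizon only gives $y=4\mu+4\mu x-x^2+\cdots$, and then $\alpha^2\approx x^2$. So fractional scales $\mu\ll|x|\ll1$ are invisible to the horizon equation.

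The paper excludes them with a temperature-driven bootstrap:
\begin{itemize}
\item If $x^2/\mu$, and then $|x|/\mu$, were unbounded, then $\alpha/|x|\to1$ and $\tau\to2\operatorname{sgn}x$, contradicting $\tau_M\to0^+$.
\item With $X=x/\mu$ bounded, $\tau=2(X-2)/\sqrt{r^2+(X-2)^2}+O(\mu)$ forces $X-2=O(\mu)$.
\item Successive division of the horizon equation at weights four, five and six then bounds $\omega_4$, $\omega_5$ and $\Omega$.
\end{itemize}
Only after this is your leading $(C,\Omega)$ system legitimate and the residual-gap uniqueness available.

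This corner, not the band $\mu^2\ll z\ll\mu$, is the delicate step. The paper disposes of that band through exact identities for $\Delta_\mu$, essentially as you anticipate. The branch $u\approx w/(2\mu)$ has $H_u/\mu\to-2$, hence negative temperature, and the branch $u\approx2\mu$ has $\tau/\mu\to\infty$. There too it is the temperature, not the dominant balance of $\mathcal H$, that forces $z\sim\mu^2$. Also, positivity of $\alpha^2$ gives an upper bound $z^2<p/(d^2-p)$, not a lower one.
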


\begin{proof}
Uniform compactness follows without selecting a root of the horizon quartic.
The lapse inequality gives
\begin{equation}
 0<z^2<\frac{p}{d^2-p}=O_I(\mu^2),
 \label{eq:cmetric-uniform-lapse-bound}
\end{equation}
for sufficiently small \(\mu\).  At a root of \(\mathcal H\), positivity of
the charge term gives the exact inequalities
\begin{align}
 u^2-2\mathfrak a u+z^2&<0,\qquad
 0<u<2\mathfrak a,\nonumber\\
 0<z^2&<\mathfrak a^2,\qquad
 \mathfrak a=\frac{\mu(1+z^2)}d.
 \label{eq:cmetric-uniform-core-bounds}
\end{align}
Thus \(V=u/\mu\) and \(Z=z/\mu\) range in a compact set independent of
\(r\in I\) and small \(\mu\).

Consider a sequence of solutions of
Eq.~\eqref{eq:cmetric-uniform-unsquared-system} with \(\mu\to0\).  After
passing to a subsequence, \(r\to r_*\in I\) and \((V,Z)\) converges.  If
\(Z\to Z_*\in(0,1)\), the leading horizon and unsquared-temperature
relations are
\begin{equation}
 Z_*^2=V_*(2-V_*),\qquad
 \tau\longrightarrow
 \frac{2Z_*(V_*-1)}{V_*^2\sqrt{1-Z_*^2}}.
 \label{eq:cmetric-uniform-interior-limit}
\end{equation}
Because \(\tau_M=4\mu+O_I(\mu^3)\), the second limit must vanish.  It gives
\(V_*=1\), after which the first relation would give \(Z_*=1\), a
contradiction.  The lapse inequality gives \(Z_*\le1\).  If \(Z_*=1\), the
leading horizon equation reads \(1=V_*(2-V_*)\) and hence \(V_*=1\).
Thus every sequence with a nonzero \(Z\)-limit converges to the single corner
\((V,Z)=(1,1)\).

There is no fractional scale hidden at this corner.  Put
\(x=V-1\) and \(y=z^2/\mu^2-1\).  Exact Taylor division of the rational
equations first gives the coarse uniform balances
\begin{align}
 \frac{\mathcal H}{\mu^2}
 &=x^2+y-4\mu+o_K(x^2+|y|+\mu),\nonumber\\
 \alpha^2&=4\mu-y+o_K(|y|+\mu),\nonumber\\
 \frac{H_u}{\mu}
 &=2x-4\mu+o_K(|x|+\mu).
 \label{eq:cmetric-uniform-coarse-bootstrap}
\end{align}
The horizon identity expresses \(y\) as
\(4\mu-x^2+o_K(x^2+\mu)\).  If \(x^2/\mu\) were unbounded, the lapse
balance would give \(\alpha/|x|\to1\), while the unsquared temperature
would tend to \(2\operatorname{sgn}x\), contrary to
\(\tau_M\to0^+\).  Hence
\begin{equation}
 x^2=O_I(\mu),\qquad y=O_I(\mu).
 \label{eq:cmetric-uniform-coarse-trap}
\end{equation}
On this bootstrapped set, Taylor division with integral remainder gives the
sharper identities below.  After the displayed vanishing powers have been
extracted, every denominator in the desingularized rational coefficients is
bounded away from zero uniformly on \(I\):
\begin{align}
 \frac{\mathcal H}{\mu^2}
 &=x^2+y-4\mu-4\mu x+O_I(\mu^2),\nonumber\\
 \alpha^2&=4\mu-y+4\mu y+O_I(\mu^2),\nonumber\\
 \frac{H_u}{\mu}
 &=2x-4\mu+O_I(\mu^2).
 \label{eq:cmetric-uniform-first-balance}
\end{align}
The horizon equation first gives
\begin{equation}
 y=4\mu+4\mu x-x^2+O_I(\mu^2).
 \label{eq:cmetric-uniform-y-first}
\end{equation}
If \(|x|/\mu\) were unbounded, substitution of
Eq.~\eqref{eq:cmetric-uniform-y-first} in the second line of
Eq.~\eqref{eq:cmetric-uniform-first-balance} would give
\(\alpha/|x|\to1\).  The third line, together with \(u/\mu,z/\mu\to1\),
would then give
\(\tau\to2\operatorname{sgn}x\), contrary to \(\tau_M\to0^+\).
Therefore \(x=O_I(\mu)\), and
Eq.~\eqref{eq:cmetric-uniform-y-first} gives
\(y=4\mu+O_I(\mu^2)\).

Set \(X=x/\mu\), which is now uniformly bounded.  Substitution in the exact
horizon, lapse, and unsquared-temperature formulas sharpens the balance to
\begin{align}
 \frac{\alpha^2}{\mu^2}
 &=r^2+(X-2)^2+O_I(\mu),\nonumber\\
 \frac{H_u}{\mu^2}
 &=2(X-2)+O_I(\mu),\nonumber\\
 \tau&=\frac{2(X-2)}
 {\sqrt{r^2+(X-2)^2}}+O_I(\mu).
 \label{eq:cmetric-uniform-X-balance}
\end{align}
The target temperature is \(O_I(\mu)\).  Since \(r_0\le r\le r_1\), the last
line of Eq.~\eqref{eq:cmetric-uniform-X-balance} is uniformly bounded away
from zero whenever \(|X-2|\) is bounded away from zero, and its derivative at
\(X=2\) is \(2/r\).  Hence
\begin{equation}
 X-2=O_I(\mu),\qquad
 C:=\frac{X-2}{\mu}=O_I(1).
 \label{eq:cmetric-uniform-C-bound}
\end{equation}

The remaining weights are fixed by successive exact division of the horizon
equation.  With
\begin{align}
 \omega_4&=\frac{w-\mu^2-4\mu^3}{\mu^4},\nonumber\\
 \omega_5&=\frac{w-\mu^2-4\mu^3-13\mu^4}{\mu^5},\nonumber\\
 \Omega&=\frac{w-\mu^2-4\mu^3-13\mu^4-40\mu^5}{\mu^6},
 \label{eq:cmetric-uniform-successive-coordinates}
\end{align}
the weight-four and weight-five equations give
\begin{equation}
 \omega_4=13-(X-2)^2+O_I(\mu),
 \qquad
 \omega_5=40+O_I(\mu).
 \label{eq:cmetric-uniform-successive-bounds}
\end{equation}
Thus \(\Omega\) is bounded, and the weight-six equation is
\begin{equation}
 \Omega=112+6C-C^2-r^2+O_I(\mu).
 \label{eq:cmetric-uniform-Omega-bound}
\end{equation}
Consequently
\begin{equation}
 z^2=\mu^2+4\mu^3+13\mu^4+40\mu^5+O_I(\mu^6),\qquad
 u=\mu+2\mu^2+O_I(\mu^3).
 \label{eq:cmetric-uniform-high-trap}
\end{equation}
This proves the uniform high-chart bounds in
Eq.~\eqref{eq:cmetric-uniform-high-chart} and the boundedness of the
coordinates \((C,\Omega)\) in
Eq.~\eqref{eq:cmetric-uniform-high-coordinates}.  Let \(B_H\) be a fixed
compact rectangle containing all such \((C,\Omega)\) for
\(r\in I\) and sufficiently small \(\mu\).

In those coordinates define the desingularized residual map
\begin{equation}
 \Phi_H(C,\Omega;\mu,r)=
 \left(
 \frac{\mathcal H}{\mu^6},
 \frac{\tau-\tau_M}{\mu}
 \right).
 \label{eq:cmetric-uniform-high-residual}
\end{equation}
On every compact set containing all trapped candidates, it converges in
\(C^1\), uniformly for \(r\in I\), to
\begin{equation}
 \Phi_{H,0}(C,\Omega;r)=
 \left(
 C^2-6C+r^2+\Omega-112,
 \frac{2(C-3)}r-4
 \right).
 \label{eq:cmetric-uniform-high-limit-map}
\end{equation}
This map has the unique zero in
Eq.~\eqref{eq:cmetric-uniform-high-center}, and
\begin{equation}
 \det D_{(C,\Omega)}\Phi_{H,0}=-\frac2r,
 \qquad
 \left|\det D\Phi_{H,0}\right|\ge\frac2R>0.
 \label{eq:cmetric-uniform-high-jacobian}
\end{equation}
The uniform implicit-function theorem produces one high-chart solution.
To prove that it is the only one, choose a fixed tubular neighborhood
\(N_H\) of the compact zero graph in
Eq.~\eqref{eq:cmetric-uniform-high-center}.  On the compact complement of
\(N_H\),
\begin{equation}
 \gamma_H:=\min
 \bigl\{\|\Phi_{H,0}(C,\Omega;r)\|:
 r\in I,\ (C,\Omega)\in B_H\setminus N_H\bigr\}>0.
 \label{eq:cmetric-uniform-high-gap}
\end{equation}
Uniform convergence makes the finite-\(\mu\) residual larger than
\(\gamma_H/2\) there.  Thus the high chart contains exactly the branch
\(S\), with no additional isolated or intermediate-scale root.

It remains to analyze sequences with \(Z\to0\).  Set
\begin{equation}
 W=\frac z{\mu^2},\qquad
 \Delta_\mu=
 \frac{2(1+\mu^4W^2)}d-V-\frac{\mu^2W^2}{V}>0.
 \label{eq:cmetric-uniform-low-delta}
\end{equation}
The horizon equation and its \(u\)-derivative give the exact on-shell
relations
\begin{align}
 1+\mu^2r^2
 &=\frac{W^2d^2(1-\mu^2V^2)}
 {V^3(1+\mu^4W^2)^2}\,\Delta_\mu,
 \label{eq:cmetric-blowup-low-exact-horizon}\\
 \frac{H_u}{\mu}
 &=2\mu^2V^2\Delta_\mu
 +2(1-\mu^2V^2)
 \left[V-\frac{1+\mu^4W^2}{d}+2\Delta_\mu\right],
 \label{eq:cmetric-blowup-low-exact-Hu}\\
 \frac{\tau}{\mu}
 &=\frac{Wd}{(1+\mu^4W^2)V^2\alpha}\,
 \frac{H_u}{\mu}.
 \label{eq:cmetric-blowup-low-exact-temperature}
\end{align}
The second identity holds exactly on the horizon because its left-hand side
minus its displayed right-hand side is
\(4\mathcal H/(\mu^2V)\).  Every prefactor relating \(\tau\) to \(H_u\) in the last line
is positive, so \(\tau>0\) fixes \(H_u>0\).

Suppose first that \(W\to0\).  Since
\(0<\Delta_\mu<2/d\), the horizon identity forces \(V\to0\).
Positivity of \(H_u\) in the second exact identity gives
\(\liminf\Delta_\mu\ge1/2\); the alternative
\(\Delta_\mu\to0\) would instead give \(H_u/\mu\to-2\).
The horizon identity now gives \(V\asymp W^{2/3}\), and hence
\[
 \frac{\mu^2W^2}{V}=O(\mu^2W^{4/3})\longrightarrow0.
\]
It follows from the definition of \(\Delta_\mu\) that
\(\Delta_\mu\to2\).  Thus the positive-temperature degeneration at
\(W=0\) is uniquely \(V\to0,\Delta_\mu\to2\).

Suppose next that \(W\to\infty\) while \(Z=\mu W\to0\).  The horizon
identity first gives \(\Delta_\mu\to0\), but this alone does not select a
limit for \(V\).  Its definition gives
\[
 V+\frac{Z^2}{V}\longrightarrow2.
\]
Besides \(V\to2\), the only possible endpoint degeneration is
\(V\to0\) with \(Z^2/V\to2\).  On that alternative, the second exact
identity gives \(H_u/\mu\to-2\), again contradicting the unsquared positive
temperature.  Therefore \(V\to2\).  At either allowed endpoint,
\begin{equation}
 \frac{\tau}{\mu}\asymp
 \frac1{\sqrt{V\Delta_\mu}}\longrightarrow+\infty,
 \label{eq:cmetric-uniform-low-end-gap}
\end{equation}
whereas \(\tau_M/\mu=4+O_I(\mu^2)\).  A compactness contradiction therefore
provides the constants \(c_0,C_0\), independent of \(r\in I\), in
Eq.~\eqref{eq:cmetric-uniform-low-chart}.  This excludes every scale with
\(z/\mu\to0\) but \(z/\mu^2\to\infty\).

On the compact low chart, \(\mu^{-2}\mathcal H(\mu V,\mu^4W^2)\)
converges in \(C^1\) to
\begin{equation}
 \mathcal H_{L,0}(V,W)=V^2-2V+\frac{V^4}{W^2}.
 \label{eq:cmetric-uniform-low-horizon}
\end{equation}
Its positive zero set is the graph
\begin{equation}
 W_0(V)=\sqrt{\frac{V^3}{2-V}},\qquad 0<V<2,
\end{equation}
and
\(\partial_W\mathcal H_{L,0}=-2V^4/W^3\ne0\).  A uniform tubular
implicit-function theorem therefore gives a unique finite-\(\mu\) horizon
graph \(W_\mu(V;r)\) containing every low-chart candidate.  Restricted to
this graph, the temperature residual converges in \(C^1\), uniformly for
\(r\in I\), to
\begin{equation}
 f_0(V)=\frac{2(3-V)}{\sqrt{V(2-V)}}-4.
 \label{eq:cmetric-uniform-low-temperature-residual}
\end{equation}
Its only zeros in \((0,2)\) are
\begin{equation}
 V=1,\qquad V=\frac95,
\end{equation}
and they are simple because
\begin{equation}
 f_0'(1)=-2,\qquad f_0'\!\left(\frac95\right)=\frac{50}{9}.
 \label{eq:cmetric-uniform-low-simple}
\end{equation}
Choose disjoint neighborhoods \(N_L\) and \(N_3\) of these two roots.  On
the compact interval \(I_L=[c_0,2-c_0]\), shrink the neighborhoods if
necessary and set
\begin{equation}
 \gamma_L:=\min
 \bigl\{|f_0(V)|:V\in I_L\setminus(N_L\cup N_3)\bigr\}>0.
 \label{eq:cmetric-uniform-low-gap}
\end{equation}
Uniform convergence excludes a finite-\(\mu\) zero outside those
neighborhoods, while the uniform implicit-function theorem produces one
and only one zero inside each.  The corresponding limiting values of
\(W_0\) are \(1\) and \(27/5\), proving
Eq.~\eqref{eq:cmetric-uniform-low-centers}.

The inverse-Jacobian bounds in the high chart and the two nonzero derivatives
in Eq.~\eqref{eq:cmetric-uniform-low-simple} are uniform on \(I\).  Hence the
three implicit-function neighborhoods may be chosen with positive radii
independent of \(r\).  Taking \(\delta_0\) smaller than all three radii and
smaller than half the separation of the two low-chart centers makes the
isolating neighborhoods in the statement mutually disjoint.  Shrinking
\(\mu_0\) once more makes the high and low charts themselves disjoint.

If the lemma were false for every \(\mu_0\), a sequence of counterexamples
with \(\mu\to0\) would, after taking a subsequence in the compact parameter
set \(I\), enter either the high or the low chart.  The residual gaps
Eqs.~\eqref{eq:cmetric-uniform-high-gap} and
\eqref{eq:cmetric-uniform-low-gap} exclude an extra zero in either chart.
This contradiction proves all assertions with a single threshold
\(\mu_0(I)\).
\end{proof}

As a direct consequence, the Gibbs comparison is global within the stated
black-hole sector.  On the third branch,
\begin{equation}
 \mathfrak g_3=\frac{2}{27\mu}+O_I(1),
\end{equation}
whereas \(\mathfrak g_M=O_I(\mu)\).  Hence the third branch lies strictly
above the Maxwell pair for all sufficiently small \(\mu\), uniformly on
\(I\).  Since Lemma~\ref{lem:cmetric-uniform-root-exhaustion} leaves no other
positive-temperature equilibrium candidate, the two admissible strict
minima \(L\) and \(S\) exhaust the global minima among admissible
positive-temperature black-hole equilibria in this boundary layer.

\section{Action normalization and exact scheme robustness}
\label{app:cmetric-renormalization}

The absolute Euclidean action of an accelerating black hole depends on the
renormalization prescription.  Phase selection depends instead on differences
between branches at common reservoir data.  For the two prescriptions that
have been worked out explicitly for the charged AdS C-metric, this distinction
can be made exact.

\subsection{Standard fixed-charge scheme}

The quantities in Eq.~\eqref{eq:cmetric-thermodynamics} use the normalized
Killing field and standard asymptotically AdS counterterms of
Ref.~\cite{AnabalonEtAl2019Thermodynamics}.  With the Maxwell boundary term
appropriate to fixed electric charge, the canonical potential is
\begin{equation}
 G_{\rm can}^{\rm std}=M-TS
 \label{eq:app-standard-canonical-potential}
\end{equation}
at fixed \((P,Q,\mu)\).  The normalization
\(\alpha^2=\Xi(1-A^2\ell^2\Xi)\) enters the physical definitions of
\(M\) and \(T\).  Covariant phase-space
charges give the same thermodynamic variables after the conformal-boundary
corner term is included \cite{KimEtAl2023CovariantPhaseSpace}.

For later comparison, write the north and south polar tensions as
\begin{equation}
 \mu_N=\frac14\left(1-\frac{\Xi+2mA}{K}\right),
 \qquad
 \mu_S=\frac14\left(1-\frac{\Xi-2mA}{K}\right),
 \label{eq:app-polar-tensions}
\end{equation}
and set \(\mu_\Sigma=\mu_N+\mu_S\).  The north-regular single-string
sector has \(K=\Xi+2mA\), hence
\begin{equation}
 \mu_N=0,
 \qquad
 \mu_S=\frac{mA}{K}=\mu,
 \qquad
 \mu_\Sigma=\mu.
 \label{eq:app-single-string-tension-map}
\end{equation}
Thus the overall tension used in Ref.~\cite{HaleEtAl2025ChargedAccelerating}
equals the tension held fixed in the present ensemble, fixing the factor
convention.

\subsection{Topological prescription}

At fixed polar tensions and with the same normalized time, Eq.~(93) of
Ref.~\cite{HaleEtAl2025ChargedAccelerating} relates the grand-canonical
potentials obtained from topological and standard renormalization by
\begin{equation}
 \Omega^{\rm top}=\Omega^{\rm std}+2\pi\ell^2\mu_\Sigma T.
 \label{eq:app-hale-grand-shift}
\end{equation}
Because the added term is independent of the electric potential, the
fixed-charge Legendre transformation commutes with this shift.  Using
\(P=3/(8\pi\ell^2)\) and Eq.~\eqref{eq:app-single-string-tension-map} gives
\begin{equation}
 \boxed{
 G_{\rm can}^{\rm top}(x;T,P,Q,\mu)
 =G_{\rm can}^{\rm std}(x;T,P,Q,\mu)
 +\frac{3\mu T}{4P}}
 \label{eq:app-canonical-scheme-shift}
\end{equation}
for every admissible black-hole branch \(x\).

\begin{proposition}[Exact scheme robustness of phase selection]
\label{prop:exact-scheme-robustness}
Compare the standard and topological prescriptions on the same family of
black-hole solutions, at fixed \((T,P,Q,\mu)\), fixed polar tensions, and the
same normalization of the timelike Killing field.  Then the stationary
cover, its response-derived Morse labels, every branch-value difference, the
Maxwell set, the global winner among physical black-hole equilibria, and the
complete Maxwell-turning curve are identical
in the two prescriptions.  The jumps \(\Delta S\) and \(\Delta V\), the
Clapeyron slope, latent heat, turning curvature, and reduced
stationary-saddle barrier are also identical.  Only absolute thermodynamic
potentials and their common branch-independent offsets change.
\end{proposition}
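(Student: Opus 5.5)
The plan is to show that the scheme shift is a function of the reservoir data alone, with no branch dependence, and then push this through every phase-selection object. By Eq.~\eqref{eq:app-canonical-scheme-shift}, on the common solution family the two off-shell canonical potentials satisfy
\[
G_{\rm can}^{\rm top}(x;\zeta)=G_{\rm can}^{\rm std}(x;\zeta)+c(\zeta),\qquad c(\zeta)=\frac{3\mu T}{4P},\qquad \zeta=(T,P,Q,\mu).
\]
The first step is to verify that \(c\) really is branch independent. This rests on Eq.~\eqref{eq:app-single-string-tension-map}: in the north-regular single-string sector, \(\mu_\Sigma=\mu\) is held fixed, and so is \(\ell\) through \(P\). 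Hence the Hale et al.\ grand-canonical shift \(2\pi\ell^2\mu_\Sigma T\) contains no state variable. Because that shift does not depend on the electric potential, it also passes unchanged through the fixed-charge Legendre transform. This is exactly a phase-compatible equivalence in the sense of Eqs.~\eqref{eq:phase-compatible-potential}--\eqref{eq:phase-compatible-faces}, with \(\psi=\mathrm{id}\), \(\phi_\lambda=\mathrm{id}\), \(a=1\), \(b=c(\zeta)\), and trivial face units. The admissible domain, including the slow-acceleration, lapse, and outer-horizon conditions, is defined by the same metric in both schemes.

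Next I would invoke Theorem~\ref{thm:phase-compatible-covariance}. It immediately gives equality of the stationary cover, the Morse indices, the critical-value order, \(\mathcal M_{\rm cv}\), \(\mathcal M_{\rm eq}\), and the global winner. Branch differences are literally equal, since \(c\) cancels in \(G_i-G_j\). In particular the value gap \(D\) is identical, and therefore so is its zero set, the Maxwell set.

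For the jumps I would use the envelope identities. Since \(\partial_T c=3\mu/(4P)\) and \(\partial_P c=-3\mu T/(4P^2)\) are the same on every branch, the scheme-dependent entropies and volumes, \(S^{\rm top}=S^{\rm std}-\partial_Tc\) and \(V^{\rm top}=V^{\rm std}+\partial_Pc\), differ by common offsets. So \(\Delta S\) and \(\Delta V\) coincide in the two schemes. It follows that the Clapeyron slope \(\Delta S/\Delta V\) agrees, and hence the turning condition \(\Delta V=0\) and the curve it defines agree. The latent heat \(T\Delta S\) agrees as well. The turning curvature is computed from \(\Delta\nu/\Delta s\) through Eq.~\eqref{eq:cmetric-turning-curvature-factor} and the entropy jump, so it is unchanged. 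The reduced barrier \(\mathfrak g_3-\mathfrak g_M\) is a branch difference, so it too is unchanged.

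The only delicate point I expect is the Morse labels. These are ``response-derived'' in the paper, meaning they come from the sign of \(\mathscr D_z\tau\,\mathscr D_z s\) rather than directly from the Hessian of \(G\). I would handle this by noting that the \(x\)-Hessian of \(G^{\rm top}\) equals that of \(G^{\rm std}\), because \(c\) does not depend on \(x\). Alternatively, the temperature \(\tau\) is defined geometrically and is scheme independent, and the entropy is shifted by a constant, so \(\mathscr D_z s\) is unchanged. Finally, the absolute potentials differ by \(c\ne0\), which is the only change, as the statement asserts.
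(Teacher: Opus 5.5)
Your proposal is correct and follows essentially the same route as the paper. The shift $3\mu T/(4P)$ is common to all branches, so every branch-value difference is unchanged, and with it the Maxwell set, the winner order and the barrier; differentiating the shift gives branch-independent offsets in $S$ and $V$ that cancel in $\Delta S$, $\Delta V$, the Clapeyron slope and $T\Delta S$. Routing the value-order claims through Theorem~\ref{thm:phase-compatible-covariance} with identity maps and $a=1$ only repackages the paper's direct argument. For the Morse labels, your response-based check (geometric $\tau$, constant entropy offset) is the better of your two options, because it does not require assuming that the on-shell Hale et al.\ shift extends off-shell.
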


\begin{proof}
At common reservoir data, Eq.~\eqref{eq:app-canonical-scheme-shift} adds the
same number to every solution.  Hence, for any two branches \(i,j\),
\begin{equation}
 G_i^{\rm top}-G_j^{\rm top}
 =G_i^{\rm std}-G_j^{\rm std}.
 \label{eq:app-scheme-difference-invariance}
\end{equation}
The underlying solutions and reservoir map are unchanged, so the stationary
cover and its local response data are unchanged.  Equation
\eqref{eq:app-scheme-difference-invariance} preserves all critical-value
equalities and inequalities, including comparisons with the unstable
stationary sheet.  It therefore preserves Maxwell coexistence, the winner
order, and every barrier defined by a stationary-value difference.

Differentiating Eq.~\eqref{eq:app-canonical-scheme-shift} at fixed values of
the remaining controls gives the branch-independent shifts
\begin{align}
 S^{\rm top}&=S^{\rm std}-\frac{3\mu}{4P},
 &
 V^{\rm top}&=V^{\rm std}-\frac{3\mu T}{4P^2},
 \label{eq:app-scheme-extensive-shifts}\\
 \Phi^{\rm top}&=\Phi^{\rm std},
 &
 \mathcal Y_\mu^{\rm top}
 &=\mathcal Y_\mu^{\rm std}+\frac{3T}{4P},
 \nonumber
\end{align}
where \(\mathcal Y_\mu=(\partial G/\partial\mu)_{T,P,Q}\); its relation to a
signed thermodynamic length depends on the first-law convention.
The common entropy and volume offsets cancel between phases.  Consequently
\(\Delta S\), \(\Delta V\), \({\rm d}T/{\rm d}P=\Delta V/\Delta S\), and
\(T\Delta S\) are invariant.  Since the coexistence curve itself is
unchanged, so are its turns, curvature, and endpoints.
\end{proof}

The distinction between absolute normalization and phase selection is visible
on the exact turning curve.  In the standard prescription both coexisting
branches have \(G_{\rm can}^{\rm std}=0\).  Equations
\eqref{eq:cmetric-turning-controls} and
\eqref{eq:app-canonical-scheme-shift} instead give
\begin{equation}
 G_{\rm can}^{\rm top}\big|_{\rm turn}
 =\frac{3\mu T_{\rm turn}}{4P_{\rm turn}}
 =\frac{2Q}{1+\chi^2}.
 \label{eq:app-topological-turning-value}
\end{equation}
The choice of zero changes between the two schemes.  The equality of the two
values and their order relative to all other admissible black-hole branches
remain invariant.

Proposition~\ref{prop:exact-scheme-robustness} applies with the polar tensions,
time normalization, admissible solution set, and thermodynamic ensemble held
fixed.  The reduced barrier used here is the difference between stationary
values on the one-dimensional equilibrium family.  A gravitational
tunneling exponent would additionally require a fully off-shell Euclidean
bounce calculation.

\section*{Funding}

This research did not receive any specific grant from funding agencies in the
public, commercial, or not-for-profit sectors.

\section*{Data availability}

All data generated or analyzed in this study are included in the article and
its Supplementary Material.  The supplementary archive contains the symbolic
computations, high-precision numerical checks, root-isolation data, source
tables, and figure-generation code needed to reproduce the reported results,
together with dependency and execution instructions.

\section*{CRediT authorship contribution statement}

Ruiliang Li: Conceptualization, Methodology, Formal analysis, Investigation,
Software, Validation, Visualization, Writing--original draft,
Writing--review and editing.

\section*{Declaration of competing interest}

The author declares that he has no known competing financial interests or
personal relationships that could have appeared to influence the work
reported in this paper.


\begin{thebibliography}{10}
\expandafter\ifx\csname url\endcsname\relax
  \def\url#1{\texttt{#1}}\fi
\expandafter\ifx\csname urlprefix\endcsname\relax\def\urlprefix{URL }\fi
\expandafter\ifx\csname href\endcsname\relax
  \def\href#1#2{#2} \def\path#1{#1}\fi

\bibitem{HawkingPage1983}
S.~W. Hawking, D.~N. Page, Thermodynamics of black holes in anti-de {Sitter}
  space, Communications in Mathematical Physics 87 (1983) 577--588.
\newblock \href {https://doi.org/10.1007/BF01208266}
  {\path{doi:10.1007/BF01208266}}.

\bibitem{Chamblin1999Fluctuations}
A.~Chamblin, R.~Emparan, C.~V. Johnson, R.~C. Myers, Holography,
  thermodynamics, and fluctuations of charged {AdS} black holes, Physical
  Review D 60 (1999) 104026.
\newblock \href {http://arxiv.org/abs/hep-th/9904197}
  {\path{arXiv:hep-th/9904197}}, \href
  {https://doi.org/10.1103/PhysRevD.60.104026}
  {\path{doi:10.1103/PhysRevD.60.104026}}.

\bibitem{GregoryScoins2019Chemistry}
R.~Gregory, A.~Scoins, Accelerating black hole chemistry, Physics Letters B 796
  (2019) 191--195.
\newblock \href {http://arxiv.org/abs/1904.09660} {\path{arXiv:1904.09660}},
  \href {https://doi.org/10.1016/j.physletb.2019.06.071}
  {\path{doi:10.1016/j.physletb.2019.06.071}}.

\bibitem{AnabalonEtAl2019Thermodynamics}
A.~Anabal{\'o}n, F.~Gray, R.~Gregory, D.~Kubiz{\v n}{\'a}k, R.~B. Mann,
  Thermodynamics of charged, rotating, and accelerating black holes, Journal of
  High Energy Physics 04 (2019) 096.
\newblock \href {http://arxiv.org/abs/1811.04936} {\path{arXiv:1811.04936}},
  \href {https://doi.org/10.1007/JHEP04(2019)096}
  {\path{doi:10.1007/JHEP04(2019)096}}.

\bibitem{AbbasvandiEtAl2019Snapping}
N.~Abbasvandi, W.~Cong, D.~Kubiz{\v n}{\'a}k, R.~B. Mann, Snapping swallowtails
  in accelerating black hole thermodynamics, Classical and Quantum Gravity
  36~(10) (2019) 104001.
\newblock \href {http://arxiv.org/abs/1812.00384} {\path{arXiv:1812.00384}},
  \href {https://doi.org/10.1088/1361-6382/ab129f}
  {\path{doi:10.1088/1361-6382/ab129f}}.

\bibitem{GunasekaranKubiznakMann2012Extended}
S.~Gunasekaran, D.~Kubiz{\v n}{\'a}k, R.~B. Mann, Extended phase space
  thermodynamics for charged and rotating black holes and {Born--Infeld} vacuum
  polarization, Journal of High Energy Physics 11 (2012) 110.
\newblock \href {http://arxiv.org/abs/1208.6251} {\path{arXiv:1208.6251}},
  \href {https://doi.org/10.1007/JHEP11(2012)110}
  {\path{doi:10.1007/JHEP11(2012)110}}.

\bibitem{AltamiranoEtAl2014Review}
N.~Altamirano, D.~Kubiz{\v n}{\'a}k, R.~B. Mann, Z.~Sherkatghanad,
  Thermodynamics of rotating black holes and black rings: Phase transitions and
  thermodynamic volume, Galaxies 2~(1) (2014) 89--159.
\newblock \href {http://arxiv.org/abs/1401.2586} {\path{arXiv:1401.2586}},
  \href {https://doi.org/10.3390/galaxies2010089}
  {\path{doi:10.3390/galaxies2010089}}.

\bibitem{AltamiranoKubiznakMann2013Reentrant}
N.~Altamirano, D.~Kubiz{\v n}{\'a}k, R.~B. Mann, Reentrant phase transitions in
  rotating anti--de {Sitter} black holes, Physical Review D 88 (2013) 101502.
\newblock \href {http://arxiv.org/abs/1306.5756} {\path{arXiv:1306.5756}},
  \href {https://doi.org/10.1103/PhysRevD.88.101502}
  {\path{doi:10.1103/PhysRevD.88.101502}}.

\bibitem{AltamiranoEtAl2014Triple}
N.~Altamirano, D.~Kubiz{\v n}{\'a}k, R.~B. Mann, Z.~Sherkatghanad, {Kerr--AdS}
  analogue of triple point and solid/liquid/gas phase transition, Classical and
  Quantum Gravity 31~(4) (2014) 042001.
\newblock \href {http://arxiv.org/abs/1308.2672} {\path{arXiv:1308.2672}},
  \href {https://doi.org/10.1088/0264-9381/31/4/042001}
  {\path{doi:10.1088/0264-9381/31/4/042001}}.

\bibitem{FrassinoEtAl2014MultipleReentrant}
A.~M. Frassino, D.~Kubiz{\v n}{\'a}k, R.~B. Mann, F.~Simovic, Multiple
  reentrant phase transitions and triple points in {Lovelock} thermodynamics,
  Journal of High Energy Physics 09 (2014) 080.
\newblock \href {http://arxiv.org/abs/1406.7015} {\path{arXiv:1406.7015}},
  \href {https://doi.org/10.1007/JHEP09(2014)080}
  {\path{doi:10.1007/JHEP09(2014)080}}.

\bibitem{AbbasvandiEtAl2019FinelySplit}
N.~Abbasvandi, W.~Ahmed, W.~Cong, D.~Kubiz{\v n}{\'a}k, R.~B. Mann, Finely
  split phase transitions of rotating and accelerating black holes, Physical
  Review D 100 (2019) 064027.
\newblock \href {http://arxiv.org/abs/1906.03379} {\path{arXiv:1906.03379}},
  \href {https://doi.org/10.1103/PhysRevD.100.064027}
  {\path{doi:10.1103/PhysRevD.100.064027}}.

\bibitem{SpallucciSmailagic2013Maxwell}
E.~Spallucci, A.~Smailagic, {Maxwell}'s equal-area law for charged {Anti-de
  Sitter} black holes, Physics Letters B 723 (2013) 436--441.
\newblock \href {http://arxiv.org/abs/1305.3379} {\path{arXiv:1305.3379}},
  \href {https://doi.org/10.1016/j.physletb.2013.05.038}
  {\path{doi:10.1016/j.physletb.2013.05.038}}.

\bibitem{WeiLiu2015Clapeyron}
S.-W. Wei, Y.-X. Liu, Clapeyron equations and fitting formula of the
  coexistence curve in the extended phase space of charged {AdS} black holes,
  Physical Review D 91 (2015) 044018.
\newblock \href {http://arxiv.org/abs/1411.5749} {\path{arXiv:1411.5749}},
  \href {https://doi.org/10.1103/PhysRevD.91.044018}
  {\path{doi:10.1103/PhysRevD.91.044018}}.

\bibitem{LiWang2022Landscape}
R.~Li, J.~Wang, Generalized free energy landscape of a black hole phase
  transition, Physical Review D 106 (2022) 106015.
\newblock \href {http://arxiv.org/abs/2206.02623} {\path{arXiv:2206.02623}},
  \href {https://doi.org/10.1103/PhysRevD.106.106015}
  {\path{doi:10.1103/PhysRevD.106.106015}}.

\bibitem{XuEtAl2024ComplexFreeEnergy}
Z.-M. Xu, Y.-S. Wang, B.~Wu, W.-L. Yang, Generalized {Maxwell} equal area law
  and black holes in complex free energy, Physics Letters B 850 (2024) 138528.
\newblock \href {http://arxiv.org/abs/2305.05916} {\path{arXiv:2305.05916}},
  \href {https://doi.org/10.1016/j.physletb.2024.138528}
  {\path{doi:10.1016/j.physletb.2024.138528}}.

\bibitem{AnabalonEtAl2018Holographic}
A.~Anabal{\'o}n, M.~Appels, R.~Gregory, D.~Kubiz{\v n}{\'a}k, R.~B. Mann,
  A.~{\"O}vg{\"u}n, Holographic thermodynamics of accelerating black holes,
  Physical Review D 98 (2018) 104038.
\newblock \href {http://arxiv.org/abs/1805.02687} {\path{arXiv:1805.02687}},
  \href {https://doi.org/10.1103/PhysRevD.98.104038}
  {\path{doi:10.1103/PhysRevD.98.104038}}.

\bibitem{ZhangLiYu2019Accelerating}
J.~Zhang, Y.~Li, H.~Yu, Thermodynamics of charged accelerating {AdS} black
  holes and holographic heat engines, Journal of High Energy Physics 02 (2019)
  144.
\newblock \href {http://arxiv.org/abs/1808.10299} {\path{arXiv:1808.10299}},
  \href {https://doi.org/10.1007/JHEP02(2019)144}
  {\path{doi:10.1007/JHEP02(2019)144}}.

\bibitem{KastorRayTraschen2009}
D.~Kastor, S.~Ray, J.~Traschen, Enthalpy and the mechanics of {AdS} black
  holes, Classical and Quantum Gravity 26 (2009) 195011.
\newblock \href {http://arxiv.org/abs/0904.2765} {\path{arXiv:0904.2765}},
  \href {https://doi.org/10.1088/0264-9381/26/19/195011}
  {\path{doi:10.1088/0264-9381/26/19/195011}}.

\bibitem{Dolan2011PressureVolume}
B.~P. Dolan, Pressure and volume in the first law of black hole thermodynamics,
  Classical and Quantum Gravity 28 (2011) 235017.
\newblock \href {http://arxiv.org/abs/1106.6260} {\path{arXiv:1106.6260}},
  \href {https://doi.org/10.1088/0264-9381/28/23/235017}
  {\path{doi:10.1088/0264-9381/28/23/235017}}.

\bibitem{KubiznakMann2012PV}
D.~Kubiz{\v n}{\'a}k, R.~B. Mann, {$P$--$V$} criticality of charged {AdS} black
  holes, Journal of High Energy Physics 07 (2012) 033.
\newblock \href {http://arxiv.org/abs/1205.0559} {\path{arXiv:1205.0559}},
  \href {https://doi.org/10.1007/JHEP07(2012)033}
  {\path{doi:10.1007/JHEP07(2012)033}}.

\bibitem{CveticEtAl2011Volume}
M.~Cveti{\v c}, G.~W. Gibbons, D.~Kubiz{\v n}{\'a}k, C.~N. Pope, Black hole
  enthalpy and an entropy inequality for the thermodynamic volume, Physical
  Review D 84 (2011) 024037.
\newblock \href {http://arxiv.org/abs/1012.2888} {\path{arXiv:1012.2888}},
  \href {https://doi.org/10.1103/PhysRevD.84.024037}
  {\path{doi:10.1103/PhysRevD.84.024037}}.

\bibitem{KubiznakMannTeo2017Chemistry}
D.~Kubiz{\v n}{\'a}k, R.~B. Mann, M.~Teo, Black hole chemistry: thermodynamics
  with {$\Lambda$}, Classical and Quantum Gravity 34 (2017) 063001.
\newblock \href {http://arxiv.org/abs/1608.06147} {\path{arXiv:1608.06147}},
  \href {https://doi.org/10.1088/1361-6382/aa5c69}
  {\path{doi:10.1088/1361-6382/aa5c69}}.

\bibitem{AppelsGregoryKubiznak2016Thermodynamics}
M.~Appels, R.~Gregory, D.~Kubiz{\v n}{\'a}k, Thermodynamics of accelerating
  black holes, Physical Review Letters 117 (2016) 131303.
\newblock \href {http://arxiv.org/abs/1604.08812} {\path{arXiv:1604.08812}},
  \href {https://doi.org/10.1103/PhysRevLett.117.131303}
  {\path{doi:10.1103/PhysRevLett.117.131303}}.

\bibitem{AppelsGregoryKubiznak2017Conical}
M.~Appels, R.~Gregory, D.~Kubiz{\v n}{\'a}k, Black hole thermodynamics with
  conical defects, Journal of High Energy Physics 05 (2017) 116.
\newblock \href {http://arxiv.org/abs/1702.00490} {\path{arXiv:1702.00490}},
  \href {https://doi.org/10.1007/JHEP05(2017)116}
  {\path{doi:10.1007/JHEP05(2017)116}}.

\bibitem{ZhangEtAl2026Geometric}
S.-H. Zhang, Z.-Y. Li, J.-F. Zhang, X.~Zhang, Universal geometric framework for
  black hole phase transitions: from multivaluedness to classification,
  European Physical Journal C 86 (2026) 642.
\newblock \href {http://arxiv.org/abs/2512.16629} {\path{arXiv:2512.16629}},
  \href {https://doi.org/10.1140/epjc/s10052-026-15851-5}
  {\path{doi:10.1140/epjc/s10052-026-15851-5}}.

\bibitem{ZhangEtAl2026Unifying}
S.-H. Zhang, S.-W. Wei, J.-F. Zhang, X.~Zhang, Unifying topological, geometric,
  and complex classifications of black hole thermodynamics, preprint (2026).
\newblock \href {http://arxiv.org/abs/2604.08315} {\path{arXiv:2604.08315}},
  \href {https://doi.org/10.48550/arXiv.2604.08315}
  {\path{doi:10.48550/arXiv.2604.08315}}.

\bibitem{Chamblin1999Catastrophic}
A.~Chamblin, R.~Emparan, C.~V. Johnson, R.~C. Myers, Charged {AdS} black holes
  and catastrophic holography, Physical Review D 60 (1999) 064018.
\newblock \href {http://arxiv.org/abs/hep-th/9902170}
  {\path{arXiv:hep-th/9902170}}, \href
  {https://doi.org/10.1103/PhysRevD.60.064018}
  {\path{doi:10.1103/PhysRevD.60.064018}}.

\bibitem{Cerf1970Stratification}
J.~Cerf, La stratification naturelle des espaces de fonctions
  diff{\'e}rentiables r{\'e}elles et le th{\'e}or{\`e}me de la pseudo-isotopie,
  Publications Math{\'e}matiques de l'IH{\'E}S 39 (1970) 5--173.
\newblock \href {https://doi.org/10.1007/BF02684687}
  {\path{doi:10.1007/BF02684687}}.

\bibitem{GolubitskyGuillemin1973}
M.~Golubitsky, V.~Guillemin, Stable Mappings and Their Singularities, Vol.~14
  of Graduate Texts in Mathematics, Springer, New York, 1973.
\newblock \href {https://doi.org/10.1007/978-1-4615-7904-5}
  {\path{doi:10.1007/978-1-4615-7904-5}}.

\bibitem{ArnoldGuseinZadeVarchenko2012}
V.~I. Arnold, S.~M. Gusein-Zade, A.~N. Varchenko, Singularities of
  Differentiable Maps, Volume 1: Classification of Critical Points, Caustics
  and Wave Fronts, Modern Birkh{\"a}user Classics, Birkh{\"a}user, Boston,
  2012.
\newblock \href {https://doi.org/10.1007/978-0-8176-8340-5}
  {\path{doi:10.1007/978-0-8176-8340-5}}.

\bibitem{Vassiliev2025BifurcationSets}
V.~A. Vassiliev, Real function singularities and their bifurcation sets, in:
  J.~L. Cisneros-Molina, L.~D. Tr{\'a}ng, J.~Seade (Eds.), Handbook of Geometry
  and Topology of Singularities VII, Springer, Cham, 2025, pp. 71--119.
\newblock \href {https://doi.org/10.1007/978-3-031-68711-2_2}
  {\path{doi:10.1007/978-3-031-68711-2_2}}.

\bibitem{WeiLiu2022Topology}
S.-W. Wei, Y.-X. Liu, Topology of black hole thermodynamics, Physical Review D
  105 (2022) 104003.
\newblock \href {http://arxiv.org/abs/2112.01706} {\path{arXiv:2112.01706}},
  \href {https://doi.org/10.1103/PhysRevD.105.104003}
  {\path{doi:10.1103/PhysRevD.105.104003}}.

\bibitem{WeiLiuMann2022Defects}
S.-W. Wei, Y.-X. Liu, R.~B. Mann, Black hole solutions as topological
  thermodynamic defects, Physical Review Letters 129 (2022) 191101.
\newblock \href {http://arxiv.org/abs/2208.01932} {\path{arXiv:2208.01932}},
  \href {https://doi.org/10.1103/PhysRevLett.129.191101}
  {\path{doi:10.1103/PhysRevLett.129.191101}}.

\bibitem{WeiLiu2026Review}
S.-W. Wei, Y.-X. Liu, Topology of black hole thermodynamics: A brief review,
  Science China Physics, Mechanics \& Astronomy 69 (2026) 260401.
\newblock \href {http://arxiv.org/abs/2605.00037} {\path{arXiv:2605.00037}},
  \href {https://doi.org/10.1007/s11433-025-2923-3}
  {\path{doi:10.1007/s11433-025-2923-3}}.

\bibitem{Wu2023AcceleratingTopology}
D.~Wu, Topological classes of thermodynamics of the four-dimensional static
  accelerating black holes, Physical Review D 108 (2023) 084041.
\newblock \href {http://arxiv.org/abs/2307.02030} {\path{arXiv:2307.02030}},
  \href {https://doi.org/10.1103/PhysRevD.108.084041}
  {\path{doi:10.1103/PhysRevD.108.084041}}.

\bibitem{WuYangWei2025Extended}
S.-P. Wu, S.-J. Yang, S.-W. Wei, Extended thermodynamical topology of black
  hole, European Physical Journal C 85 (2025) 1372.
\newblock \href {http://arxiv.org/abs/2508.01614} {\path{arXiv:2508.01614}},
  \href {https://doi.org/10.1140/epjc/s10052-025-15098-6}
  {\path{doi:10.1140/epjc/s10052-025-15098-6}}.

\bibitem{Wu2026BranchStructure}
D.~Wu, Topological changes and response singularities in black hole
  thermodynamic branch structure, preprint (2026).
\newblock \href {http://arxiv.org/abs/2607.07364} {\path{arXiv:2607.07364}},
  \href {https://doi.org/10.48550/arXiv.2607.07364}
  {\path{doi:10.48550/arXiv.2607.07364}}.

\bibitem{Li2026LegendreCovariant}
R.~Li, Legendre-covariant thermodynamic topology of black holes, Nuclear
  Physics B (2026) 117597.
\newblock \href {https://doi.org/10.1016/j.nuclphysb.2026.117597}
  {\path{doi:10.1016/j.nuclphysb.2026.117597}}.

\bibitem{HaleEtAl2025ChargedAccelerating}
T.~Hale, D.~Kubiz{\v n}{\'a}k, J.~Men{\v s}{\'i}kov{\'a}, R.~B. Mann, J.~Yang,
  Thermodynamics of charged and accelerating black holes, Physical Review D 111
  (2025) 104004.
\newblock \href {http://arxiv.org/abs/2501.13679} {\path{arXiv:2501.13679}},
  \href {https://doi.org/10.1103/PhysRevD.111.104004}
  {\path{doi:10.1103/PhysRevD.111.104004}}.

\bibitem{KinnersleyWalker1970}
W.~Kinnersley, M.~Walker, Uniformly accelerating charged mass in general
  relativity, Physical Review D 2 (1970) 1359--1370.
\newblock \href {https://doi.org/10.1103/PhysRevD.2.1359}
  {\path{doi:10.1103/PhysRevD.2.1359}}.

\bibitem{PlebanskiDemianski1976}
J.~F. Pleba{\'n}ski, M.~Demia{\'n}ski, Rotating, charged, and uniformly
  accelerating mass in general relativity, Annals of Physics 98~(1) (1976)
  98--127.
\newblock \href {https://doi.org/10.1016/0003-4916(76)90240-2}
  {\path{doi:10.1016/0003-4916(76)90240-2}}.

\bibitem{GriffithsPodolsky2006}
J.~B. Griffiths, J.~Podolsk{\'y}, A new look at the
  {Pleba{\'n}ski--Demia{\'n}ski} family of solutions, International Journal of
  Modern Physics D 15 (2006) 335--369.
\newblock \href {http://arxiv.org/abs/gr-qc/0511091}
  {\path{arXiv:gr-qc/0511091}}, \href
  {https://doi.org/10.1142/S0218271806007742}
  {\path{doi:10.1142/S0218271806007742}}.

\bibitem{GriffithsPodolsky2009Book}
J.~B. Griffiths, J.~Podolsk{\'y}, Exact Space-Times in Einstein's General
  Relativity, Cambridge Monographs on Mathematical Physics, Cambridge
  University Press, Cambridge, 2009.
\newblock \href {https://doi.org/10.1017/CBO9780511635397}
  {\path{doi:10.1017/CBO9780511635397}}.

\bibitem{Arnold1978Boundary}
V.~I. Arnol'd, Critical points of functions on a manifold with boundary, the
  simple {Lie} groups {$B_k$}, {$C_k$}, {$F_4$}, and singularities of evolutes,
  Russian Mathematical Surveys 33~(5) (1978) 99--116.
\newblock \href {https://doi.org/10.1070/RM1978v033n05ABEH002515}
  {\path{doi:10.1070/RM1978v033n05ABEH002515}}.

\bibitem{NisseLimChang2024ADiscriminants}
M.~Nisse, Y.-K. Lim, L.~Chang, Black hole thermodynamic free energy as
  {A}-discriminants, International Journal of Theoretical Physics 63 (2024)
  176.
\newblock \href {http://arxiv.org/abs/2311.11801} {\path{arXiv:2311.11801}},
  \href {https://doi.org/10.1007/s10773-024-05711-x}
  {\path{doi:10.1007/s10773-024-05711-x}}.

\bibitem{BorodzikNemethiRanicki2016Boundary}
M.~Borodzik, A.~N{\'e}methi, A.~Ranicki, Morse theory for manifolds with
  boundary, Algebraic \& Geometric Topology 16~(2) (2016) 971--1023.
\newblock \href {http://arxiv.org/abs/1207.3066} {\path{arXiv:1207.3066}},
  \href {https://doi.org/10.2140/agt.2016.16.971}
  {\path{doi:10.2140/agt.2016.16.971}}.

\bibitem{BorodzikBuczynska2025Families}
M.~Borodzik, W.~Buczy{\'n}ska, Families of {Morse} functions for manifolds with
  boundary, preprint (2025).
\newblock \href {http://arxiv.org/abs/2507.15847} {\path{arXiv:2507.15847}},
  \href {https://doi.org/10.48550/arXiv.2507.15847}
  {\path{doi:10.48550/arXiv.2507.15847}}.

\bibitem{GoreskyMacPherson1988Stratified}
M.~Goresky, R.~MacPherson, Stratified Morse Theory, Springer, Berlin and
  Heidelberg, 1988.
\newblock \href {https://doi.org/10.1007/978-3-642-71714-7}
  {\path{doi:10.1007/978-3-642-71714-7}}.

\bibitem{Berge1963Topological}
C.~Berge, Topological Spaces: Including a Treatment of Multi-Valued Functions,
  Vector Spaces and Convexity, Macmillan, New York, 1963.

\bibitem{AliprantisBorder2006Infinite}
C.~D. Aliprantis, K.~C. Border, Infinite Dimensional Analysis: A Hitchhiker's
  Guide, 3rd Edition, Springer, Berlin and Heidelberg, 2006.
\newblock \href {https://doi.org/10.1007/3-540-29587-9}
  {\path{doi:10.1007/3-540-29587-9}}.

\bibitem{GhoshBhamidipati2019Contact}
A.~Ghosh, C.~Bhamidipati, Contact geometry and thermodynamics of black holes in
  {AdS} spacetimes, Physical Review D 100 (2019) 126020.
\newblock \href {http://arxiv.org/abs/1909.11506} {\path{arXiv:1909.11506}},
  \href {https://doi.org/10.1103/PhysRevD.100.126020}
  {\path{doi:10.1103/PhysRevD.100.126020}}.

\bibitem{Bravetti2019Contact}
A.~Bravetti, Contact geometry and thermodynamics, International Journal of
  Geometric Methods in Modern Physics 16~(supp01) (2019) 1940003.
\newblock \href {https://doi.org/10.1142/S0219887819400036}
  {\path{doi:10.1142/S0219887819400036}}.

\bibitem{Golubitsky1978Catastrophe}
M.~Golubitsky, An introduction to catastrophe theory and its applications, SIAM
  Review 20~(2) (1978) 352--387.
\newblock \href {https://doi.org/10.1137/1020043} {\path{doi:10.1137/1020043}}.

\bibitem{CoxLittleOShea2025}
D.~A. Cox, J.~Little, D.~O'Shea, Ideals, Varieties, and Algorithms: An
  Introduction to Computational Algebraic Geometry and Commutative Algebra, 5th
  Edition, Undergraduate Texts in Mathematics, Springer, Cham, 2025.
\newblock \href {https://doi.org/10.1007/978-3-031-91841-4}
  {\path{doi:10.1007/978-3-031-91841-4}}.

\bibitem{BasuPollackRoy2006}
S.~Basu, R.~Pollack, M.-F. Roy, Algorithms in Real Algebraic Geometry, 2nd
  Edition, Vol.~10 of Algorithms and Computation in Mathematics, Springer,
  Berlin and Heidelberg, 2006.
\newblock \href {https://doi.org/10.1007/3-540-33099-2}
  {\path{doi:10.1007/3-540-33099-2}}.

\bibitem{KimEtAl2023CovariantPhaseSpace}
H.~Kim, N.~Kim, Y.~Lee, A.~Poole, Thermodynamics of accelerating {AdS}$_4$
  black holes from the covariant phase space, European Physical Journal C 83
  (2023) 1095.
\newblock \href {http://arxiv.org/abs/2306.16187} {\path{arXiv:2306.16187}},
  \href {https://doi.org/10.1140/epjc/s10052-023-12266-4}
  {\path{doi:10.1140/epjc/s10052-023-12266-4}}.

\end{thebibliography}
\end{document}